\numberwithin{equation}{section}
\newtheorem{theorem}{Theorem}[section]
\newtheorem{corollary}{Corollary}[section]
\newtheorem{lemma}{Lemma}[section] 
\newtheorem{proposition}{Proposition}[section]
\newtheorem{assumption}{Assumption}[section]
\theoremstyle{definition} 
\newtheorem{definition}{Definition}[section]
\newtheorem{remark}{Remark}[section]
\DeclareMathOperator{\Tr}{tr} 
\DeclareMathOperator{\tr}{tr} 
\newcommand{\sfK}{{\mathsf K}}
\newcommand{\sfL}{{\mathsf L}} 
\newcommand{\sfH}{{\mathsf H}}
\newcommand{\sfG}{{\mathsf G}}
\newcommand{\sfKmatrix}{{\hat{\bm{\sfK}}^\circ}}
\newcommand{\emmd}{\mathrm{MMD}^2}
\newcommand{\mmd}{\mathrm{MMD}}
\renewcommand{\P}{\mathrm{pr}}
\newcommand{\E}{\mathrm{E}}
\newcommand{\ra}{\rightarrow}
\newcommand{\lf}{\lfloor}
\newcommand{\rf}{\rfloor} 
\newcommand{\R}{\mathbb{R}}
\newcommand{\sB}{\mathscr{B}}
\newcommand{\sE}{\mathscr{E}}
\newcommand{\sX}{\mathscr{X}}
\newcommand{\sY}{\mathscr{Y}}
\newcommand{\cB}{\mathcal{B}}
\newcommand{\cE}{\mathcal{E}}
\newcommand{\cF}{\mathcal{F}}
\newcommand{\cH}{\mathcal{H}}
\newcommand{\cK}{\mathcal{K}}
\newcommand{\cN}{\mathcal{N}}
\newcommand{\cQ}{\mathcal{Q}}
\newcommand{\cR}{\mathcal{R}}
\newcommand{\cW}{\mathcal{W}}
\newcommand{\cX}{\mathcal{X}}
\newcommand{\cY}{\mathcal{Y}}
\newcommand{\cZ}{\mathcal{Z}}
\DeclareMathOperator{\Cov}{cov}
\DeclareMathOperator{\Var}{var}
\newcommand{\pto}{\stackrel{P}{\to}}
\newcommand{\dto}{\stackrel{D}{\to}} 
\newcommand{\asto}{\stackrel{a.s.}{\to}}
\begin{document}

\title[Boosting the Power of Kernel Two-Sample Tests]{\large Boosting the Power of Kernel Two-Sample Tests}
\author[Chatterjee and Bhattacharya]{Anirban Chatterjee and Bhaswar B. Bhattacharya} 
\address{Department of Statistics and Data Science, The Wharton School,\\University of Pennsylvania, Philadelphia,\\ Pennsylvania 19104, United States\\
\texttt{anirbanc@wharton.upenn.edu, bhaswar@wharton.upenn.edu}}

\begin{abstract}
    The kernel two-sample test based on the maximum mean discrepancy (MMD) is one of the most popular methods for detecting differences between two distributions over general metric spaces. In this paper we propose a method to boost the power of the kernel test by combining MMD estimates over multiple kernels using their Mahalanobis distance. We derive the asymptotic null distribution of the proposed test statistic and use a multiplier bootstrap approach to efficiently compute the rejection region. 
    The resulting test is universally consistent and, since it is obtained by aggregating over a collection of kernels/bandwidths, is more powerful in detecting a wide range of alternatives in finite samples. We also derive the distribution of the test statistic for both fixed and local contiguous alternatives. The latter, in  particular, implies that the proposed test is statistically efficient, that is, it has non-trivial asymptotic (Pitman) efficiency. \textcolor{black}{ The consistency properties of the Mahalanobis and other natural aggregation methods are also explored when the number of kernels is allowed to grow with the sample size.} Extensive numerical experiments are performed on both synthetic and real-world datasets to illustrate the efficacy of the proposed method over single kernel tests. 
    \textcolor{black}{The computational complexity of the proposed method is also studied, both theoretically and in simulations. Our asymptotic results rely on deriving the joint distribution of MMD estimates using the framework of multiple stochastic integrals, which is more broadly useful, specifically, in understanding the efficiency properties of recently proposed adaptive MMD tests based on kernel aggregation and also in developing more computationally efficient (linear time) tests that combine multiple kernels.  We conclude with an application of the Mahalanobis aggregation method for kernels with diverging scaling parameters. } 
    \end{abstract}

    \keywords{Kernel methods, nonparametric two-sample testing, Pitman efficiency, $U$-statistics. } 

    \maketitle
    
    \section{Introduction}
    
    Given two probability distributions $P$ and $Q$ on a separable metric space $\mathcal X$, the two-sample problem is to test the hypothesis: 
    \begin{align}\label{eq:H01PQ}
        H_{0}: P=Q \quad \text{ versus } \quad H_{1}: P \neq Q , 
    \end{align} 
    based on i.i.d. samples 
    $\sX_m := \{X_1, X_2, \ldots, X_m\} \text{ and } \sY_n := \{Y_1, Y_2, \ldots, Y_n\}$ 
    from the distributions $P$ and $Q$, respectively. This is a classical problem that has  been extensively studied, especially in the parametric regime,
    where the data is assumed to have certain low-dimensional functional forms. However, parametric methods often perform poorly for misspecified models,  especially when the number of nuisance parameters is large, and for non-Euclidean data. This necessitates the development of non-parametric methods, which make minimal distributional assumptions on the data, but remain powerful for a wide class of alternatives.

    For univariate data, there are several well-known nonparametric tests
    such as the two-sample Kolmogorov--Smirnoff (KS) maximum deviation test
    \citep{Smirnov1948}, the Wald-Wolfwotiz runs test \citep{wald1940}, the
    rank-sum test \citep{mann1947,wilcoxon1947}, and the Cram\'er-von Mises test \citep{anderson1962distribution}. Efforts to generalize these univariate methods to higher dimensions date back to \citet{weiss1960two} and \citet{bickel1969distribution}. Thereafter, several nonparametric methods for multivariate two-sample testing have been proposed over the years. These include tests based on geometric graphs \citep{friedman1979multivariate,henze1984number,schilling1986multivariate,hall2002permutation,rosenbaum,multivariateruntest,chen2017new,bhattacharya2019}, tests based on data-depth \citep{liu1993},  the energy distance test \citep{szekely2003statistics,szekely2004testing,baringhaus2004new,aslan2005new,szekely2013energy}, kernel maximum mean discrepancy (MMD) tests  \citep{gretton2009fast,gretton2012kernel,gretton2012optimal,sejdinovic2013equivalence,chwialkowski2015fast,ramdas2015,ramdas2017wasserstein,song2020generalized,zhang2022testing,permutation2022}, ball divergence \citep{balldivergencenonparametric2018,balldivergence2022}, projection-averaging \citep{kim2020robust}, classifier-based tests \citep{lopez2017revisiting,kim2021classification}, among others. 
    Recently, a distribution-free version of the energy distance test has been proposed by \citet{deb2021multivariate} using the emerging theory of multivariate ranks based on optimal transport.

    Among the aforementioned methods kernel-based tests have emerged as  a powerful technique for detecting distributional differences on general domains. 
    The basic idea is to quantify the discrepancy between the two distributions $P$ and $Q$ in terms of the largest difference in expectation between $f(X)$ and $f(Y)$, for $X \sim P$ and $Y \sim Q$, over functions $f$ in  the unit ball of a reproducing kernel Hilbert space (RKHS) defined on $\cX$. This is called the {\it maximum mean discrepancy} (MMD) between the distributions $P$ and $Q$ (see \eqref{eq:FPQ} for the precise definition), which can be conveniently estimated from the data in terms of the pairwise kernel dissimilarities (see Section \ref{sec:Kdefiniton} for details). For characteristic kernels (see Assumption \ref{assumption:K}), a useful property of the MMD is that it takes value zero if and only if the distributions $P$ and $Q$ are the same. Consequently, the test which rejects $H_0$ for large values of the estimated MMD is universally consistent (the power of the test converges to 1 as the sample size increases) for the hypothesis \eqref{eq:H01PQ} (see~\citet{gretton2012kernel} for further details).

    Although the kernel two-sample test is widely used and has found numerous applications, it often performs poorly for high-dimensional problems \citep{ramdas2015} and its empirical performance depends heavily on the choice of the kernel. Kernels are usually parametrized by their bandwidths, and the most popular strategy for choosing the kernel bandwidth is the {\it median heuristic}, where the bandwidth is chosen to be the median of the pairwise distances of the pooled sample \citep{gretton2012kernel}. Despite its popularity there is limited understanding of the median heuristic and empirical results demonstrate that the median heuristic performs poorly when differences between the 2 distributions occur at a scale that differs significantly from the median of the interpoint distances. Another approach is to split the data and estimate the kernel by maximizing an approximate empirical power on the held-out data \citep{gretton2012optimal,liu2020learning}. This, however, can lead to loss in power for smaller sample sizes.

    In this paper we propose a strategy for augmenting the power of the classical (single) kernel two-sample test by borrowing strengths from multiple kernels. Specifically, we propose a new test statistic which combines MMD estimates from $r \geq 1$ kernels using their sample Mahalanobis distance. The advantage of aggregating across a collection of kernels/bandwidths is that the test can simultaneously deal with cases which require both small and large bandwidths, and, hence, detect both global and local differences more effectively. We illustrate the effectiveness of our method through a wide range of results, that includes a holistic study of its asymptotic properties, finite-sample and real-data performance, computational complexity, and comparison with other aggregation methods. 
    
    \begin{itemize} 
    
    \item {\it Asymptotic Properties}: We derive the joint distribution of the vector of MMD estimates under $H_0$, which can be described using bivariate stochastic integrals, and, as a consequence, derive the asymptotic distribution of the Mahalanobis aggregated MMD (MMMD) statistic under $H_0$ (Section \ref{sec:H0asymptotic}).  Moreover, using the kernel Gram matrix representation we develop a multiplier bootstrap approach that allows us to efficiently compute the rejection threshold for the MMMD statistic and show that  the resulting test is universally consistent (Section \ref{sec:H0implementation}). Next, we derive the distribution of the proposed test against local alternatives in the well-known contamination model (Section \ref{sec:asymptoticpower}).  In Appendix \ref{sec:H1asymptotic} of the supplementary material we derive the joint distribution of MMD estimates and, consequently, that of the MMMD statistic, under the alternative. 
    
    \item {\it Finite-Sample Performance}: In Section \ref{sec:experiments} and Appendix \ref{sec:experimentsadditional} of the supplementary materials, we perform extensive simulations to compare our MMMD based test with various single kernel MMD tests (with bandwidths chosen based on the median heuristic). The experiments show that the MMMD method outperforms the single kernel tests and also the graph-based Friedman-Rafsky test \citep{friedman1979multivariate} across a range alternatives and dimensions, showcasing the efficacy of our aggregation method. \textcolor{black}{ To further reinforce the benefits of our aggregation scheme we also compare the MMMD test with bandwidth optimized single-kernel tests (as in \citet{gretton2012optimal,liu2020learning}) and with $p$-value combination methods (see Section \ref{sec:bandwithoptimizedcombination}). }
    
    \item {\it Computational and Statistical Trade-offs}: \textcolor{black}{ In Appendix \ref{sec:computationpf} we analyze the time complexity of the MMMD tests and also report the trade-off between power and computation time of the MMMD and the kernel single kernel MMD tests in simulations. We also implement our Mahalanobis aggregation strategy for the linear time statistic \cite[Section 6]{gretton2012kernel}, derive the corresponding asymptotic theory, and report its finite-sample performance (see Appendix \ref{sec:linear}). } \textcolor{black}{ The multiplier bootstrap also emerges as the more computationally efficient option than the permutation test for calibrating the MMMD statistic (see Section \ref{sec:permutation} for details). }
    
    \item {\it Real-Data Applications}: In Section \ref{sec:data} we apply the proposed method to compare images of digits in the noisy MNIST dataset. The MMMD effectively distinguishes different digits for significantly more noisy images compared to its single kernel counterparts, again illustrating the advantage of using multiple kernels. 
    
    \item \textcolor{black}{ {\it Increasing Number of Kernels}: In Section \ref{sec:rkernels} we investigate the behavior of the Mahalanobis and other aggregation strategies when the number of kernels is allowed to grow with the sample size. Specifically, we derive consistent tests based on Mahalanobis, as well as maximum and $L_2$ type aggregations, in the growing $r$ regime. }
    
    \end{itemize}

    Our results on the joint distribution for multiple kernels are also more broadly useful in understanding the asymptotic properties of general aggregation strategies. To demonstrate this we present two applications. 
    
    \begin{itemize} 
    
    \item In Section \ref{sec:broaderscope} we propose an asymptotic implementation of the adaptive MMD test recently proposed in \citet{schrab2021mmd}, and derive its asymptotic local power. Numerical results comparing the MMMD method and the aforementioned adaptive test are also reported in the supplementary materials. 
    
    \item \textcolor{black}{ Moreover, in Section \ref{sec:kernelbandwidth} we derive the asymptotic distribution of the MMMD statistic for kernels with bandwidths depending on the sample size. Specifically, we show that when the scaling parameters are chosen proportional to the optimal bandwidth (as in \cite{kernelnonparametricsmoothalternatives,schrab2021mmd}), then the vector of MMD estimates has a multivariate normal distribution under the null. Using this we construct a test which aggregates multiple kernels with a chi-squared distribution under $H_0$. } 
    \end{itemize} 
    
    The codes for all the experiments can be found in \href{https://github.com/anirbanc96/MMMD-boost-kernel-two-sample}{https://github.com/anirbanc96/MMMD-boost-kernel-two-sample}.

    \section{Kernel Maximum Mean Discrepancy and Mahalanobis Aggregation}

    We begin by recalling the fundamentals of the kernel two-sample test as introduced in \citet{gretton2012kernel} in Section \ref{sec:Kdefiniton}.  Then in Section \ref{sec:multiplekernels} we describe our proposed test statistic obtained by combining multiple kernels.

    \subsection{Kernel Maximum Mean Discrepancy}\label{sec:Kdefiniton} 
    
    Suppose $\mathcal X$ is a separable metric space and $\mathscr B(\mathcal X)$ is the sigma-algebra generated by the open sets of $\mathcal X$. Denote by $\mathcal P(\mathcal X)$ the collection of all probability distributions on $(\mathcal X, \mathscr B(\mathcal X))$. Suppose $P, Q \in \mathcal  P(X)$ and $X \sim P$ and $Y \sim Q$ be random variables distributed as $P$ and $Q$, respectively. Throughout we will assume that $P$ and $Q$ are {\it non-atomic}. 
    The maximum mean discrepancy (MMD) between $P$ and $Q$ is defined as 
    \begin{align}\label{eq:FPQ}
        \mathrm{MMD}\left[\mathcal{F}, P, Q \right] = \sup_{f\in\mathcal{F}}\left\{ \E_{X \sim P} [ f( X ) ] -\E_{Y \sim Q} [f( Y ) ] \right \} ,
    \end{align} 
    where $\mathcal{F}$ is the unit ball of a reproducing kernel Hilbert space (RKHS) $\mathcal H$ defined on $\mathcal X$ \citep{aronszajn1950theory}.  Since $\mathcal H$ is an RKHS, for every $x \in \mathcal X$ the evaluation map operator $\eta_x: \mathcal H \rightarrow \mathbb R$ given $\eta_x(f) = f(x)$ is continuous. Thus, by the Riesz representation theorem \cite[Theorem II.4]{reedsimon}  for each $x \in \mathcal X$ there is a feature mapping $\psi_x \in \mathcal H$ such that $f(x) = \langle f, \psi_x  \rangle_\mathcal H$, for every $f \in \mathcal H$,  where $\langle\cdot,\cdot\rangle_{\mathcal{H}}$ is the inner product in $\mathcal{H}$. The feature mapping takes the canonical form $\psi_x(\cdot) = \sfK (x, \cdot)$, where $\sfK : \mathcal X \times \mathcal X \rightarrow \mathbb R$ is a positive definite kernel.  This, in particular, implies that $\sfK (x,y) = \langle \psi(x),\psi(y)\rangle_{\mathcal{H}}$. Extending the notion of feature map, an element $\mu_{P}\in\mathcal{H}$ is defined to be the \textit{mean embedding} of $P\in\mathcal{P}\left(\mathcal{X}\right)$ if 
    \begin{align}\label{eq:fXP}
    \langle f,\mu_{P}\rangle_{\mathcal{H}} = \E_{X\sim P}[f(X)] , 
    \end{align} 
    for all $f\in \mathcal{H}$.
    By the canonical form of the feature map it follows that 
    \begin{align}\label{eq:XP}
        \mu_{P}(t) : = \int_{\mathcal{X}} \psi_t(x) \mathrm d P(x) = \E_{X \sim P} [\psi_t(X)] =  \E_{X \sim P}  [\sfK(t, X)] . 
    \end{align} 
    Throughout we will make the following assumption: 
    
    \begin{assumption}\label{assumption:K} {\em The kernel $\sfK : \mathcal X \times \mathcal X \rightarrow \mathbb R$ satisfies the following: 
    \begin{itemize} 
    
    \item[(1)]  $\E_{X \sim P}[\sfK (X,X)^{\frac{1}{2}}]<\infty$ and $\E_{Y \sim Q} [\sfK (Y, Y)^{\frac{1}{2}} ]<\infty$. 
    
    \item[(2)]  $\sfK$ is \textit{characteristic}, that is, the mean embedding $\mu:\mathcal{P}(\mathcal{X})\rightarrow\mathcal{H}$ is a one-to-one (injective) function. 
    \end{itemize} }
    \end{assumption} 
    
    Assumption \ref{assumption:K} ensures that $\mu_{P}, \mu_{Q}\in\mathcal{H}$ (see \citet[Lemma 3]{gretton2012kernel} and \citet[Lemma 2.1]{park2020measure}) and MMD defines a metric on $\mathcal{P}(\mathcal{X})$. Then the MMD can be expressed as the distance between mean embeddings in $\mathcal H$ (see \citet[Lemma 4]{gretton2012kernel}): 
    \begin{align}\label{eq:KPQ}
        \mathrm{MMD}^2\left[\mathcal{F},P,Q\right] = \left\|\mu_{P}-\mu_{Q}\right\|_{\mathcal{H}}^2 , 
    \end{align}
    where $\|\cdot\|_{\mathcal{H}}$ is the norm corresponding to the inner product $\langle\cdot,\cdot\rangle_{\mathcal{H}}$. 
    This implies $ \mathrm{MMD}^2\left[\mathcal{F},P,Q\right]  = 0$ if and only if $P=Q$. Expanding the square in \eqref{eq:KPQ} and using the representation in \eqref{eq:XP} it follows that (see \citet[Lemma 6]{gretton2012kernel} for details) 
    \begin{align*}
        \mathrm{MMD}^2\left[\mathcal{F}, P, Q\right] = \E_{X, X' \sim P} [ \sfK (X, X') ]  +  \E_{Y, Y' \sim Q} [\sfK (Y, Y')]    - 2\E_{X \sim P, Y \sim Q} [\sfK (X, Y)] .
    \end{align*} 
    Therefore, based on i.i.d. observations $\sX_m := \{X_1, X_2, \ldots, X_m\} \text{ and } \sY_n := \{Y_1, Y_2, \ldots, Y_n\}$, a natural unbiased estimate of $\mathrm{MMD}^{2}\left[\mathcal{F},P,Q\right]$ is given by, 
    \begin{align}\label{eq:MMDXY}
        \emmd \left[\sfK, \sX_m, \sY_n \right] = \mathcal W_{\sX_m} + \mathcal W_{\sY_n} - 2 \mathcal B_{\sX_m, \sY_n} , 
    \end{align}   
    where 
    \begin{align}\label{eq:WX}
     \mathcal W_{\sX_m} :=   \frac{1}{m(m-1)}\sum_{1 \leq i \ne j \leq m} \sfK \left(X_{i},X_{j}\right) \text{ and }  \mathcal W_{\sY_n} := \frac{1}{n(n-1)}\sum_{1 \leq i \ne j \leq n} \sfK \left(Y_{i},Y_{j}\right) 
     \end{align} 
     is the average of the kernel dissimilarities within the samples in $\sX_m$ and $\sY_n$, respectively, and  
     \begin{align}\label{eq:BXY}
     \mathcal B_{\sX_m, \sY_n} :=  \frac{1}{mn}\sum_{i=1}^{m}\sum_{j=1}^{n} \sfK \left(X_{i},Y_{j}\right) 
    \end{align}
     is the average of the kernel dissimilarities between the samples in $\sX_m$ and $\sY_n$. 
    Throughout we will assume $N := m+n\rightarrow\infty$ such that 
    \begin{align}\label{eq:mn}
    \frac{m}{m+n}\rightarrow\rho \in (0,1) .
    \end{align} 
    Then $\emmd[\sfK, \sX_m, \sY_n]$ is a consistent estimate of $\mathrm{MMD}^2\left[\cF, P, Q \right]$ (see \citet[Theorem 7]{gretton2012kernel}), that is, 
    \begin{align}\label{eq:Kmn}
    \emmd[\sfK, \sX_m, \sY_n] \pto \mathrm{MMD}^2\left[\cF, P, Q \right] . 
    \end{align} 
    Hence, the test which rejects $H_0$ in \eqref{eq:H01PQ} for large values of $\mathrm{MMD}^2\left[\sfK, \sX_m, \sY_n \right]$ is universally consistent. In fact, for the consistency result it suffices to assume $\min\{m, n\}\rightarrow \infty$. The existence of the limit in \eqref{eq:mn} will be required for deriving the asymptotic distribution of the test statistic.
    
    \subsection{Aggregating Multiple Kernels} 
    \label{sec:multiplekernels}

    Fix $r \geq 1$ and suppose $\sfK_1, \sfK_2, \ldots, \sfK_r$ be $r$ distinct kernels each of which satisfy Assumption \ref{assumption:K}. Denote the vector of MMD estimates as 
    \begin{align}\label{eq:Kvector}
        \mathrm{MMD}^{2}\left[ \cK, \sX_m, \sY_n \right] = \left(\mathrm{MMD}^{2}[\sfK_{1}, \sX_m, \sY_n],\ldots, \mathrm{MMD}^{2}[\sfK_{r}, \sX_m, \sY_n]\right)^{\top} , 
    \end{align}
    where $\cK := \{\sfK_1, \sfK_2, \ldots, \sfK_r \}$. In this paper we propose a new test statistic that combines the contributions of the different kernels using the Mahalanobis distance of $\mathrm{MMD}^{2}\left[ \cK, \sX_m, \sY_n \right]$ as follows:  
    \begin{align}\label{eq:aggregateS}
    \left( \mathrm{MMD}^{2}\left[ \cK, \sX_m, \sY_n \right]  \right)^\top \bm S^{-1}  \left( \mathrm{MMD}^{2}\left[ \cK, \sX_m, \sY_n \right]  \right) , 
    \end{align}
    where is $\bm S$ is a consistent estimate of the limiting covariance matrix of $\mathrm{MMD}^{2}\left[ \cK, \sX_m, \sY_n \right]$ under $H_0$ (which we denote by ${\bm \Sigma}_{H_0}= ((\sigma_{ab}))_{1 \leq a, b \leq r}$). 
    Note that adjusting by the covariance matrix $\bm S$ brings the contributions of the individual MMD estimates in the same scale and by selecting a range of kernels/bandwidths in $\cK$ one can detect more fine-grained deviations from $H_0$, leading to significant power improvements as will be seen in Section \ref{sec:experiments}. (In Appendix \ref{sec:Sigma} we present general conditions under which $\bm \Sigma_{H_0}$ is invertible, which, in particular, hold for any collection of Gaussian or Laplace kernels.)

    In Corollary \ref{cor:TM} we compute 
    \begin{align}\label{eq:H0sigma} 
        \sigma_{ab}  & :=   \lim_{N \rightarrow \infty} (m+n)^2 \left(\Cov_{H_0}\left[ \mathrm{MMD}^{2}\left[ \cK, \sX_m, \sY_n \right] \right]\right)_{ab} \\ 
            & = \frac{2}{\rho^2(1-\rho)^2} \E_{X, X' \sim P}\left[\sfK_a^\circ(X, X')\sfK_b^\circ(X, X')\right] \nonumber , 
    \end{align}
    where 
    \begin{align}\label{eq:Kxycentered}
    \sfK_a^\circ (x, y)
    = \sfK_a(x,y) - \E_{X \sim P}\sfK_{a}(X, y) - \E_{X' \sim P}\sfK_{a}(x, X') + \E_{X, X' \sim P}\sfK_{a}(X, X') , 
    \end{align}
    is the centered version of the kernel $\sfK_a$, for $1 \leq a \leq r$. Therefore, a natural empirical estimate of $\bm \Sigma_{H_0}$ is given by $\hat {\bm \Sigma} = (( \hat \sigma_{ab} ))_{1 \leq a, b \leq r}$, where 
    \begin{align}\label{eq:H0sigmaestimate}
        \hat \sigma_{ab} & = \frac{2}{\hat{\rho}^2(1-\hat{\rho})^2} \cdot \frac{1}{m^2} \sum_{1 \leq i, j \leq m} \hat \sfK_a^\circ(X_i, X_j) \hat \sfK_b^\circ(X_i, X_j) , 
    \end{align} 
    with
    \begin{align}\label{eq:estimateKxy} 
    \hat \sfK_a^\circ (x, y)= \sfK_a(x,y) - \frac{1}{m} \sum_{u=1}^m \sfK_{a}(X_u, y) - \frac{1}{m} \sum_{v=1}^m \sfK_{a}(x, X_v) + \frac{1}{m^2} \sum_{u, v=1}^m\sfK_{a}(X_u, X_v) 
    \end{align} 
     being the empirical analogue of $\sfK_a^\circ$ and $\hat{\rho} = \frac{m}{m+n}$. Therefore, choosing $\bm S = \hat {\bm \Sigma}$ in \eqref{eq:aggregateS} we define the {\it Mahalanobis aggregated MMD} (MMMD) statistic as follows: 
    \begin{align}\label{eq:TM} 
    T_{m, n} : =  \left( \mathrm{MMD}^{2}\left[ \cK, \sX_m, \sY_n \right]  \right)^\top \hat{\bm \Sigma}^{-1}  \left( \mathrm{MMD}^{2}\left[ \cK, \sX_m, \sY_n \right]  \right) .
    \end{align} 
    In Corollary \ref{cor:TM} we show that $\hat {\bm \Sigma} \pto \bm \Sigma_{H_0}$, hence \eqref{eq:Kmn} implies that 
        \begin{align}\label{eq:TK}
            T_{m, n}  \pto \left( \mathrm{MMD}^{2}\left[ \bm\cF, P, Q \right]  \right)^\top \bm \Sigma^{-1}_{H_0}  \left( \mathrm{MMD}^{2}\left[ \bm\cF, P, Q \right]  \right) :=  T_{\cK} . 
        \end{align}
    where $\bm\cF = \left\{\cF_{1},\ldots,\cF_{r}\right\}$, with $\cF_{a}$ being the unit ball in the RKHS of $\sfK_{a}$, for all $1\leq a\leq r$, and 
    \begin{align}\label{eq:defbcFmmd}
        \mathrm{MMD}^{2}\left[ \bm\cF, P, Q \right] = \left(\emmd\left[\cF_{1},P,Q\right],\cdots,\emmd\left[\cF_{r},P,Q\right]\right)^{\top}.
    \end{align} 
    Note that $T_{\cK}=0$ under $H_0$ and $T_{\cK} > 0$ whenever $P \ne Q$. Hence, a test rejecting $H_0$ for `large' values of $T_{m, n}$ will be  universally consistent. 
    However, to construct a test based on $T_{m, n}$ we need to chose a cut-off (rejection region) based on the data. The first step towards this is to derive the limiting null distribution of $\mathrm{MMD}^{2}\left[ \cK, \sX_m, \sY_n \right] $. This is discussed in Section \ref{sec:H0asymptotic}. 
    
    \section{Asymptotic Null Distribution}
    \label{sec:H0asymptotic} 
    
    In this section we derive the limiting distribution of the vector of MMD estimates \eqref{eq:Kvector} under $H_0$ and, consequently, that of the proposed statistic $T_{m, n}$, using the framework of multiple Weiner-It\^{o} stochastic integrals. 
    (We recall the definition and basic properties of multiple Weiner-It\^{o} stochastic integrals in Appendix \ref{sec:stochasticintegral} of the supplementary material.) 
    
    \begin{theorem}\label{thm:K} 
    Suppose $\cK=\{\sfK_1, \sfK_2, \ldots, \sfK_r\}$ be a collection of $r$ distinct kernels such that $\sfK_a$ satisfies Assumption $\ref{assumption:K}$ and $\sfK_{a}\in L^{2}(\mathcal{X}^{2}, P^{2})$, for $1 \leq a \leq r$. Then under $H_0$, in the asymptotic regime \eqref{eq:mn}, 
    \begin{align}\label{eq:GK}
        (m+n) \mathrm{MMD}^{2}[ \cK, \sX_m, \sY_n ]  \dto G_\cK := \frac{1}{\rho(1-\rho)} \Big( I_2(\sfK^\circ_{1}), I_2(\sfK^\circ_{2}), \ldots, I_2(\sfK^\circ_{r}) \Big)^\top , 
    \end{align}
    where $I_2(\cdot)$ is the bivariate multiple Weiner-It\^{o} stochastic integral as defined in Appendix \ref{sec:stochasticintegral}, $\sfK_a^\circ$ is defined in \eqref{eq:Kxycentered}, for $1 \leq a \leq r$.  Moreover, the characteristic function of $G_\cK$ at $\bm \eta = (\eta_1, \eta_2, \ldots, \eta_r)^\top \in \R^r$ is given by:  
    \begin{align}\label{eq:expGK}
        \Phi\left(\bm \eta \right) := \E[e^{\iota \bm \eta^\top G_\cK} ]= \prod_{\lambda\in \Lambda(\bm \eta)}\dfrac{\exp\left(-\frac{\iota \lambda  }{\rho(1-\rho)}\right)}{\sqrt{1-\frac{2 \iota \lambda}{\rho (1-\rho)}}} , 
    \end{align}
    where $\Lambda(\bm \eta)$ is the set of eigenvalues (with repetitions) of the Hilbert-Schmidt operator $\cH_{\cK, \bm \eta}: L^2(\cX, P) \rightarrow L^2(\cX, P)$ defined as: 
    \begin{align}\label{eq:HKeta}
        \cH_{\cK, \bm \eta}[f(x)] = \int_{\cX} \left(\sum_{a=1}^r \eta_a \sfK_a^\circ(x,y)\right)f(y)\mathrm d P(y) . 
    \end{align}
    \end{theorem}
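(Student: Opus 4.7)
The plan is to combine the Cram\'er--Wold device with a Mercer-type spectral expansion of the centered kernel, and then read off the characteristic function term-by-term. By linearity of the MMD estimator in its kernel, for any $\bm \eta \in \mathbb R^r$ we have $\bm \eta^\top \mathrm{MMD}^2[\cK, \sX_m, \sY_n] = \mathrm{MMD}^2[\sfK_{\bm \eta}, \sX_m, \sY_n]$ with $\sfK_{\bm \eta} := \sum_{a=1}^r \eta_a \sfK_a$, so by Cram\'er--Wold it suffices to prove the scalar limit $(m+n)\mathrm{MMD}^2[\sfK_{\bm\eta}, \sX_m, \sY_n] \dto \frac{1}{\rho(1-\rho)} I_2(\sfK_{\bm\eta}^\circ)$ for each fixed $\bm\eta$. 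A short computation from \eqref{eq:WX}--\eqref{eq:BXY} shows that $\mathrm{MMD}^2[\sfK, \sX_m, \sY_n]$ is unchanged if $\sfK(x,y)$ is replaced by $\sfK(x,y) - a(x) - a(y) + c$ for any $a:\cX \to \mathbb R$ and $c \in \mathbb R$, so I may assume the kernel has already been centered and work with $\sfK_{\bm\eta}^\circ$ throughout.

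Because $\sfK_{\bm\eta}^\circ$ is symmetric and lies in $L^2(\cX^2, P^2)$, the operator $\cH_{\cK,\bm\eta}$ in \eqref{eq:HKeta} is Hilbert--Schmidt and admits an $L^2(P^2)$-expansion
\[
\sfK_{\bm\eta}^\circ(x,y) = \sum_{k\ge 1}\lambda_k \phi_k(x)\phi_k(y) ,
\]
with $\{\phi_k\}$ orthonormal in $L^2(P)$, $\mathbb E_P[\phi_k(X)] = 0$ (by centering), and $\sum_k \lambda_k^2 < \infty$. For the rank-one kernel $\phi_k \otimes \phi_k$, expanding $\mathcal W_{\sX_m}$, $\mathcal W_{\sY_n}$, $\mathcal B_{\sX_m,\sY_n}$ and collecting diagonal terms shows that $(m+n)\,\mathrm{MMD}^2[\phi_k\otimes\phi_k, \sX_m, \sY_n]$ equals $(m+n)(\bar\phi_k^X - \bar\phi_k^Y)^2$ minus a deterministic bias correction converging to $1/(\rho(1-\rho))$, plus an $o_p(1)$ remainder, where $\bar\phi_k^X, \bar\phi_k^Y$ are the sample means of $\phi_k$. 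By the multivariate CLT applied to any finite collection of indices $k$ (orthonormality gives joint independence of limits) together with \eqref{eq:mn}, each such term converges to $\frac{1}{\rho(1-\rho)}(W_k^2-1)$ with $\{W_k\}_{k\ge 1}$ i.i.d.\ $\cN(0,1)$. To promote this finite-dimensional convergence to the full series, I would truncate at level $K$, control the tail $(m+n)\sum_{k>K}\lambda_k\,\mathrm{MMD}^2[\phi_k\otimes\phi_k, \sX_m, \sY_n]$ via a degenerate two-sample $U$-statistic variance bound proportional to $\sum_{k>K}\lambda_k^2 \to 0$ (uniformly in $(m,n)$), and apply the bivariate Wick identity \eqref{eq:fgstochasticintegral} termwise to identify the limit with $\frac{1}{\rho(1-\rho)}\, I_2(\sfK_{\bm\eta}^\circ)$. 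This proves \eqref{eq:GK}.

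For the characteristic function \eqref{eq:expGK}, the spectral representation above yields $\bm\eta^\top G_\cK = \frac{1}{\rho(1-\rho)}\sum_k \lambda_k(W_k^2-1)$ with $\{\lambda_k\} = \Lambda(\bm\eta)$ and $\{W_k\}$ i.i.d.\ $\cN(0,1)$. Since $\mathbb E[e^{\iota t(W^2-1)}] = e^{-\iota t}/\sqrt{1-2\iota t}$ for $W \sim \cN(0,1)$, applying this with $t = \lambda_k/(\rho(1-\rho))$ and invoking independence of $\{W_k\}$ (with absolute convergence of the product guaranteed by $\sum_k \lambda_k^2 < \infty$) gives \eqref{eq:expGK}. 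I expect the main obstacle to be the tail-control step in the previous paragraph: one needs a clean variance bound for the degenerate $U$-statistic associated with $\sum_{k>K}\lambda_k\,\phi_k\otimes\phi_k$ that is uniform in $(m,n)$, and the diagonal bias terms produced by squaring $\bar\phi_k^X - \bar\phi_k^Y$ must be shown to align with the It\^o-diagonal subtraction in \eqref{eq:fgstochasticintegral} after summing over the spectrum.
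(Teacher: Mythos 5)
Your proposal is correct and follows the same overall strategy as the paper: reduce to a single combined kernel $\mathsf H_{\bm\eta}=\sum_a\eta_a\sfK_a$ via Cram\'er--Wold, pass to the centered kernel and its spectral expansion, and obtain the limit as a weighted sum of centered $\chi^2_1$ variables. The differences are in how the two halves are executed. For the scalar convergence, the paper simply invokes the known univariate null limit (\cite[Theorem 5]{gretton2012kernel}, restated as Proposition \ref{prop:H0joint}); your truncation argument --- finite-dimensional CLT for the projections $\bar\phi_k^X-\bar\phi_k^Y$, a tail variance bound of order $\sum_{k>K}\lambda_k^2$ uniform in $(m,n)$, and a limit-interchange step --- is precisely the self-contained route the paper itself uses in its proof of Proposition \ref{ppn:H0N} (set $h=0$ there; the tail control is Lemma \ref{lm:Wxymn} and the interchange is via \cite[Lemma 6]{jansonustat}), so nothing is missing, though you should state the interchange lemma explicitly rather than leaving it implicit. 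For the identification of the limit with $G_\cK$, your route is actually cleaner than the paper's: by identifying each scalar limit directly with $\tfrac{1}{\rho(1-\rho)}I_2(\mathsf H_{\bm\eta}^\circ)=\bm\eta^\top G_\cK$ through the termwise Wick identity \eqref{eq:fgstochasticintegral}, you can apply the standard Cram\'er--Wold theorem verbatim, whereas the paper first establishes convergence of each projection to a law known only through its characteristic function, then needs L\'evy's continuity theorem (plus a continuity-at-zero lemma, Lemma \ref{lm:phicontinuity}) to assemble a joint limit, and finally matches characteristic functions via \cite[Theorem 6.1]{janson1997gaussian} and the multiplication formula to recognize that limit as $G_\cK$. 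The characteristic-function computation in your last paragraph, using $\E[e^{\iota t(W^2-1)}]=e^{-\iota t}/\sqrt{1-2\iota t}$ and $\sum_k\lambda_k^2<\infty$, correctly recovers \eqref{eq:expGK}.
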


    The proof of Theorem \ref{thm:K} is given in Appendix \ref{sec:pfK} of the supplementary materials. 
    For an alternate representation of the limiting distribution in \eqref{eq:GK} see in Remark \ref{remark:limitingdistribution} in Appendix \ref{sec:pfK}.

    Theorem \ref{thm:K} allows us to obtain the limiting distribution of any smooth function of finitely many MMD estimates under $H_0$. In particular, for the MMMD statistic $T_{m, n}$ in \eqref{eq:TM} we have the following result. The proof is given in Appendix \ref{sec:corTMpf}. 
    
    \begin{corollary}\label{cor:TM} Suppose $\bm \Sigma_{H_0} := ((\sigma_{ab}))_{1 \leq a, b \leq r} $ and $\hat {\bm \Sigma}:= ((\hat \sigma_{ab}))_{1 \leq a, b \leq r}$ be as in \eqref{eq:H0sigma} and \eqref{eq:H0sigmaestimate}, respectively. Then 
    \begin{align}\label{eq:sigmaH0ab} 
    \sigma_{ab} = \frac{2}{\rho^2(1-\rho)^2} \E_{X, X' \sim P}\left[\sfK_a^\circ(X, X')\sfK_b^\circ(X, X')\right] , 
    \end{align}
    where $\sfK_a^\circ$, for $1 \leq a \leq r$, is as defined in \eqref{eq:Kxycentered}. Moreover, in the asymptotic regime \eqref{eq:mn}, 
    \begin{align}\label{eq:sigmaH0limit}
    \hat \sigma_{ab} \stackrel{a.s.} \rightarrow \sigma_{ab}, 
    \end{align}
    for $1 \leq a, b \leq r$. Furthermore, under $H_0$, for $G_{\cK}$ as in \eqref{eq:GK}, 
    \begin{align}\label{eq:GKH0}
    (m+n)^2 T_{m, n} \dto G_\cK^\top \bm \Sigma_{H_0}^{-1} G_{\cK} . 
    \end{align}  
    \end{corollary}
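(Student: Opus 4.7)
The plan is to establish the three claims of the corollary in sequence: the covariance formula \eqref{eq:sigmaH0ab}, the consistency \eqref{eq:sigmaH0limit} of the plug-in estimator, and the weak limit \eqref{eq:GKH0}. The first is a direct moment computation exploiting the degenerate $U$-statistic structure under $H_0$; the second is a strong-law argument for a $V$-statistic with plug-in centering; the third is a routine combination of Theorem \ref{thm:K} with Slutsky's theorem.

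For \eqref{eq:sigmaH0ab}, I would first observe the algebraic invariance
\begin{align*}
\mathrm{MMD}^{2}[\sfK_a, \sX_m, \sY_n] = \mathrm{MMD}^{2}[\sfK_a^\circ, \sX_m, \sY_n],
\end{align*}
which is obtained by expanding \eqref{eq:MMDXY}--\eqref{eq:BXY} with $\sfK_a^\circ$ in place of $\sfK_a$ and noting that the marginal corrections $\mathbb{E}_{X\sim P}\sfK_a(X, \cdot)$ and the constant $\mathbb{E}_{X, X'\sim P}\sfK_a(X, X')$ telescope out in $\mathcal{W}_{\sX_m} + \mathcal{W}_{\sY_n} - 2\mathcal{B}_{\sX_m, \sY_n}$. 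Under $H_0$ we may therefore work with the $P$-degenerate kernel $\sfK_a^\circ$, where the degeneracy $\mathbb{E}_{Y\sim P}[\sfK_a^\circ(x, Y)] = 0$ follows from $P = Q$. The standard variance formulas for degenerate $U$-statistics give $\Cov(\mathcal{W}_{\sX_m}, \mathcal{W}_{\sX_m}) = \frac{2}{m(m-1)} E_{ab}$ (writing $E_{ab} := \mathbb{E}_{X, X' \sim P}[\sfK_a^\circ(X, X') \sfK_b^\circ(X, X')]$) and analogously $\frac{2}{n(n-1)} E_{ab}$ for $\sY_n$; independence of the two samples kills $\Cov(\mathcal{W}_{\sX_m}, \mathcal{W}_{\sY_n})$; degeneracy kills every $\Cov(\mathcal{W}, \mathcal{B})$ (integrating out the free $Y$ variable yields zero); and a direct moment computation yields $\Cov(\mathcal{B}_{\sX_m, \sY_n}, \mathcal{B}_{\sX_m, \sY_n}) = \frac{1}{mn}E_{ab}$. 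Summing with the correct signs, multiplying by $(m+n)^2$, and using $\tfrac{1}{\rho^2} + \tfrac{1}{(1-\rho)^2} + \tfrac{2}{\rho(1-\rho)} = \bigl(\tfrac{1}{\rho} + \tfrac{1}{1-\rho}\bigr)^2 = \tfrac{1}{\rho^2(1-\rho)^2}$ delivers \eqref{eq:sigmaH0ab}.

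For \eqref{eq:sigmaH0limit}, I would decompose the empirical centered kernel as $\hat\sfK_a^\circ(x, y) = \sfK_a(x, y) - \bar f_{a, m}(x) - \bar f_{a, m}(y) + \bar c_{a, m}$, with $\bar f_{a, m}(y) = m^{-1}\sum_{u=1}^m \sfK_a(X_u, y)$ and $\bar c_{a, m} = m^{-2}\sum_{u, v}\sfK_a(X_u, X_v)$, and expand the double sum $m^{-2}\sum_{i, j} \hat\sfK_a^\circ(X_i, X_j) \hat\sfK_b^\circ(X_i, X_j)$ as a fixed finite linear combination of $V$-statistics of order at most four in the $X_u$'s, each with an $L^2$-integrable kernel by the assumption $\sfK_a, \sfK_b \in L^2(\cX^2, P^2)$. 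The SLLN for $V$-statistics identifies each limit, and recombination collapses all empirical centerings to their population counterparts, leaving $E_{ab}$. Combined with $\hat\rho \stackrel{a.s.}{\to} \rho$, this gives \eqref{eq:sigmaH0limit}. The main delicate point is that the plug-in centering is built from the same sample $\sX_m$ used in the outer averaging; the combinatorial expansion into a fixed finite family of $V$-statistics is precisely what sidesteps any empirical-process machinery.

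Finally, for \eqref{eq:GKH0}, Theorem \ref{thm:K} supplies $(m+n)\mathrm{MMD}^{2}[\cK, \sX_m, \sY_n] \dto G_\cK$, and \eqref{eq:sigmaH0limit} yields $\hat{\bm \Sigma} \stackrel{a.s.}{\to} \bm \Sigma_{H_0}$. Invertibility of $\bm \Sigma_{H_0}$ is guaranteed under the conditions referenced in Appendix \ref{sec:Sigma}, so the continuous mapping theorem gives $\hat{\bm \Sigma}^{-1} \stackrel{a.s.}{\to} \bm \Sigma_{H_0}^{-1}$. A final application of Slutsky's theorem together with the continuous mapping theorem applied to the quadratic form in \eqref{eq:TM} produces \eqref{eq:GKH0}.
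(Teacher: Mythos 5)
Your proposal is correct, and parts (i) and (iii) follow the paper essentially verbatim: the paper also reduces \eqref{eq:sigmaH0ab} to the centered kernel via $\mathrm{MMD}^{2}[\sfK_a,\sX_m,\sY_n]=\mathrm{MMD}^{2}[\sfK_a^\circ,\sX_m,\sY_n]$ and the degeneracy $\mathbb{E}[\sfK_a^\circ(X_1,X_2)\mid X_1]=0$ (it compresses your explicit term-by-term covariance accounting into ``a direct computation,'' which you have correctly filled in), and it obtains \eqref{eq:GKH0} from Theorem \ref{thm:K}, Slutsky, and continuous mapping exactly as you do. Where you genuinely diverge is \eqref{eq:sigmaH0limit}. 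The paper identifies $\hat\sigma_{ab}$ with $\frac{2}{\hat\rho^2(1-\hat\rho)^2}\Tr[\sfKmatrix_a\sfKmatrix_b]$, compares this trace to $\Tr[\bm\sfK_a^\circ\bm\sfK_b^\circ]$ (the population-centered Gram matrices) via a Cauchy--Schwarz bound in Frobenius norm, invokes the already-proved Lemma \ref{lm:HHhatL2} to get $\|\sfKmatrix_a-\bm\sfK_a^\circ\|\stackrel{a.s.}{\to}0$, and then applies the $U$-statistic SLLN only to the single population-centered object. You instead expand the doubly-centered $V$-statistic into a fixed finite family of $V$-statistics of low order and apply the SLLN to each, letting the empirical centerings collapse to their population counterparts by recombination. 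Both routes are valid; the paper's buys economy by reusing machinery it needs anyway for Theorem \ref{thm:estimateH0}, while yours is more self-contained and makes the cancellation of the plug-in centering explicit. One shared caveat: both arguments must control diagonal terms of the form $\sfK_a(X_i,X_i)$ (yours through the $V$-statistic diagonals, the paper's through Lemma \ref{lemma:kernel-avg-convg}, which assumes $\mathbb{E}[\sfK_a(X,X)^2]<\infty$), a moment condition stated in Theorem \ref{thm:estimateH0} but not in the corollary itself; this is a gap in the paper's hypotheses rather than in your argument, but you should flag the integrability you are using when invoking the SLLN for $V$-statistics.
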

    
    \section{Calibration Using Gaussian Multiplier Bootstrap}
    \label{sec:H0implementation}
    In order to apply Corollary \ref{cor:TM} to obtain a valid level $\alpha$ test based on $T_{m, n}$ we need to estimate the quantiles of the limiting distribution in \eqref{eq:GKH0}, which depends on the (unknown) distribution $P$. Although the   distribution in \eqref{eq:GKH0} does not have a tractable closed form, we can efficiently estimate its quantiles based on the samples $\sX_m = \{X_1, X_2, \ldots, X_m\}$, using the kernel Gram matrix representation of the MMD estimate and the Gaussian multiplier bootstrap. 
    Towards this, for each kernel $\sfK_a$ define its Gram matrix based on $\sX_m$ as: 
    \begin{align*}
       \hat{\bm{\sfK}}_{a}= \left(\sfK_a(X_{i},X_{j})\right)_{1\leq i, j\leq m} ,
    \end{align*}
    and their centered versions as:  
    \begin{align}\label{eq:centered-kernel}
        \sfKmatrix_{a} = \bm C \hat{\bm{\sfK}}_{a} \bm C/m = \left(\dfrac{\hat \sfK_a^\circ(X_{i},X_{j})}{m}\right)_{1\leq i, j\leq m} , \quad \text{ where } \quad \bm C = \bm{I}-\frac{1}{m}\bm{1}\bm{1}^\top , 
    \end{align}
    and $\hat \sfK_a^\circ$ is as defined in \eqref{eq:estimateKxy}, for $1\leq a \leq r$.
    For $\hat{\rho}:= \frac{m}{m+n}$, denote 
    \begin{align}\label{eq:def-Em}
        \cE(\cK, \sX_m)  := 
        \begin{pmatrix}
            \bm{Z}_{m}^\top \sfKmatrix_1  \bm{Z}_{m} -  \frac{1}{\hat{\rho}(1-\hat{\rho})}\Tr[ \sfKmatrix_1 ] \\
            \bm{Z}_{m}^\top \sfKmatrix_2 \bm{Z}_{m} - \frac{1}{\hat{\rho}(1-\hat{\rho})} \Tr[ \sfKmatrix_2 ]\\
            \vdots\\
            \bm{Z}_{m}^\top \sfKmatrix_m \bm{Z}_{m} - \frac{1}{\hat{\rho}(1-\hat{\rho})} \Tr[ \sfKmatrix_m ]
        \end{pmatrix} , 
    \end{align}
    where $\bm{Z}_{m} \sim \cN_m(\bm 0, \frac{1}{\hat{\rho}(1-\hat{\rho})} \bm{I} )$ independent of $\sX_m$. In the following theorem we show that distribution of $\cE(\cK, \sX_m)$ conditional on $\sX_m$ converges to $G_\cK$ as in \eqref{eq:GK}. 
    
    \begin{theorem}\label{thm:estimateH0}  
    Suppose $\cK=\{\sfK_1, \sfK_2, \ldots, \sfK_r\}$ be a collection of $r\geq 1$ distinct kernels such that $\sfK_a$ satisfies Assumption \ref{assumption:K} 
    and $\sfK_a$ is bounded, for all $1\leq a\leq r$. Then under $H_0$, in the asymptotic regime \eqref{eq:mn}, $\cE(\cK, \sX_m)\big |\sX_m  \dto G_{\cK}$, 
    almost surely, where $G_\cK$ is as defined in \eqref{eq:GK}. 
    \end{theorem}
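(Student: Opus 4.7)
The plan is to match conditional characteristic functions via an explicit eigendecomposition, and then invoke a Koltchinskii--Gin\'e type spectral convergence to identify the limit with the one produced by Theorem~\ref{thm:K}. By the Cram\'er--Wold device, it suffices to show that for every fixed $\bm{\eta} = (\eta_1, \ldots, \eta_r)^\top \in \mathbb{R}^r$,
\[
\bm{\eta}^\top \cE(\cK, \sX_m) \,\big|\, \sX_m \;\dto\; \bm{\eta}^\top G_\cK
\quad\text{almost surely.}
\]
Set $\bm{M}_m(\bm{\eta}) := \sum_{a=1}^r \eta_a \sfKmatrix_a$, so that $\bm{\eta}^\top\cE(\cK,\sX_m) = \bm{Z}_m^\top \bm{M}_m(\bm{\eta})\bm{Z}_m - \tfrac{1}{\hat\rho(1-\hat\rho)}\Tr[\bm{M}_m(\bm{\eta})]$. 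Since $\bm{M}_m(\bm{\eta})$ is real symmetric, diagonalize it as $\bm{U}_m^\top \diag(\mu_{1,m},\ldots,\mu_{m,m})\bm{U}_m$, and let $\bm{W} := \sqrt{\hat\rho(1-\hat\rho)}\,\bm{U}_m\bm{Z}_m$, whose entries $W_1,\ldots,W_m$ are i.i.d.\ $\mathcal{N}(0,1)$ conditional on $\sX_m$. A standard chi-squared characteristic function computation then gives, for every $t \in \mathbb{R}$,
\[
\E\!\left[e^{it\,\bm{\eta}^\top\cE(\cK,\sX_m)} \,\big|\, \sX_m\right]
\;=\;
\prod_{i=1}^m \frac{\exp\!\left(-\tfrac{it\,\mu_{i,m}}{\hat\rho(1-\hat\rho)}\right)}{\sqrt{1 - \tfrac{2it\,\mu_{i,m}}{\hat\rho(1-\hat\rho)}}}.
\]

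On the other hand, the scaling identity $\cH_{\cK,t\bm{\eta}} = t\,\cH_{\cK,\bm{\eta}}$ combined with \eqref{eq:expGK} in Theorem~\ref{thm:K} shows that the target characteristic function is
\[
\E[e^{it\,\bm{\eta}^\top G_\cK}] \;=\; \Phi(t\bm{\eta}) \;=\; \prod_{\lambda \in \Lambda(\bm{\eta})} \frac{\exp\!\left(-\tfrac{it\lambda}{\rho(1-\rho)}\right)}{\sqrt{1 - \tfrac{2it\lambda}{\rho(1-\rho)}}},
\]
where $\Lambda(\bm{\eta})$ is the spectrum (with multiplicities) of the Hilbert--Schmidt operator $\cH_{\cK,\bm{\eta}}$ defined in \eqref{eq:HKeta}. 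Since $\hat\rho \to \rho$ trivially, the theorem reduces to showing almost sure convergence of the sequence $\{\mu_{i,m}\}_{i=1}^m$ to $\{\lambda_s(\bm{\eta})\}_{s \geq 1}$ in a sense strong enough to justify passage of the limit through the infinite product.

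For the spectral convergence, I would view $\bm{M}_m(\bm{\eta})$ as the (normalized) Gram matrix of the composite centered kernel $\sum_{a=1}^r \eta_a \widehat{\sfK}^\circ_a$ evaluated on $\sX_m$, namely, an empirical discretization of the integral operator $\cH_{\cK,\bm{\eta}}$ on $L^2(\cX,P)$. Under the added moment condition $\E[\sfK_a(X,X)^2]<\infty$, the law of large numbers for $U$-statistics yields $\|\bm{M}_m(\bm{\eta})\|_{\text{HS}}^2 \to \|\cH_{\cK,\bm{\eta}}\|_{\text{HS}}^2$ almost surely, and a Hoffmann--Wielandt type inequality then produces $\ell_2$-convergence of the ordered eigenvalues to the ordered spectrum of $\cH_{\cK,\bm{\eta}}$. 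To transfer this to the product, take logarithms and use the expansion
\[
-\tfrac{it\mu}{c} - \tfrac{1}{2}\log\!\left(1 - \tfrac{2it\mu}{c}\right) = O(\mu^2) \quad \text{as } \mu \to 0,
\]
which is uniform over bounded $c$; this ensures that the tail of the log-product is controlled by $\sum_i \mu_{i,m}^2 = \|\bm{M}_m(\bm{\eta})\|_{\text{HS}}^2$, and the $\ell_2$ spectral convergence allows passing the limit inside via dominated convergence. L\'evy's continuity theorem then delivers the claimed conditional weak convergence. The main obstacle is this spectral convergence step: one must establish the almost-sure $\ell_2$ eigenvalue convergence with the correct pairing and justify exchange of limits in the infinite product, which is precisely the reason the moment hypothesis $\E[\sfK_a(X,X)^2] < \infty$ is strengthened beyond that of Theorem~\ref{thm:K}.
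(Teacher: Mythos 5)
Your overall strategy --- Cram\'er--Wold reduction, diagonalizing the aggregated centered Gram matrix to write the conditional law as a weighted sum of centered $\chi^2_1$ variables, and then matching transforms with the limit from Theorem \ref{thm:K} via spectral convergence --- is exactly the paper's route (the paper works with moment generating functions on a neighborhood of the origin rather than characteristic functions, which is immaterial). The gap is in the step you yourself flag as the main obstacle, and the mechanism you propose for closing it does not work as stated. Convergence of Hilbert--Schmidt norms $\|\bm M_m(\bm\eta)\|^2\rightarrow\|\cH_{\cK,\bm\eta}\|^2$ carries no pairing information, and the Hoffman--Wielandt inequality cannot be applied directly between the $m\times m$ matrix $\bm M_m(\bm\eta)$ and the operator $\cH_{\cK,\bm\eta}$ on $L^2(\cX,P)$, since they act on different spaces; these two ingredients alone therefore do not yield $\ell_2$-convergence of the ordered eigenvalues to the ordered spectrum. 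The paper never proves such paired $\ell_2$-convergence to the operator's spectrum. Instead it (i) applies Hoffman--Wielandt only between two $m\times m$ matrices --- the empirically centered Gram matrix $\hat{\bm{\mathsf{H}}}_{\bm\eta}^\circ$ and the population-centered, zero-diagonal Gram matrix $\bm{\mathsf{H}}_{\bm\eta}^{\circ,-}$ --- to show their spectra are $\ell_2$-close; (ii) proves convergence of every power sum $\sum_s\lambda_s^{\ell}$, $\ell\geq 3$, of the latter matrix to the corresponding power sum of $\cH_{\mathsf{H}_{\bm\eta}^\circ}$ by writing $\Tr[(\bm{\mathsf{H}}_{\bm\eta}^{\circ,-})^{\ell}]$ as a $V$-statistic and invoking the strong law for $U$-statistics; and (iii) observes that the log-MGF of a weighted sum of centered $\chi^2_1$'s depends only on the Hilbert--Schmidt norm (the quadratic term) and the power sums of order $\geq 3$, so these convergences suffice to match the transforms term by term, with the tail of the series controlled essentially as you suggest. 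To keep your formulation you would have to actually establish the $\delta_2$-convergence of spectra in the sense of Koltchinskii--Gin\'e, which is a theorem in its own right and not a one-line consequence of Hilbert--Schmidt norm convergence; otherwise the argument should be restructured around trace moments as the paper does.

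A second, smaller omission: the statement asserts convergence \emph{almost surely}, and your Cram\'er--Wold reduction quantifies over all $\bm\eta\in\R^r$, so the exceptional null set must not depend on $\bm\eta$ (or one must work on a countable dense set of directions and argue continuity). The paper is explicit about this: all of its almost-sure statements are established on events determined by the individual kernels $\sfK_1,\ldots,\sfK_r$ and their products, with $\bm\eta$ entering only through finite linear combinations, so a single probability-one event serves every $\bm\eta$ simultaneously. Your sketch should record this, as should any treatment of the replacement of $\hat\rho$ by $\rho$, which the paper handles with a short conditional $L^2$ estimate rather than by fiat.
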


    The proof of Theorem \ref{thm:estimateH0} is given in Appendix \ref{sec:estimateH0pf}. It shows that the asymptotic distribution of $\cE(\cK, \sX_m)\big |\sX_m$ is the same as that of $(m+n) \mathrm{MMD}^{2}[ \cK, \sX_m, \sY_n ]$. Since $\cE(\cK, \sX_m)\big |\sX_m$ is completely determined by the data $\sX_m$, we can use it approximate the quantiles of any `nice' functions $G_{\cK}$. To this end, define 
    \begin{align}\label{eq:Testimate}
        \hat T_{m}:= \cE(\cK, \sX_m)^\top \hat{\bm{\Sigma}}^{-1} \cE(\cK, \sX_m) . 
    \end{align}  
    Now, by a direct computation $\Var \left[\cE(\cK, \sX_m)\middle|\sX_m\right] = \frac{2}{\hat\rho^2(1-\hat\rho)^2} ( ( \Tr [ \sfKmatrix_{a} \sfKmatrix_{b} ] ) )_{1 \leq a, b \leq m}$. Hence, from the proof of Corollary \ref{cor:TM} (specifically \eqref{eq:sigmaH0limit}) it follows that $\Var \left[\cE(\cK, \sX_m)\middle|\sX_m\right] = \hat{\bm{\Sigma}} \stackrel{a.s.} \rightarrow \bm{\Sigma}_{H_0}$. 
    This combined with Theorem \ref{thm:estimateH0} implies that under $H_0$, 
    \begin{align}
    \hat T_{m} \mid \sX_m \dto G_\cK^\top \bm \Sigma_{H_0}^{-1} G_{\cK} ,  
    \end{align} 
    almost surely. This shows that $\hat T_{m}$ has the same limiting distribution as $(m+n)^2 T_{m, n}$ under $H_0$ (recall \eqref{eq:GKH0}), hence, we can use the quantiles of $\hat T_{m}$ to calibrate the statistic $T_{m, n}$.  Specifically, for $\alpha \in (0, 1)$ denote by $\hat q_{\alpha, m}$ the $\alpha$-th quantile of distribution $\hat T_{m}|\sX_m$ and consider the test function 
    \begin{align}\label{eq:Tmnalpha}
    \phi_{m, n} = \bm 1\{ (m+n)^2 T_{m, n}>\hat q_{1-\alpha, m}\} . 
    \end{align}  
    Corollary \ref{cor:TM}, \eqref{eq:TK}, and \eqref{eq:Testimate} now implies the following result: 
    
    \begin{corollary}[Consistency]\label{cor:consistency} 
    Suppose the assumptions of Theorem \ref{thm:estimateH0} hold and $\phi_{m, n}$ be as defined above. Then $\lim_{m, n \rightarrow \infty} \E_{H_0}[\phi_{m, n}] = \alpha$. Moreover, for any $P \ne Q$, $\lim_{m, n \rightarrow \infty} \E_{H_1}\left[\phi_{m, n} \right] = 1$, that is, $\phi_{m, n}$ is  universally consistent. 
    \end{corollary}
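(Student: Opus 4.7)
The plan is to establish the asymptotic size and the universal consistency statements separately, with the common ingredient being that the bootstrap quantile $\hat q_{1-\alpha, m}$ converges almost surely to a deterministic limit $q_{1-\alpha}$, namely the $(1-\alpha)$-quantile of the distribution of $W := G_\cK^\top \bm \Sigma_{H_0}^{-1} G_\cK$.

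For the size claim I would combine Corollary \ref{cor:TM}, which gives $(m+n)^2 T_{m,n} \dto W$ under $H_0$, with Theorem \ref{thm:estimateH0} and the almost sure convergence $\hat{\bm \Sigma} \to \bm \Sigma_{H_0}$ from \eqref{eq:sigmaH0limit}, to conclude that the conditional law of $\hat T_m$ given $\sX_m$ converges weakly to the law of $W$, almost surely. Provided the CDF of $W$ is continuous at $q_{1-\alpha}$, this a.s.\ weak convergence forces $\hat q_{1-\alpha, m} \to q_{1-\alpha}$ almost surely (since the quantile map is continuous at distribution functions that are continuous at the relevant point). Slutsky's theorem then yields
\begin{align*}
\mathbb{P}_{H_0}\!\left((m+n)^2 T_{m,n} > \hat q_{1-\alpha, m}\right) \to \mathbb{P}(W > q_{1-\alpha}) = \alpha .
\end{align*}

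For universal consistency, \eqref{eq:TK} gives $T_{m,n} \pto T_\cK$, and since $T_\cK > 0$ whenever $P \neq Q$ (as noted after \eqref{eq:TK}), we obtain $(m+n)^2 T_{m,n} \pto \infty$ under $H_1$. The key observation is that $\hat T_m$ is constructed purely from $\sX_m$, which is i.i.d.\ from $P$ under \emph{both} hypotheses; hence Theorem \ref{thm:estimateH0} applies verbatim under $H_1$, and the argument of the previous paragraph again gives $\hat q_{1-\alpha, m} \to q_{1-\alpha}$ almost surely. Since the rejection threshold stays bounded while the statistic diverges, it follows that $\mathbb{E}_{H_1}[\phi_{m,n}] \to 1$.

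The main obstacle is justifying continuity of the CDF of $W$ at $q_{1-\alpha}$, so that the quantile functional is in fact continuous at the limiting law. Exploiting the representation in \eqref{eq:alternatelimit} of each coordinate of $G_\cK$ as an infinite weighted sum of $\chi^2_1 - 1$ variables built from a Gaussian sequence with non-degenerate covariance given by \eqref{eq:covarianceZ}, together with the positive definiteness of $\bm \Sigma_{H_0}$ (implicit in the very definition of $T_{m,n}$), one can argue that $W$ is a non-degenerate quadratic form in an infinite-dimensional Gaussian vector and therefore admits an absolutely continuous distribution on $(0,\infty)$. Modulo this regularity fact, the proof reduces to two clean applications of Slutsky's theorem together with the weak-convergence results already established.
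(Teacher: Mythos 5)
Your argument is correct and follows essentially the same route as the paper, which derives the corollary directly from Corollary \ref{cor:TM}, the convergence \eqref{eq:TK}, and the bootstrap limit \eqref{eq:Testimate}: under $H_0$ the statistic and the bootstrap replicate share the limiting law $G_\cK^\top \bm\Sigma_{H_0}^{-1}G_\cK$ so the level is attained, while under $H_1$ the threshold stays bounded (since $\sX_m\sim P$ under both hypotheses) and $(m+n)^2T_{m,n}$ diverges. Your explicit attention to the continuity of the limiting CDF at $q_{1-\alpha}$ is a point the paper leaves implicit, and your sketch for resolving it (absolute continuity of a non-degenerate polynomial of Gaussians) is sound.
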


    The result above shows that the MMMD statistic with cut-off chosen using the multiplier bootstrap method attains the exact asymptotic level and is universally consistent. In practice, to compute $\hat q_{1-\alpha, m}$ we generate $B$ replicates $\{\hat T_{m}^{(1)}, \hat T_{m}^{(2)}, \ldots, \hat T_{m}^{(B)}\}$ of $\hat T_m$, based on $B$ independent copies of $\bm{Z}_{m}$, and choose $\hat q_{1-\alpha, m}$ to be the sample $\alpha$-th quantile of $\{\hat T_{m}^{(1)}, \hat T_{m}^{(2)}, \ldots, \hat T_{m}^{(B)}\}$. 
    
    \begin{remark}\label{remark:sigmaH0}
    While implementing the test 
    we often replace $\hat{\bm{\Sigma}}^{-1}$ in \eqref{eq:TM} and \eqref{eq:Testimate}, by $( \hat{\bm{\Sigma}} + \lambda \bm{I}_m )^{-1}$, 
    for some suitably chosen regularization parameter $\lambda > 0$. Although the limiting covariance matrix $\bm \Sigma_{H_0}$ is invertible (see Corollary \ref{cor:Kvariance} in the supplementary material), hence, $\hat{\bm{\Sigma}}$ is also invertible for large sample sizes with probability 1, adding a small regularization provides numerical stability in finite samples. In fact, the conclusions in Corollary \ref{cor:consistency} remain valid, for any choice of 
    $\lambda = \lambda (\sX_m)$ converging almost surely to a deterministic constant $\lambda_0 > 0$ (see Section \ref{sec:experiments} for more details on the choice of  $\lambda$ in our experiments). 
    \end{remark}

    \section{Local Asymptotic Power}
    \label{sec:asymptoticpower} 
    
    Throughout this section we will assume that $\cX = \R^d$ and the distributions $P$ and $Q$ have densities $f_P$ and $f_Q$ with respect to the Lebesgue measure in $\R^d$. To quantify the notion of local alternatives, we will adopt the commonly used contamination model: 
    \begin{align}\label{eq:fPQ}
    f_Q(\cdot)=(1-\delta)f_P(\cdot)+\delta g(\cdot) , 
    \end{align} 
    where $\delta \in [0, 1)$ and $g \ne f_P$ is a probability density function with respect to the Lebesgue measure in $\R^d$ such that the following hold: 
    \begin{assumption}\label{assumption:fPQ}
    {\em The support of $g$ is contained in that of $f_P(\cdot)$ and $0<\Var_{X \sim P} [ \frac{g(X)}{f_P(X)} ]<\infty$. } 
    \end{assumption} 
    
    Under this assumption, contiguous local alternatives are obtained by considering local perturbations of the mixing proportion $\delta$ as follows (see \cite[Chapter 12]{lr}):  
    \begin{equation}\label{eq:H0N}
    H_0:\delta = 0 \qquad \mathrm{versus} \qquad H_1:\delta = h/\sqrt{N},
    \end{equation}  
    for some $h\neq 0$ and $N= m+n$. The following theorem derives the distribution of the MMMD statistic $T_{m, n}$ under $H_1$ as above.

    \begin{theorem}\label{thm:H0NK} 
    Suppose $\cK=\{\sfK_1, \sfK_2, \ldots, \sfK_r\}$ be a collection of $r$ distinct kernels such that $\sfK_a$ satisfies Assumption $\ref{assumption:K}$ and $\sfK_{a}\in L^{2}(\mathcal{X}^{2}, P^{2})$, for $1 \leq a \leq r$.
    Then under $H_1$ as in \eqref{eq:H0N}, in the asymptotic regime \eqref{eq:mn}, 
    \begin{align}\label{eq:H0NGK}
        (m+n) \mathrm{MMD}^{2}[ \cK, \sX_m, \sY_n ]  \dto G_{\cK, h} := \left( 
    \begin{array}{c}
     \gamma  I_2(\sfK^\circ_1) + 2h \sqrt{\gamma} I_1\left(\sfK^{\circ}_1\left[\frac{g}{f_P}\right]\right) + h^2 \mu_1 \\
     \gamma I_2(\sfK^\circ_2) + 2h \sqrt{\gamma} I_1 \left(\sfK^{\circ}_2\left[\frac{g}{f_P}\right]\right) + h^2 \mu_2  \\
      \vdots  \\
     \gamma I_2(\sfK^\circ_r) + 2h \sqrt{\gamma} I_1 \left(\sfK^{\circ}_r\left[\frac{g}{f_P}\right]\right) + h^2 \mu_r
    \end{array}
    \right) . 
    \end{align}
    where $\gamma = \frac{1}{\rho(1-\rho)}$, $\sfK_a^\circ[\frac{g}{f_P}](x) := \int_{\cX} \sfK_a^\circ (x, y) g(y) \mathrm d y$,  
    \begin{align}\label{eq:meanK}
    \mu_a := \E \left[ \sfK_a^\circ(X, X')  \frac{g(X) g(X') }{f_P(X) f_P(X')} \right] , 
    \end{align} 
    and $\sfK_a^\circ$ is defined in \eqref{eq:Kxycentered}, for $1 \leq a \leq r$.
    \end{theorem}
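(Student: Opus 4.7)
My plan is to prove the theorem by coupling $H_1$ to its mixture representation and then applying the $U$-statistic machinery of Theorem~\ref{thm:K} piece-by-piece.

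Exploiting the structure $f_Q = (1-\delta) f_P + \delta g$ with $\delta = h/\sqrt{N}$, I would first couple $Y_i = (1 - B_i) Y_i^P + B_i Y_i^g$ with $B_i \stackrel{\mathrm{iid}}{\sim} \mathrm{Ber}(\delta)$, $Y_i^P \stackrel{\mathrm{iid}}{\sim} f_P$, $Y_i^g \stackrel{\mathrm{iid}}{\sim} g$, all mutually independent. Let $T = \{i : B_i = 1\}$ and $N_g = |T|$, so that $N_g/\sqrt{N} \pto h(1-\rho)$ and $N_g = o_p(n)$. Splitting $\sY_n = \sY_n^P \sqcup \sY_n^g$ and expanding $\hat\mu_Y = \tfrac{n - N_g}{n} \hat\mu_P^{Y, P} + \tfrac{N_g}{n}\hat\mu_P^{Y, g}$ in the RKHS of each $\sfK_a$ gives (modulo negligible $U$-vs-$V$ corrections)
\begin{align*}
\mathrm{MMD}^2[\sfK_a, \sX_m, \sY_n] = \bigl\|\hat\mu_X - \hat\mu_P^{Y, P}\bigr\|^2_{\sfK_a} - \tfrac{2 N_g}{n}\bigl\langle \hat\mu_X - \hat\mu_P^{Y, P},\, \hat\mu_P^{Y, g} - \hat\mu_P^{Y, P}\bigr\rangle_{\sfK_a} + \tfrac{N_g^2}{n^2}\bigl\|\hat\mu_P^{Y, g} - \hat\mu_P^{Y, P}\bigr\|^2_{\sfK_a};
\end{align*}
after multiplication by $(m+n)$, the three summands should produce the three terms in $G_{\cK, h}$.

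The quadratic piece $(m+n)\,\mathrm{MMD}^2[\sfK_a, \sX_m, \sY_n^P]$ has both samples iid from $f_P$ (conditional on $\mathbf{B}$), and since $m/(m+n-N_g) \to \rho$, Theorem~\ref{thm:K} applied to $(\sX_m, \sY_n^P)$ yields joint convergence over $a \in \{1, \ldots, r\}$ to $\gamma(I_2(\sfK_a^\circ))_{a=1}^r$. For the third piece, $(m+n)N_g^2/n^2 \pto h^2$ and the SLLN gives $\|\hat\mu_P^{Y,g} - \hat\mu_P^{Y,P}\|^2_{\sfK_a} \to \|\mu_g - \mu_P\|^2_{\sfK_a} = \mu_a$, producing the bias $h^2\mu_a$. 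For the cross term, substituting $\hat\mu_P^{Y,g} - \hat\mu_P^{Y,P} = (\mu_g - \mu_P) + o_p(1)$ and $(m+n)N_g/n = h\sqrt{N}(1 + o_p(1))$ reduces the leading contribution to
\begin{align*}
2h\sqrt{N}\left(\tfrac{1}{m}\sum_{i=1}^{m}\phi_a(X_i) - \tfrac{1}{n - N_g}\sum_{j \notin T}\phi_a(Y_j^P)\right) + o_p(1),
\end{align*}
where $\phi_a(x) = \mathbb{E}_{X' \sim P}[\sfK_a(x, X')] - \mathbb{E}_{Z \sim g}[\sfK_a(x, Z)]$. A pooled CLT over the iid-$f_P$ sample $\sX_m \cup \sY_n^P$ then yields a centered Gaussian limit of variance $4h^2\gamma\,\mathrm{Var}_P(\phi_a)$, which, via the identity $\sfK_a^\circ[g/f_P] = -\phi_a + \mathrm{const}$ on $\mathrm{supp}(f_P)$ together with $\mathbb{E}_P[\sfK_a^\circ[g/f_P]] = 0$, coincides in law with $2h\sqrt\gamma\,I_1(\sfK_a^\circ[g/f_P])$.

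The remaining and most delicate step is joint convergence across the three pieces and across $a = 1, \ldots, r$, since the second-chaos term $I_2(\sfK_a^\circ)$ and the first-chaos term $I_1(\sfK_a^\circ[g/f_P])$ are uncorrelated but not independent. My plan is to realize both in the same Gaussian sequence via the spectral expansion in \eqref{eq:alternatelimit}: writing $\sfK_a^\circ(x,y) = \sum_s \lambda_{s,a}\phi_{s,a}(x)\phi_{s,a}(y)$ gives $I_2(\sfK_a^\circ) = \sum_s \lambda_{s,a}(Z_{s,a}^2 - 1)$ and $I_1(\sfK_a^\circ[g/f_P]) = \sum_s \lambda_{s,a}\langle g/f_P, \phi_{s,a}\rangle_{L^2(P)}\, Z_{s,a}$ in the common Gaussians $\{Z_{s,a}\}$; joint convergence then follows from a Cram\'er--Wold reduction combined with an extension of the characteristic-function calculation in the proof of Theorem~\ref{thm:K} that accommodates first-chaos contributions. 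A cleaner high-level alternative, which I would include as a consistency check, is Le~Cam's third lemma: prove under $H_0$ the joint weak limit of $\bigl((m+n)\mathrm{MMD}^2[\cK, \sX_m, \sY_n], L_N\bigr)$, where $L_N = \sum_{i=1}^n \log(f_Q(Y_i)/f_P(Y_i))$ admits the standard LAN expansion under Assumption~\ref{assumption:fPQ}, and then read off the distribution under $H_1$ via the Cameron--Martin shift of the Gaussian measure $\cZ_P$ by $-\sqrt{\rho(1-\rho)}\,h(g/f_P - 1)$; substituting this shift into $I_2(\sfK_a^\circ)$ directly reproduces both the $2h\sqrt\gamma\,I_1(\sfK_a^\circ[g/f_P])$ cross term and the $h^2\mu_a$ bias.
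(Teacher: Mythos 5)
Your primary route --- the Bernoulli coupling $Y_i=(1-B_i)Y_i^P+B_iY_i^g$ followed by a three-term RKHS expansion --- is genuinely different from the paper's proof, and your marginal computations all check out: $\mu_a=\|\mu_g-\mu_P\|_{\mathcal H_a}^2$, the scaling $(m+n)N_g^2/n^2\to h^2$, and the variance match $4h^2\gamma\,\Var_P(\phi_a)=\Var\bigl(2h\sqrt{\gamma}\,I_1(\sfK_a^\circ[g/f_P])\bigr)$ via $\sfK_a^\circ[g/f_P]=-(\phi_a-\E_P\phi_a)$ are all correct. The paper instead reduces by Cram\'er--Wold to the single combined kernel $\mathsf H_{\bm\eta}=\sum_a\eta_a\sfK_a$, truncates its spectral expansion at level $L$, proves a $(2L+1)$-dimensional CLT for $(\bm U_m^{(L)},\bm V_n^{(L)},\dot L_N)$ jointly with the log-likelihood, applies Le Cam's third lemma to obtain the mean shift, and then removes the truncation with uniform $L^2$ bounds (Lemmas \ref{lm:H0NUVmn}--\ref{lm:Wxymn}). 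Your decomposition has the virtue of making the origin of each of the three terms in $G_{\cK,h}$ transparent, at the cost of heavier bookkeeping.

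Two genuine gaps remain. First, the ``negligible $U$-vs-$V$ corrections'' are not negligible at the $(m+n)$ scale: $(m+n)\bigl(\|\hat\mu_X-\hat\mu_Y\|^2_{\sfK_a}-\emmd[\sfK_a,\sX_m,\sY_n]\bigr)\to\gamma\,\E_P[\sfK_a^\circ(X,X)]=\gamma\sum_s\lambda_{s,a}$, which is strictly positive for any nondegenerate positive definite kernel (this is exactly the familiar discrepancy between the $\sum_s\lambda_s(Z_s^2-1)$ and $\sum_s\lambda_sZ_s^2$ limits). You must either carry this $O(1)$ constant through and verify it cancels against the $-1$ centering in $I_2(\sfK_a^\circ)$, or perform your three-term decomposition at the level of the $U$-statistic \eqref{eq:MMDXY}, which introduces additional (genuinely negligible) cross terms. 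Second, and more importantly, the joint convergence of the second-chaos piece and the first-chaos piece --- in the \emph{same} Gaussian field $\cZ_P$, simultaneously for all $r$ kernels --- is the actual content of the theorem, and you only sketch it. Your route (a), ``an extension of the characteristic-function calculation that accommodates first-chaos contributions,'' is precisely the step that requires the finite-dimensional truncation and a joint CLT for the normalized eigenfunction sums $U_{s,m},V_{s,n}$ (both pieces are functions of these), plus an $L^2$-uniformity argument to pass $L\to\infty$; none of this is carried out. Your route (b) (Le Cam's third lemma plus the Cameron--Martin shift of $\cZ_P$ by a multiple of $g/f_P-1$) is the correct and elegant way to see the answer, but it is essentially the paper's own argument rather than an independent alternative, and making the Cameron--Martin heuristic rigorous again runs through the same truncation. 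So the proposal is a sound and genuinely different skeleton with correct limit identification, but the diagonal correction and the joint-convergence step both need to be supplied before it is a proof.
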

    
    The proof of the theorem is given in Appendix \ref{sec:localpowerpf} of the supplementary material. The following result is an immediately consequence of the above result together with the continuous mapping theorem and Corollary \ref{cor:TM}.

    \begin{corollary}\label{cor:H0NTM} 
    Under $H_1$ as in \eqref{eq:H0N}, $(m+n)^2 T_{m, n} \dto G_{\cK, h}^\top \bm \Sigma_{H_0}^{-1}G_{\cK, h}$. 
    \end{corollary}
    
    Using Corollary \ref{cor:H0NTM} we can derive the limiting local power of the test $\phi_{m, n}$ in \eqref{eq:Tmnalpha}. Specifically, suppose $F_{\cK, h}$ denotes the CDF of $G_{\cK, h}^\top \bm \Sigma_{H_0}^{-1}G_{\cK, h}$ and $q_{1-\alpha}$ be the $(1-\alpha)$-th quantile of the distribution $G_{\cK}^\top \bm \Sigma_{H_0}^{-1}G_{\cK}$. (Note that $G_{\cK, 0} = G_{\cK}$.) Since $\hat q_{1-\alpha, m} | \sX_m \stackrel{a.s.} \rightarrow q_{1-\alpha}$, Corollary \ref{cor:H0NTM} implies that 
        the asymptotic power of $\phi_{m, n}$ under $H_1$ as in \eqref{eq:H0N} is given by $\lim_{m, n \rightarrow \infty} \E_{H_1}[\phi_{m, n}]= 1- F_{\cK, h}(q_{1-\alpha})$. 
        This implies, $\phi_{m, n}$ has non-trivial asymptotic (Pitman)  efficiency and is rate-optimal, in the sense that, $\lim_{|h| \rightarrow \infty }\lim_{m, n \rightarrow \infty} \E_{H_1}[\phi_{m, n}] = 1$.

    \color{black} 
    
    \section{Numerical Experiments}
    \label{sec:experiments}
    
    In this section, we study the finite-sample performance of the proposed MMMD test across a range of simulation settings. Specifically, we will compare the MMMD test with the single kernel MMD test \citep{gretton2009fast} and the graph-based Friedman Rafsky (FR) test \citep{friedman1979multivariate}. Additional simulations are given in Appendix \ref{sec:experimentsadditional} of the supplementary materials.  Throughout we set the significance level $\alpha = 0.05$.

    For single kernel tests we use the Gaussian and Laplace kernels: 
    \begin{align}\label{eq:singlekernels}
        \sfK_{\mathrm{GAUSS}}(x,y) =  e^{-\frac{\|x-y\|^2}{\sigma^{2}}} \quad \text{and} \quad \sfK_{\mathrm{LAP}}(x,y) = e^{-\frac{\|x-y\|}{\sigma}},
    \end{align}
    with the bandwidth $\sigma$ is chosen using the median heuristic
    $$\sigma^2 :=  \lambda_{\text{med}}^2 = \text{median}\left\{\|Z_{i}-Z_{j}\|^{2}: 1 \leq i < j \leq n \right\},$$ 
    where $\sX_m \cup \sY_n = \{Z_1, Z_2, \ldots, Z_N\}$ is the pooled sample and 
    $\|\cdot\|$ denotes the Euclidean norm. We will refer to these tests as {\tt Gauss MMD} and {\tt LAP MMD}, respectively.
    
    For the MMMD statistic we will  use multiple Gaussian kernels, multiple Laplace kernels, or combination of Gaussian and Laplace kernels, with different bandwidths chosen follows: 
    \begin{itemize}
     
        \item {\tt Gauss MMMD}: This is the MMMD statistic with 5 Gaussian kernels with bandwidths 
        \begin{align}\label{eq:gaussmmmd}
        \bm \sigma = (\sigma_1, \sigma_2,  \sigma_3, \sigma_4, \sigma_5) =  (\tfrac{1}{2},\tfrac{1}{\sqrt{2}}, 1,\sqrt{2}, 2)\lambda_{\text{med}} . 
        \end{align}
      
        \item {\tt LAP MMMD}: This is the MMMD statistic with 5 Laplace kernels with bandwidths  
        \begin{align}\label{eq:laplacemmmd} 
        \bm \sigma = (\sigma_1, \sigma_2,  \sigma_3, \sigma_4, \sigma_5) =  (\tfrac{1}{2},\tfrac{1}{\sqrt{2}}, 1,\sqrt{2}, 2)\lambda_{\text{med}} . 
        \end{align}
    
        \item {\tt Mixed MMMD}: This is the MMMD statistic with 3 Gaussian kernels and 3 Laplace kernels with same set of bandwidths 
    \begin{align}\label{eq:mixedmmmd}
    \bm \sigma = (\sigma_1,\sigma_2,\sigma_3)  = (\tfrac{1}{\sqrt{2}}, 1, \sqrt{2})\lambda_{\text{med}} . 
    \end{align}
    \end{itemize}
    In our implementation we choose the regularity parameter $\lambda$ (recall Remark \ref{remark:sigmaH0}) as: 
    $\lambda = 10^{-5}\times \min_{1\leq a \leq r} \hat{\sigma}_{aa}$, for $\hat{\sigma}_{aa} > 0$ as in \eqref{eq:H0sigmaestimate}. Since $\lambda$ converges to $10^{-5}\times \min_{1\leq a \leq r} \sigma_{aa}$ almost surely (recall Corollary \ref{cor:TM}), the results in Corollary \ref{cor:consistency} remain valid. 
    The cutoffs of the tests are chosen based on the multiplier bootstrap as in \eqref{eq:Tmnalpha} using $B=500$ resamples.

    Finally, for the Friedman Rafsky (FR) test we use the implementation in the \texttt{R} package \texttt{gTests}, with the 5-MST (minimum spanning tree), which is the recommended practical choice in \citet{chen2017new}.

    \subsection{Dependence on Dimension} 
    \label{sec:dimension}
    
    In this section we study the performance of the different tests as dimension varies in the following settings. We fix sample sizes $m=n=100$, vary dimension over $d \in \{5, 10, 25, 50, 75, 100, 150\}$, and compute the empirical power by averaging over 500 iterations.  
    
    \begin{enumerate}[label=\textbf{(S\arabic*)}]
        \item\label{itm:S1} {\it Gaussian location-scale}: Here, we consider $P =  \cN_d(\bm{0} ,\bm{\Sigma}_{0})\text{ and } Q = \cN_d\left(0.1\bm{1}, 1.15\bm{\Sigma}_{0}\right)$, 
        where $\bm{\Sigma}_{0} = ((0.5^{|i-j|}))_{1\leq i,j\leq d}$ (see Figure \ref{fig:S12}(a)). 
        
        \item\label{itm:S2} {\it Gaussian and $t$-distribution mixture}: Here, we consider     
        $P =\tfrac{1}{2} \cN_d(\bm{0} ,\bm{\Sigma}_{0}) +\tfrac{1}{2} t_{10}(\bm{0} ,\bm{\Sigma}_{0})$ \text{ and }$Q =\tfrac{1}{2} \cN_d(\bm{0} ,1.22\bm{\Sigma}_{0}) +\tfrac{1}{2} t_{10}(\bm{0} ,1.22\bm{\Sigma}_{0})$, 
        where $\bm{\Sigma}_{0}$ is as above (see Figure \ref{fig:S12}(b)). 
    \end{enumerate} 
    
    \begin{figure}[h] 
    \centering
    \begin{subfigure}[c]{0.45\textwidth}
       \includegraphics[scale = 0.4]{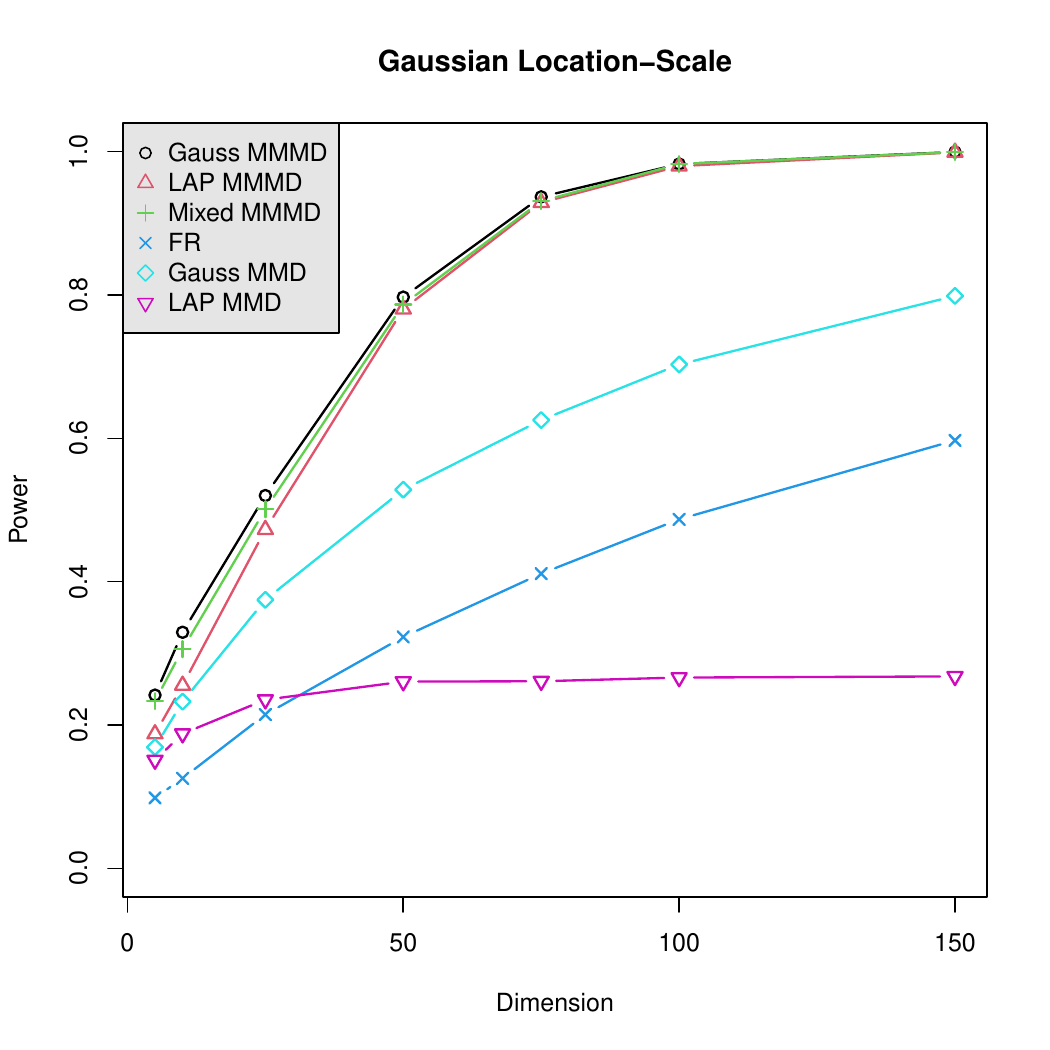} 
       \caption*{\small{(a)}}
    \end{subfigure}%
    \begin{subfigure}[c]{0.45\textwidth}
       \includegraphics[scale = 0.4]{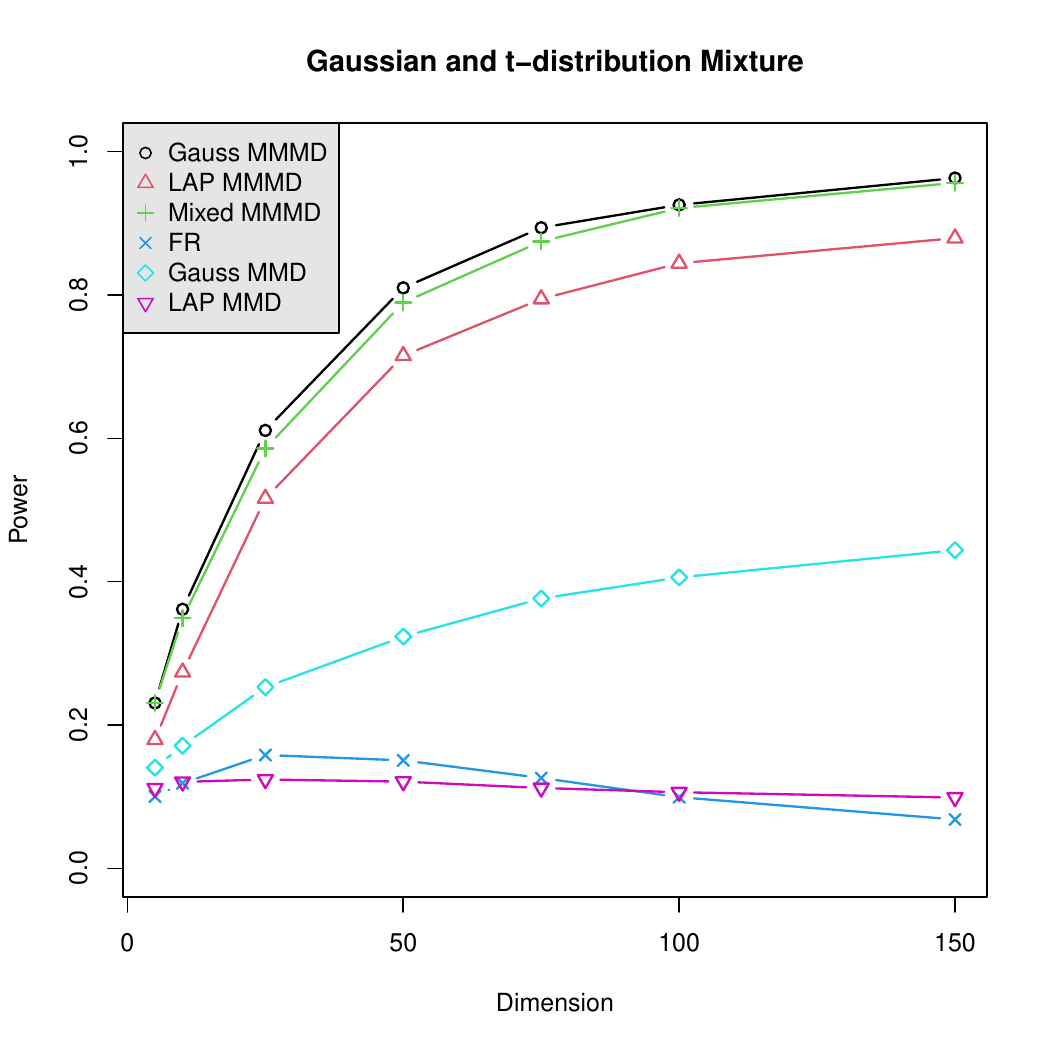} 
          \caption*{\small{(b)}} 
    \end{subfigure} 
    \caption{Empirical powers in (a) setting \ref{itm:S1} and (b) setting \ref{itm:S2}. }
    \label{fig:S12}
    \end{figure}

    The plots show that the multiple kernel MMMD tests have significantly more  power than the single kernel MMD tests and the FR test in both the 2 settings. Overall the {\tt Gauss MMMD} and the {\tt Mixed MMMD} tests perform the best, closely followed by the {\tt Lap MMMD}. This also shows the advantage of aggregating kernels across a range of dimensions, from low dimensions to dimensions that are comparable and even larger than the sample size. Additional simulations are provided in Appendix \ref{sec:dimensionadditional} of the supplementary material.

    \subsection{Mixture Alternatives} 
    \label{sec:mixtureexperiments}
    
    In this section we evaluate the performance of the tests for mixture alternatives by varying the mixing proportion. To this end, suppose $\bm{\Sigma}_{0} = ((0.5^{|i-j|}))_{1\leq i,j\leq d}$ and consider 
    \begin{align*}
    P = \varepsilon \cN_d(\bm{0} ,\bm{\Sigma}_{0}) + (1-\varepsilon)  t_{10}(\bm{0} ,\bm{\Sigma}_{0}) \text{ and } Q = \varepsilon \cN_d(\bm{0} ,1.25\bm{\Sigma}_{0}) + (1-\varepsilon) t_{10}(\bm{0} ,1.25\bm{\Sigma}_{0}). 
    \end{align*} 
    Figure \ref{fig:mixing} shows the empirical power (averaged over 500 iterations) of the different tests as $\varepsilon$ varies over $[0,1]$, with sample sizes $m=n=100$ and dimension $d=30$ (Figure \ref{fig:mixing}(a)) and $d=150$ (Figure \ref{fig:mixing} (b)). In both cases, the MMMD tests outperform the single kernel tests and the FR test, again illustrating the versatility of the aggregated tests.

    \begin{figure}[h]
    \centering
    \begin{subfigure}[c]{0.45\textwidth}
       \includegraphics[scale = 0.4]{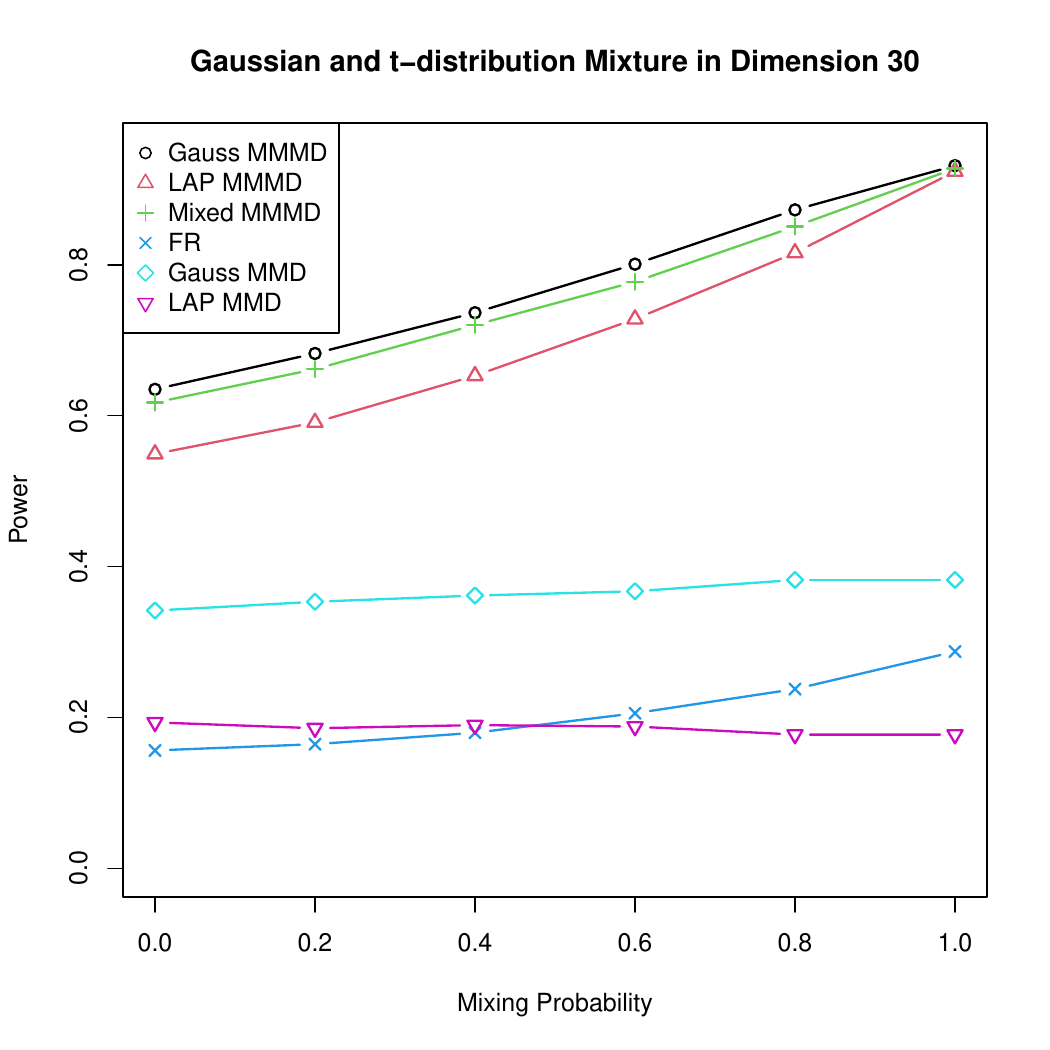} 
       \caption*{\small{(a)}}
    \end{subfigure}%
    \begin{subfigure}[c]{0.45\textwidth}
       \includegraphics[scale = 0.4]{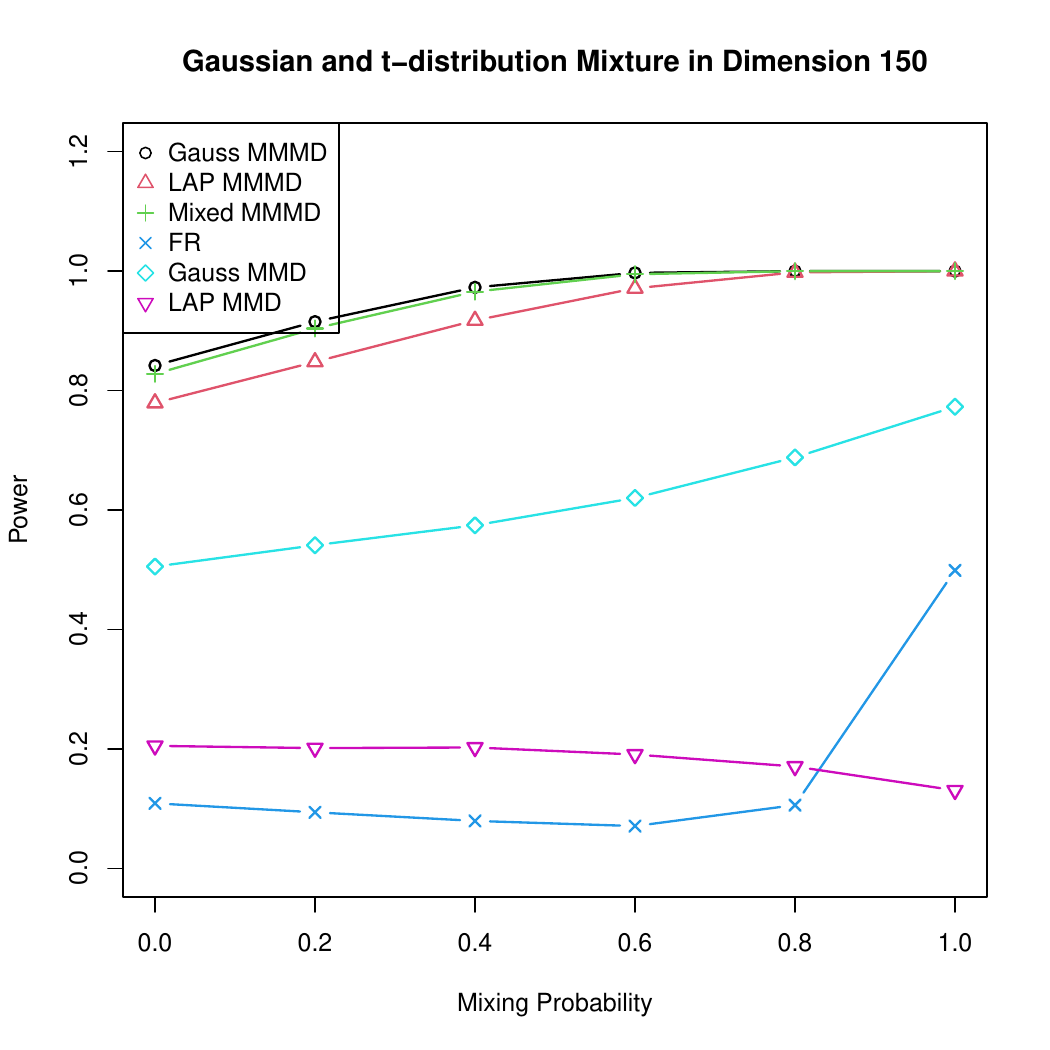} 
          \caption*{\small{(b)}}
    
    \end{subfigure}
    \caption{Empirical powers as a function of the mixing proportion for (a) $d=30$ and (b) $d=150$. }
    \label{fig:mixing}
    \end{figure}

    \subsection{Computational Complexity of the MMMD Test} 
    \label{sec:computation}
    
    \color{black}{ In Appendix \ref{sec:computationpf} we analyze the computational complexity of the MMMD test, when the rejection region is chosen based on $B$ replications of the statistic $\hat{T}_{m}$ from \eqref{eq:Testimate}. In particular, we show that the computational cost of MMMD test is $O(r^2N^2 + BrN^2 + B\log B)$ (assuming $r < N$). In practice, the number of resamples $B$ is usually chosen to be much larger than the number of kernels $r$, in which case the time complexity simplifies to $O(BrN^2 + B\log B)$. In fact, realistically one only aggregates over a bounded number of kernels, that is, $r = O(1)$, in which case the computational costs of the MMMD test and the MMD test differ only by a constant factor. In Appendix \ref{sec:computationpf}
    we also compare the running times of the MMD and the MMMD tests in simulations (see Table \ref{table:comptime}). Our experiments show that the MMMD tests provide significant power enhancement over the MMD test, with only a small increase in computation time. }

    \begin{remark}
    One way to reduce computation cost of the MMD test from $O(N^2)$ to $O(N)$ is the linear time MMD statistic \cite[Section 6]{gretton2012kernel}. 
    In Appendix \ref{sec:linear} we apply the Mahalanobis aggregation strategy to combine linear time statistics over multiple kernels and develop the associated theory. We also compare the power of the aggregated linear time MMD tests  with their single kernel counterparts and also with the quadratic time MMMD tests in simulations. 
    \end{remark}
    
    \color{black}
    
    \subsection{Comparison with the Permutation Test}
    \label{sec:permutation} 
     
     Another alternative to choosing the rejection threshold for $T_{m, n}$ is the permutation method. In fact, the permutation principle can be applied to calibrate any 2-sample test statistic based on the sample quantiles of the test statistic computed on $B$ permuted versions of the pooled data $\sX_m \cup \sY_n$. The resulting test is guaranteed to control Type I error in finite samples. In this paper we adopt the multiplier bootstrap over the permutation method for the following reasons: 
     
    \begin{itemize} 
    
    \item The independence of the Gaussian multipliers makes the asymptotic theory of the multiplier bootstrap method more tractable. Consequently, we are able to provide a holistic asymptotic theory for  the multiplier bootstrap based MMMD test, including limiting distributions (under both the null and the alternative), consistency, and local power analysis. 
     
    \item The multiplier bootstrap is computationally more efficient than the permutation method, both in terms of their asymptotic running times as well as power versus computation time trade-off in finite samples. Note that to obtain the permutation $p$-value we have to compute the MMMD statistic $T_{m,n}$ (recall \eqref{eq:TM}) for each of the $B$ random permutations of $N$ samples, where  $N = m+n$ is the total number of samples. Since with $r$ kernels it takes $O(r^2N^2)$ time to compute $T_{m,n}$ (see Appendix \ref{sec:computationpf}), the time complexity for the permutation test is $O(Br^2N^2)$, where $B$ is the number of permutations. On the other hand, we know from Section \ref{sec:computation} that the time complexity of the multiplier bootstrap based MMMD test \eqref{eq:Tmnalpha} is $O(r^2N^2 + BrN^2 + B\log B)$, which has a better dependence on $r$ than the permutation test. Even for fixed $r$ one can see significant gains in computation time in finite samples  (see Appendix \ref{sec:permutationsimulation}). In particular, our simulations show that the Type I error and power of the multiplier bootstrap and the permutation methods are comparable, but the computation time of the multiplier bootstrap method is much faster. 
    
    \end{itemize}

    \subsection{ Comparisons with Bandwidth Optimized MMD Tests and $p$-Value Combination Methods } 
    \label{sec:bandwithoptimizedcombination}

    Recall that in the previous sections while implementing the MMD test we chose the bandwidths for the Gaussian and Laplace kernels based on the median heuristic. Although this is the common choice in practice \citep{gretton2012kernel,ramdas2015decreasing}, it remains a heuristic because there is no theoretical understanding of its validity. 
    To address this issue, there has been studies that aim to find the ``best''  single kernel test by optimizing the bandwidth in such a way that the asymptotic power is maximized. This approach was first proposed by \cite{gretton2012optimal} for the linear time MMD test, which was subsequently extended to the quadratic time MMD test by \cite{sutherland2016generative}. The method involves splitting the data into 2 parts and using the first part to select the bandwidth by maximizing asymptotic power, or equivalently by maximizing the ratio (see \cite{liu2020learning}) $\frac{1}{\hat\sigma_{\lambda}^2} \mmd^2\left[\sfK_{\lambda},\sX_m,\sY_n\right] $ , 
    where $\hat\sigma_{\lambda}^2$ is a regularized estimator of the asymptotic variance of $\mmd^2\left[\sfK_{\lambda},\sX_m,\sY_n\right]$ under $H_1$. In Appendix \ref{sec:bandwithoptimized} we provide empirical comparisons of our test based on multiple kernels with the bandwidth optimized single kernel test in different simulations. 
    To mitigate the effect of data splitting we also implement the single kernel tests with twice the amount of data as in \citet[Section 5.3]{schrab2021mmd}. This emulates an oracle choice of bandwidth and represents the best single-kernel MMD test for the given data.
     In all the settings considered the MMMD tests have improved power than the bandwidth optimized single kernel tests. Also, MMMD tests with Gaussian/Laplace kernels have better power than the Gaussian/Laplace oracle MMD test (where bandwidth is optimized with double the sample size), respectively. It is worth noting that the bandwidths for the kernels in the MMMD tests are chosen as in \eqref{eq:gaussmmmd}, \eqref{eq:laplacemmmd}, and \eqref{eq:mixedmmmd}, respectively, which requires no optimization or data-splitting. Even so, the multiple kernel MMMD test is able to outperform the ``best'' single kernel, demonstrating the effectiveness of our aggregation scheme.

    Another possible aggregation strategy is to consider tests that combine $p$-values for multiple single kernel MMD tests. To illustrate how our aggregation strategy compares with $p$-value combination methods we consider the following experimental setup. We implement the {\tt Gauss MMMD} test based on 5 different Gaussian kernels with respective bandwidths $\bm \sigma = (\sigma_1, \sigma_2,  \sigma_3, \sigma_4, \sigma_5) =  ( \tfrac{1}{2}, \frac{1}{\sqrt 2}, 1,\sqrt{2}, 2)\lambda_{\text{med}}$, where $\lambda_{\text{med}}$ is defined after \eqref{eq:singlekernels}. The  {\tt Gauss MMMD} test  is calibrated using the multiplier bootstrap with $B = 500$ resamples. For comparison we consider the following $p$-value combinations. Here, $p_i$ denotes the $p$-value of the MMD test for a Gaussian kernel with bandwidth $\sigma_i$, for all $1\leq i\leq 5$, and the significance level $\alpha = 0.05$. 
    
    \begin{itemize}
        \item \textbf{Bonferonni:} Reject $H_0$ if $5\min_{1 \leq i \leq 5} p_i\leq \alpha$,
        \item \textbf{Harmonic Mean:} Reject $H_0$ if $2.214729\frac{5\log(5)}{\sum_{i=1}^5p_i^{-1}}\leq \alpha$, and
        \item \textbf{Bonferonni and Geometric Mean:} Reject $H_0$ if  $2\min \{ 5\min_{i=1}^5p_i, e\prod_{i=1}^5p_i^{1/5} \} \leq \alpha$.
    \end{itemize}
    The validity of the above $p$-value combinations follows from  \cite{vovk2020combining}. 
    
    The results of our experiments are given in Appendix \ref{sec:combination}. In all the simulation settings considered the {\tt Gauss MMMD} test emerges as the clear winner. This suggests that it is more advantageous to adopt our aggregation strategy over $p$-value combination methods for boosting the performance of kernel 2-sample tests.

    \color{black}

    \section{Real Data Applications}
    \label{sec:data}

    In this section we apply our method to compare images of digits in the noisy MNIST dataset.  Specifically, consider two noisy versions of the MNIST dataset: (1) MNIST with additive Gaussian noise (Section \ref{sec:mnistdatagaussian}), and (2) MNIST with reduced contrast and additive noise,  where, in addition to the Gaussian noise the contrast of the images is reduced (Section \ref{sec:mnistdatacontrast}).  As in the previous section, we implement the single kernel {\tt Gauss MMD} and {\tt LAP MMD} tests with the median bandwidth, the multiple kernel {\tt Gauss MMMD}, {\tt LAP MMMD}, and {\tt Mixed MMMD} tests with bandwidths as in \eqref{eq:gaussmmmd}, \eqref{eq:laplacemmmd}, \eqref{eq:mixedmmmd}, respectively, and the FR test using the R package {\tt gTests}.

    \subsection{MNIST with Additive Gaussian Noise} 
    \label{sec:mnistdatagaussian} 
    
    In this section we illustrate the performance of the proposed test in detecting different sets of digits when i.i.d. Gaussian noise with standard deviation $\sigma$ is added to each pixel. Figure \ref{fig:MNIST-addnoise} shows how such noisy data looks for (a) $\sigma = 0.6$ and (b) $\sigma=1$. 
    
    \begin{figure}[h]
      \begin{subfigure}[l]{0.28\textwidth}
        \includegraphics[width=\textwidth]{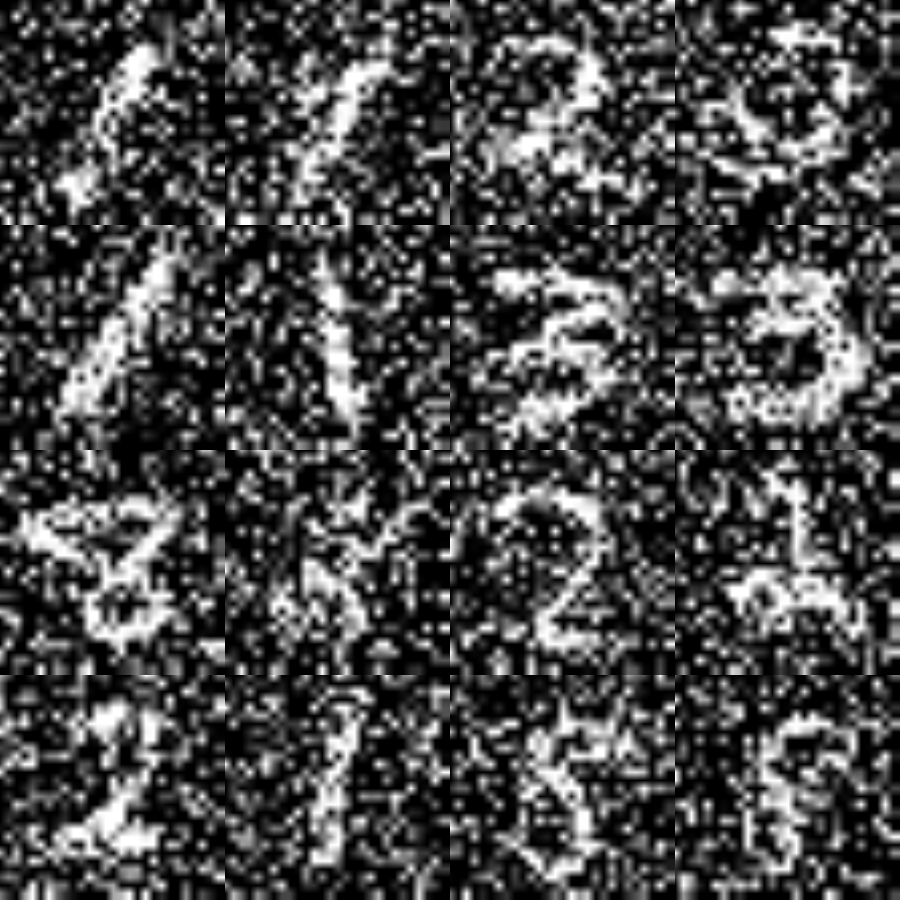}
        \caption*{\small{(a)}}
      \end{subfigure}
      \hfill
      \begin{subfigure}[l]{0.28\textwidth}
        \includegraphics[width=\textwidth]{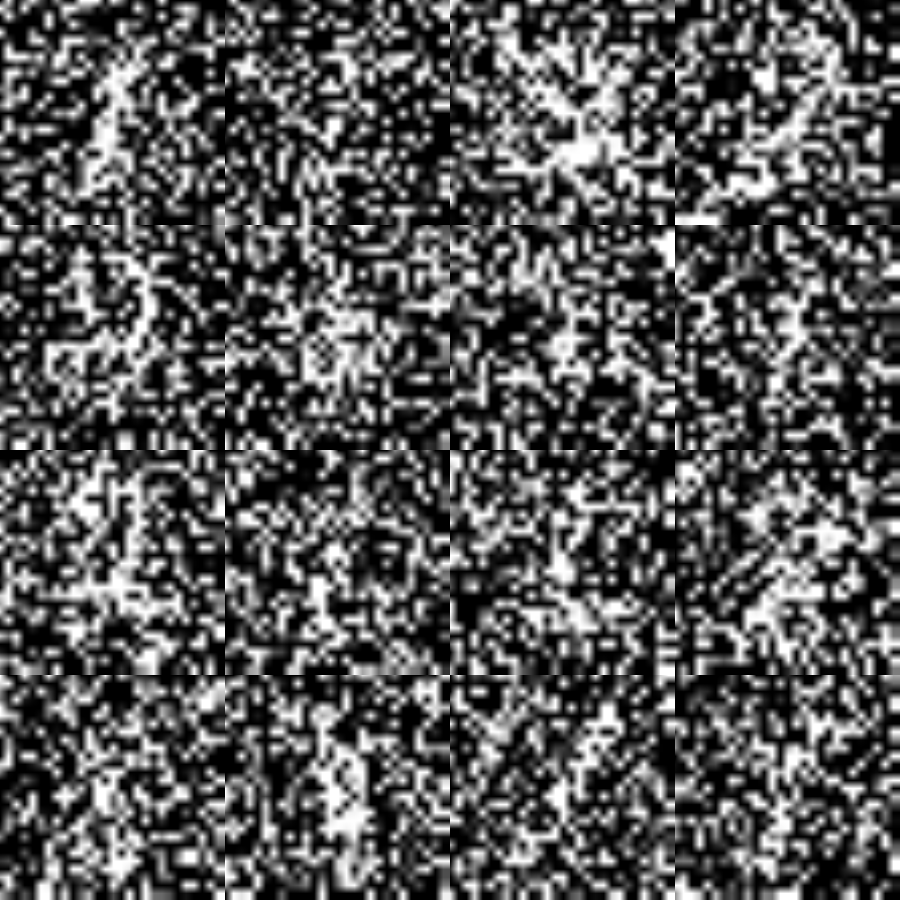}
        \caption*{\small{(b)}}
        
      \end{subfigure} 
      \begin{subfigure}[c]{0.36\textwidth}
       \includegraphics[width=\textwidth]{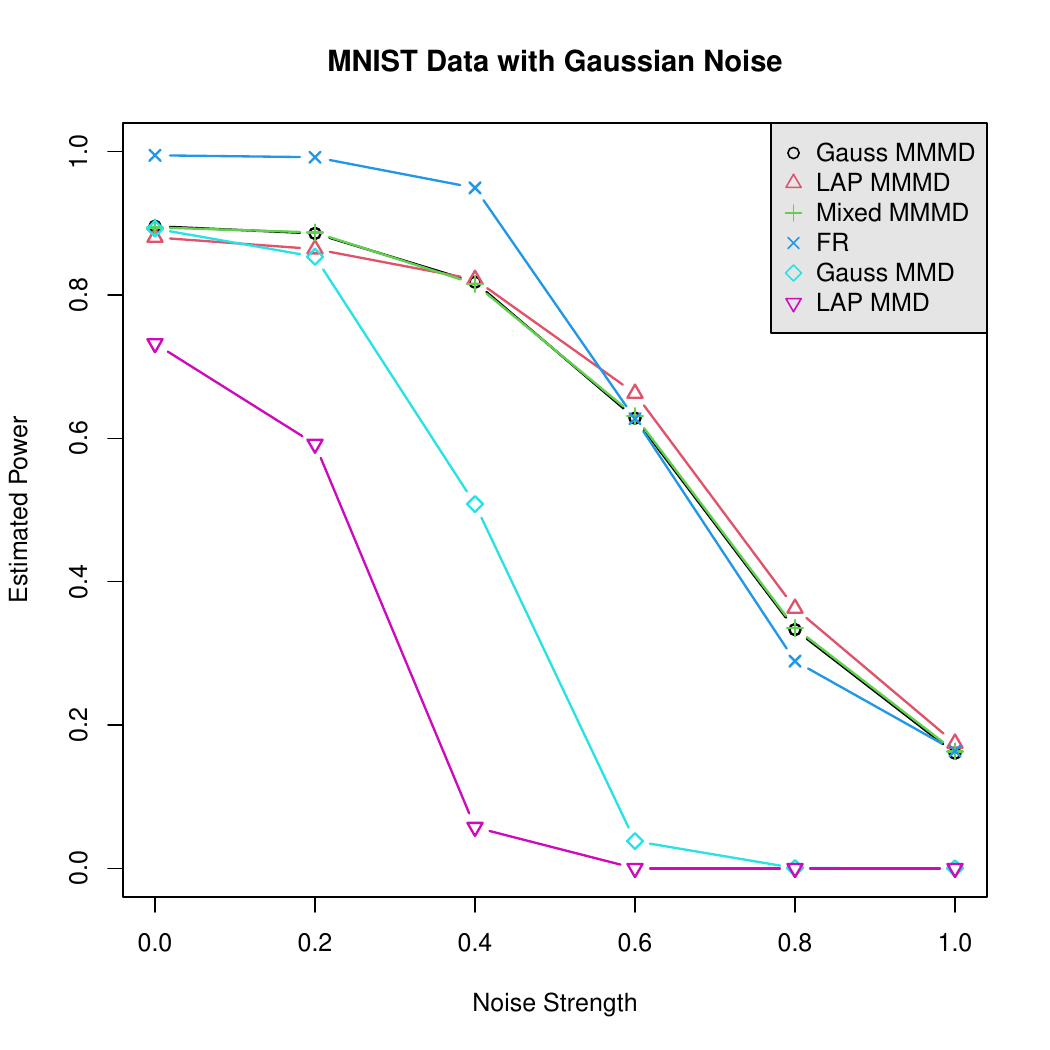}
       \caption*{\small{(c)}}
    \end{subfigure}
      \caption{MNIST data with additive Gaussian noise with (a) $\sigma=0.6$, (b) $\sigma=1$, and (c) estimated powers of the tests with increasing noise strength. } 
      \label{fig:MNIST-addnoise} 
    \end{figure} 
    
    To evaluate the proposed method we consider the following sets of digits: $P = \{ 1,2,3 \} \text{ and } Q = \{ 1,2,8 \}$,  
    and vary the standard error $\sigma\in (0,0.2,0.4,0.6,0.8,1)$. For each $\sigma$ we draw $100$ samples with replacements from the two sets and check if the tests successfully reject $H_0$ at level $\alpha=0.5$. We repeat this experiment $500$ times to estimate the power. Figure \ref{fig:MNIST-addnoise} (c) shows performance of the above mentioned tests, where we plot the power over the index of pair of sets of digits. This shows that for the clean data and for small noise levels, the singe kernel {\tt Gauss MMD} performs comparably with the MMMD tests. However, for larger noise levels the MMMD tests perform much than the single kernel tests. The FR test also perform well in this case across the range of the noise level.

    \subsection{MNIST with Reduced Contrast and Additive Gaussian Noise}
    \label{sec:mnistdatacontrast} 
    
    In this section we illustrate the performance of the different tests on the noisy version of the MNIST dataset considered in \citep{basu2017learning}. (The dataset is publicly available at \texttt{https://csc.lsu.edu/~saikat/n-mnist/}) Here, in  addition to additive Gaussian noise the contrast of the images is also reduced. Specifically, the contrast range is scaled down to half and an additive Gaussian noise is introduced with signal-to-noise ratio of 12. This emulates background clutter along with significant change in lighting conditions (see Figure \ref{fig:MNIST-AWGN} for an example of such a noisy image).
    
    \begin{figure}[!ht]
        \centering 
    \begin{subfigure}[c]{0.35\textwidth}
        \includegraphics[width = \textwidth]{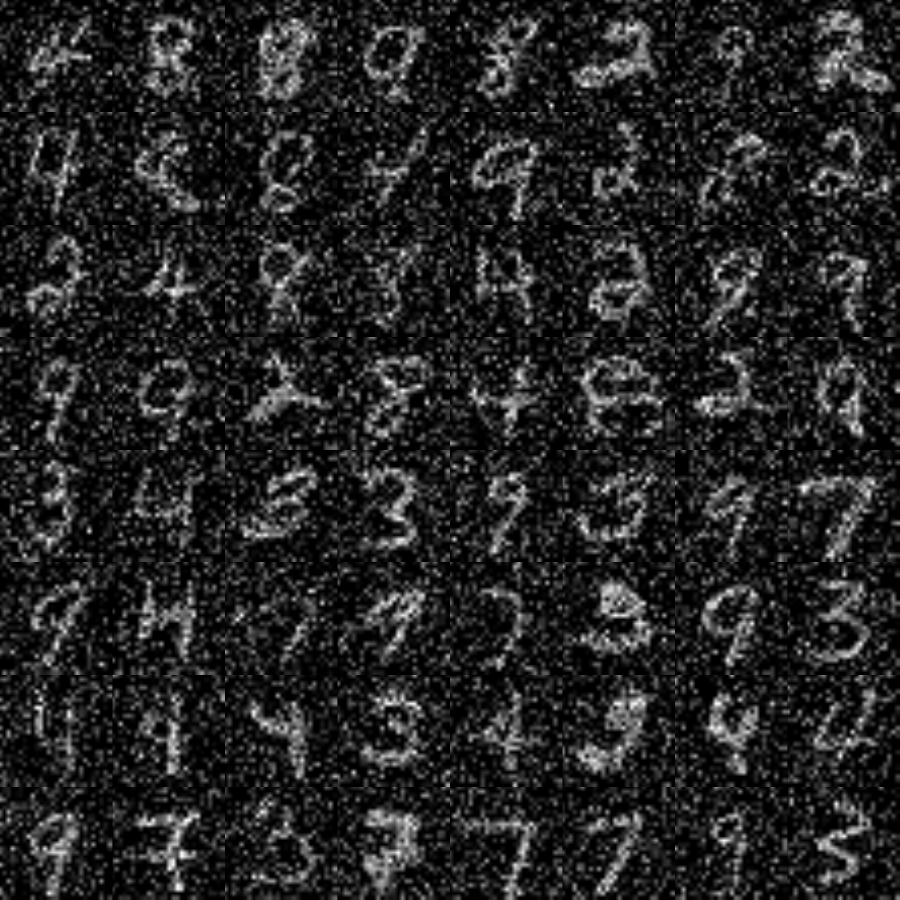}
         \caption*{\small{(a)}} 
    \end{subfigure}%
    \hspace{20pt}%
    \begin{subfigure}[c]{0.5\textwidth} 
       \includegraphics[width = 0.9\textwidth]{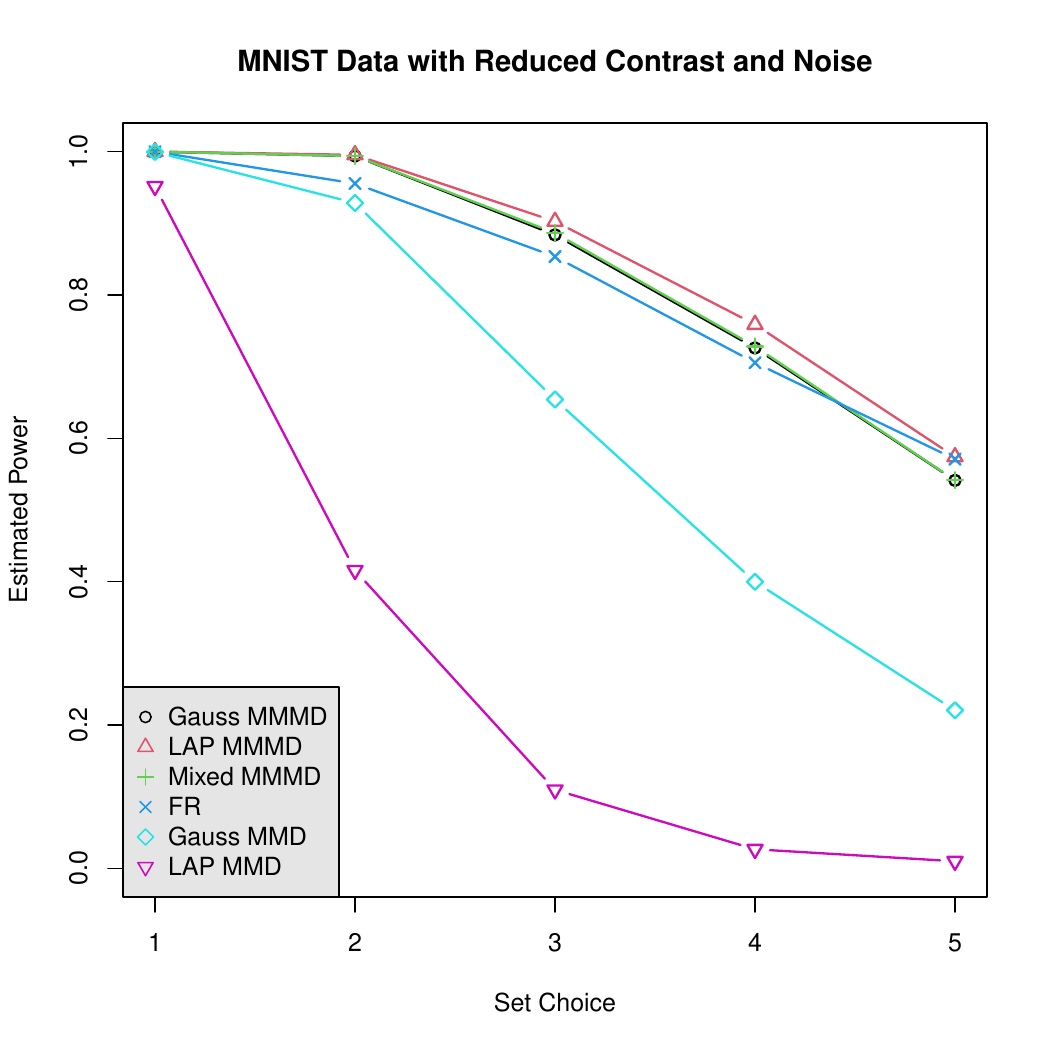} 
        \caption*{\small{(b)}} 
    \end{subfigure} 
        \caption{(a) MNIST dataset with reduced contrast and additive noise and (b) estimated power. } 
        \label{fig:MNIST-AWGN}
    \end{figure}
    
    We evaluate the performance of the different test for the following 5 pairs of sets of digits: (1) $P = \{ 2,4,8,9 \} $ and $Q = \{ 3,4,7,9 \}$; (2) $P = \{1,2,4,8,9\}$ and $Q = \{1,3,4,7,9\}$; (3) $P = \{0,1,2,4,8,9\}$ and $Q = \{ 0,1,3,4,7,9 \}$; (4) $P = \{0,1,2,4,5,8,9\}$ and $Q = \{0,1,3,4,5,7,9\}$; and (5) $P = \{0,1,2,4,5,6,8,9\}$ and $Q = \{0,1,3,4,5,6,7,9\}$. For each of the 5 cases above, we draw $150$ samples with replacements from the two sets and check if the tests successfully reject $H_0$ at level $\alpha=0.5$. We repeat this experiment $500$ times to estimate the power. 
    Figure \ref{fig:MNIST-AWGN} shows the power of the different for the above 5 sets. In this case, the multiple kernel tests and the FR test overall has the highest power across the 5 sets, followed by the {\tt Gauss MMD} and the {\tt Lap MMD}.

    \color{black}

    \section{Aggregation with Increasing Number of Kernels} 
    \label{sec:rkernels}
    
    In Section \ref{sec:multiplekernels} we showed that the Mahalanobis distance based aggregation of $r$ MMD statistics is consistent for the 2-sample problem, for any fixed $r \geq 1$. In this section we investigate whether consistency continues to hold when $r$ grows with the sample size. 
    To illustrate the strategy of combining multiple kernels more broadly, we also consider maximum and $L_2$ based aggregations with a growing number of kernels.

    \subsection{Maximum and $L_{2}$ Aggregation}\label{sec:MaxL2aggregate}

    There are many ways in which one can combine multiple kernels into a test statistic. For instance, we could consider maximum or $L_2$ based aggregations  as follows (assuming $m=n$ for simplicity): 
    \begin{align*}
        T_{m}^{\max}:= \max_{a=1}^{r_{m}}\emmd\left[\sfK_{a},\sX_{m},\sY_{m}\right]\text{ and }T_{m}^{L_{2}}:= \left\|\emmd\left[\cK_{r_{m}}, \sX_{m},\sY_{m}\right]\right\|_{2} , 
    \end{align*}
    where $r=r_m$ depends on $m$ and $\cK_{r_{m}} := \{\sfK_{a} : 1\leq a \leq r_{m}\}$. 
    The consistency and asymptotic distribution of these statistics when is $r$ fixed follows from results in Section \ref{sec:multiplekernels} and Theorem \ref{thm:K}, respectively. In the following proposition, using uniform convergence bounds for the MMD estimate, we construct tests based on $T_{m}^{\max}$ and  $T_{m}^{L_2}$ that are consistent in the growing $r$ regime. The proof is given in Appendix \ref{sec:concMaxL2pf}.

    \begin{proposition}\label{ppn:concMaxL2}
    Suppose $\cK_{r_{m}} = \left\{\sfK_{a}:1\leq a \leq r_{m}\right\}$ is a collection of $r_{m}$ distinct characteristic kernels such that $0\leq\sfK_{a}\leq K$, for all $1 \leq a \leq r_m$. Fix $\alpha \in (0, 1)$ and consider the following test functions: 
    \begin{align*}
        \phi_{m}^{\max} := \bm 1\left\{\left|T_{m}^{\max}\right|>C \sqrt{\frac{1}{m}\log\frac{6r_{m}}{\alpha}}\right\} \text{ and } 
        \phi_{m}^{L_{2}} := \bm{1}\left\{\left|T_{m}^{L_{2}}\right|>C \sqrt{\frac{r_{m}}{m}\log\frac{6r_{m}}{\alpha}}\right\} , 
    \end{align*} 
    where $C:= 8 K$. Then both $\phi_{m}^{\max}$ and $\phi_{m}^{L_{2}}$ have level $\alpha$ in finite samples. Moreover, $\phi_{m}^{\max}$ and $\phi_{m}^{L_{2}}$ are asymptotically consistent for \eqref{eq:H01PQ} if $\log r_{m} = o(m)$ and $r_{m}\log r_{m} = o(m)$, respectively.
    \end{proposition}

    \subsection{Mahalanobis Aggregation}

    In the growing $r$ regime the MMMD statistic takes the following form: 
    \begin{align}\label{eq:TmnMA}
        T_{m}^{MA}:= \left( \emmd\left[ \cK_{r_{m}}, \sX_m, \sY_m \right]  \right)^\top \bm \hat{\bm{\Sigma}}_{r_{m}}^{-1}  \left( \emmd\left[ \cK_{r_{m}}, \sX_m, \sY_m \right]  \right) , 
    \end{align}
    where $\hat{\bm{\Sigma}}_{r_{m}}$ has entries defined in \eqref{eq:H0sigmaestimate}.

    \begin{theorem}\label{thm:MahaAggrmntest} 
    Suppose the assumptions of Proposition \ref{ppn:concMaxL2} hold. 
    Then the test $$\phi_{m}^{MA} := \bm{1}\left\{\left|T_{m}^{MA}\right|> \frac{64K^2}{\sqrt m} \right\}$$ is asymptotically consistent whenever 
    $\lim_{m \rightarrow \infty}\inf_{1 \leq a \leq r_m} \left\{\emmd\left[\cF_{a}, P, Q\right] \right\}>0$ and $r_{m} \log r_{m} = o\left(\lambda_{m} \sqrt{m}\right)$, where $\lambda_{m}$ is the smallest eigenvalue of $\bm{\Sigma}_{r_m}$. 
    \end{theorem}

    The proof of Theorem \ref{thm:MahaAggrmntest} is given in Appendix \ref{sec:proofMahaAggrmn}. Essentially, the result shows that $T_{m}^{MA}$ leads to a consistent test for $r_m = o(\lambda_m \sqrt m)$, ignoring logarithmic factors. In comparison, for the maximum aggregation one can have $r_m$ grow up to sub-exponentially in $m$ and the $L_2$ aggregation allows any sub-linear growth for $r_m$ (recall Proposition \ref{ppn:concMaxL2}). One of the challenges in dealing with MMMD statistic in the growing $r_m$ regime is that in addition to the concentration of the vector of MMD statistic one has to ensure the concentration of $\bm{\Sigma}_{r_{m}}$, which necessitates a more stringent requirement on $r_m$ (in comparison to the $L_2$ aggregation) to guarantee consistency. 
    Towards this it is expected that the lowest eigenvalue of the population covariance matrix $\bm \Sigma_{r_m}$ plays a role in how large $r_m$ can be.

    Improving the dependence on $r$ in the above results and investigating the behavior of the various aggregation strategies when $r$ is comparable or even larger than $m$ are important future directions. However, from a practical standpoint one needs to exercise caution while selecting $r$. Although the aggregated tests remain consistent under appropriate growth conditions on $r$, in finite samples the power of the tests saturate and the tests also become conservative when $r$ is large. 
    This latter issue, which is already apparent for the single kernel test when the cutoff is chosen based on concentration inequalities (see \citet[Section 4]{gretton2012kernel}), become more significant when $r$ grows with $m$. Moreover, the computation of the $\hat{\bm{\Sigma}}^{-1}$ becomes less stable when $r$ becomes too large. In practice, as we see in the simulations, there is already significant improvement in power over single kernel tests just by aggregating over a few (up to 5) kernels. Further exploring the interplay between the choice of $r$, Type I error, and power is an interesting future direction. 
    
    \color{black} 
    
    \section{Broader Scope I: Local Power of Adaptive MMD Tests} 
    \label{sec:broaderscope} 
    
    The idea of using multiple kernels/bandwidths has recently emerged as a popular alternative to selecting a single bandwidth, for developing adaptive kernel two-sample tests that do not require data-splitting. 
    In this direction, \citet{kubler2020learning} proposed a method which does not require data splitting using the framework of post-selection inference.  
    However, this method requires asymptotic normality of the test statistic under $H_0$, hence, is restricted to the linear-time MMD estimate \citet[Section 6]{gretton2012kernel}, which leads to loss in power when compared to the more commonly used quadratic-time estimate \eqref{eq:WX}. \citet{fromont2013two,fromont2012kernels} and, more recently, \citet{schrab2021mmd} introduced another non-asymptotic aggregated test, hereafter referred to as MMDAgg, that is adaptive minimax (up to an iterated logarithmic term) over Sobolev balls.

    Our aggregation strategy, leads to a test that can be efficiently implemented, enjoys improved empirical power over single kernel tests for a range of alternatives, and scales well in high dimensions. Moreover, instead of minimax optimality, our focus is on establishing the asymptotic properties of the aggregated test. Towards this, we derive the joint distribution of the MMD estimates (under both local and fixed alternatives) and, consequently, establish the statistical (Pitman) efficiency of the proposed test. In fact, our  theoretical results apply to general aggregation schemes using which we can obtain the asymptotic local power of the aforementioned MMDAgg test. 
    To demonstrate this in this section we propose an asymptotic implementation of the MMDAgg test and sketch a heuristic argument that derives its limiting local power in the contamination model \eqref{eq:H0N}. The argument can be made rigorous by using tools from empirical process theory, however, since the purpose of this section is more illustrative than technical, we have not pursued this direction.

    To describe the asymptotic version of the MMDAgg test suppose $\cK=\{\sfK_1, \sfK_2, \ldots, \sfK_r\}$ is a finite collection of kernels and $\cW:=\{w_1, w_2, \ldots, w_r\}$ is an associated collection of positive weights such that $\sum_{s=1}^r w_s \leq 1$.  Moreover, for $\alpha \in (0, 1)$ and $1 \leq s \leq r$, let $\hat q_{1-\alpha, s, m}$ be the $\alpha$-th quantile of the distribution 
    \begin{align*}
         \sE(\sfK_s, \sX_m) :=  \bm{Z}_{m}^\top \sfKmatrix_s  \bm{Z}_{m} -  \frac{1}{\hat{\rho}(1-\hat{\rho})}\Tr[ \sfKmatrix_s ] . 
            \end{align*} 
    where $\sfKmatrix_s$ is as defined in \eqref{eq:centered-kernel}, for $1 \leq a \leq r$, and $\bm{Z}_{m} \sim \cN_m(\bm 0, \frac{1}{\hat{\rho}(1-\hat{\rho})} \bm{I} )$ is independent of $\sX_m$. The idea of the MMDAgg test is to reject $H_0$ if any one of the individual (single-kernel) test based on the kernels in $\cK$ rejects $H_0$ for a specially chosen cut-off (see \citet[Section 3.5]{schrab2021mmd} for details). Here, we consider an alternative implementation of MMDAgg test based on the Gaussian multiplier bootstrap discussed in Section \ref{sec:H0implementation}. To this end, define 
    $$u^*_{\alpha, m} := \mathrm{arg} \max\left\{ u \in (0, L): \P\left(\max_{1 \leq s \leq r} \left\{  \sE(\sfK_s, \sX_m)  - \hat q_{1-u w_s, s, m} \right\}  > 0 | \sX_m \right) \leq \alpha \right\} , $$  
    where $L := \min_{1 \leq s \leq r} w_s^{-1}$. (Note that the probability in the RHS above is over the randomness of $\bm Z_m$ (conditional on $\sX_m$), hence, $u^*_{\alpha, m}$ can be computed from the data by a grid search over $u \in (0, L)$.) The MMDAgg test would then reject $H_0$ if 
    \begin{align}\label{eq:phimnMMD}
    \phi_{m, n, \alpha}^{\mathrm{MMDAgg}} := \bm 1 \left\{  \max_{1 \leq s \leq r} \left\{ \mathrm{MMD}^{2}\left[ \sfK_s, \sX_m, \sY_n \right]  - \hat q_{1- w_s u^*_\alpha, s, m} \right\}  > 0 \right \} . 
    \end{align}
    To describe the asymptotic properties of this test, let $q_{\alpha, s}$ be the $\alpha$-th quantile of the distribution $\frac{1}{\rho(1-\rho)} I_2(\sfK^\circ_{s})$, for $1 \leq s \leq r$. Then  for each fixed $u \in (0, L)$, by Theorem \ref{thm:K}, Slutsky's theorem, and the continuous mapping theorem, as $m \rightarrow \infty$, 
    \begin{align*} 
    \max_{1 \leq s \leq r} \left\{  \sE(\sfK_s, \sX_m)  - \hat q_{1-u w_s, s, m} \right\} \dto \max_{1 \leq s \leq r} \left\{  \frac{1}{\rho(1-\rho)} I_2(\sfK^\circ_{s})  -  q_{1-u w_s, s} \right\} . 
    \end{align*} 
    since $\hat q_{1-u w_s, s, m}|\sX_m \stackrel{a.s.} \rightarrow q_{1-u w_s, s}$. Therefore, for each fixed $u \in (0, L)$, as $m \rightarrow \infty$, 
     $$ \P\left(\max_{1 \leq s \leq r} \left\{  \sE(\sfK_s, \sX_m)  - \hat q_{1-u w_s, s, m} \right\}  > 0 | \sX_m \right)  \rightarrow \P\left( \max_{1 \leq s \leq r} \left\{  \frac{1}{\rho(1-\rho)} I_2(\sfK^\circ_{s})  -  q_{1-u w_s, s} \right\} > 0 \right). $$ Now, since the convergence of the quantiles is uniform, we expect the following to hold as $m \rightarrow \infty$: $u^*_{\alpha, m} \stackrel{a.s.}\rightarrow u^*_{\alpha}$ and $\hat q_{1- w_s u^*_{\alpha, m} , s, m} | \sX_m \stackrel{a.s.}\rightarrow q_{1- w_s u^*_\alpha , s}$, 
    where 
    $$u^*_{\alpha} := \mathrm{arg} \max\left\{ u \in (0, L):  \P\left( \max_{1 \leq s \leq r} \left\{  \frac{1}{\rho(1-\rho)} I_2(\sfK^\circ_{s})  -  q_{1-u w_s, s} \right\} > 0 \right) \leq \alpha \right\} . $$ 
    Hence, under $H_1$ as in \eqref{eq:H0N}, by Theorem \ref{thm:H0NK}, Slutsky's theorem, and the continuous mapping theorem, 
    $$ \max_{1 \leq s \leq r} \left\{ \mathrm{MMD}^{2}\left[ \sfK_s, \sX_m, \sY_n \right]  - \hat q_{1-w_s u^*_\alpha , s, m} \right\}  \dto \max_{1 \leq s \leq r} \left\{  G_{\sfK_s, h} - q_{1-w_s u^*_\alpha , s} \right\} , $$ 
    where $G_{\sfK_s, h} := \gamma I_2(\sfK^\circ_s) + 2h \sqrt{\gamma} I_1 (\sfK^{\circ}_s [\frac{g}{f_P}]) + h^2 \mu_s$,  
    and $\mu_s$ is as defined in \eqref{eq:meanK}. 
    Therefore, the limiting power of the test \eqref{eq:phimnMMD} 
    is given by 
    \begin{align*} 
    \lim_{m, n \rightarrow \infty} \E_{H_1}[\phi_{m, n, \alpha}^{\mathrm{MMDAgg}} ] & = \P\left(\max_{1 \leq s \leq r} \left\{  G_{\sfK_s, h} - q_{1-w_s u^*_\alpha , s} \right\} > 0 \right) \nonumber \\ 
    & = 1 - \bm F_{\cK, h}(q_{1-w_1 u^*_\alpha , 1},  \ldots, q_{1-w_r u^*_\alpha , r}) , 
    \end{align*}
     where $\bm F_{\cK, h}$ is the cumulative distribution function of the vector $(G_{\sfK_1, h}, G_{\sfK_2, h}, \ldots, G_{\sfK_r, h})^\top$.

    Numerical results comparing the empirical power of the MMMD test with the MMDAgg test are reported in Appendix \ref{sec:adpativeMMDexperiments} in the supplementary materials.  The experiments show that the MMMD has better power than MMDAgg for a range of alternatives, which include perturbed uniform distributions in the Sobolev class, as well as mixture and local alternatives. This showcases both the practical relevance of the Mahalanobis aggregation strategy and the broader scope of our asymptotic results.

    \color{black}
    \section{Broader Scope II: Aggregation with Diverging Bandwidths}
    \label{sec:kernelbandwidth}
    
    In the previous sections we established the universal consistency and derived  the asymptotic null distribution of the MMMD test, for kernels with fixed bandwidths. However, in practice bandwidths are often chosen in data-driven manner which depend on the sample size $N$. For instance, to obtain tests which are optimal (in detecting smooth departures from the null hypothesis), the scaling parameter $\lambda := \frac{1}{\sigma^2}$ has to diverge with the sample size (see \cite{kernelnonparametricsmoothalternatives,schrab2021mmd}). For such choices of the scaling parameter, the test statistic has an asymptotically normal distribution under $H_0$, hence the rejection threshold can be readily obtained without any permutation/bootstrap resampling. Combining this idea with our aggregation strategy, in this section we construct a new test that combines multiple Gaussian kernels, with appropriately chosen diverging scaling parameters, that has a multivariate normal distribution under $H_0$. 
    
    For $\lambda > 0$, let $\sfK_{\lambda}(x,y):= e^{-\lambda\|x-y\|^2}$ 
    be the Gaussian kernel with scaling parameter $\lambda$. For $r\geq 1$, consider the collection of kernels, $\cK_{\bm \nu}:=\left\{\sfK_{\nu_{a}}:1\leq a\leq r\right\}$, where $\bm \nu = (\nu_1, \nu_2, \ldots, \nu_r)$ is a set of scaling paramters which can possibly depend on $N$. Throughout this section we make the following assumptions:
    
    \begin{assumption}\label{assmp:relate}
    There exists $\{\eta_{s}> 0:1\leq s\leq r\}$ such that $\nu_{s} = \eta_{s}\lambda_{N}$, for all $1\leq s\leq r$, where $\lambda_{N} = o\left(N^{4/d}\right)$ such that $\lambda_{N}\ra\infty$, in the asymptotic regime \eqref{eq:mn}.   
     \end{assumption}

    \begin{assumption}\label{assmp:PQdensitypq}
    Suppose $\mathcal{X} = \mathbb{R}^{d}$, for $d\geq 1$, and the distribution $P$ has a density $f_{P}\in L_{2}(\R^d)$ with respect to the Lebesgue measure on $\mathbb{R}^{d}$.
    \end{assumption}
    
    Under these assumptions we have the following theorem (see Appendix \ref{sec:H0KNpf} for the proof).

    \begin{theorem}\label{thm:asymkernelonN}
     Suppose the collection of kernel $\cK_{\bm \nu}$ satisfy Assumption \ref{assmp:relate} and Assumption \ref{assmp:PQdensitypq} holds. Then under $H_{0}$ in the asymptotic regime \eqref{eq:mn},
    \begin{align*}
        \frac{mn}{\sqrt{2}(m+n)}\lambda_{N}^{d/4}\emmd\left[\cK_{\bm \nu},\sX_{m},\sY_{n}\right]\dto \mathcal N_{r}\left(\bm{0}, \Gamma \right) , 
    \end{align*}
    where $\Gamma =  (\gamma_{ab})_{1 \leq a, b \leq r}$ is a $r\times r$ matrix with entries $\gamma_{ab} = \frac{\pi^{d/2}\|f_{P}\|_{2}^2}{\left(\eta_{a} + \eta_{b}\right)^{d/2}}$, for $1\leq a,b\leq r$. 
    \end{theorem} 
    
    In Lemma \ref{lemma:estnormfP} (in Appendix \ref{sec:H0KNpf}) we provide a consistent estimate $\|\hat f_{P}\|_{2}^2$ of $\|f_{P}\|_{2}^2$ as in Theorem 4 in \cite{kernelnonparametricsmoothalternatives}. Combining Theorem \ref{thm:asymkernelonN} and Lemma \ref{lemma:estnormfP} gives the following result.

    \begin{corollary}\label{corollary:KNH0} 
     Suppose the conditions of Theorem \ref{thm:asymkernelonN} hold.  
        Then under $H_{0}$, 
        in the asymptotic regime \eqref{eq:mn},
        \begin{align*}
            {\bm V}_{m, n} := \frac{mn}{\sqrt{2}(m+n)} \lambda_{N}^{d/4}\|\hat f_{P}\|_{2}^{-1} \emmd\left[\cK_{\bm \nu},\sX_{m},\sY_{n}\right]\dto \mathcal N_{r}(\bm{0}, \tilde{\bm \Gamma} ) , 
        \end{align*} 
        where $\|\hat f_{P}\|_{2}$ is defined in Lemma \ref{lemma:estnormfP} and $\tilde{\bm \Gamma} = (\tilde{\gamma}_{ab})_{1 \leq a, b \leq r}$ is a $r\times r$ matrix with entries $\tilde{\gamma}_{ab} = (\frac{\pi}{\eta_{a} + \eta_{b}})^{d/2}$, for $1\leq a,b\leq r$. Consequently, $\{\bm V_{m, n}^\top \tilde{\bm \Gamma}^{-1} \bm V_{m, n} > \chi^2_{r, 1-\alpha}\}$ is an asymptotically level $\alpha$ test. 
    \end{corollary}
    
    Note that the test above has a tractable chi-square distribution under $H_0$, hence, its rejection region can be readily obtained without any bootstrap resampling (unlike the general test with fixed bandwidths discussed in Section \ref{sec:H0implementation}). Furthermore, the test in Corollary \ref{corollary:KNH0} will be optimal in detecting certain smooth alternatives for an appropriately chosen bandwidth (depending on the smoothness parameter), similar to the single kernel test  (see Section 3 in \cite{kernelnonparametricsmoothalternatives}). Moreover, we expect the test in 
    Corollary \ref{corollary:KNH0} to have better power than its single sample counterpart in finite samples, for specific types of smooth alternatives. 
    \\

    \color{black}

    \small{
    \noindent\textbf{Acknowledgements.} The authors are grateful to the Editor, the Associate Editor, and the anonymous referees for their insightful comments which led to several new results and greatly improved the quality and the presentation of the paper. B. B. Bhattacharya was supported by NSF CAREER grant DMS 2046393 and a Sloan Research Fellowship. }

    \bibliographystyle{abbrvnat}
    \bibliography{kernelbib}

\begin{thebibliography}{69}
\providecommand{\natexlab}[1]{#1}
\providecommand{\url}[1]{\texttt{#1}}
\expandafter\ifx\csname urlstyle\endcsname\relax
  \providecommand{\doi}[1]{doi: #1}\else
  \providecommand{\doi}{doi: \begingroup \urlstyle{rm}\Url}\fi

\bibitem[Anderson(1962)]{anderson1962distribution}
T.~W. Anderson.
\newblock On the distribution of the two-sample {C}ramer-von {M}ises criterion.
\newblock \emph{The Annals of Mathematical Statistics}, pages 1148--1159, 1962.

\bibitem[Aronszajn(1950)]{aronszajn1950theory}
N.~Aronszajn.
\newblock Theory of reproducing kernels.
\newblock \emph{Transactions of the American mathematical society}, 68\penalty0 (3):\penalty0 337--404, 1950.

\bibitem[Aslan and Zech(2005)]{aslan2005new}
B.~Aslan and G.~Zech.
\newblock New test for the multivariate two-sample problem based on the concept of minimum energy.
\newblock \emph{Journal of Statistical Computation and Simulation}, 75\penalty0 (2):\penalty0 109--119, 2005.

\bibitem[Banerjee and Ghosh(2022)]{balldivergence2022}
B.~Banerjee and A.~K. Ghosh.
\newblock On high dimensional behaviour of some two-sample tests based on ball divergence.
\newblock \emph{arXiv preprint arXiv:2212.08566}, 2022.

\bibitem[Baringhaus and Franz(2004)]{baringhaus2004new}
L.~Baringhaus and C.~Franz.
\newblock On a new multivariate two-sample test.
\newblock \emph{Journal of multivariate analysis}, 88\penalty0 (1):\penalty0 190--206, 2004.

\bibitem[Basu et~al.(2017)Basu, Karki, Ganguly, DiBiano, Mukhopadhyay, Gayaka, Kannan, and Nemani]{basu2017learning}
S.~Basu, M.~Karki, S.~Ganguly, R.~DiBiano, S.~Mukhopadhyay, S.~Gayaka, R.~Kannan, and R.~Nemani.
\newblock Learning sparse feature representations using probabilistic quadtrees and deep belief nets.
\newblock \emph{Neural Processing Letters}, 45\penalty0 (3):\penalty0 855--867, 2017.

\bibitem[Bhattacharya(2019)]{bhattacharya2019}
B.~B. Bhattacharya.
\newblock A general asymptotic framework for distribution-free graph-based two-sample tests.
\newblock \emph{Journal of the Royal Statistical Society: Series B (Statistical Methodology)}, 81\penalty0 (3):\penalty0 575--602, 2019.

\bibitem[Bhattacharya et~al.(2017)Bhattacharya, Diaconis, and Mukherjee]{bbbpdsm}
B.~B. Bhattacharya, P.~Diaconis, and S.~Mukherjee.
\newblock Universal limit theorems in graph coloring problems with connections to extremal combinatorics.
\newblock \emph{The Annals of Applied Probability}, 27\penalty0 (1):\penalty0 337--394, 2017.

\bibitem[Bickel(1969)]{bickel1969distribution}
P.~J. Bickel.
\newblock A distribution free version of the {S}mirnov two sample test in the $p$-variate case.
\newblock \emph{The Annals of Mathematical Statistics}, 40\penalty0 (1):\penalty0 1--23, 1969.

\bibitem[Biswas et~al.(2014)Biswas, Mukhopadhyay, and Ghosh]{multivariateruntest}
M.~Biswas, M.~Mukhopadhyay, and A.~K. Ghosh.
\newblock A distribution-free two-sample run test applicable to high-dimensional data.
\newblock \emph{Biometrika}, 101\penalty0 (4):\penalty0 913--926, 2014.

\bibitem[Chen and Friedman(2017)]{chen2017new}
H.~Chen and J.~H. Friedman.
\newblock A new graph-based two-sample test for multivariate and object data.
\newblock \emph{Journal of the American Statistical Association}, 112\penalty0 (517):\penalty0 397--409, 2017.

\bibitem[Chikkagoudar and Bhat(2014)]{chikkagoudar2014limit}
M.~Chikkagoudar and B.~V. Bhat.
\newblock Limiting distribution of two-sample degenerate {$U$}-statistic under contiguous alternatives and applications.
\newblock \emph{Journal of Applied Statistical Science}, 22\penalty0 (1/2):\penalty0 127, 2014.

\bibitem[Chwialkowski et~al.(2015)Chwialkowski, Ramdas, Sejdinovic, and Gretton]{chwialkowski2015fast}
K.~P. Chwialkowski, A.~Ramdas, D.~Sejdinovic, and A.~Gretton.
\newblock Fast two-sample testing with analytic representations of probability measures.
\newblock \emph{Advances in Neural Information Processing Systems}, 28, 2015.

\bibitem[Deb and Sen(2021)]{deb2021multivariate}
N.~Deb and B.~Sen.
\newblock Multivariate rank-based distribution-free nonparametric testing using measure transportation.
\newblock \emph{Journal of the American Statistical Association}, pages 1--16, 2021.

\bibitem[Dunford and Schwartz(1988)]{dunford1965linear}
N.~Dunford and J.~T. Schwartz.
\newblock \emph{Linear operators. {P}art {II}}.
\newblock Wiley Classics Library. John Wiley \& Sons, Inc., New York, 1988.

\bibitem[Durrett(2019)]{durrett}
R.~Durrett.
\newblock \emph{Probability: Theory and Examples}, volume~49.
\newblock Cambridge University Press, 2019.

\bibitem[Finner(1992)]{finner1992generalization}
H.~Finner.
\newblock A generalization of {H}older's inequality and some probability inequalities.
\newblock \emph{The Annals of Probability}, pages 1893--1901, 1992.

\bibitem[Friedman and Rafsky(1979)]{friedman1979multivariate}
J.~H. Friedman and L.~C. Rafsky.
\newblock Multivariate generalizations of the {W}ald-{W}olfowitz and {S}mirnov two-sample tests.
\newblock \emph{The Annals of Statistics}, pages 697--717, 1979.

\bibitem[Fromont et~al.(2012)Fromont, Lerasle, and Reynaud-Bouret]{fromont2012kernels}
M.~Fromont, M.~Lerasle, and P.~Reynaud-Bouret.
\newblock Kernels based tests with non-asymptotic bootstrap approaches for two-sample problems.
\newblock In \emph{Conference on Learning Theory}, pages 23--1. JMLR Workshop and Conference Proceedings, 2012.

\bibitem[Fromont et~al.(2013)Fromont, Laurent, and Reynaud-Bouret]{fromont2013two}
M.~Fromont, B.~Laurent, and P.~Reynaud-Bouret.
\newblock The two-sample problem for poisson processes: Adaptive tests with a nonasymptotic wild bootstrap approach.
\newblock \emph{The Annals of Statistics}, 41\penalty0 (3):\penalty0 1431--1461, 2013.

\bibitem[Gretton et~al.(2009)Gretton, Fukumizu, Harchaoui, and Sriperumbudur]{gretton2009fast}
A.~Gretton, K.~Fukumizu, Z.~Harchaoui, and B.~K. Sriperumbudur.
\newblock A fast, consistent kernel two-sample test.
\newblock In \emph{NIPS}, volume~23, pages 673--681, 2009.

\bibitem[Gretton et~al.(2012{\natexlab{a}})Gretton, Borgwardt, Rasch, Sch\"{o}lkopf, and Smola]{gretton2012kernel}
A.~Gretton, K.~M. Borgwardt, M.~J. Rasch, B.~Sch\"{o}lkopf, and A.~Smola.
\newblock A kernel two-sample test.
\newblock \emph{Journal of Machine Learning Research}, 13:\penalty0 723--773, 2012{\natexlab{a}}.

\bibitem[Gretton et~al.(2012{\natexlab{b}})Gretton, Sejdinovic, Strathmann, Balakrishnan, Pontil, Fukumizu, and Sriperumbudur]{gretton2012optimal}
A.~Gretton, D.~Sejdinovic, H.~Strathmann, S.~Balakrishnan, M.~Pontil, K.~Fukumizu, and B.~K. Sriperumbudur.
\newblock Optimal kernel choice for large-scale two-sample tests.
\newblock In \emph{Advances in Neural Information Processing Systems}, pages 1205--1213. Citeseer, 2012{\natexlab{b}}.

\bibitem[Hall(1984)]{hall1984central}
P.~Hall.
\newblock Central limit theorem for integrated square error of multivariate nonparametric density estimators.
\newblock \emph{Journal of multivariate analysis}, 14\penalty0 (1):\penalty0 1--16, 1984.

\bibitem[Hall and Tajvidi(2002)]{hall2002permutation}
P.~Hall and N.~Tajvidi.
\newblock Permutation tests for equality of distributions in high-dimensional settings.
\newblock \emph{Biometrika}, 89\penalty0 (2):\penalty0 359--374, 2002.

\bibitem[Henze(1984)]{henze1984number}
N.~Henze.
\newblock On the number of random points with nearest neighbour of the same type and a multivariate two-sample test.
\newblock \emph{Metrika}, 31:\penalty0 259--273, 1984.

\bibitem[Hoffman and Wielandt(1953)]{eigenvalueHW}
A.~J. Hoffman and H.~W. Wielandt.
\newblock {The variation of the spectrum of a normal matrix}.
\newblock \emph{Duke Mathematical Journal}, 20\penalty0 (1):\penalty0 37 -- 39, 1953.

\bibitem[It{\^o}(1951)]{stochasticintegral}
K.~It{\^o}.
\newblock Multiple wiener integral.
\newblock \emph{Journal of the Mathematical Society of Japan}, 3\penalty0 (1):\penalty0 157--169, 1951.

\bibitem[Janson(1997)]{janson1997gaussian}
S.~Janson.
\newblock \emph{Gaussian {H}ilbert spaces}, volume 129 of \emph{Cambridge Tracts in Mathematics}.
\newblock Cambridge University Press, Cambridge, 1997.

\bibitem[Janson and Nowicki(1991)]{jansonustat}
S.~Janson and K.~Nowicki.
\newblock The asymptotic distributions of generalized {$U$}-statistics with applications to random graphs.
\newblock \emph{Probab. Theory Related Fields}, 90\penalty0 (3):\penalty0 341--375, 1991.

\bibitem[Kim et~al.(2020)Kim, Balakrishnan, and Wasserman]{kim2020robust}
I.~Kim, S.~Balakrishnan, and L.~Wasserman.
\newblock Robust multivariate nonparametric tests via projection averaging.
\newblock \emph{The Annals of Statistics}, 48\penalty0 (6):\penalty0 3417--3441, 2020.

\bibitem[Kim et~al.(2021)Kim, Ramdas, Singh, and Wasserman]{kim2021classification}
I.~Kim, A.~Ramdas, A.~Singh, and L.~Wasserman.
\newblock Classification accuracy as a proxy for two-sample testing.
\newblock \emph{The Annals of Statistics}, 49\penalty0 (1):\penalty0 411--434, 2021.

\bibitem[Kim et~al.(2022)Kim, Balakrishnan, and Wasserman]{kim2022minimax}
I.~Kim, S.~Balakrishnan, and L.~Wasserman.
\newblock Minimax optimality of permutation tests.
\newblock \emph{The Annals of Statistics}, 50\penalty0 (1):\penalty0 225--251, 2022.

\bibitem[Koltchinskii and Gin\'{e}(2000)]{Koltch2000}
V.~Koltchinskii and E.~Gin\'{e}.
\newblock Random matrix approximation of spectra of integral operators.
\newblock \emph{Bernoulli}, 6\penalty0 (1):\penalty0 113--167, 2000.

\bibitem[K{\"u}bler et~al.(2020)K{\"u}bler, Jitkrittum, Sch{\"o}lkopf, and Muandet]{kubler2020learning}
J.~K{\"u}bler, W.~Jitkrittum, B.~Sch{\"o}lkopf, and K.~Muandet.
\newblock Learning kernel tests without data splitting.
\newblock \emph{Advances in Neural Information Processing Systems}, 33:\penalty0 6245--6255, 2020.

\bibitem[Lax(2002)]{MR1892228}
P.~D. Lax.
\newblock \emph{Functional Analysis}.
\newblock Pure and Applied Mathematics (New York). Wiley-Interscience [John Wiley \& Sons], New York, 2002.

\bibitem[Lehmann and Romano(2005)]{lr}
E.~L. Lehmann and J.~P. Romano.
\newblock \emph{Testing Statistical Hypotheses}.
\newblock Springer Texts in Statistics. Springer, New York, third edition, 2005.

\bibitem[Li and Yuan(2019)]{kernelnonparametricsmoothalternatives}
T.~Li and M.~Yuan.
\newblock On the optimality of {G}aussian kernel based nonparametric tests against smooth alternatives.
\newblock \emph{arXiv preprint arXiv:1909.03302}, 2019.

\bibitem[Liu et~al.(2020)Liu, Xu, Lu, Zhang, Gretton, and Sutherland]{liu2020learning}
F.~Liu, W.~Xu, J.~Lu, G.~Zhang, A.~Gretton, and D.~J. Sutherland.
\newblock Learning deep kernels for non-parametric two-sample tests.
\newblock In \emph{International Conference on Machine Learning}, pages 6316--6326. PMLR, 2020.

\bibitem[Liu and Singh(1993)]{liu1993}
R.~Y. Liu and K.~Singh.
\newblock A quality index based on data depth and multivariate rank tests.
\newblock \emph{Journal of the American Statistical Association}, 88\penalty0 (421):\penalty0 252--260, 1993.

\bibitem[Lopez-Paz and Oquab(2017)]{lopez2017revisiting}
D.~Lopez-Paz and M.~Oquab.
\newblock Revisiting classifier two-sample tests.
\newblock In \emph{International Conference on Learning Representations}, 2017.

\bibitem[Lubetzky and Zhao(2015)]{LZdense}
E.~Lubetzky and Y.~Zhao.
\newblock On replica symmetry of large deviations in random graphs.
\newblock \emph{Random Structures \& Algorithms}, 47\penalty0 (1):\penalty0 109--146, 2015.

\bibitem[Mann and Whitney(1947)]{mann1947}
H.~B. Mann and D.~R. Whitney.
\newblock On a test of whether one of two random variables is stochastically larger than the other.
\newblock \emph{The Annals of Mathematical Statistics}, 18:\penalty0 50--60, 1947.

\bibitem[McDiarmid et~al.(1989)]{mcdiarmid1989method}
C.~McDiarmid et~al.
\newblock On the method of bounded differences.
\newblock \emph{Surveys in combinatorics}, 141\penalty0 (1):\penalty0 148--188, 1989.

\bibitem[Pan et~al.(2018)Pan, Tian, Wang, and Zhang]{balldivergencenonparametric2018}
W.~Pan, Y.~Tian, X.~Wang, and H.~Zhang.
\newblock Ball divergence: nonparametric two sample test.
\newblock \emph{The Annals of Statistics}, 46\penalty0 (3):\penalty0 1109--1137, 2018.

\bibitem[Park and Muandet(2020)]{park2020measure}
J.~Park and K.~Muandet.
\newblock A measure-theoretic approach to kernel conditional mean embeddings.
\newblock \emph{Advances in neural information processing systems}, 33:\penalty0 21247--21259, 2020.

\bibitem[Ramdas et~al.(2015{\natexlab{a}})Ramdas, Reddi, P{\'o}czos, Singh, and Wasserman]{ramdas2015}
A.~Ramdas, S.~J. Reddi, B.~P{\'o}czos, A.~Singh, and L.~Wasserman.
\newblock On the decreasing power of kernel and distance based nonparametric hypothesis tests in high dimensions.
\newblock In \emph{Proceedings of the AAAI Conference on Artificial Intelligence}, volume~29, 2015{\natexlab{a}}.

\bibitem[Ramdas et~al.(2015{\natexlab{b}})Ramdas, Reddi, P{\'o}czos, Singh, and Wasserman]{ramdas2015decreasing}
A.~Ramdas, S.~J. Reddi, B.~P{\'o}czos, A.~Singh, and L.~Wasserman.
\newblock On the decreasing power of kernel and distance based nonparametric hypothesis tests in high dimensions.
\newblock In \emph{Proceedings of the AAAI Conference on Artificial Intelligence}, volume~29, 2015{\natexlab{b}}.

\bibitem[Ramdas et~al.(2017)Ramdas, Garc{\'\i}a~Trillos, and Cuturi]{ramdas2017wasserstein}
A.~Ramdas, N.~Garc{\'\i}a~Trillos, and M.~Cuturi.
\newblock On {W}asserstein two-sample testing and related families of nonparametric tests.
\newblock \emph{Entropy}, 19\penalty0 (2):\penalty0 47, 2017.

\bibitem[Reed and Simon(1980)]{reedsimon}
M.~Reed and B.~Simon.
\newblock \emph{Methods of Modern Mathematical Physics. vol. 1. Functional Analysis}.
\newblock Academic New York, 1980.

\bibitem[Renardy and Rogers(2004)]{renardy2006introduction}
M.~Renardy and R.~C. Rogers.
\newblock \emph{An Introduction to Partial Differential Equations}, volume~13 of \emph{Texts in Applied Mathematics}.
\newblock Springer-Verlag, New York, second edition, 2004.

\bibitem[Rosenbaum(2005)]{rosenbaum}
P.~R. Rosenbaum.
\newblock An exact distribution-free test comparing two multivariate distributions based on adjacency.
\newblock \emph{Journal of the Royal Statistical Society: Series B (Statistical Methodology)}, 67\penalty0 (4):\penalty0 515--530, 2005.

\bibitem[Schilling(1986)]{schilling1986multivariate}
M.~F. Schilling.
\newblock Multivariate two-sample tests based on nearest neighbors.
\newblock \emph{Journal of the American Statistical Association}, 81\penalty0 (395):\penalty0 799--806, 1986.

\bibitem[Schrab et~al.(2023)Schrab, Kim, Albert, Laurent, Guedj, and Gretton]{schrab2021mmd}
A.~Schrab, I.~Kim, M.~Albert, B.~Laurent, B.~Guedj, and A.~Gretton.
\newblock {MMD} aggregated two-sample test.
\newblock \emph{Journal of Machine Learning Research}, 24\penalty0 (194):\penalty0 1--81, 2023.

\bibitem[Sejdinovic et~al.(2013)Sejdinovic, Sriperumbudur, Gretton, and Fukumizu]{sejdinovic2013equivalence}
D.~Sejdinovic, B.~Sriperumbudur, A.~Gretton, and K.~Fukumizu.
\newblock Equivalence of distance-based and rkhs-based statistics in hypothesis testing.
\newblock \emph{The Annals of Statistics}, pages 2263--2291, 2013.

\bibitem[Serfling(1980)]{serfling}
R.~J. Serfling.
\newblock \emph{Approximation Theorems of Mathematical Statistics}.
\newblock Wiley Series in Probability and Mathematical Statistics. John Wiley \& Sons, Inc., New York, 1980.

\bibitem[Shekhar et~al.(2022)Shekhar, Kim, and Ramdas]{permutation2022}
S.~Shekhar, I.~Kim, and A.~Ramdas.
\newblock A permutation-free kernel two-sample test.
\newblock \emph{arXiv:2211.14908}, 2022.

\bibitem[Smirnov(1948)]{Smirnov1948}
N.~Smirnov.
\newblock Table for estimating the goodness of fit of empirical distributions.
\newblock \emph{The Annals of Mathematical Statistics}, 19:\penalty0 279--281, 1948.

\bibitem[Song and Chen(2020)]{song2020generalized}
H.~Song and H.~Chen.
\newblock Generalized kernel two-sample tests.
\newblock \emph{arXiv:2011.06127}, 2020.

\bibitem[Sutherland et~al.(2016)Sutherland, Tung, Strathmann, De, Ramdas, Smola, and Gretton]{sutherland2016generative}
D.~J. Sutherland, H.-Y. Tung, H.~Strathmann, S.~De, A.~Ramdas, A.~Smola, and A.~Gretton.
\newblock Generative models and model criticism via optimized maximum mean discrepancy.
\newblock \emph{arXiv:1611.04488}, 2016.

\bibitem[Sz{\'e}kely(2003)]{szekely2003statistics}
G.~J. Sz{\'e}kely.
\newblock E-statistics: The energy of statistical samples.
\newblock \emph{Bowling Green State University, Department of Mathematics and Statistics Technical Report}, 3\penalty0 (05):\penalty0 1--18, 2003.

\bibitem[Sz{\'e}kely and Rizzo(2004)]{szekely2004testing}
G.~J. Sz{\'e}kely and M.~L. Rizzo.
\newblock Testing for equal distributions in high dimension.
\newblock \emph{InterStat}, 5\penalty0 (16.10):\penalty0 1249--1272, 2004.

\bibitem[Sz{\'e}kely and Rizzo(2013)]{szekely2013energy}
G.~J. Sz{\'e}kely and M.~L. Rizzo.
\newblock Energy statistics: A class of statistics based on distances.
\newblock \emph{Journal of Statistical Planning and Inference}, 143\penalty0 (8):\penalty0 1249--1272, 2013.

\bibitem[van~der Vaart(1998)]{van2000asymptotic}
A.~W. van~der Vaart.
\newblock \emph{Asymptotic statistics}, volume~3 of \emph{Cambridge Series in Statistical and Probabilistic Mathematics}.
\newblock Cambridge University Press, Cambridge, 1998.

\bibitem[Vovk and Wang(2020)]{vovk2020combining}
V.~Vovk and R.~Wang.
\newblock Combining p-values via averaging.
\newblock \emph{Biometrika}, 107\penalty0 (4):\penalty0 791--808, 2020.

\bibitem[Wald and Wolfowitz(1940)]{wald1940}
A.~Wald and J.~Wolfowitz.
\newblock On a test whether two samples are from the same population.
\newblock \emph{The Annals of Mathematical Statistics}, 11:\penalty0 147--162, 1940.

\bibitem[Weiss(1960)]{weiss1960two}
L.~Weiss.
\newblock Two-sample tests for multivariate distributions.
\newblock \emph{The Annals of Mathematical Statistics}, pages 159--164, 1960.

\bibitem[Wilcoxon(1947)]{wilcoxon1947}
F.~Wilcoxon.
\newblock Probability tables for individual comparisons by ranking methods.
\newblock \emph{Biometrics}, 3:\penalty0 119--122, 1947.

\bibitem[Zhang et~al.(2022)Zhang, Guo, and Zhou]{zhang2022testing}
J.-T. Zhang, J.~Guo, and B.~Zhou.
\newblock Testing equality of several distributions in separable metric spaces: A maximum mean discrepancy based approach.
\newblock \emph{Journal of Econometrics}, 2022.

\end{thebibliography}
    
    \appendix 

\normalsize 

\section*{Supplement to ``Boosting the Power of Kernel Two Sample Tests''}

This supplementary material is organized as follows: We recall the definition and properties of multiple Weiner-It\^{o} stochastic integrals in Appendix \ref{sec:stochasticintegral}. The proofs of Theorem \ref{thm:K} and Corollary \ref{cor:TM} are given in Appendix \ref{sec:H0pf}. Theorem \ref{thm:estimateH0} and Theorem \ref{thm:H0NK} are proved in Appendix \ref{sec:estimateH0pf} and Appendix \ref{sec:localpowerpf}, respectively. In Appendix \ref{sec:H1asymptotic} we derive the joint distribution of the MMD estimates under the alternative. In Appendix \ref{sec:computationpf} we analyze the computational complexity, both theoretically and in simulations. Additional simulations comparing the MMMD tests with the single kernel tests are given in Appendix \ref{sec:experimentsadditional}. The Mahalanobis aggregation strategy is implemented for the linear time statistic in Appendix \ref{sec:linear}.  The proofs of Proposition \ref{ppn:concMaxL2} and Theorem \ref{thm:MahaAggrmntest} are given in Appendix \ref{sec:maxL2MMMMDpf}. Simulations comparing the MMMD test and the MMDAgg test are presented in Appendix \ref{sec:adpativeMMDexperiments}. The proof of Theorem \ref{thm:asymkernelonN} is given in Appendix \ref{sec:H0KNpf}. The invertibility of  $\bm{\Sigma}_{H_{0}}$ is discussed in Appendix \ref{sec:Sigma}. We collect the proofs of various technical lemmas in Appendix \ref{sec:lmpf}.

\section{Multiple Weiner-It\^{o} Stochastic Integrals} 
\label{sec:stochasticintegral}

In this section, we recall the definition and basic properties of multiple Weiner-It\^{o} stochastic integrals as presented in  \citet{stochasticintegral}. Suppose $\mathcal X$ is a separable metric space, $\mathscr B(\mathcal X)$ is the sigma-algebra generated by the open sets of $\mathcal X$, and $P$ is a non-atomic probability measure on $\mathcal X$. We denote this probability space by $(\cX, \sB(\cX), P)$. 

\begin{definition}
	A \textit{Gaussian stochastic measure} on $(\mathcal X, \sB(\mathcal X), P)$ is a collection of random variables $\{\mathcal{Z}_P(A): A\in \sB(\mathcal X)\}$ defined on a common probability space $(\Omega, \cF, \mu)$ such that  the following hold: 
	\begin{itemize}
	\item $\mathcal{Z}_P(A)\sim \cN(0, P(A))$, for all $A\in \sB(\mathcal X)$. 
	\item For any finite collection of disjoint sets $A_{1},\ldots, A_{t} \in \mathscr B(\mathcal X)$, the random variables $$\{\mathcal{Z}_P(A_1), \mathcal{Z}_P(A_2), \ldots, \mathcal Z_P(A_t)\}$$ are independent and 
\begin{align*}
\mathcal{Z}_P\left(\bigcup_{s=1}^{t}A_{s}\right) = \sum_{s=1}^{t}\mathcal{Z}_P(A_{s}).
\end{align*}
\end{itemize} 
\end{definition}

For $d \geq 1$, denote by $L^2(\cX^d, \sB(\cX^d), P^d)$ the space of measurable functions $f: \mathcal X^d \rightarrow \R$ such that 
$$\|f\|^2:= \int_{\cX^d} |f(x_1, x_2, \ldots, x_d)|^2 \mathrm d P(x_1) \mathrm d P(x_2) \ldots, \mathrm d P(x_d) < \infty.$$ 
Define $\cE_d \subseteq L^2(\cX^d, \sB(\cX^d), P^d)$ as the set of all elementary functions having the form
\begin{equation}\label{eq:itelty}
f(t_1, t_2, \ldots, t_d) =\sum_{1 \leq i_1, i_2, \ldots, i_d \leq m} a_{i_1, i_2, \ldots, i_d} \bm 1\{ (t_1, t_2, \ldots,  t_d) \in A_{i_1}\times \cdots \times A_{i_d} \}, 
\end{equation}
where $A_1, A_2, \ldots, A_m \in \sB(\cX)$ are pairwise disjoint and $ a_{i_1, i_2, \ldots, i_d}$ is zero if two indices are equal. 
The multiple Weiner-It\^{o} integral for functions in $\cE_d$ is defined as follows: 

\begin{definition}\label{defn:integral_elementary} (Multiple Weiner-It\^{o} integral for elementary functions) The $d$-dimensional Weiner-It\^{o} stochastic integral, with respect to the Gaussian stochastic measure $\{Z_P(A), A \in \sB(\cX)\}$, for the function $f \in \cE_d $ in  \eqref{eq:itelty} is defined as 
$$I_d(f):=\int_{\cX^d} f(x_1, x_2, \ldots, x_d) \prod_{a=1}^d \mathrm d \cZ_P(x_a):=\sum_{1 \leq i_1, i_2, \ldots, i_d \leq m} a_{i_1, i_2, \ldots, i_d} \cZ_P(A_{i_1})\times \cdots \times \cZ(A_{i_d}).$$ 
\end{definition} 

The multiple Weiner-It\^{o} integral for elementary functions satisfies the following two properties \citep{stochasticintegral}: 

\begin{itemize}

\item (Boundedness) For $f \in  \cE_d$, $\E[I_d(f)^2] \leq d! \|f\|^2 < \infty$.  

\item (Linearity) For $f,g\in \cE_d$, 
$I_d(f+g) \stackrel{a.s.}{=} I_d(f)+I_d(g)$. 

\end{itemize}
This shows that $I_d$ is a bounded linear operator from $\cE_d $ to $L^2(\Omega, \cF, \mu)$, the collection of square-integrable random variables defined on $(\Omega, \cF, \mu)$. Since $\cE_d$ is dense in $L^2(\cX^d, \sB(\cX^d), P^d)$ (by \citet[Theorem 2.1]{stochasticintegral}), using the BLT theorem (see \citet[Theorem I.7]{reedsimon}) $I_d$ can be uniquely extended to $L^2(\cX^d, \sB(\cX^d), P^d)$ by taking limits. This leads to the following definition: 

\begin{definition}\label{defn:integral_II} (Multiple Weiner-It\^{o} integral for general $L_2$-functions) The $d$-dimensional Weiner-It\^{o} stochastic integral for a function $f \in L^2(\cX^d, \sB(\cX^d), P^d)$ is defined as the $L_2$ limit of the sequence $\{I_d(f_n)\}_{n \geq 1}$,  where $\{f_n\}_{n \geq 1}$ is a sequence such that $f_n \in \cE_d$ with $\lim_{n \rightarrow \infty}\|f_n-f\|=0$. 
This is denoted by:  
\begin{align}\label{eq:integral}
I_d(f):=\int_{\cX^d} f(x_1, x_2, \ldots, x_d) \prod_{a=1}^d \mathrm d \cZ_P(x_a). 
\end{align}
\end{definition}

As in the case of elementary functions, it can be easily checked that $I_d(f)$ satisfies the following properties: 
\begin{itemize}

\item (Boundedness) For $f \in L^2(\cX^d, \sB(\cX^d), P^d)$, $\E[I_d(f)^2] \leq d! \|f\|^2 < \infty$.  

\item (Linearity) For $f, g\in L^2(\cX^d, \sB(\cX^d), P^d)$, 
$I_d(f+g) \stackrel{a.s.}{=} I_d(f)+I_d(g)$. 

\end{itemize}
It is also important to note that multiple Weiner-It\^{o} integrals do not behave like classical (non-stochastic) integrals with respect to product measures, since by definition diagonal sets do not contribute to It\^{o} integrals. Nevertheless, one can express the multiple Weiner-It\^{o} integral for a product function in terms of univariate stochastic integrals using the Wick product (cf.~\citet[Theorem 7.26]{janson1997gaussian}). In the bivariate case, with 2 functions $f, g \in L^2(\cX^2, \sB(\cX^2), P^2)$, this simplifies to  
\begin{align}\label{eq:fgstochasticintegral}
\int_{\cX^2} f(x) g(y) \mathrm d \cZ_P(x) \mathrm d \cZ_P(y) = \int_{\cX} f(x) \mathrm d \cZ_P(x)  \int_{\cX} g(y) \mathrm d \cZ_P(y) - \int_{\cX} f(x) g(x) \mathrm d x . 
\end{align}

\section{Proof of Theorem \ref{thm:K} and Corollary \ref{cor:TM}}
\label{sec:H0pf}
To begin with note that the definition of $\emmd$ in \eqref{eq:MMDXY} can be extended to any measurable and symmetric function $\mathsf H \in L^{2}(\mathcal{X}^2, P^2)$ (not necessarily positive definite) in a natural way as follows: 
\begin{align}\label{eq:H}
\emmd \left[\mathsf H, \sX_m, \sY_n \right] = \mathcal W_{\sX_m} + \mathcal W_{\sY_n} - 2 \mathcal B_{\sX_m, \sY_n} , 
\end{align}   
where $\mathcal W_{\sX_m}$, $\mathcal W_{\sY_n}$, and $\mathcal B_{\sX_m, \sY_n} $ are as defined in \eqref{eq:WX} and \eqref{eq:BXY}, respectively, with $\sf K$ replaced by $\mathsf{H}$. The main ingredient in the proof of Theorem \ref{thm:K} is the following result:

\begin{proposition}[{\citet{gretton2012kernel}, Theorem 5}] 
\label{prop:H0joint} 
For any measurable and symmetric function $\mathsf{H} \in L^{2}(\mathcal{X}^2, P^2)$, in the asymptotic regime \eqref{eq:mn}, 
\begin{align}\label{eq:H0distribution}
    (m+n)\emmd \left[\mathsf H, \sX_m, \sY_n \right] := \frac{1}{\rho(1-\rho)} \sum_{s=1}^{\infty}\lambda_{s}\left(Z_{s}^2- 1 \right) , 
\end{align} 
where $\{Z_{s}: s \geq 1\}$ are i.i.d. $\cN(0, 1)$ and 
$\{\lambda_{s}\}_{s \geq 1}$ are the eigenvalues (with repetitions) of the Hilbert-Schimdt operator $\cH_{\mathsf{H}^\circ}$ defined as: 
\begin{align}\label{eq:kernelH}
    \cH_{\mathsf{H}^\circ}[f(x)] = \int_{\cX} \mathsf{H}^{\circ} (x,y) f(y) \mathrm d P(y) , 
\end{align} 
with $\mathsf{H}^\circ (x,y) := \mathsf{H}(x,y) - \E_{X \sim P}\mathsf{H}(X,y) - \E_{X' \sim P}\mathsf{H}(x, X') + \E_{X, X' \sim P} \mathsf{H}(X, X')$.  Moreover, the characteristic function of $Z(\mathsf{H})$ at $t \in \R$ is  given by: 
\begin{align}\label{eq:expZ}
\Phi_{Z(\mathsf{H})} (t) := \E\left[ e^{\iota t Z(\mathsf{H})} \right]  = \prod_{s=1}^\infty \frac{e^{-\frac{\iota \lambda_s t}{\rho(1-\rho)}}}{\sqrt{1-\frac{2  \iota \lambda_s t }{ \rho(1-\rho) }}} .  
\end{align}
\end{proposition}

\begin{remark} 
The convergence in \eqref{eq:H0distribution} is a consequence of \citet[Theorem 5]{gretton2012kernel}, while the expression of the characteristic function in \eqref{eq:expZ} follows from \citet[Proposition 6.1]{janson1997gaussian}.  
(Note that Proposition \ref{prop:H0joint} also follows from the proof of Proposition \ref{ppn:H0N} in Section \ref{sec:localpowerpf} by setting $h=0$). 
\end{remark}

We now present the proof of Theorem \ref{thm:K} in Appendix \ref{sec:pfK}. The proof of Corollary \ref{cor:TM} is given in Appendix \ref{sec:corTMpf}.

\subsection{Proof of Theorem \ref{thm:K}}  
\label{sec:pfK} 
First recall the definition of $\mathrm{MMD}^{2}[ \cK, \sX_m, \sY_n ]$ from \eqref{eq:Kvector}.  Note that for $\bm \eta = (\eta_1, \eta_2, \ldots, \eta_r)^\top \in \mathbb{R}^r$,
\begin{align}\label{eq:etaMMD}
     \bm \eta^\top   \mathrm{MMD}^{2}[ \cK, \sX_m, \sY_n ] = \sum_{a=1}^r  \eta_a \mathrm{MMD}^{2}[ \sfK_a, \sX_m, \sY_n ] & = \mathrm{MMD}^{2}[ \mathsf{H}_{\bm \eta}, \sX_m, \sY_n ] , 
\end{align} 
where $\mathsf{H}_{\bm \eta} := \sum_{a=1}^r \eta_a \sfK_a$. Clearly, $\mathsf{H}_{\bm \eta}$ is a measurable and symmetric function and $\mathsf{H}_{\bm \eta} \in L^{2}\left(\mathcal{X}, P^2\right)$ (by Assumption \ref{assumption:K}). Then by Proposition \ref{prop:H0joint}, 
\begin{align}\label{eq:ZH} 
Z_{m, n}(\mathsf{H}_{\bm \eta})  := (m+n)   \mathrm{MMD}^{2} \left[\mathsf{H}_{\bm \eta}, \sX_m, \sY_n \right] \dto Z(\mathsf{H}_{\bm \eta}) = \frac{1}{\rho(1-\rho)} \sum_{\lambda\in \Lambda(\bm \eta)}\lambda\left(Z_{\lambda}^2- 1 \right) . 
\end{align}
where $\{Z_\lambda\}_{\lambda \in \Lambda(\bm \eta)}$ are i.i.d. $\cN(0, 1)$ 
and $\Lambda(\bm \eta)$ are the eigenvalues (with repetitions) of the Hilbert-Schimdt operator: 
\begin{align}\label{eq:Hxyeta}
    \cH_{\mathsf{H}_{\bm \eta}^\circ}[f(x)] = \int \mathsf{H}_{\bm \eta}^\circ (x,y) f(y)\mathrm d P(y) , 
\end{align} 
with 
\begin{align}\label{eq:Hxyetaxy}
    \mathsf{H}_{\bm \eta}^\circ (x,y) & : = \mathsf{H}_{\bm \eta}(x,y) - \E_{X \sim P}\mathsf{H}_{\bm \eta}(X,y) - \E_{X' \sim P}\mathsf{H}_{\bm \eta}(x, X') + \E_{X, X' \sim P} \mathsf{H}_{\bm \eta}(X, X') \nonumber \\ 
   & = \sum_{a=1}^r  \eta_a \sfK^\circ_{a}(x, y)  , 
\end{align} 
Hence, the operator  $\cH_{\mathsf{H}_{\bm \eta}^\circ}$ in \eqref{eq:Hxyeta} is same as the operator $\cH_{\cK, \bm \eta}$ defined in \eqref{eq:HKeta}. 
Then Proposition \ref{prop:H0joint}, \eqref{eq:ZH}, and \eqref{eq:Hxyetaxy} implies, 
\begin{align}\label{eq:expZH}
    \E\left[ e^{\iota Z_{m, n}(\mathsf{H}_{\bm \eta})} \right]  \rightarrow  \E\left[ e^{\iota  Z(\mathsf{H}_{\bm \eta})} \right] = \prod_{\lambda\in\Lambda(\bm \eta)} \frac{e^{-\frac{\iota \lambda }{\rho(1-\rho)}}}{\sqrt{1-\frac{2  \iota \lambda }{ \rho(1-\rho) }}} = \Phi(\bm \eta) , 
\end{align}
where $\Phi(\bm \eta)$ is as defined in \eqref{eq:expGK}. Since $\bm \eta \in \R^r$ was chosen arbitrarily and $\Phi(\bm \eta)$ is continuous at $\bm \eta = \bm 0 \in \mathbb R^r$ (by Lemma \ref{lm:phicontinuity}), Levy's continuity theorem \citet[Theorem 3.3.17]{durrett} implies that there exists a random variable $Z_{\cK}$ with characteristic function $\Phi(\bm \eta)$ such that,
\begin{align}\label{eq:ZK}
   \mathrm{MMD}^{2}[ \cK, \sX_m, \sY_n ] \dto Z_{\mathcal K}. 
\end{align}

We now show that the limit $Z_\cK$ in \eqref{eq:ZK} can be expressed as $G_{\cK}$ in \eqref{eq:GK}. Towards this, note that by the linearity of multiple stochastic integrals, 
\begin{align}\label{eq:phiGKeta}
\E\left[ e^{\iota \bm \eta^\top G_\cK} \right] = \E\left[ e^{\frac{\iota}{\rho(1-\rho)} \sum_{a=1}^r  \eta_a I_2(\sf K_a^\circ) } \right]  & = \E\left[ e^{\frac{\iota}{\rho(1-\rho)} I_2(\sum_{a=1}^r  \eta_a \sf K_a^\circ) } \right] \nonumber \\ 
& = \E\left[ e^{\frac{\iota}{\rho(1-\rho)} I_2( \mathsf{H}_{\bm \eta}^\circ) } \right] , 
\end{align} 
where the last step uses \eqref{eq:Hxyetaxy}. Then by \citet[Theorem 6.1]{janson1997gaussian}, 
$I_{2}(\mathsf{H}_{\bm \eta}^\circ)$ has characteristic function: 
\begin{align}\label{eq:phiHseta}
    \Phi_{\mathsf{H}_{\bm \eta}^\circ}(s):= \E\left[ e^{ \iota s I_{2}(\mathsf{H}_{\bm \eta}^\circ) } \right]= \prod_{\lambda\in \Lambda(\bm \eta)} \dfrac{ e^{- \iota  \lambda s } }{\sqrt{1-2 \iota \lambda  s}}
\end{align}
where $\Lambda(\bm \eta)$ is the set of eigenvalues (with repetition) of the bilinear form $\cB_{\mathsf{H}_{\bm{\eta}}^\circ}: L^2(\cX) \times L^2(\cX) \rightarrow \R$:  
\begin{align*}
\cB_{\mathsf{H}_{\bm \eta}^\circ}(f_P, f_2) :=  \frac{1}{2}\E\left[I_2(\mathsf{H}_{\bm \eta}^\circ) I_1(f_{1})I_1(f_{2})\right], 
\end{align*} 
for any $f_{1},f_{2}\in L^{2}(\mathcal{X}, P)$. Now, using the multiplication formula for stochastic integrals (cf.~\citet[Theorem 7.33]{janson1997gaussian}) gives, 
\begin{align}
    \frac{1}{2}\E\left[I_2(\mathsf{H}_{\bm \eta}^\circ) I_1(f_{1})I_1(f_{2})\right] 
    &= \frac{1}{2}\int_{\mathcal{X}^{2}}\left[\mathsf{H}_{\bm \eta}^\circ(x,y)f_{1}(x)f_{2}(y) + \mathsf{H}_{\bm \eta}^\circ (x,y)f_{1}(y)f_{2}(x)\right]\mathrm d P(x)\mathrm d P(y)\nonumber \\
    &= \int_{\mathcal{X}^{2}} \mathsf{H}_{\bm \eta}^\circ (x,y)f_{1}(x)f_{2}(y)\mathrm d P(x)\mathrm d P(y) , \nonumber 
\end{align}
where the last equality follows by symmetry of the function $\mathsf{H}_{\bm \eta}^\circ$. 
This shows that the bilinear form $\cB_{\mathsf{H}_{\bm \eta}^\circ}$ has the same set of eigenvalues (with repetitions) as that of the operator $\cH_{\mathsf{H}_{\bm \eta}^\circ}$ defined in \eqref{eq:Hxyetaxy}. Hence, combining \eqref{eq:phiGKeta} and \eqref{eq:phiHseta} it follows that, 
\begin{align}\label{eq:etaGK}
\E\left[ e^{\iota \bm \eta^\top G_\cK} \right] =  \Phi_{\mathsf{H}_{\bm \eta}^\circ}\left(\frac{1}{\rho(1-\rho)}\right) =  \prod_{\lambda\in \Lambda(\bm \eta)} \frac{e^{-\frac{\iota \lambda }{\rho(1-\rho)}}}{\sqrt{1-\frac{2  \iota \lambda }{ \rho(1-\rho) }}} , 
\end{align}
where $\Lambda(\bm \eta)$ is the set of eigenvalues (with repetitions) of the operator $\cH_{\mathsf{H}_{\eta}^\circ}$, which the same as the operator $\cH_{\cK, \bm \eta}$ (by \eqref{eq:HKeta}). Note that the RHS of \eqref{eq:etaGK} equals the function $\Phi(\bm \eta)$ defined in \eqref{eq:expGK}, which implies that $Z_\cK$ in \eqref{eq:ZK} has the same distribution as $G_\cK$ in \eqref{eq:GK}. This completes the proof of Theorem \ref{thm:K}. \hfill $\Box$

\begin{remark}[Alternative description of  the limiting distribution] \label{remark:limitingdistribution}
Note that, for $1 \leq a \leq r$, by the spectral theorem (see \citet[Theorem 6.35]{renardy2006introduction}): 
$$\sfK_a^\circ(x, y) = \sum_{s=1}^\infty \lambda_{s, a}  \phi_{s, a}(x) \phi_{s, a}(y) , $$  
where $\{\lambda_{s, a}\}_{s \geq 1}$ and $\{ \phi_{s, a} \}_{s \geq 1}$ are, respectively, the eigenvalues and the eigenvectors of the operator: $\cH_{\sfK_a}[f(x)] = \int_{\cX} \sfK_a^\circ(x,y) f(y)\mathrm d P(y)$.  
Hence, by the linearity of the stochastic integral,  \eqref{eq:fgstochasticintegral}, and orthonormality of the eigenvectors,  
\begin{align}\label{eq:alternatelimit}
 I_2(\sfK_a^\circ) 
& = \sum_{s=1}^\infty \lambda_{s, a}   \int_{\cX} \int_{\cX} \phi_{s, a} (x) \phi_{s, a} (y) \mathrm d \cZ_P(x) \mathrm d \cZ_P(y) \nonumber \\ 
& = \sum_{s=1}^\infty \lambda_{s, a}  \left ( \left( \int_{\cX} \phi_{s, a}(x) \mathrm d \cZ_P(x) \right)^2 - \int_{\cX} \phi_{s, a}(x)^2 \mathrm d x \right) \nonumber \\ 
 & \stackrel{D} = \sum_{s=1}^{\infty}\lambda_{s, a} \left( Z_{s, a}^2 - 1 \right) , 
\end{align}
where $Z_{s, a} \stackrel{D}{:=} \int_{\cX} \phi_{s, a}(x) \mathrm d \cZ_P(x)$. 
Note that $\{Z_{s, a}\}_{s \geq 1, 1 \leq a \leq r}$ is a collection of Gaussian variables with 
\begin{align}\label{eq:covarianceZ}
\Cov(Z_{s, a}, Z_{s', b}) = \int_{\cX} \phi_{s, a}(x) \phi_{s', b}(x) \mathrm d \cZ_P(x), 
\end{align}
 for $1 \leq a, b \leq r$ and $s, s' \geq 1$.  Hence, the limiting distribution of 
$(m+n) \mathrm{MMD}^{2}[ \cK, \sX_m, \sY_n ]$ can be alternately expressed as (recall \eqref{eq:GK}): 
$$G_\cK \stackrel{D} = \left( \sum_{s=1}^{\infty}\lambda_{s, a} \left( Z_{s, a}^2 - 1 \right) \right)_{1 \leq a \leq r}.$$   
Note that the orthonormality conditions imply that for each fixed $a \in \{1, 2, \ldots, r\}$, the collection $\{Z_{s, a}\}_{s \geq 1}$ is distributed as i.i.d. $\mathcal N(0, 1)$. This gives the well-known representation of the marginal distribution of the MMD estimate as an infinite weighted sum of independent centered $\chi^2_1$ variables (see \citet[Theorem 12]{gretton2012kernel}). Jointly these infinite sums are dependent, due to the dependence among the collection $\{Z_{s, a}\}_{s \geq 1, 1 \leq a \leq r}$ for $1 \leq a \ne b \leq r$ with covariance structure as in \eqref{eq:covarianceZ}. 
\end{remark}

\subsection{Proof of Corollary \ref{cor:TM}}
\label{sec:corTMpf}

Recall the definition of the centered kernel $\sfK_a^\circ$ from \eqref{eq:Kxycentered}. Then it is easy to see that 
\begin{align*}
    \mathrm{MMD}^{2}\left[\sfK_a, \sX_m, \sY_n \right] = \mathrm{MMD}^{2}\left[\sfK_a^\circ, \sX_m, \sY_n \right] . 
\end{align*} 
Observe that $\E\left[\sfK_a^\circ(X_{1},X_{2})\middle|X_{1}\right]= 0 $ and hence $\Cov\left(\sfK_a^\circ(X_1, X_2) \sfK_b^\circ(X_1, X_3)\right) = 0$, for $X_1, X_2, X_3$ i.i.d. from the distribution $P$. Then by a direct computation it follows that, for $1 \leq a, b \leq r$, 
\begin{align}\label{eq:sigmamatrix}
\sigma_{ab} = \frac{2}{\rho^2(1-\rho)^2} \E_{X, X' \sim P}\left[\sfK_a^\circ(X, X')\sfK_b^\circ(X, X')\right] . 
\end{align} 
This proves \eqref{eq:sigmaH0ab}. 

Next, we prove \eqref{eq:sigmaH0limit}. For all $1\leq a \leq r$, define
\begin{align*} 
\bm{\sfK}^\circ_a = \left(\left(\sfK_a^\circ(X_{i},X_{j})/m\right)\right)_{1\leq i,j\leq m}
\end{align*}
where $\sfK_a^\circ$ is defined in Theorem \ref{thm:K}. Also, recall the definition of $\sfKmatrix_{a}$, for $1 \leq a \leq r$, from \eqref{eq:centered-kernel}. Now, observe that for any $1 \leq a, b \leq r$,
\begin{align}\label{eq:trace-bd}
    \left|\Tr\left[ \sfKmatrix_{a} \sfKmatrix_{b} \right] - \Tr\left[ \bm{\sfK}^\circ_a   \bm{\sfK}^\circ_b  \right]\right| & \leq   \left|\Tr\left[\sfKmatrix_{a} \sfKmatrix_{b} \right] - \Tr\left[ \bm{\sfK}^\circ_a  \sfKmatrix_{b}  \right]\right| +  \left|\Tr\left[  \bm{\sfK}^\circ_a  \sfKmatrix_{b}  \right] - \Tr\left[ \bm{\sfK}^\circ_a   \bm{\sfK}^\circ_b  \right]\right|  \nonumber \\   
   & \leq \| \sfKmatrix_{a} \| \| \sfKmatrix_{a} - \bm{\sfK}^\circ_a  \| + \| \bm{\sfK}^\circ_a \| \| \sfKmatrix_{b} - \bm{\sfK}^\circ_b  \| 
   \end{align} 
Since $\sfK_a^\circ\in L^{2}(\mathcal{X}^{2}, P^2)$, by the strong law of large number for $U$-statistics (see \citet[Theorem 5.4.A]{serfling}) 
$$\| \bm{\sfK}^\circ_a \|^2 = \frac{1}{m^2} \sum_{1 \leq i, j \leq m} {\sfK}^\circ_a  (X_i, X_j)^2 \stackrel{a.s.} \rightarrow \E_{X, X' \sim P} [{\sfK}^\circ_a(X, X')] . $$ 
Also, following the proof of Lemma \ref{lm:HHhatL2} we have $\| \sfKmatrix_{a} - \bm{\sfK}^\circ_a  \| \stackrel{a.s.} \rightarrow 0$.  This implies, $\| \sfKmatrix_{a} \|^2 \stackrel{a.s.} \rightarrow \E_{X, X' \sim P} [{\sfK}^\circ_a (X, X')] $. 
Thus, combining the above conclusions with \eqref{eq:trace-bd} gives,
\begin{align}\label{eq:K-bar-tilde-equiv}
   \left|\Tr\left[ \sfKmatrix_{a} \sfKmatrix_{b} \right] - \Tr\left[ \bm{\sfK}_a^\circ   \bm{\sfK}_b^\circ  \right]\right| |\overset{a.s.}{\rightarrow}0
\end{align}
Since, for all $1\leq a \leq r$, $\sfK_a^\circ\in L^{2}(\mathcal{X}^{2}, P^{2})$, then by \citet[Theorem 5.4.A]{serfling},
\begin{align}\label{eq:lim-K-bar}
     \Tr\left[ \bm{\sfK}^\circ_a   \bm{\sfK}^\circ_b  \right]  = \frac{1}{m^{2}}\sum_{1 \leq i, j \leq m}\sfK_a^\circ(X_{i}, X_{j})\sfK_b^\circ(X_{i},X_{j}) & \stackrel{a.s.}{\rightarrow}\E_{ X \sim P}\left[\sfK_a^\circ(X_{1},X_{2})\sfK_b^\circ(X_{1},X_{2})\right] \nonumber \\ 
     & =  \frac{\rho^2(1-\rho)^2}{2}\sigma_{ab}  , 
\end{align}
where the last equality follows from \eqref{eq:sigmamatrix}. Now, recalling \eqref{eq:H0sigmaestimate} and applying \eqref{eq:K-bar-tilde-equiv} and \eqref{eq:lim-K-bar} note that 
\begin{align*}
 \hat \sigma_{ab} = \frac{2}{\hat{\rho}^2(1-\hat{\rho})^2} \cdot \frac{1}{m^{2}}\sum_{1 \leq i, j \leq m}\hat \sfK_a^\circ(X_{i},X_{j}) \hat \sfK_b^\circ(X_{i},X_{j}) & = \frac{2}{\hat{\rho}^2(1-\hat{\rho})^2}\Tr\left[ \sfKmatrix_{a} \sfKmatrix_{b} \right] 
 \nonumber \\ 
 & \stackrel{a.s.}{\rightarrow} \sigma_{ab} .  
\end{align*} 
This completes the proof of \eqref{eq:sigmaH0limit}.
The result in \eqref{eq:GKH0} follows from \eqref{eq:sigmaH0limit} combined with Theorem \ref{thm:K},  Slutsky's theorem, and the continuous mapping theorem.  

\section{Proof of Theorem \ref{thm:estimateH0}} 
\label{sec:estimateH0pf}
Suppose $\cK = \{\sfK_1, \sfK_2, \ldots, \sfK_r \}$ be a collection of $r$ characteristic kernels satisfying the conditions of Theorem \ref{thm:estimateH0}. We will prove Theorem \ref{thm:estimateH0} by showing that every linear combination of $\cE(\cK, \sX_m)$ converges to the corresponding linear combination of $G_\cK$. To this end, suppose $\bm \eta = (\eta_1, \eta_2, \ldots, \eta_r)^\top \in \mathbb{R}^r$ and define $\mathsf{H}_{\bm \eta} = \sum_{a=1}^r \eta_a \sfK_a$. Let $\mathsf{H}_{\bm \eta}^\circ = \sum_{a=1}^r \eta_a \sfK_a^\circ$ be as defined in \eqref{eq:Hxyetaxy} 
and $\{\lambda_s(\mathsf{H}_{\bm \eta}^\circ)\}_{s \geq 1}$ be the eigenvalues of the operator $\cH_{\mathsf{H}_{\bm \eta}^\circ}$ as in \eqref{eq:Hxyeta}. Also, define 
\begin{align*}
    \hat{\bm{\mathsf{H}}}_{\bm \eta}^\circ = \bm C \hat{\bm{\mathsf{H}}}_{\bm \eta} \bm C/m , 
\end{align*}
where $ \hat{\bm{\mathsf{H}}}_{\bm \eta} = (\mathsf{H}_{\bm \eta}(X_{i},X_{j}))_{1\leq i, j\leq m}$ and $\bm C = \bm{I}-\frac{1}{m}\bm{1}\bm{1}^{\top}$ is the centering matrix as defined in \eqref{eq:centered-kernel}. Note that 
\begin{align*}
    \hat{\bm{\mathsf{H}}}_{\bm \eta}^\circ = \left( \left( \dfrac{\hat{\mathsf{H}}_{\bm \eta}^\circ (X_i, X_j)}{m}  \right) \right)_{1 \leq i, j \leq m}, 
\end{align*}
where, similar to \eqref{eq:estimateKxy}, 
\begin{align}\label{eq:estimateHcentered} 
\hat{\mathsf{H}}_{\bm \eta}^\circ (x, y)=\mathsf{H}_{\bm \eta}(x,y) - \frac{1}{m} \sum_{u=1}^m \mathsf{H}_{\bm \eta}(X_u, y) - \frac{1}{m} \sum_{v=1}^m \mathsf{H}_{\bm \eta}(x, X_v) + \frac{1}{m^2} \sum_{1 \leq u, v \leq m} \mathsf{H}_{\bm \eta}(X_u, X_v) . 
\end{align} 
Recalling the definition of $\hat{\bm{\sfK}}_a^\circ$ from \eqref{eq:centered-kernel}, observe that 
\begin{align}\label{eq:Hetahatcentered}
\hat{\bm{\mathsf{H}}}_{\bm \eta}^\circ = \sum_{a=1}^r \eta_a \hat{\bm{\sfK}}_a^\circ . 
\end{align} 
Let $\{\lambda_{s}(\hat{\bm{\mathsf{H}}}_{\bm \eta}^\circ)\}_{1 \leq s \leq m}$ be the eigenvalues of the matrix $\hat{\bm{\mathsf{H}}}_{\bm \eta}^\circ$. Recall 
that $\gamma= \frac{1}{\rho(1-\rho)}$. Then we have the following proposition: 

\begin{proposition}\label{ppn:HZW} Suppose $\{\lambda_s(\mathsf{H}_{\bm \eta}^\circ)\}_{s \geq 1}$ and $\{\lambda_{s}(\hat{\bm{\mathsf{H}}}_{\bm \eta}^\circ)\}_{1 \leq s \leq m}$ be as defined above. Then there exists a set $\cQ_0 \in \sB(\cX)$ (not depending on $\bm \eta$) with $\P(\cQ_0) = 1$ such that on the set $\cQ_0$, 
\begin{align}\label{eq:HZW}
    \sum_{s=1}^{m} \lambda_s(\hat{\bm{\mathsf{H}}}_{\bm \eta}^\circ) \left(W_{s}^2-\gamma \right)| \sX_m \overset{D}{\rightarrow} \sum_{s=1}^{\infty}\lambda_s(\mathsf{H}_{\bm \eta}^\circ)\left(Z_{s}^2 - \gamma \right) , 
    \end{align} 
    as $m\rightarrow\infty$, where $\{W_{s}, Z_{s} : s \geq 1\}$ 
    are i.i.d. from $\cN(0, \gamma)$ independent of $\sX_m=\{X_1, X_2, \ldots, X_m\}$.
\end{proposition}

The proof of Proposition \ref{ppn:HZW} is given in Appendix \ref{sec:HZWpf}. We first show how this can be used to complete the proof of Theorem \ref{thm:estimateH0}. To this end, note that by definition, for all $s\geq 1$, $W_{s} = \sqrt{\gamma}W_{s}^{\circ}$,
where $\{W_{s}^{\circ}: s\geq 1\}$ are i.i.d. from $\cN(0,1)$. Define, for $s \geq 1$, 
\begin{align*}
    \hat{W}_{s} = \sqrt{\hat{\gamma}}W_{s}^{\circ}, 
\end{align*} 
where $\hat{\gamma} := \frac{1}{\hat{\rho}(1-\hat{\rho})} = \frac{mn}{(m+n)^2}$. Now observe that by Lemma \ref{lemma:HmL2} on $\cQ:= \cQ_1\bigcap\cQ_0$,
\begin{align*}
    \E\left[\left(\sum_{s=1}^{m}\lambda_s(\hat{\bm{\mathsf{H}}}_{\bm \eta}^\circ)(W_{s}^2-\gamma) - \sum_{s=1}^{m}\lambda_s(\hat{\bm{\mathsf{H}}}_{\bm \eta}^\circ)(\hat{W}_{s}^2-\hat{\gamma})^2\right)^2\middle| \sX_m\right] 
    & = \sum_{s=1}^{m}\lambda_s(\hat{\bm{\mathsf{H}}}_{\bm \eta}^\circ)^2(\gamma - \hat{\gamma})^2 \\
    & = (\gamma-\hat{\gamma})^2\|\hat{\bm{\mathsf{H}}}_{\bm \eta}^\circ\|^2 \rightarrow 0 , 
\end{align*}
and hence on the set $\cQ$,
\begin{align}\label{eq:ppnHZW-hat}
    \sum_{s=1}^{m} \lambda_s(\hat{\bm{\mathsf{H}}}_{\bm \eta}^\circ) \left(\hat{W}_{s}^2-\hat{\gamma} \right)| \sX_m \overset{D}{\rightarrow} \sum_{s=1}^{\infty}\lambda_s(\mathsf{H}_{\bm \eta}^\circ)\left(Z_{s}^2 - \gamma \right) . 
\end{align}
By the spectral decomposition,
\begin{align*}
    \hat{\bm{\mathsf{H}}}_{\bm \eta}^\circ = \bm Q_m \bm \Lambda_m \bm Q_m^\top , 
    \end{align*} 
  where $\bm \Lambda_m = \text{diag}(\lambda_s(\hat{\bm{\mathsf{H}}}_{\bm \eta}^\circ))_{1 \leq s \leq m}$ and $\bm Q_m^\top \bm Q_m = \bm Q_m \bm Q_m^\top = \bm{I}$ is an orthogonal matrix. Observe that for $\bm{W}_m=(\hat{W}_{1}, \ldots, \hat{W}_{m})^{\top}$, 
\begin{align}\label{eq:eigendecomp-apply-1}
    \sum_{s=1}^{m} \lambda_s(\hat{\bm{\mathsf{H}}}_{\bm \eta}^\circ) \left(\hat{W}_{s}^2-\hat{\gamma} \right) = \bm{W}_m^{\top} \bm \Lambda_m \bm{W}_m - \hat{\gamma}\Tr[\hat{\bm{\mathsf{H}}}_{\bm \eta}^\circ] =  \bm Z_m^{\top} \hat{\bm{\mathsf{H}}}_{\bm \eta}^\circ \bm Z_m - \hat{\gamma}\Tr[\hat{\bm{\mathsf{H}}}_{\bm \eta}^\circ] , 
\end{align}
where $\bm Z_m = \bm Q_m \bm W_m \sim \cN_m(\bm 0, \hat{\gamma} \bm{I})$ is independent of $\sX_m$. This is because $\bm Q_m$ is orthogonal and, hence, $\bm Z_m | \sX_m \sim \cN_m(\bm 0, \hat{\gamma} \bm{I})$, which implies $\bm Z_m \sim \cN_m(\bm 0, \hat{\gamma} \bm{I})$. 
By \eqref{eq:Hetahatcentered}, \eqref{eq:eigendecomp-apply-1}, and recalling the definition of $\cE(\cK, \sX_m)$ from \eqref{eq:def-Em} it follows that
\begin{align}\label{eq:est-alpha-beta-decomp}
    \sum_{s=1}^{m}\lambda_s(\hat{\bm{\mathsf{H}}}_{\bm \eta}^\circ)\left(\hat{W}_s^2-\hat{\gamma}\right) 
    & = \bm \eta^\top \cE(\cK, \sX_m) .
\end{align} 
Hence, by \eqref{eq:ppnHZW-hat} on a set $\cQ$ with $\P(\cQ) =1$, as $m\rightarrow\infty$, 
\begin{align}\label{eq:alpha-beta-dist-convg}
 \bm \eta^\top \cE(\cK, \sX_m) | \sX_{m}
    \overset{D}{\rightarrow} \sum_{s=1}^{\infty}\lambda_s(\mathsf{H}_{\bm \eta}^\circ)\left(Z_{s}^2-\gamma\right) \stackrel{D}= Z(\mathsf{H}_{\bm \eta}) , 
\end{align} 
where the last step uses \eqref{eq:ZH}. From \eqref{eq:etaMMD} and \eqref{eq:ZH} we know that $Z(\mathsf{H}_{\bm \eta})$ is the limiting distribution of $\bm \eta^\top \mathrm{MMD}^{2}[ \cK, \sX_m, \sY_n ]$. Hence, by Theorem \ref{thm:K}, $\sum_{s=1}^{\infty}\lambda_s(\mathsf{H}_{\bm \eta}^\circ)\left(Z_{s}^2-\gamma\right) \overset{D}{=} \bm \eta^\top G_{\cK}$. This implies, by \eqref{eq:est-alpha-beta-decomp} and \eqref{eq:alpha-beta-dist-convg}, on the set $\cQ$, 
$$ \bm \eta^\top \cE(\cK, \sX_m) | \sX_{m} \dto \bm \eta^\top G_{\cK} , $$
as $m \rightarrow \infty$. Hence, by the Cramer-Wold device, on the set $\cQ$, 
$\cE(\cK, \sX_m) | \sX_{m} \dto G_{\cK}$, as $m \rightarrow \infty$. This completes the proof of Theorem \ref{thm:estimateH0}. \hfill $\Box$

\subsection{Proof of Proposition \ref{ppn:HZW}} 
\label{sec:HZWpf}

The proof of Proposition \ref{ppn:HZW} is organized as follows.
First we show that the $L_2$ norm of the matrix $\hat{\bm{\mathsf{H}}}_{\bm \eta}^\circ$ converges to the $L_2$ norm of the operator $\mathcal{H}_{\mathsf{H}_{\bm \eta}^\circ}$ almost surely (proof is given in Appendix \ref{sec:HmL2pf}). 

\begin{lemma}\label{lemma:HmL2} 
There exists a set $\cQ_1 \in \sB(\cX)$ (not depending on $\bm \eta$) with $\P(\cQ_1) = 1$ such that on $\cQ_1$, 
\begin{align}\label{eq:HmL2}
\|\hat{\bm{\mathsf{H}}}_{\bm \eta}^\circ\| \rightarrow \| \mathsf{H}_{\bm \eta}^\circ \| , 
\end{align} 
as $m \rightarrow \infty$.  
\end{lemma}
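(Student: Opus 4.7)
The plan is to interpret $\|\hat{\bm{\mathsf{H}}}_{\bm\eta}^\circ\|^2 = \tfrac{1}{m^2}\sum_{i,j=1}^m \hat{\mathsf{H}}_{\bm\eta}^\circ(X_i,X_j)^2$ as a sample $V$-statistic-type analogue of $\|\mathsf{H}_{\bm\eta}^\circ\|^2 = \int_{\cX^2}\mathsf{H}_{\bm\eta}^\circ(x,y)^2 \,\mathrm{d}P(x)\,\mathrm{d}P(y)$, and to apply the strong law of large numbers for $V$-statistics after absorbing the empirical-vs.-population centering error. The $\bm\eta$-independence of the almost-sure event will fall out of bilinearity in the coefficients.

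First, I would use $\hat{\bm{\mathsf{H}}}_{\bm\eta}^\circ = \sum_{a=1}^r \eta_a \hat{\bm{\sfK}}_a^\circ$ from \eqref{eq:Hetahatcentered} and the analogous $\mathsf{H}_{\bm\eta}^\circ = \sum_{a=1}^r \eta_a \sfK_a^\circ$ from \eqref{eq:Hxyetaxy} to write both norms as bilinear forms in $\bm \eta$:
\begin{align*}
\|\hat{\bm{\mathsf{H}}}_{\bm\eta}^\circ\|^2 = \sum_{a,b=1}^r \eta_a\eta_b\, \hat{\tau}_{ab}, \qquad \|\mathsf{H}_{\bm\eta}^\circ\|^2 = \sum_{a,b=1}^r \eta_a\eta_b\, \tau_{ab},
\end{align*}
where $\hat{\tau}_{ab} := \tfrac{1}{m^2}\sum_{i,j=1}^m \hat{\sfK}_a^\circ(X_i,X_j)\hat{\sfK}_b^\circ(X_i,X_j)$ and $\tau_{ab} := \int_{\cX^2}\sfK_a^\circ(x,y)\sfK_b^\circ(x,y)\,\mathrm{d}P(x)\,\mathrm{d}P(y)$. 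It then suffices to show $\hat{\tau}_{ab}\xrightarrow{\mathrm{a.s.}}\tau_{ab}$ for every fixed pair $(a,b) \in \{1,\dots,r\}^2$; taking $\cQ_1$ to be the intersection of these $r^2$ full-measure events produces a single almost-sure event, manifestly independent of $\bm\eta$, on which the claimed convergence holds uniformly in $\bm\eta \in \R^r$ by the bilinear expansion.

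For fixed $(a,b)$, I would decompose $\hat{\sfK}_a^\circ(X_i,X_j) = \sfK_a^\circ(X_i,X_j) + \Delta_{a,m}(X_i,X_j)$, where $\Delta_{a,m}$ collects the three terms in which a population mean in \eqref{eq:Kxycentered} has been replaced by its sample analogue in \eqref{eq:estimateKxy}. Expanding $\hat{\sfK}_a^\circ\,\hat{\sfK}_b^\circ$, the dominant term $\tfrac{1}{m^2}\sum_{i,j}\sfK_a^\circ(X_i,X_j)\sfK_b^\circ(X_i,X_j)$ converges a.s.\ to $\tau_{ab}$ by Serfling's $L^2$ SLLN for $V$-statistics (\cite{serfling}, Theorem~5.4.A), which is applicable because $\sfK_a^\circ,\sfK_b^\circ\in L^2(\cX^2,P^2)$ by hypothesis. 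The cross and residual terms are handled via Cauchy--Schwarz, reducing everything to the almost-sure statements $\tfrac{1}{m}\sum_i(\hat\mu_a(X_i)-\mu_a(X_i))^2\to 0$ and $\hat\nu_a-\nu_a\to 0$, where $\mu_a(y):=\mathbb{E}_{X\sim P}[\sfK_a(X,y)]$, $\nu_a:=\mathbb{E}[\sfK_a(X,X')]$ and $\hat\mu_a,\hat\nu_a$ are their sample counterparts.

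The main technical obstacle, as I see it, is this last step: the term $\tfrac{1}{m}\sum_i(\hat\mu_a(X_i)-\mu_a(X_i))^2$ mixes the data used to form $\hat\mu_a$ with the data used to evaluate it at $X_i$, so it is not a pure sample average. I would handle this by expanding the square into a $V$-statistic of order up to four in the $X_u$'s (involving the kernel $\sfK_a(X_u,X_v)\sfK_a(X_w,X_v)$ and lower-order pieces from $\mu_a$) and invoking Serfling's SLLN once more, using the $L^2$ integrability of $\sfK_a$ to verify the moment hypothesis for each piece. The scalar convergence $\hat\nu_a\to\nu_a$ is immediate from the same theorem. Everything else is routine bookkeeping of the bilinear expansion.
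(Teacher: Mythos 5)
Your proposal is correct in outline and its overall architecture matches the paper's: both reduce the claim to almost-sure convergence of kernel-pair-indexed quantities (so that the exceptional null set is an intersection over $(a,b)$, independent of $\bm\eta$), both isolate the main term $\tfrac{1}{m^2}\sum_{i,j}\sfK_a^\circ(X_i,X_j)\sfK_b^\circ(X_i,X_j)$ and dispatch it with Serfling's SLLN for $U$-statistics, and both identify the same crux, namely $\tfrac{1}{m}\sum_i(\hat\mu_a(X_i)-\mu_a(X_i))^2\to 0$ a.s. Where you genuinely diverge is in how that crux is handled. The paper (Lemma \ref{lemma:kernel-avg-convg}) exploits the RKHS structure: it writes $\hat\mu_a(X_i)-\mu_a(X_i)=\langle\psi(X_i),\tfrac1m\sum_j\psi(X_j)-\mu_P\rangle_{\mathcal H}$, applies Cauchy--Schwarz in $\mathcal H$ to bound the whole sum by $\bigl(\tfrac1m\sum_i\sfK_a(X_i,X_i)\bigr)\cdot\|\tfrac1m\sum_j\psi(X_j)-\mu_P\|_{\mathcal H}^2$, and then only needs the SLLN for the empirical mean embedding. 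Your route expands the square directly into $V$-statistics (of order at most three, not four --- the leading kernel $\sfK_a(X_u,X_v)\sfK_a(X_w,X_v)$ has three arguments) whose limits cancel to zero. Both work; the RKHS trick buys a shorter verification with a single SLLN application, while your expansion is more elementary and avoids feature-map language.

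One caveat you should make explicit: Serfling's Theorem 5.4.A is a statement about $U$-statistics (distinct indices), whereas your sums are $V$-statistics. The off-diagonal parts converge under $\sfK_a\in L^2(\cX^2,P^2)$, but the diagonal corrections involve terms like $\tfrac{1}{m^3}\sum_{i,w}\sfK_a(X_i,X_i)\sfK_a(X_i,X_w)$, whose almost-sure negligibility requires a moment condition on the diagonal, e.g.\ $\E_{X\sim P}[\sfK_a(X,X)^2]<\infty$; membership in $L^2(\cX^2,P^2)$ alone says nothing about $\sfK_a(x,x)$. This is not a gap in context --- Theorem \ref{thm:estimateH0}, inside whose proof this lemma sits, assumes exactly that condition, and the paper's own Lemma \ref{lemma:kernel-avg-convg} invokes it too --- but your phrase ``using the $L^2$ integrability of $\sfK_a$ to verify the moment hypothesis'' should be amended to cite the diagonal assumption for those pieces. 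Finally, your claim of uniformity in $\bm\eta$ over all of $\R^r$ should be read as uniformity on compacts (the error in the bilinear expansion scales like $\|\bm\eta\|^2$); pointwise convergence for each fixed $\bm\eta$ on an $\bm\eta$-free full-measure event is all the lemma asserts, and your construction delivers that.
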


Next, we show that the $\ell$-th moment (power sum) of the eigenvalues of $\hat{\bm{\mathsf{H}}}_{\bm \eta}^\circ$ converges to the $\ell$-th moment (power sum) of the eigenvalues of $\mathcal{H}_{\mathsf{H}_{\bm \eta}^\circ}$, for $\ell \geq 3$, almost surely (proof is given in Appendix \ref{sec:momentHmpf}).

\begin{lemma}\label{lemma:momentHm} 
There exists a set $\cQ_2 \in \sB(\cX)$ (not depending on $\bm \eta$) with $\P(\cQ_2) = 1$ such that on $\cQ_2$, 
\begin{align}\label{eq:momentHm}
    \sum_{s=1}^{m} \lambda_s(\hat{\bm{\mathsf{H}}}_{\bm \eta}^\circ)^{\ell}  \rightarrow \sum_{s=1}^{\infty}\lambda_{s}(\mathsf{H}_{\bm \eta}^\circ)^{\ell} , 
\end{align} 
for all $\ell\geq 3$, as $m \rightarrow \infty$.  
\end{lemma}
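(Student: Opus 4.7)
The plan is to identify the two power sums with traces of integer powers, respectively of the empirical matrix and of the Hilbert-Schmidt operator, and then relate them through an intermediate V-statistic. Writing the traces out explicitly,
\begin{align*}
\sum_{s=1}^{m}\lambda_s(\hat{\bm{\mathsf{H}}}_{\bm \eta}^\circ)^\ell &= \Tr\bigl[(\hat{\bm{\mathsf{H}}}_{\bm \eta}^\circ)^\ell\bigr] = \frac{1}{m^\ell}\sum_{i_1,\ldots,i_\ell=1}^{m}\prod_{j=1}^{\ell}\hat{\mathsf{H}}_{\bm \eta}^\circ(X_{i_j},X_{i_{j+1}}), \\
\sum_{s=1}^{\infty}\lambda_{s}(\mathsf{H}_{\bm \eta}^\circ)^{\ell} &= \Tr\bigl[\cH_{\mathsf{H}_{\bm \eta}^\circ}^\ell\bigr] = \int_{\cX^\ell}\prod_{j=1}^{\ell}\mathsf{H}_{\bm \eta}^\circ(x_j,x_{j+1})\,\prod_{j=1}^{\ell}\mathrm{d}P(x_j),
\end{align*}
with cyclic convention $i_{\ell+1}=i_1$ and $x_{\ell+1}=x_1$. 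The key intermediate object is the order-$\ell$ V-statistic
\begin{align*}
V_m^{(\ell)} := \frac{1}{m^\ell}\sum_{i_1,\ldots,i_\ell=1}^{m}\prod_{j=1}^{\ell}\mathsf{H}_{\bm \eta}^\circ(X_{i_j},X_{i_{j+1}})
\end{align*}
formed from the theoretically centered kernel. I would prove the lemma by establishing (i) $\Tr[(\hat{\bm{\mathsf{H}}}_{\bm \eta}^\circ)^\ell] - V_m^{(\ell)} \overset{a.s.}{\to}0$ and (ii) $V_m^{(\ell)}\overset{a.s.}{\to}\Tr[\cH_{\mathsf{H}_{\bm \eta}^\circ}^\ell]$.

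For step (i), writing $\hat{\mathsf{H}}_{\bm \eta}^\circ(x,y)-\mathsf{H}_{\bm \eta}^\circ(x,y) = (\mu(x)-\hat\mu(x)) + (\mu(y)-\hat\mu(y)) + (\hat\mu_0-\mu_0)$, where $\hat\mu$, $\mu$, $\hat\mu_0$, $\mu_0$ are the empirical and theoretical column- and global-averages of $\mathsf{H}_{\bm \eta}$, I would expand the cyclic product into $2^\ell-1$ difference terms, each carrying at least one factor involving $\hat\mu-\mu$ or $\hat\mu_0-\mu_0$. After normalizing by $m^\ell$ and summing over the multi-index, iterated Cauchy-Schwarz isolates each centering-error factor and bounds the residue by an $L^2$-norm of $\hat{\mathsf{H}}_{\bm \eta}^\circ-\mathsf{H}_{\bm \eta}^\circ$ (which tends to $0$ a.s.\ exactly as in the proof of Lemma \ref{lm:HHhatL2} and Lemma \ref{lemma:HmL2}) multiplied by a lower-order open-cycle V-statistic in $\mathsf{H}_{\bm \eta}^\circ$ whose boundedness follows from $\mathsf{H}_{\bm \eta}^\circ \in L^2(\cX^2,P^2)$.

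For step (ii), the V-statistic $V_m^{(\ell)}$ has cyclic kernel $g(x_1,\ldots,x_\ell):=\prod_{j=1}^\ell \mathsf{H}_{\bm \eta}^\circ(x_j,x_{j+1})$. Iterated Cauchy-Schwarz yields $\E|g(X_1,\ldots,X_\ell)|\leq\|\mathsf{H}_{\bm \eta}^\circ\|^\ell<\infty$, and a direct computation gives $\E[g(X_1,\ldots,X_\ell)] = \Tr[\cH_{\mathsf{H}_{\bm \eta}^\circ}^\ell] = \sum_{s\geq 1}\lambda_s(\mathsf{H}_{\bm \eta}^\circ)^\ell$, where the last equality uses that the Hilbert-Schmidt property $\sum_s\lambda_s(\mathsf{H}_{\bm \eta}^\circ)^2<\infty$ implies absolute summability of the $\ell$-th power spectrum for $\ell\geq 3$ via $\sum_s|\lambda_s|^\ell \leq (\max_s|\lambda_s|)^{\ell-2}\sum_s\lambda_s^2$. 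The SLLN for V-statistics \cite[Theorem 5.4.A]{serfling} then yields $V_m^{(\ell)}\overset{a.s.}{\to}\Tr[\cH_{\mathsf{H}_{\bm \eta}^\circ}^\ell]$. Intersecting over $\ell\geq 3$ gives a single probability-one set $\cQ_2$; this set does not depend on $\bm \eta$ because every SLLN invocation is applied to a fixed finite collection of cyclic products built from $\{\sfK_1^\circ,\ldots,\sfK_r^\circ\}$, and $\mathsf{H}_{\bm \eta}^\circ=\sum_a\eta_a\sfK_a^\circ$ allows the same null set to serve uniformly in $\bm \eta$.

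The main obstacle is the bookkeeping in step (i): the cyclic structure produces a multi-index sum whose factorization via Cauchy-Schwarz leaves behind open-cycle $\mathsf{H}_{\bm \eta}^\circ$-V-statistics of every intermediate length $1,\ldots,\ell-1$, each of which must be controlled uniformly. For $\ell=2$ this is precisely Lemma \ref{lemma:HmL2}; for $\ell\geq 3$ the same ingredients (a.s.\ $L^2$-convergence of the empirical centering together with the Hilbert-Schmidt norm bound) suffice, but the combinatorics of chaining Cauchy-Schwarz through a length-$\ell$ cycle needs to be carried out carefully to aggregate the error bounds into a single null set.
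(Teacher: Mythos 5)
Your overall strategy is sound and genuinely different from the paper's. The paper introduces the off-diagonal matrix $\bm{\mathsf{H}}_{\bm \eta}^{\circ,-}$, whose trace powers are exactly $U$-statistics over distinct indices; it proves their a.s.\ convergence to $\sum_{s}\lambda_s(\mathsf{H}_{\bm \eta}^\circ)^\ell$ (Lemma \ref{lemma:power-sum-convg}), and then transfers the result to $\hat{\bm{\mathsf{H}}}_{\bm \eta}^\circ$ via the Hoffman--Wielandt inequality: the $\ell_2$-distance between the ordered spectra is controlled by the Frobenius distance between the matrices, and the eigenvalue decay $|\lambda_s|\leq C/\sqrt{s}$ together with dominated convergence upgrades this to convergence of the $\ell$-th power sums. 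You replace the spectral perturbation step by a direct comparison of traces (your step (i)); this works --- telescoping $A^\ell-B^\ell$ and bounding each cross term by a product of Frobenius norms, one of which is $\|\hat{\bm{\mathsf{H}}}_{\bm \eta}^\circ-\bm{\mathsf{H}}_{\bm \eta}^\circ\|\rightarrow 0$ by Lemma \ref{lm:HHhatL2} --- and arguably needs less machinery than Hoffman--Wielandt. What the paper's route buys in exchange is that the diagonal of the matrix is dealt with exactly once, through the single quantity $\|\bm{\mathsf{H}}_{\bm \eta}^\circ-\bm{\mathsf{H}}_{\bm \eta}^{\circ,-}\|$ in \eqref{eq:Hm-norm-convg}.

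The gap in your write-up is in step (ii). You invoke \cite[Theorem 5.4.A]{serfling}, the strong law for $U$-statistics, for the $V$-statistic $V_m^{(\ell)}$, which sums over all multi-indices including repeated ones. The repeated-index terms involve diagonal evaluations such as $\mathsf{H}_{\bm \eta}^\circ(X_i,X_i)$ and higher-order coincidences; these are invisible to the hypothesis $\mathsf{H}_{\bm \eta}^\circ\in L^2(\cX^2,P^2)$, since the diagonal is a $P^2$-null set when $P$ is non-atomic. One must show separately that the degenerate part --- $O(m^{\ell-1})$ terms divided by $m^\ell$ --- vanishes almost surely. This is true, but it requires the extra assumption $\E_{X\sim P}[\sfK_a(X,X)^2]<\infty$ from Theorem \ref{thm:estimateH0} together with a bound such as $\max_{1\leq i\leq m}|\mathsf{H}_{\bm \eta}^\circ(X_i,X_i)|=O(\sqrt{m})$ a.s.\ to control configurations with several coincident indices; as written, ``the SLLN then yields $V_m^{(\ell)}\rightarrow\Tr[\cH_{\mathsf{H}_{\bm \eta}^\circ}^\ell]$'' is not justified. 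Your remaining points are fine: the identity $\E[g]=\sum_s\lambda_s^\ell$ with absolute summability for $\ell\geq 3$, and the observation that applying the strong law only to the finitely many cyclic products of the fixed kernels $\sfK_{j_1}^\circ,\ldots,\sfK_{j_\ell}^\circ$ (then intersecting over $\ell$) makes the null set independent of $\bm \eta$, match what the paper does.
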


Finally, we show that if \eqref{eq:HmL2} and \eqref{eq:momentHm} holds, then the convergence in \eqref{eq:HZW} holds (proof is given in Appendix \ref{sec:convergencepf}).  

\begin{lemma}\label{lemma:convergence}
Suppose \eqref{eq:HmL2} and \eqref{eq:momentHm} holds. Then on the set $\cQ_0 = \cQ_1 \bigcap \cQ_2$, the convergence in \eqref{eq:HZW} holds. 
\end{lemma}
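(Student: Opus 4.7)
\textbf{Proof plan for Lemma \ref{lemma:convergence}.}
My strategy is to prove almost-sure convergence in distribution by showing convergence of conditional characteristic functions and then invoking L\'evy's continuity theorem. Fix $\omega \in \cQ_0$; then the eigenvalues $\lambda_s(\hat{\bm{\mathsf{H}}}_{\bm \eta}^\circ)$ are deterministic, and since $\{W_s\}_{s \geq 1}$ are i.i.d.\ $\cN(0,\gamma)$ and independent of $\sX_m$, the conditional characteristic function factors over $s$. Taking logarithms,
\[
\log \hat\varphi_m(t) \;:=\; \log \E\bigl[ e^{it \sum_{s=1}^m \lambda_s(\hat{\bm{\mathsf{H}}}_{\bm \eta}^\circ)(W_s^2-\gamma)} \,\big|\, \sX_m \bigr] \;=\; \sum_{s=1}^m g\bigl(\gamma t\, \lambda_s(\hat{\bm{\mathsf{H}}}_{\bm \eta}^\circ)\bigr),
\]
where $g(x) = -ix - \tfrac{1}{2}\log(1-2ix)$. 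The analogous expression with $\lambda_s(\mathsf{H}_{\bm \eta}^\circ)$ and an infinite sum is the log-characteristic function $\log\varphi(t)$ of $\sum_{s=1}^\infty \lambda_s(\mathsf{H}_{\bm \eta}^\circ)(Z_s^2-\gamma)$, which is absolutely convergent since $g(x)=O(x^2)$ near $0$ and $\sum_s \lambda_s(\mathsf{H}_{\bm \eta}^\circ)^2 = \|\mathsf{H}_{\bm \eta}^\circ\|^2 < \infty$.

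The next step is to exploit the Taylor expansion $g(x) = \sum_{k=2}^\infty (2ix)^k/(2k)$, valid for $|x|<1/2$, in conjunction with the two hypotheses of the lemma. By \eqref{eq:HmL2}, on $\cQ_1$ the Hilbert-Schmidt norms $\|\hat{\bm{\mathsf{H}}}_{\bm \eta}^\circ\|$ are eventually bounded by some $L<\infty$; since the operator norm is dominated by the Hilbert-Schmidt norm, $\max_s|\lambda_s(\hat{\bm{\mathsf{H}}}_{\bm \eta}^\circ)| \leq L$ for all large $m$. Therefore, for $|t| < t_0 := 1/(4\gamma L)$, every eigenvalue satisfies $|2\gamma t\,\lambda_s(\hat{\bm{\mathsf{H}}}_{\bm \eta}^\circ)| < 1/2$, and we may interchange the sums to obtain
\[
\log \hat\varphi_m(t) \;=\; \sum_{k=2}^\infty \frac{(2i\gamma t)^k}{2k} \, A_{m,k}, \qquad A_{m,k} := \sum_{s=1}^m \lambda_s(\hat{\bm{\mathsf{H}}}_{\bm \eta}^\circ)^k.
\]
For each fixed $k\geq 2$, $A_{m,k} \to \sum_s \lambda_s(\mathsf{H}_{\bm \eta}^\circ)^k$ --- the $k=2$ case is \eqref{eq:HmL2} and the cases $k\geq 3$ are \eqref{eq:momentHm}, which is why we need $\cQ_0=\cQ_1\cap\cQ_2$. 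The elementary bound $|A_{m,k}| \leq \|\hat{\bm{\mathsf{H}}}_{\bm \eta}^\circ\|_{\mathrm{op}}^{k-2}\,\|\hat{\bm{\mathsf{H}}}_{\bm \eta}^\circ\|^2 \leq L^{k-2}(L^2+1)$ produces a geometric dominating series for $|t|<t_0$, so dominated convergence in $k$ yields $\log \hat\varphi_m(t) \to \log\varphi(t)$ pointwise on $(-t_0,t_0)$.

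The main obstacle is to upgrade this convergence from a neighborhood of the origin to all of $\mathbb{R}$, since L\'evy's theorem requires convergence on the whole line. The cleanest route is via analytic continuation: both $\hat\varphi_m$ and $\varphi$ extend to holomorphic functions on the common horizontal strip $\{z\in\mathbb{C}: |\mathrm{Im}(z)| < 1/(2\gamma L)\}$ (this is where the branch of $\log(1-2iz\lambda_s)$ is well-defined for all eigenvalues), and the estimates above make the family $\{\hat\varphi_m\}$ locally uniformly bounded there. Hence by the Vitali-Porter theorem, pointwise convergence on the real subinterval $(-t_0,t_0)$ extends to the entire strip, in particular to all $t\in\mathbb{R}$. (An alternative argument bypasses analytic continuation: the cumulants $\kappa_k^{(m)} = 2^{k-1}\gamma^k (k-1)!\, A_{m,k}$ and their limits $\kappa_k = 2^{k-1}\gamma^k(k-1)!\,\sum_s \lambda_s(\mathsf{H}_{\bm\eta}^\circ)^k$ all converge by \eqref{eq:HmL2}--\eqref{eq:momentHm}; the limit law $S = \sum_s \lambda_s(\mathsf{H}_{\bm\eta}^\circ)(Z_s^2-\gamma)$ is sub-exponential, hence determined by its moments via the Carleman criterion, so convergence of cumulants implies convergence in distribution.) Either way, with $\varphi(t)$ continuous at $0$, L\'evy's continuity theorem yields the conditional convergence in \eqref{eq:HZW} on $\cQ_0$, completing the proof.
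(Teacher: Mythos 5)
Your proposal is correct, and the core of it coincides with the paper's computation: both arguments hinge on expanding the logarithm of the relevant transform as $\gamma^2 t^2\|\cdot\|^2$ plus a series in the power sums $\sum_s \lambda_s(\cdot)^k$ for $k\ge 3$, feeding in \eqref{eq:HmL2} for the quadratic term and \eqref{eq:momentHm} for the higher ones, and controlling the tail in $k$ by a geometric series via the bound $|\lambda_s|\le C/\sqrt{s}$. The genuine difference is the choice of transform. The paper works with the conditional \emph{moment generating function} of $Y_m=\sum_s\lambda_s(\hat{\bm{\mathsf{H}}}_{\bm \eta}^\circ)(W_s^2-\gamma)$, which (by the cited Proposition 7.1 of \cite{bbbpdsm} and Lemma \ref{lm:ZHM}) exists for $|t|<\tfrac{1}{8\gamma C}$ on $\cQ_0$; since convergence of MGFs on an open interval containing the origin already implies convergence in distribution, the paper never has to leave that neighborhood. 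You work with characteristic functions, which forces you to confront exactly the obstacle you name: pointwise convergence on $(-t_0,t_0)$ is not enough for L\'evy's theorem, and you must extend to all of $\R$. Both of your proposed fixes are sound, but note that the Vitali--Porter route requires local uniform boundedness of $\hat\varphi_m$ on the complex strip, which your "estimates above" (derived only for real $|t|<t_0$) do not yet supply; it does hold, because $|e^{g(\gamma z\lambda_s)}|=e^{\operatorname{Re}g(\gamma z\lambda_s)}$ and on compacts of the strip all but finitely many factors obey $|g(\gamma z\lambda_s)|\lesssim |z|^2\lambda_s^2$ with $\sum_s\lambda_s^2$ uniformly bounded by \eqref{eq:HmL2}, while the finitely many exceptional factors are individually bounded away from the singularities; so this step deserves a sentence of justification. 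The cumulant/Carleman alternative you sketch is fully rigorous as stated (the limit has an MGF near zero, hence is moment-determinate, and $\kappa_k^{(m)}=2^{k-1}\gamma^k(k-1)!A_{m,k}$ converges for every $k\ge 2$ by the two hypotheses). In short: your argument buys nothing over the paper's and costs an extra extension step, but it is a valid proof once the boundedness on the strip is spelled out or the moment route is taken.
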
 

Since $\P(\cQ_0) =1$ and $\cQ_0$ does not depend on $\bm \eta$, the above 3 lemmas combined completes the proof of Proposition \ref{ppn:HZW}.

\subsubsection{Proof of Lemma \ref{lemma:HmL2}}
\label{sec:HmL2pf}

Define 
\begin{align*}
    \bm{\mathsf{H}}^\circ_{\bm \eta} = \left(\left(\dfrac{\mathsf{H}_{\bm \eta}^\circ(X_{i},X_{j})}{m}\right)\right)_{1\leq i,j\leq m} , 
    \end{align*}
where 
\begin{align}\label{eq:HXetaxy}
    \mathsf{H}_{\bm \eta}^\circ (X_i, X_j) & : = \mathsf{H}_{\bm \eta}(X_i, X_j) - \E_{X}\mathsf{H}_{\bm \eta}(X, X_j) - \E_{X'}\mathsf{H}_{\bm \eta}(X_i, X') + \E_{X, X'}\mathsf{H}_{\bm \eta}(X, X') ,  
\end{align}  
where $X, X'$ are i.i.d. samples from $P$. 
First we will 
show that  $\bm{\mathsf{H}}^\circ_{\bm \eta}$ and $\hat{\bm{\mathsf{H}}}_{\bm \eta}^\circ$ are asymptotically close. 

\begin{lemma}\label{lm:HHhatL2} There exists a set $\cR_1 \in \sB(\cX)$ (not depending on $\bm \eta$) with $\P(\cR_1) = 1$ such that 
\begin{align}\label{eq:HHhatL2}
\lim_{m \rightarrow \infty} \| \hat{\bm{\mathsf{H}}}^\circ_{\bm \eta} - \bm{\mathsf{H}}^\circ_{\bm \eta} \|^{2} = 0. 
\end{align} 
\end{lemma}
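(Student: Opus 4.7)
The plan is to write the entry-wise difference of the two matrices as a sum of two simple random fluctuations, and then reduce the Frobenius norm bound to the almost-sure vanishing of two $V$-statistics. Concretely, let $T(z) := \mathbb{E}_{X\sim P}[\mathsf{H}_{\bm \eta}(z,X)]$ and $m_0 := \mathbb{E}_{X,X'\sim P}[\mathsf{H}_{\bm \eta}(X,X')]$, and define
\begin{align*}
A(z) := \frac{1}{m}\sum_{u=1}^m \mathsf{H}_{\bm \eta}(z,X_u) - T(z), \qquad
C := \frac{1}{m^2}\sum_{u,v=1}^m \mathsf{H}_{\bm \eta}(X_u,X_v) - m_0.
\end{align*}
Directly from the definitions \eqref{eq:estimateHcentered} and \eqref{eq:HXetaxy}, using the symmetry of $\mathsf{H}_{\bm \eta}$, one gets the identity $\hat{\mathsf{H}}_{\bm \eta}^\circ(x,y) - \mathsf{H}_{\bm \eta}^\circ(x,y) = -A(x) - A(y) + C$.

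Next, I would bound the Frobenius norm by Cauchy--Schwarz. Since the matrices have entries divided by $m$,
\begin{align*}
\|\hat{\bm{\mathsf{H}}}_{\bm \eta}^\circ - \bm{\mathsf{H}}_{\bm \eta}^\circ\|^2 = \frac{1}{m^2}\sum_{i,j=1}^m (A(X_i) + A(X_j) - C)^2 \leq \frac{6}{m}\sum_{j=1}^m A(X_j)^2 + 3 C^2,
\end{align*}
so the proof reduces to showing two almost-sure convergences: (a) $C \to 0$ and (b) $m^{-1}\sum_j A(X_j)^2 \to 0$.

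For (a), $C$ is the centered $V$-statistic of order $2$ with kernel $\mathsf{H}_{\bm \eta}$; its diagonal contribution is $m^{-1}\cdot m^{-1}\sum_u \mathsf{H}_{\bm \eta}(X_u,X_u)$, which vanishes a.s.\ since $\mathbb{E}|\mathsf{H}_{\bm \eta}(X,X)| < \infty$ by $\mathbb{E}[\sfK_a(X,X)^2]<\infty$ and Cauchy--Schwarz, while the off-diagonal part is the usual $U$-statistic which converges to $m_0$ by the SLLN for $U$-statistics (using $\mathsf{H}_{\bm \eta}\in L^1(\cX^2,P^2)$). For (b), expanding the square gives
\begin{align*}
\frac{1}{m}\sum_{j=1}^m A(X_j)^2 = \frac{1}{m^3}\sum_{i,u,v}\mathsf{H}_{\bm \eta}(X_i,X_u)\mathsf{H}_{\bm \eta}(X_i,X_v) - \frac{2}{m^2}\sum_{i,u}T(X_i)\mathsf{H}_{\bm \eta}(X_i,X_u) + \frac{1}{m}\sum_i T(X_i)^2,
\end{align*}
which are $V$-statistics of orders $3, 2, 1$ respectively. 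Each one converges a.s.\ to $\mathbb{E}_X[T(X)^2]$ by the SLLN for $V$-statistics, and so their alternating sum tends to $0$.

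The main technical point is to verify integrability conditions allowing the SLLN for $V$-statistics to be applied, and, crucially, to control the diagonal terms of the order-$3$ $V$-statistic. The diagonals $(i=u)$, $(i=v)$, $(u=v)$, $(i=u=v)$ together contribute $O(1/m)$ of the sum, and require finiteness of $\mathbb{E}[\mathsf{H}_{\bm \eta}(X,X)|\mathsf{H}_{\bm \eta}(X,Y)|]$, $\mathbb{E}[\mathsf{H}_{\bm \eta}(X,Y)^2]$, and $\mathbb{E}[\mathsf{H}_{\bm \eta}(X,X)^2]$ --- all guaranteed by $\mathsf{H}_{\bm \eta}\in L^2(\cX^2,P^2)$ together with the assumption $\mathbb{E}[\sfK_a(X,X)^2]<\infty$. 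Finally, to ensure the exceptional null set $\cR_1$ does not depend on $\bm \eta$, I would take $\cR_1$ as the finite intersection of the a.s.\ convergence sets for the $V$-statistics above written in terms of the individual kernels $\sfK_a$ and their pairwise products $\sfK_a(x,u)\sfK_b(x,v)$; by the bilinearity in $\bm \eta$, convergence for $\mathsf{H}_{\bm \eta}$ at every $\bm \eta$ follows on this single set of full measure.
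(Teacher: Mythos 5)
Your proof is correct, and it follows the same outer skeleton as the paper's: the entrywise identity $\hat{\mathsf{H}}_{\bm \eta}^\circ(x,y) - \mathsf{H}_{\bm \eta}^\circ(x,y) = -A(x)-A(y)+C$ together with the elementary inequality $(a+b-c)^2 \leq 3(a^2+b^2+c^2)$ is exactly the paper's bound $\|\hat{\bm{\mathsf{H}}}^\circ_{\bm \eta} - \bm{\mathsf{H}}^\circ_{\bm \eta}\|^2 \leq 3(T_1+T_2+T_3)$, with your $m^{-1}\sum_j A(X_j)^2$ playing the role of $T_1=T_2$ and your $C^2$ the role of $T_3$. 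Where you genuinely diverge is in the key sub-lemma: to show $m^{-1}\sum_j A(X_j)^2 \to 0$ a.s., the paper invokes Lemma \ref{lemma:kernel-avg-convg}, which exploits the RKHS structure --- writing the row-average deviation as $\langle \psi(X_i), \frac{1}{m}\sum_j \psi(X_j)-\mu_P\rangle_{\mathcal H}$, applying Cauchy--Schwarz, and letting the SLLN for $U$-statistics kill the mean-embedding factor; this is why the paper applies the lemma to the individual positive-definite kernels $\sfK_a$ rather than to $\mathsf{H}_{\bm\eta}$ directly. You instead expand $m^{-1}\sum_j A(X_j)^2$ into $V$-statistics of orders $3$, $2$, $1$ whose a.s.\ limits are all $\mathbb{E}[T(X)^2]$ and cancel. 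Your route is more elementary and does not use positive definiteness of the summands at all (it works for any symmetric $L^2$ function with square-integrable diagonal), at the cost of having to track the diagonal terms of the order-$3$ $V$-statistic; you correctly identify that this is where the hypothesis $\mathbb{E}[\sfK_a(X,X)^2]<\infty$ from Theorem \ref{thm:estimateH0} enters, which is the same place the paper uses it. Your device for making the null set $\mathcal{R}_1$ independent of $\bm\eta$ --- intersecting the finitely many convergence events for the kernels $\sfK_a$ and their pairwise products and then using (bi)linearity in $\bm\eta$ --- is also the paper's device. One small point to tighten: the order-$2$ and order-$1$ $V$-statistics in your expansion involve $T(\cdot)$, which itself depends on $\bm\eta$, so the finite family of events you intersect over must also include the $V$-statistics built from $T_a(z):=\mathbb{E}[\sfK_a(z,X)]$ and the cross terms $T_a(x)\sfK_b(x,u)$, not only the products $\sfK_a(x,u)\sfK_b(x,v)$; this is a routine addition and does not affect the argument.
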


\begin{proof} 
Note that 
$$\| \hat{\bm{\mathsf{H}}}^\circ_{\bm \eta} - \bm{\mathsf{H}}^\circ_{\bm \eta} \|^{2} = \frac{1}{m^2}\sum_{1 \leq i, j \leq m} \left( \hat{\mathsf{H}}_{\bm \eta}^\circ(X_{i},X_{j}) - \mathsf{H}_{\bm \eta}^\circ(X_{i},X_{j}) \right)^2 .
$$ 
Hence, by \eqref{eq:estimateHcentered} and \eqref{eq:HXetaxy}, 
\begin{align}\label{eq:HHL2}
    &\|\bm{\mathsf{H}}^\circ_{\bm \eta} - \hat{\bm{\mathsf{H}}}^\circ_{\bm \eta} \|^{2} \leq 3 (T_1 + T_2 + T_3) , 
    \end{align}
where 
\begin{align}
T_1 & := \frac{1}{m}\sum_{i=1}^{m}\left(\frac{1}{m}\sum_{v=1}^{m}\mathsf{H}_{\bm \eta}(X_{i},X_{v})- \E_{X' \sim P} [\mathsf{H}_{\bm \eta}(X_{i},X') ] \right)^2 , \nonumber \\ 
T_2 & := \frac{1}{m}\sum_{j=1}^{m}\left(\frac{1}{m}\sum_{u=1}^{m}\mathsf{H}_{\bm \eta}(X_{u},X_{j})- \E_{X' \sim P} [\mathsf{H}_{\bm \eta}(X', X_{j}) ] \right)^2 , \nonumber \\ 
 T_3 & := \left(\frac{1}{m^{2}}\sum_{1\leq u, v \leq m}\mathsf{H}_{\bm \eta}(X_{u}, X_{v})-\E_{X, X' \sim P} [ \mathsf{H}_{\bm \eta}(X, X') ] \right)^2 . \nonumber 
\end{align} 
Since $\mathsf{H}_{\bm \eta} = \sum_{a=1}^r \eta_a \sfK_a$,  
\begin{align}\label{eq:HHT1}
 T_1     & \leq C_{1}^{(r)} \sum_{a=1}^ r \eta_a^2 \left\{ \frac{1}{m}\sum_{i=1}^{m}\left(\frac{1}{m}\sum_{v=1}^{m}\sfK_{a}(X_{i},X_{v})- \E_{X' \sim P} [ \sfK_{a}(X_{i}, X') ] \right)^2 \right \}. 
\end{align}
where $C_{1}^{(r)}>0$ is a constant depending on $r$ only. By Lemma \ref{lemma:kernel-avg-convg}, for every $1 \leq a \leq r$ there exists a set  $\cB_{\sfK_a} \in \sB(\cX)$ with $\P(\cB_{\sfK_a}) = 1$ such that 
$$ \lim_{m \rightarrow \infty} \frac{1}{m}\sum_{i=1}^{m}\left(\frac{1}{m}\sum_{v=1}^{m}\sfK_{a}(X_{i},X_{v})- \E_{X' \sim P} [ \sfK_{a}(X_{i}, X') ] \right)^2 = 0.$$
Define $\cB_{\cK} = \bigcap_{s=1}^r \cB_{\sfK_a} $. Clearly, $\P(\cB_{\cK}) = 1$. 
By Lemma \ref{lemma:kernel-avg-convg} and \eqref{eq:HHT1}, on the set $\cB_{\cK}$, 
$\lim_{m \rightarrow \infty} T_1 = 0$. Similarly, on the set $\cB_{\cK}$, $\lim_{m \rightarrow \infty} T_2 = 0$. 

Also, by \citet[Theorem 5.4.A]{serfling} there is a set $\cE_{\cK}$ (depending $\sfK_{1}, \sfK_{2}, \ldots, \sfK_{r}$, but not on $\bm \eta$) with $\P(\cE_{\cK}) = 1$ such that on the set $\cE_{\cK}$, for all $1 \leq a \leq r$, 
$$\lim_{m \rightarrow \infty}\frac{1}{m^{2}}\sum_{1\leq u, v \leq m}\sfK_{a}(X_{u}, X_{v})-\E_{X, X' \sim P} [ \sfK_{a}(X, X') ] = 0 . $$

Then on the set $\cE_{\cK}$, 
\begin{align}\label{eq:HHT3}
   T_3 
    &\leq C_{2}^{(r)} \sum_{a=1}^r \eta_a^2 \left( \left(\frac{1}{m^{2}}\sum_{1\leq u, v\leq m}\sfK_{a}(X_{u}, X_{v}) - \E_{X, X' \sim P} [\sfK_{a}(X, X')] \right)^2 \right)  \rightarrow 0, 
\end{align}
where $C_{2}^{(r)}>0$ is a constant depending on $r$ only.
Combining \eqref{eq:HHT1} and \eqref{eq:HHT3} with \eqref{eq:HHL2} shows that \eqref{eq:HHhatL2} holds on $\cR_1 = \cB_{\cK} \bigcap \cE_{\cK}$. Since $\P(\cR_1) = 1$, this completes the proof of Lemma \ref{lm:HHhatL2}. 
\end{proof}

 Now, we compute $\|\bm{\mathsf{H}}^\circ_{\bm \eta} \|$. By \citet[Theorem 5.4.A]{serfling}, there exists a set $\cR_2 \in \sB(\cX)$ with $\P(\cR_2) = 1$ such that on the set $\cR_2$,  
\begin{align}\label{eq:normheta0}
    \| \bm{\mathsf{H}}^\circ_{\bm \eta} \|^{2} & = \frac{1}{m^{2}}\sum_{1 \leq i , j \leq n} \mathsf{H}_{\bm \eta}^\circ(X_{i},X_{j})^2 \nonumber \\ 
    & = \sum_{a=1}^r  \eta_a^{2} \left(  \frac{1}{m^{2}}\sum_{1 \leq i , j \leq n} \sfK_a^\circ(X_{i},X_{j})^2 \right) + \sum_{1 \leq s \ne t \leq r} \eta_a \eta_b \left( \frac{1}{m^{2}}\sum_{1 \leq i, j \leq n } \sfK_a^\circ(X_{i},X_{j} ) \sfK_b^\circ (X_{i},X_{j}) \right) \nonumber \\
     & \rightarrow \sum_{a=1}^r  \eta_a^{2} \E_{X, X' \sim P}[\sfK_a^\circ(X, X')^2] + \sum_{1 \leq a \ne b \leq r} \eta_a \eta_b \E_{X, X' \sim P} [ \sfK_a^\circ(X, X') \sfK_b^\circ (X, X') ]   \nonumber \\ 
    & = \E \left[ \left( \sum_{a=1}^r  \eta_a  \sfK_a^\circ(X, X') \right)^2 \right] \nonumber \\
    & = \E_{X, X' \sim P} \left[ \mathsf{H}_{\bm \eta}^\circ(X, X')^2 \right]
     = \left\|\mathsf{H}_{\bm \eta}^\circ\right\|^{2} .  
\end{align}
Combining the above together with Lemma \ref{lm:HHhatL2} and choosing $\cQ_1  = \cR_1 \bigcap \cR_2$, the result in Lemma \ref{lemma:HmL2} follows.

\subsubsection{Proof of Lemma \ref{lemma:momentHm}}
\label{sec:momentHmpf}

Define 
\begin{align*}
  \bm{\mathsf{H}}_{\bm \eta}^{\circ, -} = \left(\left((1-\delta_{ij})\dfrac{\mathsf{H}_{\bm \eta}^\circ(X_{i},X_{j})}{m}\right)\right)_{1\leq i,j\leq m}
\end{align*}
where $\delta_{ij} = 1$ if $i=j$ and $0$ otherwise. First, we will show that the $\ell$-th moment (power sums) of the eigenvalues of $\bm{\mathsf{H}}_{\bm \eta}^{\circ,-}$ converges to the $\ell$-th moment of the eigenvalues of $\mathcal{H}_{\mathsf{H}_{\bm \eta}^\circ}$, for $\ell \geq 3$, almost surely.

\begin{lemma}\label{lemma:power-sum-convg}
Suppose $\{\lambda_s(\bm{\mathsf{H}}_{\bm \eta}^{\circ, -}): 1\leq s \leq m\}$ are the eigenvalues of $\bm{\mathsf{H}}_{\bm \eta}^{\circ, -}$. Then there exists a set $\cE_1 \in \sB(\cX)$ with $\P(\cE_1) = 1$ such  that on $\cE$
\begin{align*}
  \lim_{m \rightarrow \infty}  \sum_{s=1}^{m}\lambda_s(\bm{\mathsf{H}}_{\bm \eta}^{\circ, -})^{\ell}\ = \sum_{s=1}^{\infty}\lambda_{s} (\mathsf{H}_{\bm \eta}^\circ)^{\ell} , 
\end{align*} 
for all $\ell \geq 3$.  
\end{lemma}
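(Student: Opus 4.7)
The plan is to express the power sum as a matrix trace and then reduce it to a $U$-statistic. Since $\bm{\mathsf{H}}_{\bm \eta}^{\circ,-}$ is symmetric, its eigenvalues are real and
\begin{align*}
\sum_{s=1}^m \lambda_s(\bm{\mathsf{H}}_{\bm \eta}^{\circ,-})^\ell
= \Tr\bigl[(\bm{\mathsf{H}}_{\bm \eta}^{\circ,-})^\ell\bigr]
= \frac{1}{m^\ell}\sum_{(i_1,\ldots,i_\ell)\in \{1,\ldots,m\}^\ell}\,\prod_{k=1}^\ell (1-\delta_{i_k i_{k+1}})\,\mathsf{H}_{\bm \eta}^\circ(X_{i_k},X_{i_{k+1}}),
\end{align*}
where $i_{\ell+1}:=i_1$. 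The strategy is to stratify the combinatorial sum by the number $k\in\{2,\ldots,\ell\}$ of distinct indices appearing in $(i_1,\ldots,i_\ell)$ (the case $k=1$ is killed by $1-\delta_{i_k i_{k+1}}$), show that the main contribution comes from $k=\ell$, and identify its limit with $\sum_s \lambda_s(\mathsf{H}_{\bm \eta}^\circ)^\ell$ through Mercer's theorem.

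For the leading term, let $g(x_1,\ldots,x_\ell):=\prod_{k=1}^\ell \mathsf{H}_{\bm \eta}^\circ(x_k,x_{k+1})$ (cyclic indices) and let $S_\ell$ denote the contribution from tuples with all indices distinct. Symmetrizing over the $\ell!$ permutations yields a symmetric kernel $\bar g(x_1,\ldots,x_\ell):=\frac{1}{\ell!}\sum_\sigma g(x_{\sigma(1)},\ldots,x_{\sigma(\ell)})$ with $\mathbb{E}[\bar g]=\mathbb{E}[g(X_1,\ldots,X_\ell)]$ by exchangeability, and a standard rewriting converts the ordered sum into the classical $U$-statistic
\begin{align*}
\frac{1}{m^\ell}S_\ell
=\frac{m(m-1)\cdots(m-\ell+1)}{m^\ell}\cdot\binom{m}{\ell}^{-1}\sum_{i_1<\cdots<i_\ell}\bar g(X_{i_1},\ldots,X_{i_\ell}).
\end{align*}
The hypothesis $\sfK_a\in L^{2}(\mathcal{X}^{2},P^{2})$ together with repeated Cauchy--Schwarz applied to consecutive factors of $g$ gives $\mathbb{E}|\bar g|<\infty$, so Serfling's strong law for $U$-statistics \cite[Theorem~5.4.A]{serfling} yields $\binom{m}{\ell}^{-1}\sum_{i_1<\cdots<i_\ell}\bar g\stackrel{a.s.}{\to}\mathbb{E}[g(X_1,\ldots,X_\ell)]$ on a set of full probability; since the prefactor tends to $1$, we obtain $m^{-\ell}S_\ell\stackrel{a.s.}{\to}\mathbb{E}[g(X_1,\ldots,X_\ell)]$.

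For each $2\leq k<\ell$, the contribution $S_k$ from tuples with exactly $k$ distinct underlying values consists of $O(m^k)$ terms grouped by repetition pattern; each pattern gives a $U$-statistic of order $k$ with a reduced kernel obtained by coalescing the repeated arguments of $g$, still integrable by Cauchy--Schwarz under the $L^2$ assumption. The strong law yields almost sure convergence of each such empirical average to a finite constant, whence $m^{-\ell}S_k=O(m^{k-\ell})=o(1)$ almost surely. Intersecting the finitely many exceptional null sets (indexed by $k$ and by repetition patterns) yields a single full-probability event on which all these limits hold simultaneously.

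Finally, to identify $\mathbb{E}[g(X_1,\ldots,X_\ell)]$ with $\sum_s\lambda_s(\mathsf{H}_{\bm \eta}^\circ)^\ell$, I plug the Mercer expansion $\mathsf{H}_{\bm \eta}^\circ(x,y)=\sum_s\lambda_s(\mathsf{H}_{\bm \eta}^\circ)\phi_s(x)\phi_s(y)$ in $L^2(P^2)$ into each of the $\ell$ factors and invoke Fubini; collecting the factors around each $X_k$ and using $\mathbb{E}[\phi_s(X)\phi_t(X)]=\delta_{st}$ forces all spectral indices to coincide, collapsing the $\ell$-fold sum to $\sum_s\lambda_s(\mathsf{H}_{\bm \eta}^\circ)^\ell$. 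Absolute convergence is automatic for $\ell\geq 3$ because $\sum_s|\lambda_s|^\ell\leq(\max_s|\lambda_s|)^{\ell-2}\sum_s\lambda_s^2<\infty$ by the Hilbert--Schmidt property, which is precisely where the restriction $\ell\geq 3$ is convenient. To obtain a single null set $\cE_1$ independent of $\bm \eta$, I would run the argument along a countable dense subset $\{\bm \eta^{(j)}\}\subset\mathbb{R}^r$ and extend by continuity, since both sides depend polynomially on $\bm \eta$. I expect the main obstacle to be the moment bookkeeping needed to guarantee $\mathbb{E}|g|<\infty$ and to dominate the error terms uniformly, since $g$ is a product of $\ell$ kernels linked by a cyclic dependence structure; this requires iterated Cauchy--Schwarz that exploits $L^2(\mathcal{X}^2,P^2)$-integrability of each $\sfK_a$ and, if necessary, the auxiliary hypothesis $\mathbb{E}[\sfK_a(X,X)^2]<\infty$ from Theorem~\ref{thm:estimateH0}. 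Once integrability is in hand, the rest of the argument is a standard $U$-statistic calculation combined with Mercer's theorem.
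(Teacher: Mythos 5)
Your proposal follows essentially the same route as the paper's proof: write the power sum as $\Tr[(\bm{\mathsf{H}}_{\bm \eta}^{\circ,-})^{\ell}]$, view the resulting cyclic sum as a $U$-statistic and apply Serfling's strong law, then identify the limit with $\sum_{s}\lambda_{s}(\mathsf{H}_{\bm \eta}^\circ)^{\ell}$ through the spectral expansion and orthonormality of the eigenfunctions (the paper secures the $\bm\eta$-uniformity of the null set by expanding $\mathsf{H}_{\bm \eta}^\circ=\sum_{a}\eta_a\sfK_a^\circ$ into finitely many $\bm\eta$-free $U$-statistics, rather than by your dense-subset-plus-continuity argument, but both work). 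The one step to tighten is the integrability of the cyclic product (and of the coalesced kernels in your degenerate strata): consecutive Cauchy--Schwarz alone does not control an odd cycle, and the paper instead invokes Finner's generalized H\"older inequality to get $\sum_{s}\lambda_{s}(\mathsf{H}_{\bm\eta}^\circ)^{\ell}\leq\|\mathsf{H}_{\bm\eta}^\circ\|^{\ell}$.
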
 

\begin{proof}
For fixed $\ell\geq 3$, observe that,
\begin{align}\label{eq:eigenvaluehatH}
    \sum_{s=1}^{m}\lambda_s(\bm{\mathsf{H}}_{\bm \eta}^{\circ, -})^{\ell} & = \tr\left[\left(\bm{\mathsf{H}}_{\bm \eta}^{\circ, -}\right)^{\ell}\right] \nonumber \\ 
    & = \frac{1}{m^{\ell}}\sum_{1 \leq i_{1}\neq i_{2}\neq \cdots\neq i_{\ell} \leq m} \mathsf{H}_{\bm \eta}^\circ(X_{i_{1}},X_{i_{2}})\mathsf{H}_{\bm \eta}^\circ(X_{i_{2}},X_{i_{3}})\cdots \mathsf{H}_{\bm \eta}^\circ(X_{i_{\ell}},X_{i_{1}}) .
\end{align} 
Recalling $\mathsf{H}_{\bm \eta}^\circ = \sum_{a=1}^r \eta_a \sfK_a^\circ$, \eqref{eq:eigenvaluehatH} can be written as: 
\begin{align}\label{eq:eigenvalueestimateKH}
    \sum_{s=1}^{m}\lambda_s(\bm{\mathsf{H}}_{\bm \eta}^{\circ, -})^{\ell}  
  & = \sum_{j_{1},\ldots,j_{\ell}\in \{1,2, \ldots, r\}}\prod_{t=1}^{\ell} \eta_{j_{t}} \left( \frac{1}{m^{\ell}}\sum_{1 \leq i_{1} \neq \cdots\neq i_{\ell} \leq m}\prod_{t=1}^{\ell} \sfK^{\circ}_{j_{t}}(X_{i_{t}},X_{i_{t+1}}) \right). 
\end{align} 

Now, since $\mathsf{H}_{\bm \eta}^\circ\in L^{2}(\mathcal{X}^{2}, P^{2})$, by the spectral theorem 
\begin{align}\label{eq:eigenvaluesHeta}
\mathsf{H}_{\bm \eta}^\circ(x, y) = \sum_{s=1}^\infty \lambda_s \phi_s(x) \phi_s(y) , 
\end{align}
where $\{ \lambda_1, \lambda_2, \ldots \}$ are the eigenvalues of  $\mathsf{H}_{\bm \eta}^\circ$ and $\{\phi_1, \phi_2, \ldots \}$  are the corresponding eigenvectors which form an orthonormal basis of $L^{2}(\mathcal{X}, P)$. Using the spectral representation in \eqref{eq:eigenvaluesHeta} and the orthonormality of the eigenvectors it follows that 
\begin{align}
    \sum_{s=1}^{\infty}\lambda_{s}(\mathsf{H}_{\bm \eta}^\circ)^{\ell} & = \int \mathsf{H}_{\bm \eta}^\circ(x_{1},x_{2})\mathsf{H}_{\bm \eta}^\circ(x_{2},x_{3})\cdots\mathsf{H}_{\bm \eta}^\circ(x_{\ell},x_{1})\mathrm{d}P(x_{1})\cdots\mathrm{d}P(x_{\ell}) \label{eq:FinnerFinite}\\ 
& = 
\sum_{j_{1},\ldots,j_{\ell}\in \{1,2, \ldots, r \}}\prod_{t=1}^{\ell} \eta_{j_{t}}\E\left[\sfK^{\circ}_{j_{1}}(X_{1},X_{2})\cdots \sfK^{\circ}_{j_{\ell}}(X_{\ell},X_{1})\right] \label{eq:eigenvalueKH},
     \end{align}  
     Note that the R.H.S. of \eqref{eq:FinnerFinite} is finite, since $\sum_{s=1}^{\infty}\lambda_{s}(\mathsf{H}_{\bm{\eta}}^\circ)^\ell\leq \|\mathsf{H}_{\bm{\eta}}^\circ\|^\ell$, by Finner's inequality \citep{finner1992generalization} (see also \citet[Theorem 3.1]{LZdense}) since $\mathsf{H}_{\bm \eta}^\circ = \sum_{a=1}^r \eta_a \sfK_a^\circ$. 
Hence, using \citet[Theorem 5.4A]{serfling}, \eqref{eq:eigenvalueestimateKH}, and \eqref{eq:eigenvalueKH}, we can find a set $\cE_1 \in \sB(\cX)$ with $\P(\cE_1) = 1$ such that on $\cE_1$, as $m \rightarrow \infty$,  
\begin{align*}
    \sum_{s=1}^{m}\lambda_s(\bm{\mathsf{H}}_{\bm \eta}^{\circ, -})^{\ell} 
    & \rightarrow \sum_{s=1}^{\infty}\lambda_{s}(\mathsf{H}_{\bm \eta}^\circ)^{\ell} , 
    \end{align*}
 for all $\ell\geq 3$. This completes the proof of Lemma \ref{lemma:power-sum-convg}.
\end{proof}

Now, we will show the \eqref{eq:momentHm}, that is, 
\begin{align*}
  \lim_{m \rightarrow \infty}  \sum_{s=1}^{m}\lambda_s(\hat{\bm{\mathsf{H}}}_{\bm \eta}^\circ)^{\ell} = \sum_{s=1}^{\infty}\lambda_{s}(\mathsf{H}_{\bm \eta}^\circ)^{\ell} , 
\end{align*} 
for all $\ell\geq 3$, on set $\mathcal{Q}_2$ with $\P(\cQ_2)=1$. 
To this end, note that for all $\ell\geq 3$, 
\begin{align}\label{eq:lambdaHHhat}
   & \left|\sum_{s=1}^{m}\lambda_s(\hat{\bm{\mathsf{H}}}_{\bm \eta}^\circ)^{\ell} - \sum_{s=1}^{\infty}\lambda_{s}(\mathsf{H}_{\bm \eta}^\circ)^{\ell}\right| \nonumber \\ 
    & \leq \left|\sum_{s=1}^{m}\lambda_s(\hat{\bm{\mathsf{H}}}_{\bm \eta}^\circ)^{\ell} - \sum_{s=1}^{m}\lambda_s(\bm{\mathsf{H}}_{\bm \eta}^{\circ, -})^{\ell}\right| + \left|\sum_{s=1}^{m}\lambda_s(\bm{\mathsf{H}}_{\bm \eta}^{\circ, -})^{\ell} - \sum_{s=1}^{\infty}\lambda_{s}(\mathsf{H}_{\bm \eta}^\circ)^{\ell}\right| . 
\end{align} 
On the set $\cE_1$ as in Lemma \ref{lemma:power-sum-convg}, the second term above converges to zero as $m \rightarrow \infty$. To bound the first term, observe that 
\begin{align*}
    \left|\sum_{s=1}^{m}\lambda_s( \hat{\bm{\mathsf{H}}}_{\bm \eta}^\circ )^{\ell} - \sum_{s=1}^{m}\lambda_s( \bm{\mathsf{H}}_{\bm \eta}^{\circ, -})^{\ell}\right|\leq \sum_{s=1}^{\infty}\left|\lambda_s^{+}( \hat{\bm{\mathsf{H}}}_{\bm \eta}^\circ)^{\ell} -\lambda_s^{+}( \bm{\mathsf{H}}_{\bm \eta}^{\circ, -} )^{\ell}\right| + \sum_{s=1}^{\infty}\left|\lambda_s^{-}( \hat{\bm{\mathsf{H}}}_{\bm \eta}^\circ )^{\ell} -\lambda_s^{-}(\bm{\mathsf{H}}_{\bm \eta}^{\circ, -})^{\ell}\right| ,  
\end{align*} 
where 
\begin{itemize} 

\item $\lambda_1^{+}( \hat{\bm{\mathsf{H}}}_{\bm \eta}^\circ) \geq \lambda_2^{+}( \hat{\bm{\mathsf{H}}}_{\bm \eta}^\circ) \geq \ldots \geq 0$ are the non-negative eigenvalues of $\hat{\bm{\mathsf{H}}}_{\bm \eta}^\circ$ (and similarly for $\bm{\mathsf{H}}_{\bm \eta}^{\circ, -}$) arranged in non-increasing order.  

\item $\lambda_1^{-}( \hat{\bm{\mathsf{H}}}_{\bm \eta}^\circ) \leq \lambda_2^{+}( \hat{\bm{\mathsf{H}}}_{\bm \eta}^\circ) \leq \ldots \leq 0$ are the non-positive eigenvalues of $\hat{\bm{\mathsf{H}}}_{\bm \eta}^\circ$ (and similarly for $\bm{\mathsf{H}}_{\bm \eta}^{\circ, -}$) arranged in non-decreasing order.  

\end{itemize}
(Note that we have set $\lambda_s^{\pm}$ to zero whenever appropiate to extend the sequence to infinity.) Now, recalling the definitions of $ \bm{\mathsf{H}}_{\bm \eta}^\circ$ and $ \bm{\mathsf{H}}_{\bm \eta}^{\circ, -}$ gives, 
\begin{align}\label{eq:Hm-norm-convg}
 \|  \bm{\mathsf{H}}_{\bm \eta}^\circ - \bm{\mathsf{H}}_{\bm \eta}^{\circ, - } \| = \frac{1}{m^2} \sum_{i=1}^m  \mathsf{H}_{\bm \eta}^{\circ}(X_i, X_i)^2 \rightarrow 0 , 
\end{align}  
on a set $\bar{\cE}_2 \in \cB(\cX)$. Therefore, recalling Lemma \ref{lm:HHhatL2}, by \eqref{eq:Hm-norm-convg} and the Hoffman-Wielandt inequality \citep{eigenvalueHW} (see also \citet[Theorem 2.2]{Koltch2000}), 
\begin{align}\label{eq:lambdaHeta}
    \lim_{m \rightarrow \infty} \left( \sum_{s=1}^\infty \left( \lambda_s^{+}( \hat{\bm{\mathsf{H}}}_{\bm \eta}^\circ ) - \lambda_s^{+}(\bm{\mathsf{H}}_{\bm \eta}^{\circ, -}) \right)^2  \right)^{\frac{1}{2}} \leq \lim_{m \rightarrow \infty } \left(\|\bm{\mathsf{H}}_{\bm \eta}^{\circ, -} - \bm{\mathsf{H}}_{\bm \eta}^\circ \| + \|\hat{\bm{\mathsf{H}}}_{\bm \eta}^{\circ} - \bm{\mathsf{H}}_{\bm \eta}^\circ \| \right) = 0 . 
\end{align} 
 on the set $\cE_2: = \bar{\cE}_2\bigcap\mathcal{R}_{1}$. Moreover, by Lemma \ref{lemma:HmL2}, \eqref{eq:normheta0} and \eqref{eq:Hm-norm-convg}, on the set $\cQ_1$,  $\lim_{m \rightarrow \infty } \| \hat{\bm{\mathsf{H}}}_{\bm \eta}^\circ \| = \|\mathsf{H}_{\bm \eta}^\circ\|$. Hence, by Lemma \ref{lm:HHhatL2} on the set $\cQ_1 \bigcap \cE_2$ there is a constant $C > 0$ such that  $$\max\{ \| \hat{\bm{\mathsf{H}}}_{\bm \eta}^\circ \|, \| \bm{\mathsf{H}}_{\bm \eta}^{\circ, -} \|, \|\mathsf{H}_{\bm \eta}^\circ\| \} \leq C.$$ This implies, on the set $\cQ_1 \bigcap \cE_2$, 
\begin{align}\label{eq:eigen-bdd-omega}
 \max \{ |\lambda_s^{+}( \hat{\bm{\mathsf{H}}}_{\bm \eta}^\circ )|, |\lambda_s^{+}(\bm{\mathsf{H}}_{\bm \eta}^{\circ, -})| \} \leq \frac{C}{\sqrt{s}}, 
    \end{align} 
for all $s \geq 1$. Then using \eqref{eq:lambdaHeta}, \eqref{eq:eigen-bdd-omega}, and the dominated convergence theorem, 
 \begin{align*}
    \lim_{m \rightarrow \infty}  \sum_{s=1}^{\infty}\left|\lambda_s^{+}( \hat{\bm{\mathsf{H}}}_{\bm \eta}^\circ )^{\ell} -\lambda_s^{+}(\bm{\mathsf{H}}_{\bm \eta}^{\circ, -})^{\ell}\right| = 0 , 
    \end{align*} 
  for all $\ell \geq 3$, on the set  $\cQ_1 \bigcap \cE_2$. 
Similarly,  $\lim_{m \rightarrow \infty} \sum_{s=1}^{\infty} | \lambda_s^{-}(\hat{\bm{\mathsf{H}}}_{\bm \eta}^{\circ})^{\ell} -\lambda_s^{-}(\bm{\mathsf{H}}_{\bm \eta}^{\circ, -})^{\ell} | = 0 $, for all $\ell \geq 3$, on the set  $\cQ_1 \bigcap \cE_2$. Therefore, on the set $\cQ_2:=\cQ_1 \bigcap \cE_1 \bigcap \cE_2$, from \eqref{eq:lambdaHHhat}, 
\begin{align*}
  \lim_{m \rightarrow \infty}  \sum_{s=1}^{m}\lambda_s(\hat{\bm{\mathsf{H}}}_{\bm \eta}^\circ)^{\ell}  = \sum_{s=1}^{\infty}\lambda_{s}(\mathsf{H}_{\bm \eta}^\circ)^{\ell} , 
    \end{align*}
for all $\ell\geq 3$. Since $\P(\cQ_2) = 1$, this completes the proof of Lemma \ref{lemma:momentHm}.

\subsubsection{Proof of Lemma \ref{lemma:convergence}} 
\label{sec:convergencepf}

Recall that $\{\lambda_s(\hat{\bm{\mathsf{H}}}_{\bm \eta}^\circ)\}_{1 \leq s \leq m}$ are the eigenvalues of $\hat{\bm{\mathsf{H}}}_{\bm \eta}^\circ$. For $s > m$ define $\lambda_s(\hat{\bm{\mathsf{H}}}_{\bm \eta}^\circ)=0$. Consider, for all $m \geq 1$, 
\begin{align}\label{eq:def-Ym}
    Y_{m}:= \sum_{s=1}^{\infty}\lambda_s(\hat{\bm{\mathsf{H}}}_{\bm \eta}^\circ)(W_s^2-\gamma) . 
\end{align}
Then by \citet[Proposition 7.1]{bbbpdsm} observe that
\begin{align*}
    M_{Y_{m}|\sX_{m}}(t) := \E\left[ e^{ tY_{m} } \middle|\sX_{m}\right] = \prod_{s=1}^{m}\dfrac{\exp\left(-\gamma\lambda_s(\hat{\bm{\mathsf{H}}}_{\bm \eta}^\circ) t\right)}{\sqrt{1-2\gamma\lambda_s( \hat{\bm{\mathsf{H}}}_{\bm \eta}^\circ )t}}, \text{ for all } |t|<\frac{1}{8\gamma}\left(\sum_{s=1}^{m}\lambda_s( \hat{\bm{\mathsf{H}}}_{\bm \eta}^\circ )^2\right)^{-\frac{1}{2}} . 
\end{align*}
By definition, $\sum_{s=1}^{m}\lambda_s(\hat{\bm{\mathsf{H}}}_{\bm \eta}^\circ)^2 = \sum_{s=1}^{\infty}\lambda_s(\hat{\bm{\mathsf{H}}}_{\bm \eta}^\circ)^2 = \| \hat{\bm{\mathsf{H}}}_{\bm \eta}^\circ \|^2$. Taking logarithm and expanding gives, 
\begin{align}\label{eq:MGF-Hm-log}
    \log M_{Y_{m}|\sX_{m}}(t) 
    & = \gamma^{2}t^{2}\| \hat{\bm{\mathsf{H}}}_{\bm \eta}^\circ \|^{2} + \frac{1}{2}\sum_{k=3}^{\infty}\sum_{s=1}^{m}\dfrac{(2\gamma\lambda_s(\hat{\bm{\mathsf{H}}}_{\bm \eta}^\circ)t)^k}{k} . 
\end{align}
Also, Denote $Z(\mathsf{H}_{\bm \eta}^\circ) = \sum_{s=1}^{\infty}\lambda_s(\mathsf{H}_{\bm \eta}^\circ)(Z_{s}^2-\gamma)$. Then by Lemma \ref{lm:ZHM}, 
\begin{align}\label{eq:MGF-h-tilde-log}
    & \log M_{Z(\mathsf{H}_{\bm \eta}^\circ)}(t) \nonumber \\ 
    & = \log \E\exp\left(tZ(\mathsf{H}_{\bm \eta}^\circ)\right) \nonumber \\ 
    & = \gamma^{2}t^{2}\|\mathsf{H}_{\bm \eta}^\circ\|^{2} + \frac{1}{2}\sum_{k=3}^{\infty}\sum_{s=1}^{\infty}\dfrac{(2\gamma\lambda_s(\mathsf{H}_{\bm \eta}^\circ)t)^{k}}{k} \text{ for all } |t|<\frac{1}{8\gamma}(\sum_{s=1}^{\infty}\lambda_s( \mathsf{H}_{\bm \eta}^\circ )^2)^{-\frac{1}{2}} . 
\end{align}

Let $\cQ_1$ and $\cQ_2$ be as in Lemma \ref{lemma:HmL2} and Lemma \ref{lemma:momentHm}, respectively, and define $\cQ_0= \cQ_1 \bigcap Q_2$. On $\cQ_0$, there exists a constant $C > 0$ such that $\|\hat{\bm{\mathsf{H}}}_{\bm \eta}^\circ \| < C$ (by \eqref{eq:HmL2}) and $\|\mathsf{H}_{\bm \eta}^\circ\|<C$ (since $\mathsf{H}_{\bm \eta}^\circ \in L^2(\cX^2, P^2)$). Then by \eqref{eq:MGF-Hm-log} and \eqref{eq:MGF-h-tilde-log}, both MGF's exists for $|t|\leq \frac{1}{8\gamma C}$ on $\cQ_0$. Hereafter, we will assume that the we are on the set $\cQ_0$. Using 
$\sum_{s=1}^{\infty}\lambda_s(\mathsf{H}_{\bm \eta}^\circ)^2 = \|\mathsf{H}_{\bm \eta}^\circ\|^2<\infty$ gives,  for all $s \geq 1$, 
\begin{align}\label{eq:bdd-lambda}
    \left|\lambda_s(\mathsf{H}_{\bm \eta}^\circ)\right|\leq \dfrac{\|\mathsf{H}_{\bm \eta}^\circ\|}{\sqrt{s}}\leq\dfrac{C}{\sqrt{s}} . 
\end{align} 
Then notice that for all $|t| \leq \frac{1}{8\gamma C} <\frac{1}{8\gamma}\|\mathsf{H}_{\bm \eta}^\circ\|^{-1}$,
\begin{align}\label{eq:Sh-mgf-abs-summable}
    \sum_{k=3}^{\infty}\sum_{s=1}^{\infty}\left|\dfrac{(2\gamma\lambda_s(\mathsf{H}_{\bm \eta}^\circ)t)^k}{k}\right|\leq \sum_{k=3}^{\infty}\sum_{s=1}^{\infty}\dfrac{(2\gamma)^{k}\|\mathsf{H}_{\bm \eta}^\circ\|^k}{(8\gamma)^{k}\|\mathsf{H}_{\bm \eta}^\circ\|^{k} s^{k/2}k}\leq \sum_{k=3}^{\infty}\sum_{s=1}^{\infty}\dfrac{1}{4^{k} s^{3/2}k}<\infty , 
\end{align} 
and hence the second term in \eqref{eq:MGF-h-tilde-log} is absolutely summable.
Similarly, since $\sum_{s=1}^{\infty}\lambda_s(\hat{\bm{\mathsf{H}}}_{\bm \eta}^\circ)^2 = \|\hat{\bm{\mathsf{H}}}_{\bm \eta}^\circ \|^2\leq C$, 
\begin{align}\label{eq:bdd-lambda-m}
    \left|\lambda_s(\hat{\bm{\mathsf{H}}}_{\bm \eta}^\circ)\right|\leq \dfrac{C}{\sqrt{s}}, 
 \end{align}
 for all $s \geq 1$, and the second term in \eqref{eq:MGF-Hm-log} is also absolutely summable for all $|t|\leq\frac{1}{8\gamma C} $. 
Now, for any $N \geq 1$ and for all $|t|\leq \frac{1}{8\gamma C}$, 
\begin{align}\label{eq:bdd-MGF-expansion} 
    \bigg|\sum_{k=3}^{\infty}\sum_{s=1}^{\infty}\dfrac{(2\gamma\lambda_s(\hat{\bm{\mathsf{H}}}_{\bm \eta}^\circ)t)^k}{k}
    & - \sum_{k=3}^{\infty}\sum_{s=1}^{\infty}\dfrac{(2\gamma\lambda_s(\mathsf{H}_{\bm \eta}^\circ)t)^k}{k}\bigg|\nonumber\\
    \leq &  \left|\sum_{k=3}^{N}\sum_{s=1}^{\infty}\dfrac{(2\gamma\lambda_s( \hat{\bm{\mathsf{H}}}_{\bm \eta}^\circ )t)^k}{k} - \sum_{k=3}^{N}\sum_{s=1}^{\infty}\dfrac{(2\gamma\lambda_s(\mathsf{H}_{\bm \eta}^\circ)t)^k}{k}\right|\nonumber\\
    &+ \sum_{k=N+1}^{\infty}\left[\left|\sum_{s=1}^{\infty}\dfrac{(2\gamma\lambda_s( \hat{\bm{\mathsf{H}}}_{\bm \eta}^\circ )t)^k}{k}-\sum_{s=1}^{\infty}\dfrac{(2\gamma\lambda_s(\mathsf{H}_{\bm \eta}^\circ)t)^k}{k}\right|\right] . 
\end{align} 
Note that the first term in \eqref{eq:bdd-MGF-expansion} converges to zero as $m \rightarrow 0$ on $\cQ_0$ (by \eqref{eq:momentHm}). Therefore, it suffices to show that the second term converges to zero in \eqref{eq:bdd-MGF-expansion}. Towards this, note that for $k\geq 3$, by \eqref{eq:bdd-lambda} and \eqref{eq:bdd-lambda-m}, 
\begin{align*}
    \left|\sum_{s=1}^{\infty}\dfrac{(2\gamma\lambda_s( \hat{\bm{\mathsf{H}}}_{\bm \eta}^\circ )t)^k}{k}-\sum_{s=1}^{\infty}\dfrac{(2\gamma\lambda_s(\mathsf{H}_{\bm \eta}^\circ)t)^k}{k}\right|
    & \leq \dfrac{(2\gamma)^{k}|t|^{k}}{k}\sum_{s=1}^{\infty} \left\{ \left|\lambda_s( \hat{\bm{\mathsf{H}}}_{\bm \eta}^\circ )\right|^{k} + \left|\lambda_s(\mathsf{H}_{\bm \eta}^\circ)\right|^{k} \right\} \\
    & \leq \dfrac{2C^{k}(2\gamma)^{k}|t|^{k}}{k}\sum_{s=1}^{\infty}\frac{1}{s^{k/2}}\\
    &\leq \dfrac{2C^{k}(2\gamma)^{k}|t|^{k}}{k} \sum_{s=1}^{\infty}\frac{1}{s^{3/2}} .  
\end{align*}
Then for $|t|\leq \frac{1}{8\gamma C}$, 
\begin{align*} 
\sum_{k=N+1}^{\infty}\left[\left|\sum_{s=1}^{\infty}\dfrac{(2\gamma\lambda_s( \hat{\bm{\mathsf{H}}}_{\bm \eta}^\circ )t)^k}{k}-\sum_{s=1}^{\infty}\dfrac{(2\gamma\lambda_s(\mathsf{H}_{\bm \eta}^\circ)t)^k}{k}\right|\right] \leq 4 \sum_{s=1}^{\infty}\frac{1}{s^{3/2}} \sum_{k=N+1}^{\infty}\dfrac{1}{4^{k}k} , 
\end{align*}
which converges to zero as $m\rightarrow\infty$ and then $N\rightarrow\infty$. This implies the RHS of \eqref{eq:bdd-MGF-expansion} converges to zero as $m\rightarrow\infty$ and then $N\rightarrow\infty$. 
Thus, by \eqref{eq:MGF-Hm-log}, \eqref{eq:MGF-h-tilde-log}, and Lemma \ref{lemma:HmL2}, on the set $\cQ_0$, 
\begin{align*}
    \lim_{m \rightarrow \infty} M_{Y_{m}|\sX_{m}}(t) = M_{Z(\mathsf{H}_{\bm \eta}^\circ)}(t), \text{ for all } |t|<\frac{1}{8\gamma C} . 
\end{align*} 
Hence, recalling \eqref{eq:def-Ym}, $Y_{m}|\sX_{m} \overset{D}{\rightarrow}Z(\mathsf{H}_{\bm \eta}^\circ)$ on the set $\cQ_0$. Since $\P(\cQ_0) = 1$, this completes the proof of Lemma \ref{lemma:convergence}. \hfill $\Box$

\section{Proof of Theorem \ref{thm:H0NK}} 
\label{sec:localpowerpf}

Suppose $\mathsf H \in L^{2}(\mathcal{X}^2, P^2)$ is a measurable and symmetric function (not necessarily positive definite) and recall the definition of $\emmd \left[\mathsf H, \sX_m, \sY_n \right] $ from \eqref{eq:H}. Note that 
\begin{align}\label{eq:MMDK}
\emmd \left[\mathsf H, \sX_m, \sY_n \right] = \mathcal W_{\sX_m} + \mathcal W_{\sY_n} - 2 \mathcal B_{\sX_m, \sY_n} = \mathcal W_{\sX_m}^\circ + \mathcal W_{\sY_n}^\circ - 2 \mathcal B_{\sX_m, \sY_n}^\circ , 
\end{align}
where $\mathsf{H}^\circ$ is as in \eqref{eq:Kxycentered} and 
\begin{align*}
 \mathcal W_{\sX_m}^\circ :=   \frac{1}{m(m-1)}\sum_{1 \leq i \ne j \leq m} \mathsf H^\circ \left(X_{i},X_{j}\right) \text{ and }  \mathcal W_{\sY_n}^\circ := \frac{1}{n(n-1)}\sum_{1 \leq i \ne j \leq n} \mathsf H^\circ \left(Y_{i},Y_{j}\right) 
 \end{align*} 
and  $$\mathcal B_{\sX_m, \sY_n}^\circ :=  \frac{1}{mn}\sum_{i=1}^{m}\sum_{j=1}^{n} \mathsf H^\circ \left(X_{i},Y_{j}\right) . $$ 
Therefore, to obtain the limiting distribution of $\emmd \left[\mathsf H, \sX_m, \sY_n \right] $ we need to derive the joint distribution of $(\mathcal W_{\sX_m}^\circ$, $\mathcal W_{\sY_n}^\circ, \mathcal B_{\sX_m, \sY_n}^\circ)$ under $H_1$. To this end, recall the definition of the Hilbert-Schmidt operator $\cH_{\mathsf{H}^\circ}$ from \eqref{eq:kernelH}. This operator has countably many eigenvalues $\{\lambda_s\}_{s \geq 1}$ with eigenvectors $\{\phi_s\}_{s\geq 1}$ satisfying: 
\begin{align}\label{eq:Heigenvectors}
    \int_{\mathcal{X}} \mathsf{H}^\circ (x,y)\phi_{s}(y) \mathrm d P(y) = \lambda_{s}\phi_{s}(x) \text{ and }\int_{\mathcal{X}}\phi_{s}(x)\phi_{s'}(x) \mathrm d P(x) = \delta_{s, s'} , 
\end{align}
for $s, s' \geq 1$ and $\delta_{s, s'}=1$ if $s=s'$ and zero otherwise. 
Note that, since $\E_{X \sim P} [\mathsf{H}^\circ (X, y)] = 0$, for all $y\in \mathcal{X}$, whenever $\lambda_{s}\neq 0$, an application of Fubini's theorem and \eqref{eq:Heigenvectors} implies that 
$\E_{X \sim P}[\phi_{s}(X)] = 0$. Moreover, (see, for example, \citet[Theorem 4, Chapter X and Section XI.6]{dunford1965linear} or \citet[Theorem 8.94 and Theorem 8.83]{renardy2006introduction}), 
\begin{align}\label{eq:Hintegral}
\sum_{s=1}^\infty \lambda_s^2 = \int_{\cX^2} \mathsf{H}^\circ(x, y)^2 \mathrm d x \mathrm dy = \| \mathsf{H}^\circ \|^2 < \infty , 
\end{align}
and the spectral theorem, 
\begin{align}\label{eq:Hexpansion}
    \mathsf{H}^\circ (x,y) = \sum_{s=1}^{\infty} \lambda_{s}\phi_{s}(x)\phi_{s}(y),
\end{align}
where the convergence is in $L^2$.

The following result gives the joint distribution of $(\mathcal W_{\sX_m}^\circ$, $\mathcal W_{\sY_n}^\circ, \mathcal B_{\sX_m, \sY_n}^\circ)$ and, hence, that of $\emmd \left[\mathsf H, \sX_m, \sY_n \right]$ under contiguous local alternatives \eqref{eq:H0N} in the contamination model \eqref{eq:fPQ}.

\begin{proposition}  
\label{ppn:H0N} 
Suppose $\mathsf{H} \in L^{2}(\mathcal{X}^2, P^2)$ be a measurable and symmetric function. Then under $H_1$ as in \eqref{eq:fPQ} in the asymptotic regime \eqref{eq:mn}, 
\begin{align}\label{eq:QmnLH2}
  \begin{pmatrix}
    m \cW_{\sX_m}^{\circ}  \\ 
    n \cW_{\sY_n}^{\circ}  \\ 
    \sqrt{mn} \cB_{\sX_m, \sY_n}^{\circ} 
        \end{pmatrix} 
    \dto 
    \begin{pmatrix}
    \sum_{s=1}^\infty \lambda_{s}\left(W_{s}^{2}-1\right)\\
    \sum_{s=1}^\infty \lambda_{s}\left( \left(W_s' + h \sqrt{1-\rho} L_s \right)^{2}-1\right)\\
    \sum_{s=1}^\infty \lambda_{s} W_s \left( W_s'+ h \sqrt{1-\rho} L_s \right)
    \end{pmatrix} , 
\end{align}
where 
\begin{itemize} 

\item $\{W_{s}, W'_{s}: s \geq 1\}$ are independent standard Gaussian random variables, 

\item $\{\lambda_{s}\}_{s \geq 1}$  are the eigenvalues (with repetitions) and the eigenvectors $\{\phi_s\}_{s\geq 1}$ of the Hilbert-Schimdt operator $\mathcal{H}_{\mathsf{H}_{\bm \eta}^\circ}$ as in \eqref{eq:Heigenvectors},  

\item $L_s :=\E_{X \sim P} [ \frac{\phi_s(X) g(X)}{f_P(X)} ]$, for $s \geq 1$.

\end{itemize}
Consequently, under $H_1$, 
\begin{align}\label{eq:H0Ndistribution}
    (m+n)\mathrm{MMD}^2\left[\mathsf{H}, \sX_m,\sY_n\right]   \dto \tilde Z(\mathsf{H})  :=  \gamma \sum_{s=1}^{\infty}\lambda_{s}\left(\left(Z_s + \frac{h}{\sqrt{\gamma}} L_s \right)^{2} - 1 \right)  , 
    \end{align}
where $\gamma = \frac{1}{\rho(1-\rho)}$, $\{Z_s : s \geq 1\}$ are i.i.d. $\cN(0, 1)$. Moreover, 
\begin{align}\label{eq:ZHexpectation}
\E[\tilde Z(\mathsf{H})] = h^2 \sum_{s=1}^\infty \lambda_s L_s^2 = h^2 \E_{X, X' \sim P} \left[\mathsf{H}^\circ(X, X') \frac{g(X) g(X') }{f_P(X) f_P(X')} \right]  < \infty
\end{align} 
and the characteristic function of $\tilde Z(\mathsf{H})$ at $t \in \R$ is  given by: 
\begin{align}\label{eq:H0NexpZ}
\Phi_{\tilde Z(\mathsf{H})} (t)  := \E\left[ e^{\iota t \tilde Z(\mathsf{H})} \right]   = \frac{e^{ \iota t h^2 \sum_{s=1}^\infty \lambda_s L_s^2 - \sum_{s=1}^\infty \left\{  \iota \gamma \lambda_s  t   + \frac{ \gamma h^2 \lambda_s^2 L_s^2 t^2 }{( 1 -2 \iota \lambda_s \gamma t ) }   \right\} } }{ \prod_{s=1}^\infty \sqrt{1- 2  \iota \gamma \lambda_s  t }} .  
\end{align}  
\end{proposition}

The proof of Proposition \ref{ppn:H0N} is given in Section \ref{sec:H0Npf}. The first step in the proof is to truncate the asymptotic expansions of $(\mathcal W_{\sX_m}^\circ$, $\mathcal W_{\sY_n}^\circ, \mathcal B_{\sX_m, \sY_n}^\circ)$ based on the spectral theorem. Next, we show that the summands that appear in the truncation together with the log-likelihood ratio has a multivariate normal distribution under $H_0$. Then invoking Le Cam's third lemma \citep[Example 6.7]{van2000asymptotic} we obtain the asymptotic distribution of truncated between and within kernel discrepancies under $H_1$ (Lemma \ref{lm:H0NUVmn}). To complete the proof we need to show that tail in the truncation is asymptotically negligible, which leverages the fact the eigenvalues of the kernels are summable in the $L_2$ sense (see Lemma \ref{lm:L} and Lemma \eqref{lm:Wxymn}).

\begin{remark}
The arguments in the proof of Proposition \ref{ppn:H0N} is similar those in \citet{chikkagoudar2014limit} on the limiting distribution of  degenerate two-sample $U$-statistics for parametric contiguous alternatives. However, the computations are different for the contamination model and also, because our ultimate goal is to derive the joint distribution of the vector of MMD estimates under local alternatives, it is convenient for us in Theorem \ref{thm:H0NK} to express the limiting distribution in terms of stochastic integrals.
\end{remark}

We now apply Proposition \ref{ppn:H0N} show it can be used to complete the proof of Theorem \ref{thm:H0NK}. As in \eqref{eq:etaMMD}, for $\bm \eta = (\eta_1, \eta_2, \ldots, \eta_r)^\top \in \mathbb{R}^r$,
\begin{align}\label{eq:etaH0N}
     \bm \eta^\top   \mathrm{MMD}^{2}[ \cK, \sX_m, \sY_n ] = \sum_{a=1}^r  \eta_a \mathrm{MMD}^{2}[ \sfK_a, \sX_m, \sY_n ] & = \mathrm{MMD}^{2}[ \mathsf{H}_{\bm \eta}, \sX_m, \sY_n ] , 
\end{align} 
where $\mathsf{H}_{\bm \eta} := \sum_{a=1}^r \eta_a \sfK_a$. 
Then by Proposition \ref{ppn:H0N}, under $H_1$, 
\begin{align}\label{eq:ZHlinear}
Z_{m, n}(\mathsf{H}_{\bm \eta})  & := (m+n)   \mathrm{MMD}^{2} \left[\mathsf{H}_{\bm \eta}, \sX_m, \sY_n \right] \nonumber \\ 
& \dto \tilde Z(\mathsf{H}_{\bm \eta}) = \gamma \sum_{s=1}^{\infty}\lambda_{s}\left(\left(Z_s + \frac{h}{\sqrt{\gamma}} L_s \right)^{2} - 1 \right) , 
\end{align}
where, by \eqref{eq:H0Ndistribution}, $\{\lambda_{s}\}_{s \geq 1}$ are the eigenvalues (with repetitions) and the eigenvectors $\{\phi_s\}_{s\geq 1}$ of the operator $\cH_{\mathsf{H}_{\bm \eta}^\circ}$. 

Note that by the linearity of the stochastic integral and arguments as in \eqref{eq:alternatelimit}, 
\begin{align}\label{eq:ZHT1}
\sum_{a=1}^r  \eta_a  I_2(\sfK_a^\circ)  = I_2(\mathsf{H}_{\bm \eta}^\circ) 
 = \int_{\cX} \int_{\cX} \mathsf{H}_{\bm \eta}^\circ (x, y) \mathrm d \cZ_P(x) \mathrm d \cZ_P(y)   \stackrel{D} = \sum_{s=1}^{\infty}\lambda_{s} \left( Z_s^2 - 1 \right) . 
\end{align} 
where $\{Z_s\}_{s \geq 1} \stackrel{D} = \left\{\int_{\cX} \phi_s(x) \mathrm d \cZ_P(x)\right \}_{s \geq 1}$. This also implies, 
\begin{align}\label{eq:ZHT2}
\sum_{s=1}^\infty \lambda_s \E_{X \sim P} \left[ \frac{\phi_s(X) g(X)}{f_P(X)} \right] Z_s & = \sum_{s=1}^\infty \lambda_s  Z_s\int_{\cX} \phi_s(x)g(x)\mathrm d x\nonumber \\ 
& \stackrel{D} = \sum_{s=1}^\infty \lambda_s \int \phi_s(x) g(x) \mathrm d x  \left( \int_{\cX} \phi_s(y) \mathrm d \cZ_P(y) \right)\nonumber \\ 
& = \left(\int_{\cX} \mathsf{H}_{\bm \eta}^\circ (x, y) g(x) \mathrm d x \right) \mathrm d \cZ_P(y) \nonumber \\ 
& = I_1\left(\mathsf{H}_{\bm \eta}^{\circ}\left[\frac{g}{f_P}\right]\right) = \sum_{a=1}^r  \eta_a  I_1\left(\sfK_a^{\circ}\left[\frac{g}{f_P}\right]\right) , 
\end{align}  
where the notations are as defined in Theorem \ref{thm:H0NK}. 
Also, by \eqref{eq:ZHexpectationpf}, 
\begin{align}\label{eq:ZHT3} 
\sum_{s=1}^{\infty} \lambda_s L_s^2 & = \E_{X, X' \sim P} \left[ \mathsf{H}_{\bm \eta}^\circ(X, X') \frac{g(X) g(X') }{f_P(X) f_P(X')}  \right] \nonumber \\ 
 & = \sum_{a=1}^r  \eta_a \E_{X, X' \sim P} \left[ \mathsf{K}_a ^\circ(X, X') \frac{g(X) g(X') }{f_P(X) f_P(X')}  \right] .
\end{align} 
Using \eqref{eq:ZHT1}, \eqref{eq:ZHT2} and \eqref{eq:ZHT3} in \eqref{eq:ZHlinear} shows that $\tilde Z(\mathsf{H}_{\bm \eta}) \stackrel{D} = \bm \eta^\top G_{\cK, h}$ 
where $G_{\cK, h} $ is as defined in \eqref{eq:H0NGK}. This implies, from \eqref{eq:etaH0N},  $$\bm \eta^\top   \mathrm{MMD}^{2}[ \cK, \sX_m, \sY_n ] \dto \bm \eta^\top G_{\cK, h}.$$ Since $\bm \eta \in \R^r$ is arbitrary, this completes the proof of Theorem \ref{thm:H0NK}.

\subsection{Proof of Proposition \ref{ppn:H0N}}
\label{sec:H0Npf}

Fix $L \geq 1$ and define the $L$-truncated versions of $\mathcal W_{\sX_m}$, $\mathcal W_{\sY_n}$, and $\mathcal B_{\sX_m, \sY_n} $ as follows: 
\begin{align}\label{eq:BWxy}
 \mathcal W_{\sX_m}^{\circ (L)}  & := \frac{1}{m(m-1)}  \sum_{s=1}^{L} \sum_{1 \leq i\neq j \leq m}  \lambda_{s}\phi_{s}(X_i)\phi_{s}(X_j) \nonumber\\
 & = \frac{1}{m(m-1)}\sum_{s=1}^{L} \lambda_{s}\left(\left(\sum_{i=1}^m \phi_{s}(X_{i})\right)^{2}-\sum_{i=1}^m \phi_{s}^{2}(X_{i})\right), \nonumber \\
\mathcal W_{\sY_n}^{\circ (L)}   & := \frac{1}{n(n-1)}  \sum_{s=1}^{L} \sum_{1 \leq i\neq j \leq n}  \lambda_{s}\phi_{s}(Y_i)\phi_{s}(Y_j) \nonumber \\
& = \frac{1}{n(n-1)}\sum_{s=1}^{L} \lambda_{s}\left(\left(\sum_{i=1}^n \phi_{s}(Y_{i})\right)^{2}-\sum_{i=1}^n \phi_{s}^{2}(Y_{i})\right) ,  \nonumber \\ 
\mathcal B_{\sX_m, \sY_n}^{\circ (L)} & :=  \frac{1}{m n} \sum_{s=1}^{L}  \lambda_{s} \sum_{i=1}^m \sum_{j=1}^n \phi_{s}(X_i)\phi_{s}(Y_j) . 
\end{align} 
Define $U_{s, m} := \frac{1}{\sqrt m} \sum_{i=1}^m \phi_s(X_i)$ and $V_{s, n} := \frac{1}{\sqrt n} \sum_{i=1}^n \phi_s(Y_i)$, for $1 \leq s \leq L$ and the vectors 
\begin{align}\label{eq:UVm}
\bm U_{m}^{(L)} = (U_{s, m})_{1 \leq s \leq L} \quad \text{ and } \quad \bm V_{n}^{(L)} = (V_{s, n})_{1 \leq s \leq L}. 
\end{align}
Note that, under $H_0$, by the law of large numbers and \eqref{eq:Heigenvectors} 
$$\frac{1}{m} \sum_{i=1}^m \phi^2_s(X_i) \pto \E_{X \sim P}[\phi_s(X)^2] = 1 \quad  \text{ and } \quad \frac{1}{n} \sum_{i=1}^n \phi^2_s(Y_i) \pto \E_{Y \sim P}[\phi_s(Y)^2] = 1 . $$ Hence, recalling \eqref{eq:BWxy}, under $H_0$, 
\begin{align}\label{UxVy}
(m-1)\mathcal W_{\sX_m}^{\circ (L)}  = \sum_{s=1}^{L} \lambda_{s}\left(U_{s, m}^2  - 1 \right) + o_P(1), \quad (n-1) \mathcal W_{\sY_n}^{\circ (L)}  = \sum_{s=1}^{L} \lambda_{s}\left(V_{s, n}^2  - 1 \right) + o_P(1) , 
\end{align} 
and 
\begin{align}\label{eq:UVxy}
\sqrt{mn} \mathcal B_{\sX_m, \sY_n}^{\circ (L)} & :=  \sum_{s=1}^{L}  \lambda_{s} U_{s, m} V_{s, n} . 
\end{align}
Therefore, obtain the joint distribution of  $(\mathcal W_{\sX_m}^{\circ (L)}, \mathcal W_{\sY_n}^{\circ (L)}, \mathcal B_{\sX_m, \sY_n}^{\circ (L)})^\top$ it suffices to find the joint distribution of 
\begin{align}\label{eq:QmnL}
\bm Q_{m, n}^{(L)} := \left( \sum_{s=1}^{L} \lambda_{s}\left(U_{s, m}^2  - 1 \right), \sum_{s=1}^{L} \lambda_{s}\left(V_{s, n}^2  - 1 \right),  \sum_{s=1}^{L}  \lambda_{s} U_{s, m} V_{s, n} \right)^\top .  
\end{align}
This is derived in the following lemma. Here, $\bm 0$ denotes the zero vector in $\R^L$ and $\bm{I}_{2L}$ denotes the $2L \times 2L$ identity matrix.

\begin{lemma}\label{lm:H0NUVmn} Fix $L \geq 1$ and suppose  $\bm U_{m}^{(L)}$ and $\bm V_{n}^{(L)}$ be as defined in \eqref{eq:UVm}. Then under $H_1$ as in \eqref{eq:H0N} the following hold in the asymptotic regime \eqref{eq:mn}, 
\begin{align}\label{eq:UVmnH1}
\begin{pmatrix}
    \bm U_{m}^{(L)} \\ 
    \bm V_n^{(L)} 
        \end{pmatrix}
  \dto \cN_{2L} \left ( 
        \begin{pmatrix}
   \bm 0 \\
   \bm \theta  
     \end{pmatrix} , \bm{I}_{2L}
        \right) . 
\end{align} 
where $\bm \theta := h \sqrt{1 - \rho} \cdot ( L_1, L_2 \ldots, L_r)^\top$ and $L_s$ as in Proposition \ref{ppn:H0N}. Consequently, 
\begin{align}\label{eq:QmnLH1}
  \begin{pmatrix}
    (m-1) \cW_{\sX_m}^{\circ (L)}  \\ 
    (n-1) \cW_{\sY_n}^{\circ (L)}  \\ 
    \sqrt{mn} \cB_{\sX_m, \sY_n}^{\circ (L)} 
        \end{pmatrix} 
    \dto 
      \bm Q^{(L)} : = 
    \begin{pmatrix}
    \sum_{s=1}^L\lambda_{s}\left(W_{s}^{2}-1\right)\\
    \sum_{s=1}^L \lambda_{s}\left( \left(W_s' + h \sqrt{1-\rho} L_s \right)^{2}-1\right)\\
    \sum_{s=1}^L \lambda_{s} W_s \left( W_s'+ h \sqrt{1-\rho} L_s    \right)
    \end{pmatrix} , 
\end{align} 
where $\{W_{s}, W'_{s}: s \geq 1\}$ are independent standard Gaussian random variables. 
\end{lemma}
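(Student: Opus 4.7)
The plan is to first establish the joint asymptotic normality \eqref{eq:UVmnH1} via Le Cam's third lemma, and then deduce \eqref{eq:QmnLH1} by continuous mapping paired with a law-of-large-numbers reduction. Under the null $P^{\otimes m}\otimes P^{\otimes n}$, both $\bm U_m^{(L)}$ and $\bm V_n^{(L)}$ are standardized sums of i.i.d.\ random vectors whose coordinates have mean zero and whose covariance equals $\bm I_L$, by the orthonormality relations in \eqref{eq:Heigenvectors} together with $\E_P[\phi_s(X)]=0$. Independence of $\sX_m$ and $\sY_n$ and the multivariate CLT therefore give $(\bm U_m^{(L)}, \bm V_n^{(L)})^\top \dto \cN_{2L}(\bm 0, \bm I_{2L})$ under $H_0$.

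To transfer this limit to $H_1$, I would work with the log-likelihood ratio $\Lambda_N := \sum_{i=1}^n \log(1 + (h/\sqrt{N})\,\xi(Y_i))$, where $\xi(y) := g(y)/f_P(y) - 1$ belongs to $L^2(P)$ with mean zero and variance $\sigma^2 := \Var_P[g(X)/f_P(X)]$ by Assumption \ref{assumption:fPQ}. A standard Taylor expansion combined with the law of large numbers yields
\begin{align*}
\Lambda_N = \frac{h}{\sqrt{N}} \sum_{i=1}^n \xi(Y_i) - \frac{h^2(1-\rho)\sigma^2}{2} + o_P(1)
\end{align*}
under $H_0$, which is of LAN form and therefore certifies contiguity. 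Applying the multivariate CLT to the i.i.d.\ vectors $(\phi_1(Y_i), \ldots, \phi_L(Y_i), \xi(Y_i))^\top$ shows that $(\bm U_m^{(L)}, \bm V_n^{(L)}, \Lambda_N)$ is jointly asymptotically normal under $H_0$; the asymptotic covariance $\Cov_{H_0}(V_{s,n}, \Lambda_N)$ evaluates to $(h\sqrt n/\sqrt N)\,\E_P[\phi_s(Y)\xi(Y)] \to h\sqrt{1-\rho}\,L_s$ (using $\E_P[\phi_s]=0$), while $\bm U_m^{(L)}$ has zero asymptotic covariance with $\Lambda_N$ since it is measurable with respect to $\sX_m$. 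Le Cam's third lemma then shifts the limiting mean of $\bm V_n^{(L)}$ by $\bm\theta$ and leaves $\bm U_m^{(L)}$ centered, establishing \eqref{eq:UVmnH1}.

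For \eqref{eq:QmnLH1} I would exploit the representation from \eqref{eq:BWxy},
\begin{align*}
(m-1)\cW_{\sX_m}^{\circ (L)} = \sum_{s=1}^L \lambda_s \Bigl(U_{s,m}^2 - \tfrac{1}{m}\sum_{i=1}^m \phi_s(X_i)^2\Bigr),
\end{align*}
together with its analogue for $\cW_{\sY_n}^{\circ (L)}$ and the identity $\sqrt{mn}\,\cB_{\sX_m,\sY_n}^{\circ(L)} = \sum_{s=1}^L \lambda_s U_{s,m} V_{s,n}$. The strong law of large numbers under $P$ gives $\tfrac{1}{m}\sum_{i=1}^m \phi_s(X_i)^2 \to 1$ almost surely, and contiguity transfers the analogous in-probability convergence of $\tfrac{1}{n}\sum_{i=1}^n \phi_s(Y_i)^2$ to the alternative. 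Combined with \eqref{eq:UVmnH1}, Slutsky's theorem and the continuous mapping theorem applied to the polynomial $(\bm u,\bm v)\mapsto (\sum_s \lambda_s u_s^2,\sum_s \lambda_s v_s^2,\sum_s \lambda_s u_s v_s)$ produce the claimed joint weak limit $\bm Q^{(L)}$, with the $-1$ terms in the first two coordinates exactly accounting for the subtracted second-moment averages.

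The main obstacle is the careful execution of Le Cam's third lemma: verifying the joint asymptotic normality of $(\bm U_m^{(L)}, \bm V_n^{(L)}, \Lambda_N)$ under $H_0$ with the precise covariance structure described above. This reduces to a standard multivariate CLT once one confirms that the summands $(\phi_1(Y_i),\ldots,\phi_L(Y_i),\xi(Y_i))$ have finite second moments under $P$, which is guaranteed by the eigenvector orthonormality and Assumption \ref{assumption:fPQ}. Notably, this route avoids imposing any higher-moment condition on the $\phi_s$ under $G$ that a direct triangular-array CLT for $Y_i\sim Q_n$ would otherwise require.
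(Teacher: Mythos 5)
Your proposal is correct and follows essentially the same route as the paper: a LAN expansion of the log-likelihood ratio, a joint multivariate CLT for $(\bm U_m^{(L)}, \bm V_n^{(L)}, \dot L_N)$ under $H_0$ with the covariance $\Cov(V_{s,n},\dot L_N)\to h\sqrt{1-\rho}\,L_s$, Le Cam's third lemma for \eqref{eq:UVmnH1}, and then Slutsky plus continuous mapping for \eqref{eq:QmnLH1}. Your explicit remark that contiguity is what transfers the $\tfrac1n\sum_i\phi_s(Y_i)^2\to 1$ step to the alternative is a point the paper leaves implicit, but it is not a different argument.
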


\begin{proof} To prove \eqref{eq:UVmnH1} we will first derive the joint distribution of $ \bm U_{m}^{(L)}$, $\bm V_n^{(L)}$, and the log-likelihood ratio, and then invoke LeCam's third lemma \citet[Example 6.7]{van2000asymptotic}. For the hypothesis in \eqref{eq:fPQ} the likelihood ratio $L_N$ is given by: 
$$L_N:=\sum_{i=1}^n \log\left[ \frac{\left(1-\frac{h}{\sqrt{N}}\right)f_P(Y_i)+\frac{h}{\sqrt{N}}g(Y_i)}{f_P(Y_i)}\right].$$ By local asymptotic normality (see,
for example, \citet[Chapter 7]{van2000asymptotic}), $L_N$ can be written as:
$$L_N=\dot{L}_N-\frac{(1-\rho) h^2}{2} \cdot \delta_{f_P, g} +o_{P}(1), $$ where $$\dot{L}_N=\frac{h}{\sqrt{N}}\sum_{i=1}^n \left(\frac{g(Y_i)}{f_P(Y_i)}-1\right) \text{ and } \delta_{f_P, g} := \int_{\cX} \left(\frac{g(x)}{f_P(x)}-1\right)^2 f_P(x) \mathrm d x. $$
Note that $\Cov[U_{s, m} \dot L_N] = 0$ and 
\begin{align*}
\Cov[V_{s, n} \dot L_N]  = \E[V_{s, n} \dot L_N] 
& = \sqrt{\frac{n}{N}} \cdot \frac{h}{n} \sum_{i=1}^n \E \left[ \phi_s(Y_i) \left(\frac{g(Y_i)}{f_P(Y_i)}-1\right) \right ]  \nonumber \\ 
& \pto h \sqrt{1-\rho} \cdot \E_{X \sim P} \left[ \frac{\phi_s(X) g(X)}{f_P(X)}  \right ]  ,
\end{align*}
by the law of large numbers and the fact $\E_{X \sim P} [\phi_s(X)]= 0$ (since we can assume $\lambda_s \neq 0$). Hence, by the multivariate central limit theorem, under $H_0$, 
\begin{align}\label{eq:UVmnH0}
\begin{pmatrix}
    \bm U_{m}^{(L)} \\ 
    \bm V_n^{(L)} \\ 
    \dot L_N 
        \end{pmatrix}
        \dto \cN_{2L+1} \left ( 
        \begin{pmatrix}
   \bm 0_{L \times 1}\\
   \bm 0_{L \times 1}\\
   - \frac{(1-\rho) h^2}{2} \delta_{f_P, g}
    \end{pmatrix} , 
     \begin{pmatrix}
   \bm{I}_L & \bm 0_{L \times L} & \bm 0_{L \times 1} \\
\bm 0_{L \times L} & \bm{I}_L & \bm \theta \\
   \bm 0_{1 \times L} & \bm \theta^\top & (1-\rho) h^2 \delta_{f_P, g} \\
       \end{pmatrix} 
        \right) 
    \end{align}
  where $\bm \theta$ is as defined in Lemma \ref{lm:H0NUVmn} and $\bm 0_{K \times L}$ is the $K \times L$ zero-matrix, for $K, L \geq 1$. Then by LeCam's third lemma \citep[Example 6.7]{van2000asymptotic} the result in \eqref{eq:UVmnH1} follows. 

Now, since $\bm Q_{m, n}^{(L)}$ (recall \eqref{eq:QmnL}) is a continuous function of $\bm U_{m}^{(L)}$ and $\bm V_{n}^{(L)}$, the result in \eqref{eq:QmnLH1} follows from \eqref{UxVy}, \eqref{eq:UVxy}, \eqref{eq:UVmnH0}, and the continuous mapping theorem. 
\end{proof} 

Next, we show that $\bm Q^{(L)}$ (as defined in Lemma \ref{lm:H0NUVmn}) converges as $L \rightarrow \infty$. 

\begin{lemma}\label{lm:L} Let $\bm Q^{(L)}$ be as defined in \eqref{eq:QmnLH1}. Then as $L \rightarrow \infty$, 
\begin{align}\label{eq:L}
    \bm Q^{(L)} \stackrel{L^2} \rightarrow   
    \bm Q :=   \begin{pmatrix}
    Q_1 \\
    Q_2 \\
    Q_3 
    \end{pmatrix}
    :=
       \begin{pmatrix}
    \sum_{s=1}^\infty \lambda_{s}\left(W_{s}^{2}-1\right)\\
    \sum_{s=1}^\infty \lambda_{s}\left( \left(W_s' + h \sqrt{1-\rho} L_s \right)^{2}-1\right)\\
    \sum_{s=1}^\infty \lambda_{s} W_s \left( W_s'+ h \sqrt{1-\rho} L_s \right) , 
    \end{pmatrix} , 
\end{align}
where $\{W_{s}, W'_{s}: s \geq 1\}$ are independent standard Gaussian random variables. 
\end{lemma}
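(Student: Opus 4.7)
The plan is to establish $L^2$ convergence componentwise, since joint $L^2$ convergence of a finite-dimensional vector is equivalent to $L^2$ convergence of each coordinate. For each coordinate, I will show that the partial sums $Q_i^{(L)}$ are Cauchy in $L^2$ by exploiting the independence of the summands across $s$, so that the second moment of the tail $Q_i^{(L)} - Q_i^{(L_0)}$ splits as a sum over $s > L_0$ of variances (plus a squared mean in the second coordinate). The limit is then identified, almost trivially, with the corresponding series in the definition of $\bm Q$, which itself is therefore well-defined as an $L^2$ limit.

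\textbf{Key summability facts.} Two inputs drive everything. First, since $\mathsf{H}^\circ \in L^2(\mathcal X^2, P^2)$, identity \eqref{eq:Hintegral} gives $\sum_s \lambda_s^2 = \|\mathsf{H}^\circ\|^2 < \infty$, so in particular $\sup_s |\lambda_s| \leq \|\mathsf{H}^\circ\| < \infty$. Second, because $g/f_P \in L^2(\mathcal X, P)$ by Assumption \ref{assumption:fPQ} and $\{\phi_s\}_{s\geq 1}$ is orthonormal in $L^2(\mathcal X, P)$, Bessel's inequality yields
\begin{equation*}
\sum_{s=1}^\infty L_s^2 \;=\; \sum_{s=1}^\infty \Bigl(\E_{X\sim P}\bigl[\phi_s(X) g(X)/f_P(X)\bigr]\Bigr)^2 \;\leq\; \E_{X\sim P}[g(X)^2/f_P(X)^2] \;<\; \infty,
\end{equation*}
so in particular $L_s^2 \to 0$. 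Combining these, $\sum_s |\lambda_s| L_s^2 \leq (\sup_s |\lambda_s|) \sum_s L_s^2 < \infty$.

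\textbf{Carrying out the Cauchy computation.} For the first coordinate, the terms $\lambda_s(W_s^2 - 1)$ are independent, mean-zero, with variance $2\lambda_s^2$, so for $L_0 < L$,
\begin{equation*}
\E\bigl[(Q_1^{(L)} - Q_1^{(L_0)})^2\bigr] \;=\; 2\sum_{s=L_0+1}^L \lambda_s^2 \;\longrightarrow\; 0,
\end{equation*}
as $L_0 \to \infty$, giving $L^2$-convergence to $Q_1 := \sum_{s=1}^\infty \lambda_s(W_s^2-1)$. For the third coordinate the $s$-th summand $\lambda_s W_s(W_s' + h\sqrt{1-\rho}L_s)$ again has mean zero (since $W_s$ is independent of $W_s'$ and centered), while its variance is $\lambda_s^2(1 + h^2(1-\rho) L_s^2)$, and independence across $s$ gives
\begin{equation*}
\E\bigl[(Q_3^{(L)} - Q_3^{(L_0)})^2\bigr] \;=\; \sum_{s=L_0+1}^L \lambda_s^2\bigl(1 + h^2(1-\rho) L_s^2\bigr) \;\longrightarrow\; 0,
\end{equation*}
using $\sum_s \lambda_s^2 < \infty$ and boundedness of $L_s^2$. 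The second coordinate is the only one with a nonzero mean contribution: each summand has mean $\lambda_s h^2(1-\rho) L_s^2$ and variance $\lambda_s^2(2 + 4 h^2(1-\rho) L_s^2)$, so by independence
\begin{equation*}
\E\bigl[(Q_2^{(L)} - Q_2^{(L_0)})^2\bigr] \;=\; \Bigl(\sum_{s=L_0+1}^L \lambda_s h^2(1-\rho)L_s^2\Bigr)^2 + \sum_{s=L_0+1}^L \lambda_s^2\bigl(2 + 4h^2(1-\rho)L_s^2\bigr),
\end{equation*}
and both terms vanish as $L_0 \to \infty$ by the summability $\sum_s |\lambda_s| L_s^2 < \infty$ and $\sum_s \lambda_s^2 < \infty$ respectively.

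\textbf{Conclusion.} Each $Q_i^{(L)}$ is therefore Cauchy, and hence convergent, in $L^2$; the respective limits coincide by construction with the series defining the coordinates of $\bm Q$, which shows simultaneously that $\bm Q$ is a well-defined random vector in $L^2$ and that $\bm Q^{(L)} \stackrel{L^2}\to \bm Q$. The only mildly delicate ingredient is the control of the second coordinate's mean, which requires the absolute-convergence bound $\sum_s |\lambda_s| L_s^2 < \infty$; all other bounds are immediate from $\sum_s \lambda_s^2 < \infty$ and Bessel's inequality, so there is no real obstacle.
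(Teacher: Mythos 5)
Your proof is correct, and its overall architecture (componentwise $L^2$-Cauchy arguments using independence across $s$, with the variance computations $2\lambda_s^2$, $\lambda_s^2(2+4\theta_s^2)$, and $\lambda_s^2(1+\theta_s^2)$ for the three coordinates) matches the paper's. The one place where you take a genuinely different route is the control of the mean contribution in the second coordinate, which is indeed the only delicate step. The paper only establishes $\sup_s L_s^2 < \infty$ via Cauchy--Schwarz, and therefore has to handle the tail of $\sum_s \lambda_s L_s^2$ by rewriting it as $\int\int \bigl(\sum_{s>L}\lambda_s\phi_s(x)\phi_s(y)\bigr)g(x)g(y)\,\mathrm dx\,\mathrm dy$ and invoking the $L^2(P^2)$ convergence of the spectral expansion \eqref{eq:Hexpansion} of $\mathsf H^\circ$. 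You instead observe that $L_s = \langle \phi_s, g/f_P\rangle_{L^2(P)}$ with $\{\phi_s\}$ orthonormal, so Bessel's inequality gives the stronger fact $\sum_s L_s^2 \le \E_{X\sim P}[g(X)^2/f_P(X)^2] < \infty$, whence $\sum_s|\lambda_s|L_s^2 \le \bigl(\sup_s|\lambda_s|\bigr)\sum_s L_s^2 < \infty$ and the tail vanishes immediately. Your argument is more elementary and avoids any appeal to the spectral expansion of the kernel itself; the paper's argument works even if one only has the weaker pointwise bound on $L_s$, but at the cost of an extra Cauchy--Schwarz in $L^2(P^2)$. Both are valid, and your identification of $\sum_s|\lambda_s|L_s^2<\infty$ as the one nontrivial summability input is accurate.
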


\begin{proof} Note that 
$$\sum_{s=1}^{\infty} \text{Var}\left[ \lambda_{s}(W_s^2 -1)\right]  = 2 \sum_{s=1}^{\infty} \lambda_s^2 < \infty , $$
by \eqref{eq:Hintegral}. Hence, as $L \rightarrow \infty$,   
$$\sum_{s=1}^L \lambda_{s}\left(W_{s}^{2}-1\right) \stackrel{L^2} \rightarrow \sum_{s=1}^\infty \lambda_{s}\left(W_{s}^{2}-1\right) = Q_1.$$

Next, we denote $\theta_s := h \sqrt{1-\rho} L_s = h \sqrt{1-\rho}  \E_{X \sim P} [ \frac{\phi_s(X) g(X)}{f_P(X)} ]$. Then by the Cauchy-Schwarz inequality, 
\begin{align}\label{eq:Qexpectation}
\theta_s^2 & \leq h^2 (1-\rho) \E_{X \sim P}[\phi_s(X)^2]  \E_{X \sim P} \left[ \frac{g(X)^2}{f_P(X)^2} \right] \nonumber \\ 
& = h^2 (1-\rho) \E_{X \sim P} \left[ \frac{g(X)^2}{f_P(X)^2} \right] < \infty , 
\end{align}
since $\E_{X \sim P} \left[ \phi_s(X)^2 \right]  = 1$, for all $s \geq 1$, 
and $\E_{X \sim P} [ \frac{g(X)^2}{f_P(X)^2} ] < \infty$ by Assumption \ref{assumption:fPQ}. Then 
\begin{align}\label{eq:Qvariance}
\sum_{s=1}^\infty \Var\left[ \lambda_{s} W_s \left(W_s' + \theta_s  \right)  \right] & = \sum_{s=1}^\infty \lambda_{s}^2 \E[W_s^2] \E \left[\left(W_s' + \theta_s \right)^{2} \right]   = \sum_{s=1}^\infty \lambda_{s}^2 (1+ \theta_s^2)   < \infty , 
\end{align}
by \eqref{eq:Hintegral} and \eqref{eq:Qexpectation}. Hence, as $L \rightarrow \infty$,   
$$\sum_{s=1}^L \lambda_{s} W_s ( W_{s}' + \theta_s)^{2}  \stackrel{L^2} \rightarrow \sum_{s=1}^\infty \lambda_{s} W_s ( W_{s} + \theta_s)^{2} = Q_3. $$

It remains to establish the convergence to $Q_2$ in \eqref{eq:L}. Tp this end, fix $L \geq 1$ and denote 
$$Q_2^{(L)} := \sum_{s=1}^L \lambda_{s}\left( \left(W_s' + \theta_s \right)^{2}-1\right) .$$ 
Then for $L' > L \geq 1$,  
\begin{align}\label{eq:QLconvergence}
\E\left[ \left( Q_2^{(L')} - Q_2^{(L)} \right)^2 \right]  \leq \Var \left[ \left( Q_2^{(L')} - Q_2^{(L)} \right) \right] + \left( \E\left[ Q_2^{(L')} - Q_2^{(L)} \right ] \right)^2 . 
\end{align} 
This implies, 
\begin{align}\label{eq:QLvariance}
\Var\left[ \left( Q_2^{(L')} - Q_2^{(L)} \right) \right]  & = \sum_{s=L+1}^{L'} \Var\left[ \lambda_{s}\left( \left(W_s' + \theta_s \right)^{2}-1\right)  \right]  \nonumber \\ 
& = \sum_{s=L+1}^{L'} \lambda_{s}^2 \Var\left[  \left(W_s' + \theta_s \right)^{2} \right] \nonumber \\ 
& = 2 \sum_{s=L+1}^{L'} \lambda_{s}^2 (1+ 2 \theta_s^2) \rightarrow 0 ,
\end{align} 
as $L, L' \rightarrow \infty$, using \eqref{eq:Hintegral} and \eqref{eq:Qexpectation}. Next consider, 
\begin{align}
\left|\E\left[ Q_2^{(L')} - Q_2^{(L)} \right ] \right|  = \sum_{s=L+1}^{L'} \lambda_s \theta_s^2 & = h \sqrt{1-\rho}   \sum_{s=L+1}^{L'}   \lambda_s \left(\int_{\cX} \phi_s(x) g(x) \mathrm d x \right)^2 \nonumber \\ 
 & =  h \sqrt{1-\rho}    \int_{\cX} \int_{\cX}  \sum_{s=L+1}^{L'} \lambda_s  \phi_s(x) \phi_s(y) g(y)  g(y) \mathrm d x \mathrm d y.  \nonumber 
\end{align}
Denote $C:= h \sqrt{1-\rho}$ and $M:= \E_{X \sim P}[\frac{g(X)^2}{f_P(X)^2}] < \infty$ (by Assumption \ref{assumption:fPQ}). Then by the  Cauchy-Schwarz inequality, 
\begin{align}\label{eq:QLexpectation}
\left|\E\left[ Q_2^{(L')} - Q_2^{(L)} \right ] \right|^2 & \leq C^2 M^2 \int_{\cX} \int_{\cX}   \left( \sum_{s=L+1}^{L'}  \lambda_s  \phi_s(x) \phi_s(y)  \right)^2 f_P(x) f_P(y) \mathrm d x \mathrm d y \nonumber \\ 
& \rightarrow 0 , 
\end{align}
as $L, L' \rightarrow \infty$, since the convergence in \eqref{eq:Hexpansion} is in $L^2$. 
Combining \eqref{eq:QLexpectation} and \eqref{eq:QLvariance} with \eqref{eq:QLconvergence} it follows that  $Q_2^{(L)}$ converges in $L^2$ to $Q_2$. 
This completes the proof of Lemma \ref{lm:L}.  
\end{proof}

Next, we show that $(\cW_{\sX_m}^{\circ (L)} , \cW_{\sY_n}^{\circ (L)} , \cB_{\sX_m, \sY_n}^{\circ (L)} )^\top$ (recall \eqref{eq:BWxy}) and $(\cW_{\sX_m}^\circ, \cW_{\sY_n}^\circ, \cB_{\sX_m, \sY_n}^\circ)^\top$ are asymptotically close. 

\begin{lemma}\label{lm:Wxymn} As $L \rightarrow \infty$, 
\begin{align*} 
\sup_{m, n \geq 1}\E\left\|\begin{pmatrix}
  (m-1)  \cW_{\sX_m}^{\circ (L)}  \\ 
  (n-1)  \cW_{\sY_n}^{\circ (L)}  \\ 
   \sqrt{mn} \cB_{\sX_m, \sY_n}^{\circ (L)} 
        \end{pmatrix} - \begin{pmatrix}
   (m-1) \cW_{\sX_m}^\circ \\ 
   (n-1) \cW_{\sY_n}^\circ \\ 
    \sqrt{mn} \cB_{\sX_m, \sY_n}^\circ 
        \end{pmatrix}\right\|^2 \rightarrow 0 . 
\end{align*}  
\end{lemma}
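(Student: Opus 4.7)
Plan: I will establish the claim componentwise by showing that each entry of the difference has $L^2$-norm bounded uniformly in $m, n$ by a quantity $\epsilon(L)$ that vanishes as $L \to \infty$. Set $\mathsf{H}^{\circ, L}(x, y) := \mathsf{H}^\circ(x, y) - \sum_{s=1}^L \lambda_s \phi_s(x)\phi_s(y) = \sum_{s > L} \lambda_s \phi_s(x) \phi_s(y)$, so that $\|\mathsf{H}^{\circ,L}\|_{L^2(P^2)}^2 = \sum_{s > L} \lambda_s^2 \to 0$ by \eqref{eq:Hintegral}, and each truncated statistic differs from its untruncated counterpart by the same statistic formed with kernel $\mathsf{H}^{\circ, L}$ in place of $\mathsf{H}^\circ$. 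Crucially, $\mathbb{E}_P[\phi_s(X)] = 0$ whenever $\lambda_s \neq 0$, so $\mathsf{H}^{\circ, L}$ is $P$-degenerate in each argument, while under the local alternative $\mathbb{E}_Q[\phi_s(Y)] = \delta L_s$, of order $1/\sqrt N$.

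For the $\sX_m$-part, the difference $(m-1)(\cW_{\sX_m}^\circ - \cW_{\sX_m}^{\circ(L)}) = m^{-1}\sum_{i\ne j}\mathsf{H}^{\circ,L}(X_i, X_j)$ is a rescaled degenerate two-sample $U$-statistic in i.i.d.\ $X_i \sim P$, whose variance identity yields
\[ \mathbb{E}\Big[\big((m-1)(\cW_{\sX_m}^\circ - \cW_{\sX_m}^{\circ(L)})\big)^2\Big] = \frac{2(m-1)}{m}\sum_{s > L}\lambda_s^2 \leq 2\sum_{s > L}\lambda_s^2, \]
uniformly in $m$. For the cross-term, the summands in $\sqrt{mn}(\cB_{\sX_m, \sY_n}^\circ - \cB_{\sX_m, \sY_n}^{\circ(L)}) = (mn)^{-1/2}\sum_{i, j}\mathsf{H}^{\circ,L}(X_i, Y_j)$ have mean zero, and the $P$-degeneracy also forces $\mathbb{E}_X[\mathsf{H}^{\circ,L}(X, y)] = 0$ for every $y$, so in the second-moment expansion by index overlap only pairings sharing the $X$-index can contribute. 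This produces the bound
\[ \mathbb{E}\big[\big(\sqrt{mn}(\cB_{\sX_m, \sY_n}^\circ - \cB_{\sX_m, \sY_n}^{\circ(L)})\big)^2\big] \leq \mathbb{E}_{X\sim P,\,Y\sim Q}[\mathsf{H}^{\circ,L}(X, Y)^2] + (n-1)\delta^2\sum_{s > L}\lambda_s^2 L_s^2, \]
which I control by expanding $\mathbb{E}_Q = (1 - \delta)\mathbb{E}_P + \delta \mathbb{E}_g$, using $(n-1)\delta^2 \leq h^2$, and invoking $\sum_s L_s^2 \leq \mathbb{E}_P[(g/f_P)^2] < \infty$ from Assumption \ref{assumption:fPQ} and Parseval.

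The main obstacle is the $\sY_n$-part, because $Y_i \sim Q$ and so $\mathsf{H}^{\circ, L}$ fails to be degenerate with respect to the sampling distribution. I will perform the Hoeffding decomposition under $Q$,
\[ \mathsf{H}^{\circ, L}(y_1, y_2) = \theta_L + h_1(y_1) + h_1(y_2) + h_2(y_1, y_2), \]
with $\theta_L := \mathbb{E}_Q[\mathsf{H}^{\circ, L}(Y_1, Y_2)] = \delta^2 \sum_{s > L}\lambda_s L_s^2$, linear projection $h_1(y) = \delta \sum_{s > L}\lambda_s L_s \phi_s(y) - \theta_L$, and fully $Q$-degenerate remainder $h_2$, and then use the two-sample variance identity
\[ \mathbb{E}\big[\big((n-1)(\cW_{\sY_n}^\circ - \cW_{\sY_n}^{\circ(L)})\big)^2\big] \lesssim (n-1)^2 \theta_L^2 + (n-1)\Var_Q(h_1(Y)) + \mathbb{E}_Q[h_2(Y_1, Y_2)^2]. \]
The key cancellation is that every $\delta$-factor carries a $1/\sqrt{N}$ decay that exactly absorbs the explicit powers of $n-1$ (since $(n-1)\delta^2 \leq h^2$ and $(n-1)^2\delta^4 \leq h^4$), while the perturbation terms produced by expanding $\mathbb{E}_Q = (1-\delta)\mathbb{E}_P + \delta \mathbb{E}_g$ each pick up an extra factor of $\delta$. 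Each summand thus collapses to a constant (in $m, n$) multiple of one of the tail sums $\sum_{s > L}\lambda_s^2$, $\sum_{s > L}\lambda_s L_s^2$, or $\sum_{s > L}\lambda_s^2 L_s^2$, all of which vanish as $L \to \infty$ using $\lambda_s \to 0$, $\sum_s L_s^2 < \infty$, and the convergence arguments from the proof of Lemma \ref{lm:L}.
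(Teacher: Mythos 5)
Your proposal follows the same basic route as the paper's proof: expand $\mathsf{H}^\circ$ in its eigenbasis, identify the truncation error with the same statistic built from the tail kernel $\sum_{s>L}\lambda_s\phi_s(x)\phi_s(y)$, and bound second moments so that everything reduces to tail sums such as $\sum_{s>L}\lambda_s^2$. For the $\sX_m$-block your variance identity $\tfrac{2(m-1)}{m}\sum_{s>L}\lambda_s^2$ is exactly the paper's orthonormal-system computation with $c_m=\sqrt{2m(m-1)}$. Where you genuinely add value is on the $\sY_n$- and cross-blocks: the paper disposes of these with a one-line ``similarly,'' which is not literally correct under $H_1$, since $\E_{Y\sim Q}[\phi_s(Y)]=\delta L_s\neq 0$ means the tail kernel is not degenerate in the $Y$-arguments. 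Your Hoeffding decomposition under $Q$, together with the bookkeeping $(n-1)\delta^2\le h^2$ and $(n-1)^2\delta^4\le h^4$, is the right way to make that ``similarly'' honest. One loose end: not every correction term collapses to the three tail sums you list. The pieces carrying a $\delta\,\E_g[\cdot]$ factor --- e.g.\ $\delta\sum_{s>L}\lambda_s^2\,\E_g[\phi_s(Y)^2]$ arising inside $\E_{X\sim P,\,Y\sim Q}[\mathsf{H}^{\circ,L}(X,Y)^2]$, and the analogous contributions to $\E_Q[h_2(Y_1,Y_2)^2]$ --- involve second moments of the eigenfunctions under $g$, whose finiteness (let alone decay in $L$) does not follow from $\phi_s\in L^2(P)$ and Assumption \ref{assumption:fPQ} alone; these need either an extra integrability hypothesis or a separate argument. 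That gap, however, is inherited from (and left entirely implicit in) the paper's own one-line treatment, so on balance your write-up is the more complete of the two.
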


\begin{proof}   
Note that by \eqref{eq:Hexpansion} and Fubini's theorem, 
\begin{align}
   \begin{pmatrix}
   (m-1) \cW_{\sX_m}^\circ \\ 
   (n-1) \cW_{\sY_n}^\circ \\ 
   \sqrt{mn} \cB_{\sX_m, \sY_n}^\circ 
        \end{pmatrix} & =
           \begin{pmatrix}
        \frac{1}{m}\sum_{1 \leq i\neq j \leq m} \mathsf H^\circ(X_{i},X_{j})\\
        \frac{1}{n} \sum_{1 \leq i\neq j \leq n} \mathsf H^\circ(Y_{i},Y_{j})\\
        \frac{1}{\sqrt{mn}} \sum_{i=1}^m \sum_{i=1}^n \mathsf H^\circ(X_{i},Y_{j})
    \end{pmatrix} \nonumber \\ 
    & =   
     \begin{pmatrix}
   \frac{1}{m}   \sum_{s=1}^{\infty} \lambda_{s} \sum_{1 \leq i\neq j \leq m} \phi_{s}(X_i)\phi_{s}(X_j) \\
    \frac{1}{n} \sum_{s=1}^{\infty} \lambda_{s} \sum_{1 \leq i\neq j \leq m}  \phi_{s}(Y_i)\phi_{s}(Y_j) \\
    \frac{1}{\sqrt{mn}}\sum_{s=1}^{\infty}\lambda_{s} \sum_{i=1}^{m} \sum_{j=1}^{n} \phi_{s}(X_{i}) \phi_{s}(Y_{j}) 
    \end{pmatrix} \label{eq:exist-infinite-sum}
    \end{align} 
where the existence of such infinite sums will be proved in the following. 
Using $\E_{X \sim P}[\phi_{s}(X)]=0$ for $s \geq 1$, it is easy to show that for $s \neq s'$,
\begin{align*}
    \E\left[\left(\sum_{1 \leq i \ne j \leq m} \phi_{s}(X_{i})\phi_{s}(X_{j})\right)\left(\sum_{1 \leq i \ne j \leq m} \phi_{s'}(X_{i})\phi_{s'}(X_{j})\right)\right] = 0
\end{align*}
and using \eqref{eq:Heigenvectors}, 
\begin{align*}
    \E\left[\left(\sum_{1 \leq i \ne j \leq m} \phi_{s}(X_{i})\phi_{s}(X_{j})\right)^2\right] = 4\E\left[\sum_{1 \leq i<j \leq m}\left(\phi_{s}(X_{i})\phi_{s}(X_{j})\right)^2\right] =  2 m(m-1) . 
\end{align*}
Define $c_m = \sqrt{2 m(m-1)}$. Then $\{ \frac{1}{c_m}\sum_{1 \leq i \ne j \leq m}\phi_{s}(X_{i})\phi_{s}(X_{j}) \}_{s \geq 1}$ is a collection of orthonormal random variables. Hence, by \citet[Lemma 6.8]{MR1892228} the  following infinite sum 
\begin{align*}
    \sum_{s=1}^{\infty}\lambda_{s}\left(\frac{1}{c_m}\sum_{1 \leq i \ne j \leq m} \phi_{s}(X_{i})\phi_{s}(X_{j})\right) 
\end{align*}
exists, which also proves the existence of the infinite sum in \eqref{eq:exist-infinite-sum}. Moreover, 
\begin{align*}
 \E\left[ (m-1)\left(\cW_{\sX_m}^\circ - \cW_{\sX_m}^{\circ (L)}  \right)\right]^2  
 & \leq \frac{m^2}{c_m^2} \E\left[\sum_{s=L+1}^{\infty}\lambda_{s}\left(\frac{1}{c_m}\sum_{1 \leq i \ne j \leq n}\phi_{s}(X_{i})\phi_{s}(X_{j})\right)\right]^{2}\\
 & \leq  \sum_{s=L+1}^{\infty}\lambda_{s}^2 \rightarrow 0, 
\end{align*} 
as $L \rightarrow \infty$ (recall \eqref{eq:Hintegral}), uniformly in $m, n$. Similarly, it can be shown that 
$$(n-1)^2\E\left[ \cW_{\sY_n} - \cW_{\sY_n}^{\circ (L)}  \right]^2  \rightarrow 0 \text{ and } mn\E\left[ \cB_{\sX_m, \sY_n} - \cB_{\sX_m, \sY_n}^{(L)} \right]^2  \rightarrow 0$$ as $L \rightarrow \infty$, uniformly in $m, n$. 
\end{proof}

Combining Lemmas \ref{lm:H0NUVmn}, \ref{lm:L}, \ref{lm:Wxymn} and using \citet[Lemma 6]{jansonustat} we get,  
\begin{align*}
  \begin{pmatrix}
    (m-1) \cW_{\sX_m}^{\circ}  \\ 
    (n-1) \cW_{\sY_n}^{\circ}  \\ 
    \sqrt{mn} \cB_{\sX_m, \sY_n}^{\circ} 
        \end{pmatrix} 
    \dto 
     \begin{pmatrix}
    \sum_{s=1}^\infty \lambda_{s}\left(W_{s}^{2}-1\right)\\
    \sum_{s=1}^\infty \lambda_{s}\left( \left(W_s' + \theta_s \right)^{2}-1\right)\\
    \sum_{s=1}^\infty \lambda_{s} W_s \left( W_s'+ \theta_s    \right)
    \end{pmatrix} , 
\end{align*} 
where $\theta_s := h \sqrt{1-\rho} \cdot \E_{X \sim P}[ \frac{\phi_s(X) g(X)}{f_P(X)}  ]$, for $s \geq 1$. This establishes \eqref{eq:QmnLH2}. Then \eqref{eq:MMDK} and the continuous mapping theorem gives, 
\begin{align}\label{eq:MMDKH1}
& (m+n)  \emmd \left[\mathsf H, \sX_m, \sY_n \right] \nonumber \\ 
& = (m+n)  \mathcal W_{\sX_m}^\circ + (m+n) \mathcal W_{\sY_n}^\circ - 2 (m+n) \mathcal B_{\sX_m, \sY_n}^\circ \nonumber \\  
& \dto \frac{1}{\rho} \sum_{s=1}^\infty \lambda_{s}\left(W_{s}^{2}-1\right) + \frac{1}{1-\rho} \sum_{s=1}^\infty \lambda_{s}\left( \left(W_s' + \theta_s \right)^{2}-1\right) + \frac{2}{\sqrt{ \rho(1-\rho) }} \sum_{s=1}^\infty \lambda_{s} W_s \left( W_s'+ \theta_s    \right)  \nonumber \\ 
& = \sum_{s=1}^{\infty}\lambda_{s}\left(\left(\frac{1}{\sqrt{\rho}}  W_s -\frac{1}{\sqrt{1-\rho}} W_s' + h \E_{X \sim P}\left[ \frac{\phi_s(X) g(X)}{f_P(X)}  \right ]\right)^{2}-\frac{1}{\rho(1-\rho)} \right)  \\ 
& \stackrel{D} =  \frac{1}{\rho(1-\rho)} \sum_{s=1}^{\infty}\lambda_{s}\left(\left( Z_s + h \sqrt{\rho(1-\rho)} \E_{X \sim P}\left[ \frac{\phi_s(X) g(X)}{f_P(X)}  \right ]\right)^{2} - 1 \right) , \nonumber 
\end{align}
where $\{Z_s\}_{s \geq 1}$ are i.i.d. $\cN(0, 1)$ and the rearrangement of the terms in \eqref{eq:MMDKH1} can be justified by truncation and taking limits. This completes the proof of \eqref{eq:H0Ndistribution}. 

To show \eqref{eq:ZHexpectation} note that 
\begin{align} 
\E[\tilde Z(\mathsf{H})] = h^2 \sum_{s=1}^{\infty} \lambda_s L_s^2 & = h^2  \sum_{s=1}^{\infty}   \lambda_s \left(\int_{\cX} \phi_s(x) g(x) \mathrm d x \right)^2 \nonumber \\ 
 & =  h^2    \int_{\cX} \int_{\cX}  \sum_{s=1}^{\infty} \lambda_s  \phi_s(x) \phi_s(y) g(y)  g(y) \mathrm d x \mathrm d y \label{eq:exintexp}\\ 
 & =  h^2    \int_{\cX} \int_{\cX} \mathsf{H}^\circ(x, y) g(y)  g(y) \mathrm d x \mathrm d y  \tag*{(by \eqref{eq:Hintegral})} \nonumber \\ 
&=h^2 \E_{X, X' \sim P} \left[\mathsf{H}^\circ(X, X') \frac{g(X) g(X') }{f_P(X) f_P(X')}  \right]  < \infty \label{eq:ZHexpectationpf},
\end{align}
where the exchange of expectation and integral in \eqref{eq:exintexp} is valid by arguments similar to \eqref{eq:QLexpectation} and finiteness of the expectation follows by the Cauchy-Schwarz inequality, Assumption \ref{assumption:fPQ}, and the fact $\mathsf{H}^\circ \in L^2(\cX^2, P^2)$. Finally, the expression for the characteristic function in \eqref{eq:H0NexpZ} follows from \citet[Theorem 6.2]{janson1997gaussian}, since 
$$\sum_{s=1}^\infty \lambda_s^2 L_s^2 = \sum_{s=1}^\infty  \lambda_s^2 \left( \E_{X \sim P}  \left[ \frac{\phi_s(X) g(X)}{f_P(X)} \right] \right)^2 < \infty , $$
by arguments as in \eqref{eq:Qvariance}. This completes the proof of Proposition \ref{ppn:H0N}.  \hfill $\Box$

\begin{remark}\label{remark:localpower}
Note that Theorem \ref{thm:H0NK}, the continuous mapping theorem, and Corollary \ref{cor:TM} implies that under $H_1$ as in \eqref{eq:H0N},  
\begin{align}\label{eq:GKH1}
(m+n)^2 T_{m, n} \dto G_{\cK, h}^\top \bm \Sigma_{H_0}^{-1}G_{\cK, h} . 
\end{align}
This allows us to derive the limiting local power of the test $\phi_{m, n}$ in \eqref{eq:Tmnalpha}. Specifically, suppose $F_{\cK, h}$ denotes the CDF of $G_{\cK, h}^\top \bm \Sigma_{H_0}^{-1}G_{\cK, h}$ and $q_{1-\alpha}$ be the $1-\alpha$-th quantile of the distribution $G_{\cK}^\top \bm \Sigma_{H_0}^{-1}G_{\cK}$. (Note that $G_{\cK, 0} = G_{\cK}$.) Since $\hat q_{1-\alpha, m} | \sX_m \stackrel{a.s.} \rightarrow q_{1-\alpha}$, \eqref{eq:GKH1} implies that 
	the asymptotic power of $\phi_{m, n}$ under $H_1$ as in \eqref{eq:H0N} is given by  
	$$\lim_{m, n \rightarrow \infty} \E_{H_1}[\phi_{m, n}]= 1- F_{\cK, h}(q_{1-\alpha}).$$
	This implies, $\phi_{m, n}$ has non-trivial asymptotic (Pitman)  efficiency and is rate-optimal, in the sense that, $\lim_{|h| \rightarrow \infty }\lim_{m, n \rightarrow \infty} \E_{H_1}[\phi_{m, n}] = 1$.
\end{remark}

\section{Distribution Under Alternative} 
\label{sec:H1asymptotic}

In this section we derive the asymptotic distribution of  $\mathrm{MMD}^{2}\left[ \cK, \sX_m, \sY_n \right]$ under the alternative, that is, when $P \neq Q$. For this we write $\mathrm{MMD}^{2}\left[ \cK, \sX_m, \sY_n \right]$  as a two-sample $U$-statistics as noted in \citet{kim2022minimax},
\begin{align}\label{eq:hxxyy}
\mathrm{MMD}^{2}\left[ \sfK, \sX_m, \sY_n \right] = \frac{1}{m(m-1)}\frac{1}{n(n-1)}\sum_{1 \leq i_1 \neq i_2 \leq m} \sum_{1 \leq j_1 \neq j_2 \leq n} \bm{h}\left(X_{i_1}, X_{i_2},Y_{j_1}, Y_{j_2}\right) , 
\end{align}
where $\bm{h}(x,x',y,y') = \left(h_{a}(x,x',y,y')\right)_{1 \leq a \leq r}$ and 
\begin{align}\label{eq:haxxyy}
    h_{a}(x,x',y,y') = \sfK_{a}(x,x') + \sfK_{a}(y,y') - \sfK_{a}(x,y') - \sfK_{a}(x',y) ,   
\end{align}
for $1 \leq a \leq r$. Using the Hoeffding's decomposition we can easily derive the joint distribution of $\mathrm{MMD}^{2}\left[ \cK, \sX_m, \sY_n \right]$ under the alternative. In this case the asymptotic distribution will be a $r$-dimensional multivariate normal as in Theorem \ref{thm:MMDdistributionPQ} below. To express the limiting covariance matrix we need the following definitions: For $1 \leq a \leq r$, let 
\begin{align}\label{eq:Delta1s}
    \Delta_a^{(1)}(x) := \int_{\cX} \sfK_{a}(x,x') \mathrm dP(x')  - \int_{\cX} \sfK_{a}(x,y') \mathrm d Q(y')
\end{align}
and 
\begin{align}\label{eq:Delta2s}
    \Delta_a^{(2)}(y) := \int_{\cX} \sfK_{a}(y, y') \mathrm dQ(y')  - \int_{\cX} \sfK_{a}(x',y) \mathrm d P(x') . 
\end{align}
Also, denote $\bm\Delta^{(1)}(x) :=(\Delta_{a}^{(1)}(x))_{1 \leq a \leq r}$ and $\bm\Delta^{(2)}(x) :=(\Delta_{a}^{(2)}(x))_{1 \leq a \leq r}$. Then we have the following theorem:

\begin{theorem}\label{thm:MMDdistributionPQ}
Suppose $\cK=\{\sfK_1, \sfK_2, \ldots, \sfK_r\}$ be a collection of $r$ distinct characteristic kernels and $\bm\cF=\{\cF_1, \cF_2, \ldots, \cF_r\}$ be the unit balls of their respective RHKS.  Suppose $\sfK_{a}\in L^{2}(\mathcal{X}^{2}, P^{2})\cap L^{2}(\mathcal{X}^{2}, Q^{2})\cap L^{2}(\mathcal{X}^{2}, P\times Q)$, for all $1\leq a \leq r$. Then for $P \ne Q$, 
\begin{align*}
    \sqrt{m+n}\left(\mathrm{MMD}^{2}\left[ \cK, \sX_m, \sY_n \right] -  \mathrm{MMD}^{2}\left[ \bm\cF, P, Q\right]\right)\overset{D}{\longrightarrow}\cN_r \left(\bm{0}, \bm{\Sigma}_{H_1} \right) , 
\end{align*}
where 
\begin{align}\label{eq:sigmaH1}
    \bm{\Sigma}_{H_1} := 4\left(\rho \Var_{X \sim P}{\left[\bm\Delta^{(1)}(X)\right]} + (1 - \rho) \Var_{Y \sim Q}{\left[\bm \Delta^{(2)} (Y)\right]}\right) . 
\end{align} 
and $\mathrm{MMD}^{2}\left[ \bm\cF, P, Q \right] = \left(\mathrm{MMD}^{2}\left[\mathcal{F}_{1}, P, Q\right],\cdots, \mathrm{MMD}^{2}\left[\mathcal{F}_{r}, P, Q \right]\right)^\top$.
\end{theorem}

\begin{proof} 
For $1 \leq a \leq r$, let 
\begin{align}\label{eq:Deltat1s}
    \tilde \Delta_{a}^{(1)}(x) := \mathbb{E}_{X' \sim P, Y, Y' \sim Q}[h_{a}(x, X', Y, Y')] - \mathrm{MMD}^{2}[\mathcal{F}_{a}, P, Q ] 
\end{align}
and 
\begin{align}\label{eq:Deltat2s}
     \tilde \Delta_{a}^{(1)}(y) = \mathbb{E}_{X, X' \sim P, Y' \sim Q}[h_{a}(X, X',y,Y')] - \mathrm{MMD}^{2}[\mathcal{F}_{a}, P, Q ] . 
\end{align} 
Recalling \eqref{eq:hxxyy}, the first-order Hoeffding's projection for $\mathrm{MMD}^{2}\left[\cK, \sX_m, sY_n \right] - \mathrm{MMD}^{2}[\bm\cF, P, Q]$ is given by, 
\begin{align*}
    \hat{\bm{U}}_{m,n} = \frac{2}{m}\sum_{i=1}^{m} \tilde{\bm{\Delta}}^{(1)} (X_{i}) + \frac{2}{n}\sum_{j=1}^{n} \tilde{\bm{\Delta}}^{(2)}(Y_{j}) , 
\end{align*}
where $\tilde{\bm{\Delta}}^{(1)}(x) = (\tilde \Delta_{a}^{(1)}(x))_{1 \leq a \leq r}$ and  $\tilde{\bm{\Delta}}^{(2)}(x) = (\tilde \Delta_{a}^{(2)}(x))_{1 \leq a \leq r}$. 
Then by \cite[Theorem 12.6]{van2000asymptotic}, 
\begin{align}\label{eq:MMDUmndifference}
   \left\| \sqrt{m+n}\left(\mathrm{MMD}^{2}\left[ \sfK, P, Q \right] - \mathrm{MMD}^{2}[\bm\cF, P, Q] - \hat{\bm U}_{m, n} \right) \right\| = o_{P}(1) . 
\end{align}
By the multivariate central limit theorem, 
$$\sqrt{m+n}~\hat{\bm U}_{m, n} \dto \cN_r(\bm 0, \bm \Gamma) , $$
where 
$$\bm \Gamma = 4\left(\rho \Var_{X \sim P}{\left[\tilde{\bm \Delta}^{(1)}(X)\right]} + (1 - \rho) \Var_{Y \sim Q}{\left[ \tilde{\bm \Delta}^{(2)} (Y)\right]}\right) = \bm \Sigma , $$
since, recalling \eqref{eq:Delta1s} and \eqref{eq:Deltat1s}, $\Var_{X \sim P}{[ \tilde{\bm \Delta}^{(1)}(X)]} = \Var_{X \sim P}[\bm{\Delta}^{(1)}(X)]$ and, from \eqref{eq:Delta2s} and \eqref{eq:Deltat2s}, $\Var_{Y \sim Q}{[ \tilde{\bm \Delta}^{(2)}(Y)]} = \Var_{Y \sim Q}[\bm{\Delta}^{(2)}(Y)]$. This together with \eqref{eq:MMDUmndifference} completes the proof of Theorem \ref{thm:MMDdistributionPQ}.
\end{proof}

\color{black}

\color{black}
\section{Computational Complexity of the MMMD test} 
\label{sec:computationpf}
For a single kernel $\sfK$ computing $\emmd[\sfK,\cX_{m},\cY_{n}]$ from \eqref{eq:MMDXY} has a cost of $O(N^2)$. Thus, for the collection of kernels $\cK = \
\{ \sfK_1, \sfK_2, \ldots, \sfK_r \}$ computing the vector of MMD estimates $\emmd[\cK, \cX_{m}, \cY_{n}]$ from \eqref{eq:Kvector} requires $O(rN^2)$ time. Next, we derive the cost of computing the matrix $\hat{\bm{\Sigma}}$ (recall \eqref{eq:H0sigmaestimate}). 

\begin{lemma}\label{lemma:compsigmaab}
    For any fixed $1\leq a<b\leq r$, computing $\hat{\sigma}_{ab}$ requires $O(m^2)$ time. Hence, the matrix $\hat{\bm{\Sigma}}$ can be computed in $O(r^2m^2)$ time. 
\end{lemma}

\begin{proof} 
For any fixed $1\leq a,b\leq r$, by \eqref{eq:estimateKxy} observe that,
\begin{align}\label{eq:expandKhatdot}
    \sum_{1 \leq i,j \leq m} \hat{\sfK}^{\circ}_{a}(X_{i}, X_{j})\hat{\sfK}_{b}^{\circ}(X_{i}, X_{j}) = \eta_{1} - 2\eta_{2} + \eta_{3}
\end{align}
where $ \eta_{1}  := \sum_{1 \leq i,j \leq m}\sfK_{a}(X_{i}, X_{j})\sfK_{b}(X_{i}, X_{j})$, 
\begin{align}
   \eta_{2} & := \frac{1}{m}\sum_{i=1}^{m}\left(\sum_{v=1}^{m}\sfK_{a}(X_{i}, X_{v})\right)\left(\sum_{v=1}^{m}\sfK_{b}(X_{i},X_{v})\right) , \nonumber \\ 
   \eta_{3} & := \frac{1}{m^2}\left(\sum_{1 \leq i,j \leq m} \sfK_{a}(X_{i}, X_{j})\right)\left(\sum_{1 \leq i,j \leq m} \sfK_{b}(X_{i}, X_{j})\right) . \nonumber 
\end{align}
By definition, $\eta_{1} \eta_{2}$ and $\eta_{3}$ can be computed in $O(m^2)$ time. Hence, $\hat{\sigma}_{ab}$ can be computed in $O(m^2)$ time and the matrix $\hat{\bm{\Sigma}}$ can be computed in $O(r^2m^2)$ time. 
\end{proof}

Given the matrix $\hat{\bm{\Sigma}}$, the matrix $\hat{\bm{\Sigma}}^{-1}$ can be computed in $O(r^3)$ time. Finally, given $\hat{\bm\Sigma}^{-1}$ and the vector $\emmd\left[\cK,\cX_{m},\cY_{n}\right]$ computing the MMMD statistic $T_{m,n}$ has a cost of $O(r^2)$. Combining the above, the total computational cost of $T_{m, n}$ is then $O(rN^2 + r^2m^2+ r^3)= O(r^2N^2 + r^3)$. Note that in all relevant cases $r < N$, in which case the time complexity for computing the MMMD statistic $T_{m,n}$ simplifies to $O(r^2N^2 + r^3) = O(r^2N^2)$.

To obtain the rejection threshold of the MMMD statistic, we need to compute the  sample quantile of $B$ replicates of the statistic $\hat{T}_{m}$ from \eqref{eq:Testimate}. The following lemma derives the complexity of computing the vector $\cE(\cK, \sX_m)$ in \eqref{eq:def-Em}.

\begin{lemma}\label{lemma:compcomplexEm}
    For $1\leq a \leq r$ and any vector $\bm{v} \in\mathbb{R}^{m}$ computing $\bm{v}^{\top}\sfKmatrix_{a}\bm{v} - \frac{1}{\hat{\rho}(1-\hat{\rho})}\Tr[\sfKmatrix_{a}]$ can be computed in $O(m^2)$ time. Hence, the vector $\cE(\cK, \sX_m)$ in \eqref{eq:def-Em} can be computed in $O(r m^2)$ time. 
\end{lemma}

\begin{proof} 
From \eqref{eq:centered-kernel} note that,
\begin{align}\label{eq:vmquadtoCvmquad}
    \bm{v}^{\top}\sfKmatrix_{a}\bm{v} = \frac{1}{m}\left(\bm{C}\bm{v}\right)^{\top}\hat{\bm{\sfK}}_{a}\left(\bm{C}\bm{v}\right).
\end{align}
Note that $\bm{C}\bm{v} = \bm{v} - \frac{1}{m}\bm{1}\left(\bm{1}^{\top}\bm{v}\right)$ can be computed in $O(m)$ time. Hence, by \eqref{eq:vmquadtoCvmquad} computing $\bm{v}^{\top}\sfKmatrix_{a}\bm{v}$ incurs a cost of $O(m^2)$. Further, by \eqref{eq:estimateKxy} we get,
\begin{align*}
    \Tr\left[\sfKmatrix_{a}\right] = \sum_{i=1}^{m}\hat{\sfK}^{\circ}_{a}(X_{i},X_{i}) = \sum_{i=1}^{m}\sfK_{a}(X_{i},X_{i}) - \frac{1}{m}\sum_{i=1}^{m}\sum_{j=1}^{m}\sfK_{a}(X_{i},X_{j}). 
\end{align*}
Hence, $\Tr[\sfKmatrix_{a}]$ can be computed in $O(m^2)$ time. Thus, the overall computational cost of $\bm{v}^{\top}\sfKmatrix_{a}\bm{v} - \frac{1}{\hat{\rho}(1-\hat{\rho})}\Tr[\sfKmatrix_{a}]$ is $O(m^2)$. 

Also, note that generating $\bm{Z}_{m}\sim\cN_{m}(\bm 0, \frac{1}{\hat{\rho}(1-\hat{\rho})}\bm{I})$ has a cost of $O(m)$. Thus, computing the vector $\cE(\cK, \cX_{m})$ requires $O(rm^2)$ time. 
\end{proof}

Note that given $\hat{\bm{\Sigma}}^{-1}$ and the vector $\cE(\cK, \cX_{m})$, $\hat{T}_{m}$ can  be computed in $O(r^2)$ time (recall \eqref{eq:Testimate}).  Thus, from Lemma \ref{lemma:compcomplexEm},  the overall cost of computing $\hat{T}_{m}$ is $O(rm^2 + r^2)$. Hence, we can compute $\hat{T}_{m}$ over $B$ replications in  $O(B(rm^2 + r^2))$ time, and get the sample quantile in $O(B\log B)$ time. 
Combining this with the cost of computing the MMMD statistic, the overall computational cost of MMMD test is (assuming $r < N$), 
\begin{align*}
    O(r^2N^2 + BrN^2 + B\log B).
\end{align*}

To understand the trade-off between computational time and power in practice, we consider the following simulation setting: 
\begin{align}\label{eq:testpowertime}
    P = \cN_{d}\left(\bm{0}, \bm{I}_{d}\right)\text{ and }Q = \cN_{d}\left(\bm{0}, \sigma^2\bm{I}_{d}\right).
\end{align}
with $d = 20$ and $\sigma^{2} = 1.1$, with varying sample size. Table \ref{table:comptime} shows the empirical power (averaged over 500 iterations) and time (in seconds) of the single kernel {\tt Gauss MMD} and {\tt LAP MMD} tests with the median bandwidth and  the multiple kernel {\tt Gauss MMMD}, {\tt LAP MMMD}, and {\tt Mixed MMMD} tests with bandwidths as in \eqref{eq:gaussmmmd}, \eqref{eq:laplacemmmd}, \eqref{eq:mixedmmmd}, respectively. 
The results show that MMMD tests provide a significant improvement in power, at the cost of a minor increase in computation time.

\begin{table}[!ht]
\small 
    \centering
    \begin{tabular}{|c|cc|cc|cc|cc|cc|}
    \hline
    \multirow{2}{*}{Sample Size} & \multicolumn{2}{c|}{LAP MMD}      & \multicolumn{2}{c|}{Gauss MMD}    & \multicolumn{2}{c|}{LAP MMMD}      & \multicolumn{2}{c|}{Gauss MMMD}    & \multicolumn{2}{c|}{Mixed MMMD}    \\ \cline{2-11} 
                                 & \multicolumn{1}{c|}{Power} & Time & \multicolumn{1}{c|}{Power} & Time & \multicolumn{1}{c|}{Power} & Time  & \multicolumn{1}{c|}{Power} & Time  & \multicolumn{1}{c|}{Power} & Time  \\ \hline
    50                           & \multicolumn{1}{c|}{0.002} & 22.5 & \multicolumn{1}{c|}{0.026} & 23.6 & \multicolumn{1}{c|}{0.088} & 23.8  & \multicolumn{1}{c|}{0.14}  & 23.8  & \multicolumn{1}{c|}{0.148} & 23.8  \\ \hline
    100                          & \multicolumn{1}{c|}{0.006} & 28.7 & \multicolumn{1}{c|}{0.072} & 28.8 & \multicolumn{1}{c|}{0.196} & 30.4  & \multicolumn{1}{c|}{0.262} & 29.4  & \multicolumn{1}{c|}{0.266} & 28.9  \\ \hline
    200                          & \multicolumn{1}{c|}{0.07}  & 40.6 & \multicolumn{1}{c|}{0.126} & 40.4 & \multicolumn{1}{c|}{0.482} & 40.4  & \multicolumn{1}{c|}{0.5}   & 39.0  & \multicolumn{1}{c|}{0.48}  & 40.4  \\ \hline
    300                          & \multicolumn{1}{c|}{0.132} & 52.3 & \multicolumn{1}{c|}{0.238} & 50.6 & \multicolumn{1}{c|}{0.652} & 53.9  & \multicolumn{1}{c|}{0.72}  & 52.8  & \multicolumn{1}{c|}{0.706} & 53.9  \\ \hline
    400                          & \multicolumn{1}{c|}{0.24}  & 63.0 & \multicolumn{1}{c|}{0.328} & 62.5 & \multicolumn{1}{c|}{0.788} & 64.8  & \multicolumn{1}{c|}{0.88}  & 64.4  & \multicolumn{1}{c|}{0.856} & 65.0  \\ \hline
    500                          & \multicolumn{1}{c|}{0.35}  & 73.4 & \multicolumn{1}{c|}{0.414} & 73.5 & \multicolumn{1}{c|}{0.904} & 85.4  & \multicolumn{1}{c|}{0.932} & 76.5  & \multicolumn{1}{c|}{0.944} & 83.7  \\ \hline
    600                          & \multicolumn{1}{c|}{0.428} & 86.1 & \multicolumn{1}{c|}{0.564} & 85.1 & \multicolumn{1}{c|}{0.964} & 110.0 & \multicolumn{1}{c|}{0.966} & 92.7  & \multicolumn{1}{c|}{0.968} & 108.4 \\ \hline
    700                          & \multicolumn{1}{c|}{0.578} & 93.8 & \multicolumn{1}{c|}{0.658} & 95.5 & \multicolumn{1}{c|}{0.984} & 135.6 & \multicolumn{1}{c|}{0.992} & 116.8 & \multicolumn{1}{c|}{0.996} & 139.9 \\ \hline
    \end{tabular}
    \caption{Power and computation time (in seconds) for the hypothesis in \eqref{eq:testpowertime}. } 
    \label{table:comptime}
\end{table}
    
\normalsize

\color{black}

\section{Additional Simulations} 
\label{sec:experimentsadditional}

In this section, we will compare the MMMD test with the single kernel MMD test and the graph-based Friedman Rafsky (FR) test with parameters as in Section \ref{sec:experiments}.

\subsection{Dependence on Sample Size} 
\label{sec:dependencesimulation}

In this section we illustrate how the different tests performing as the sample  size varies, with dimension held fixed. Toward this, we fix $d=2$ and consider 
 \begin{align*}
        P = \cN_{2}(\bm{0}, \bm{I}_{2})  \text{ and }  Q= \cN_{2}(\bm{0}, 1.25 \cdot \bm{I}_2) , 
    \end{align*} 
 and vary the sample sizes over $m=n\in \{50, 100, 200, 300, 400, 500\}$. Figure \ref{fig:S9-2} shows the empirical Type I-error and power  of the aforementioned tests. Figure \ref{fig:S9-2}(a) shows that all the tests have good Type I error control. Figure \ref{fig:S9-2}(b) shows that the multiple kernel MMMD tests have better power than the single kernel MMD tests, with the {\tt Gauss MMMD} and the {\tt Mixed MMMD} tests performing the best.

\begin{figure}[ht] 
  \begin{subfigure}[c]{0.45\linewidth}
    \centering
    \includegraphics[width=0.85\linewidth]{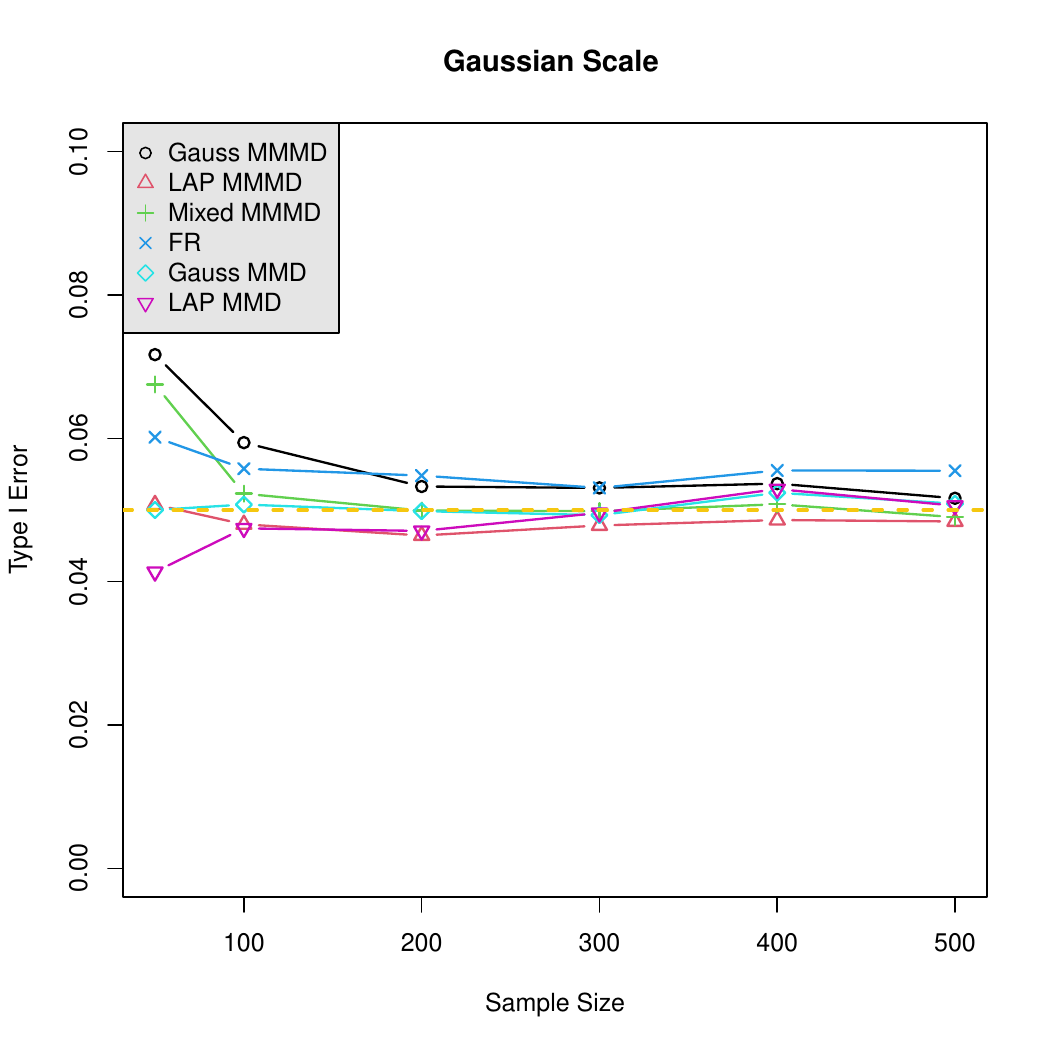} 
    \caption*{\small{(a)}} 
  \end{subfigure}
   \begin{subfigure}[c]{0.45\linewidth}
    \centering
    \includegraphics[width=0.85\linewidth]{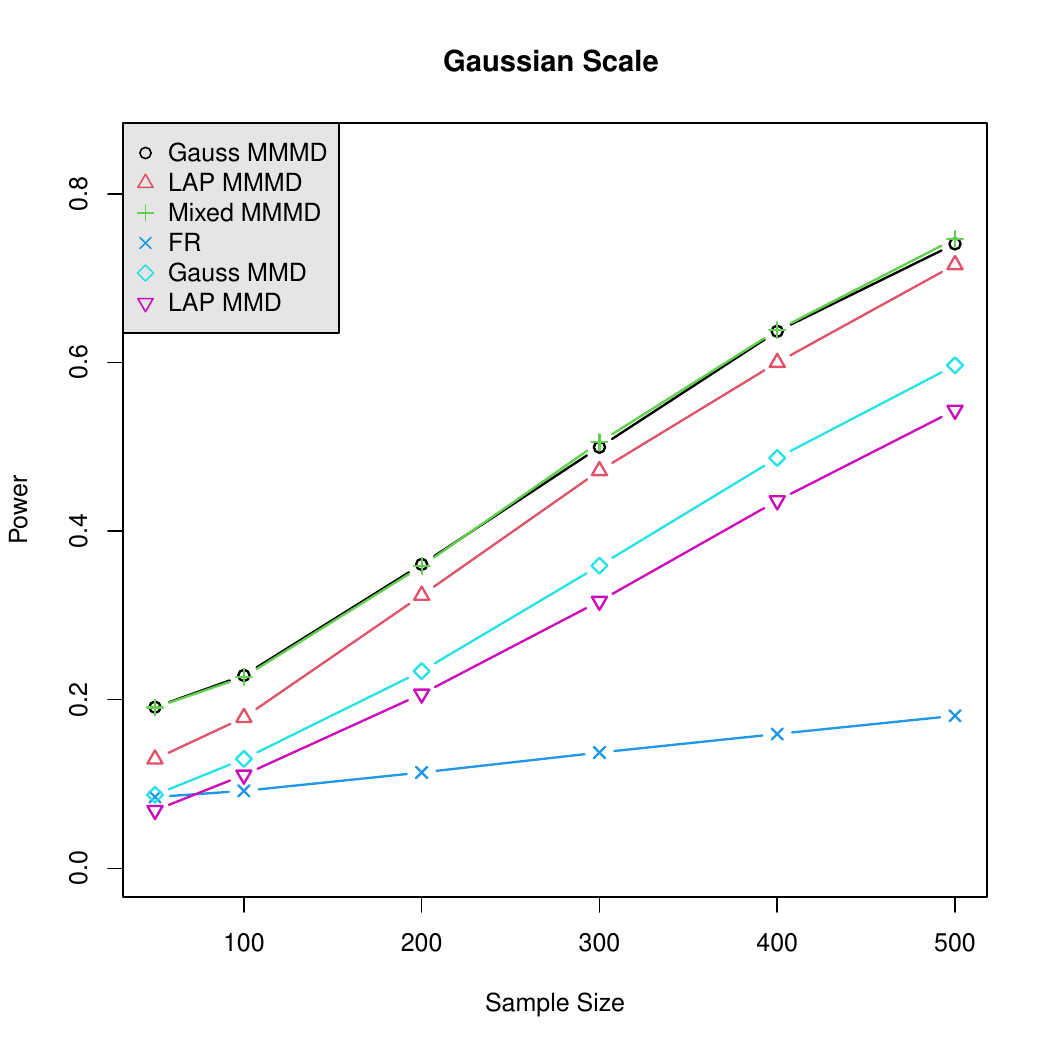}  
       \caption*{\small{(b)}} 
  \end{subfigure} 
  \caption{(a) Empirical type-I errors and (b) powers of the different tests in dimension  $d=2$ with varying sample size. }
  \label{fig:S9-2} 
\end{figure}

\subsection{Dependence on Dimension} 
\label{sec:dimensionadditional}

Throughout this section, as in Section \ref{sec:dimension}, we fix sample sizes $m=n=100$, vary dimension over $d \in \{5, 10, 25, 50, 75, 100, 150\}$. 

We begin by recalling the setting \ref{itm:S1}. Figure \ref{fig:Type1S12} shows the probability of Type I error (averaged over 500 iterations) in this setting.  We observe from the table that kernel tests become more conservative as dimension increases, although, quite interestingly, this issue is significantly mitigated for the multiple kernel MMMD tests.

\begin{figure}[h] 
\centering
 \includegraphics[width=0.95\linewidth]{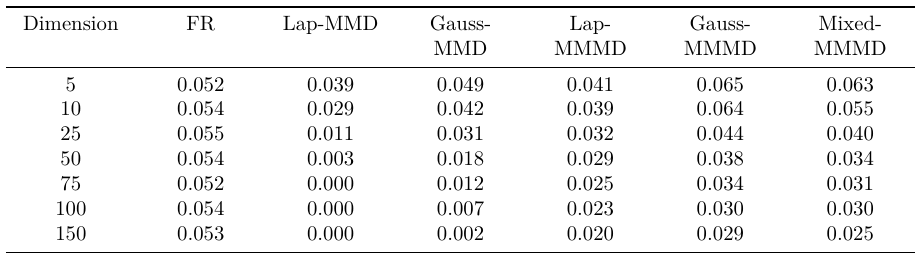} 
\vspace{-0.1in} 
\caption{Probability of Type I error for the setting in \ref{itm:S1}. }
\label{fig:Type1S12}
\end{figure} 

Next, we consider the following 2 settings. The empirical power by averaging over 500 iterations.  

\begin{enumerate}[label=\textbf{(S\arabic*)}]\label{enum: Over-Dim Settings}
\setcounter{enumi}{2}
    
     \item\label{itm:S3} {\it $t$-distribution scale}: Here, we consider $P =  t_{10}(\bm{0} ,\bm{\Sigma}_{0})\text{ and } Q = t_{10}\left(\bm{0}, 1.22\bm{\Sigma}_{0}\right)$, 
    where $t_{10}$ is the $t$-distribution with 10 degrees of freedom and $\bm{\Sigma}_{0}$ is as above (see Figure \ref{fig:S12}(b)). 
    
      \item\label{itm:S4} {\it Gaussian and Laplace mixture}: 
     Here, we consider     
    $$P =\tfrac{1}{2} \cN_d(\bm{0} ,\bm{\Sigma}_{1}) +\tfrac{1}{2} t_{10}(\bm{0} ,\bm{\Sigma}_{1}) \text{ and } Q =\tfrac{1}{2} \cN_d(\bm{0} ,1.3\bm{\Sigma}_{1}) +\tfrac{1}{2} t_{10}(\bm{0} , 1.3\bm{\Sigma}_{1}) ,  $$
    where $\bm{\Sigma}_{1} = ((0.7^{|i-j|}))_{1\leq i,j\leq d}$ (see Figure \ref{fig:S34}(b)). 
\end{enumerate} 

\begin{figure}[h] 
\centering
\begin{subfigure}[c]{0.45\textwidth}
   \includegraphics[width=0.85\linewidth]{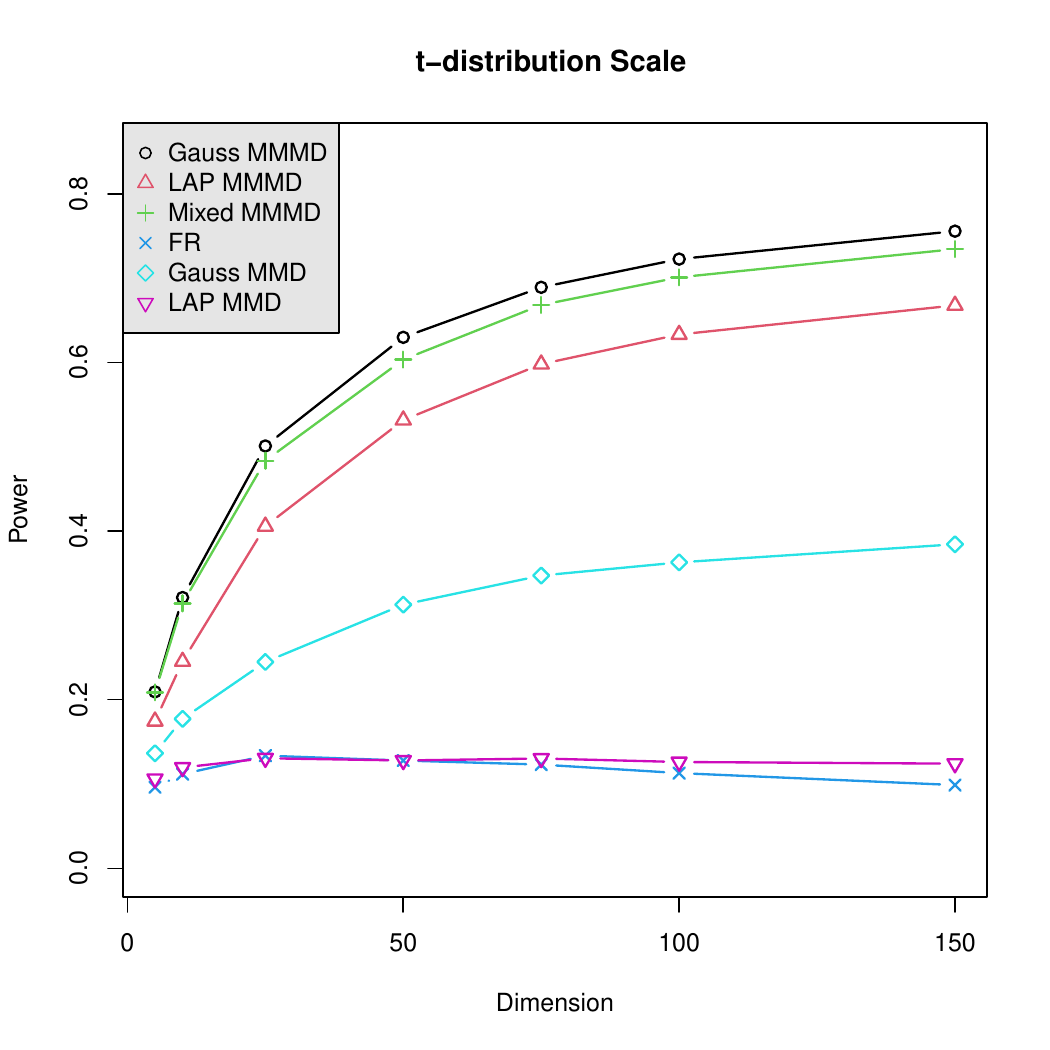} 
      \caption*{\small{(a)}}
\end{subfigure} 
\begin{subfigure}[c]{0.45\textwidth}
   \includegraphics[width=0.85\linewidth]{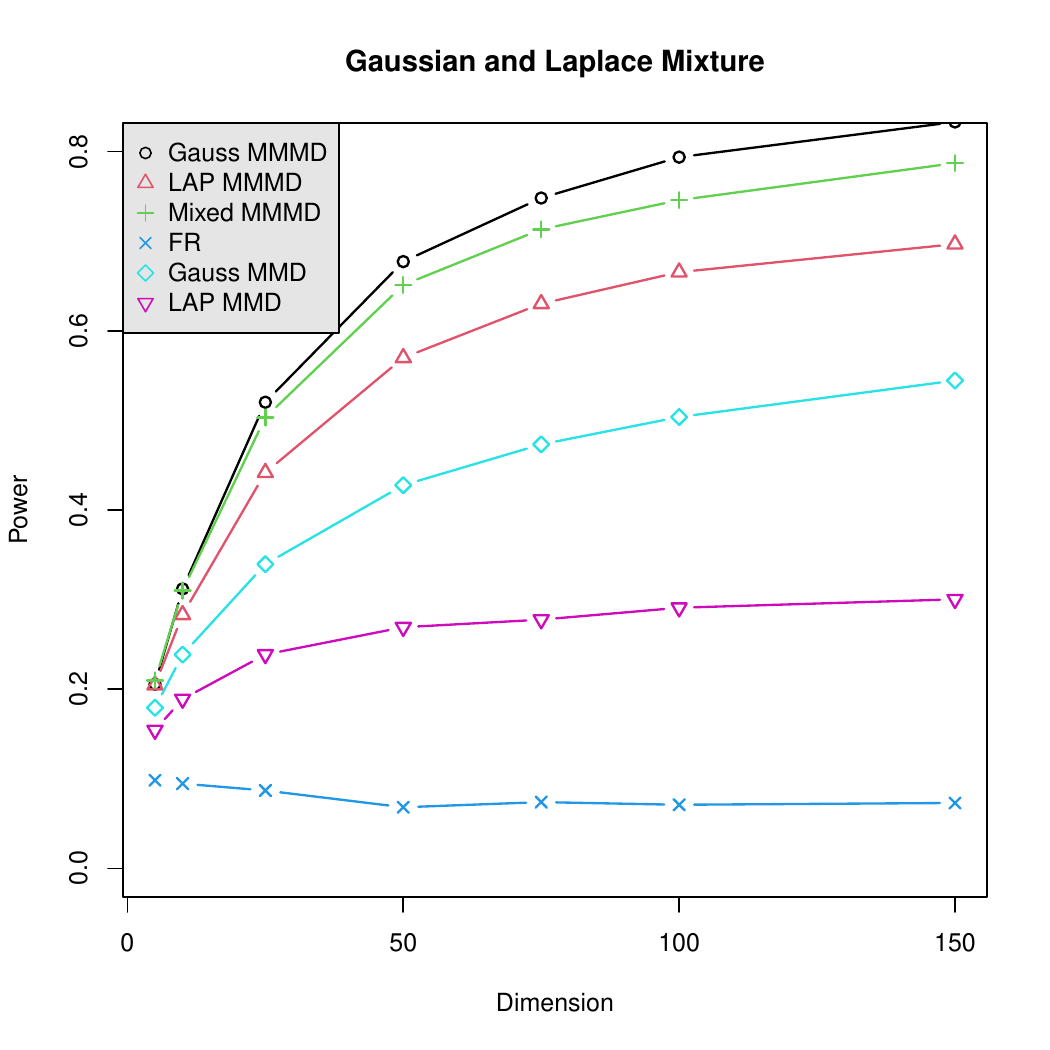} 
      \caption*{\small{(b)}} 
\end{subfigure} 
\caption{Empirical powers in (a) setting \ref{itm:S3} and (b) setting \ref{itm:S4}. }
\label{fig:S34}
\end{figure}
The plots show that the multiple kernel MMMD tests have significantly more  power than the single kernel MMD tests and the FR test in both the settings \ref{itm:S3} and \ref{itm:S4}. 

\subsection{Local Alternatives} 
\label{sec:localexperiments} 

Recall that in Section \ref{sec:asymptoticpower} we derived the asymptotic local power of the MMMD statistic. Here we validate this in the following simulation setting: 
    \begin{align}\label{eq:scalelocalalternative}
        P = \cN_d(\bm{0} ,\bm{I}_{d})\text{ and } Q = \cN_d\left(\bm{0} , (1+\tfrac{h}{\sqrt N})\bm{I}_{d} \right) , 
    \end{align} 
    where $N = m+n$. Figure \ref{fig:dimension}(a) shows the empirical power (averaged over 500 iterations)  of the different tests, for dimension $d = 20$, sample sizes $m=n=100$, as the signal strength varies over [0, 2].  The plots show that the MMMD methods have significantly better local power than the other tests, illustrating the attractive efficiency property of our aggregation method.

\begin{figure}[h]
\centering 
\begin{subfigure}[c]{0.45\textwidth}
   \includegraphics[width=0.85\linewidth]{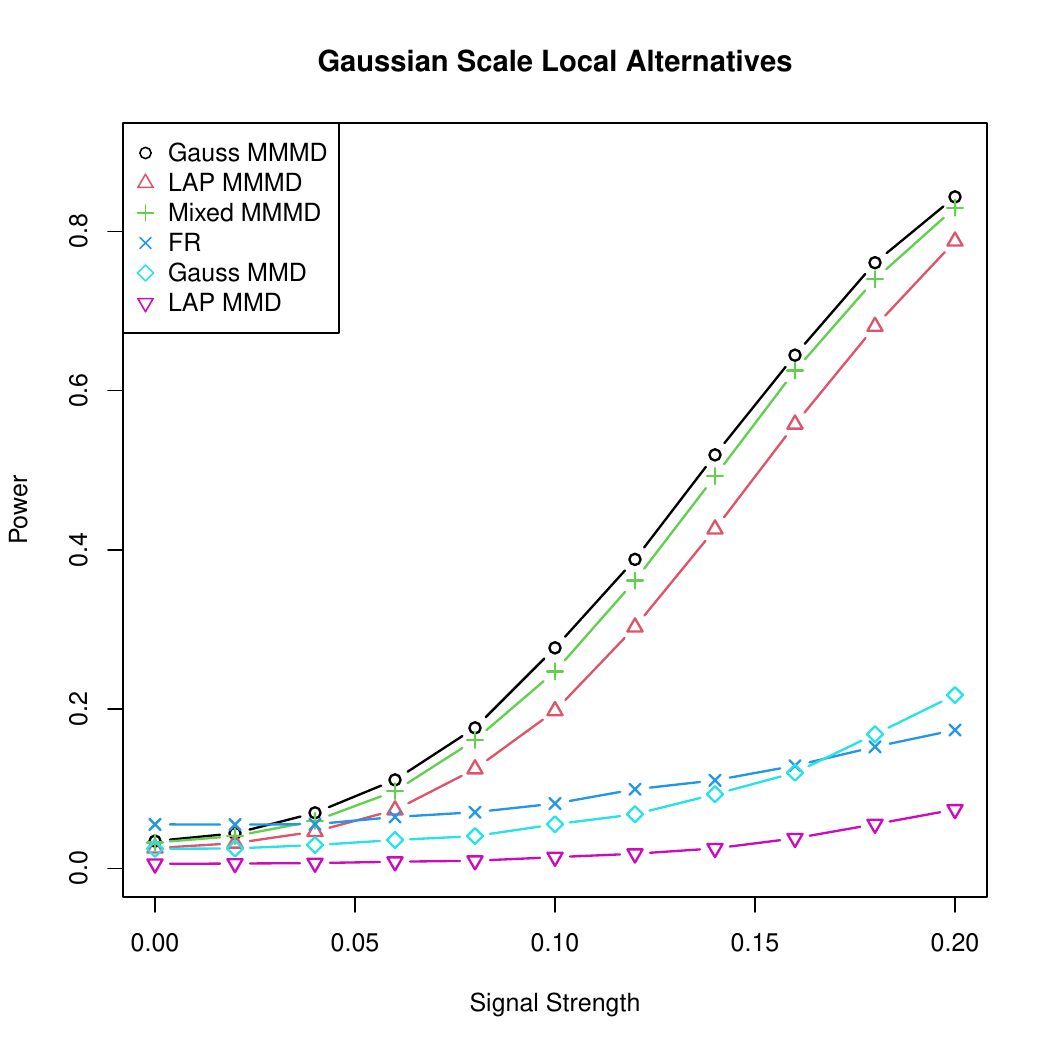} 
   \caption*{\small{(a)}}
\end{subfigure} 
\begin{subfigure}[c]{0.45\textwidth}
   \includegraphics[width=0.85\linewidth]{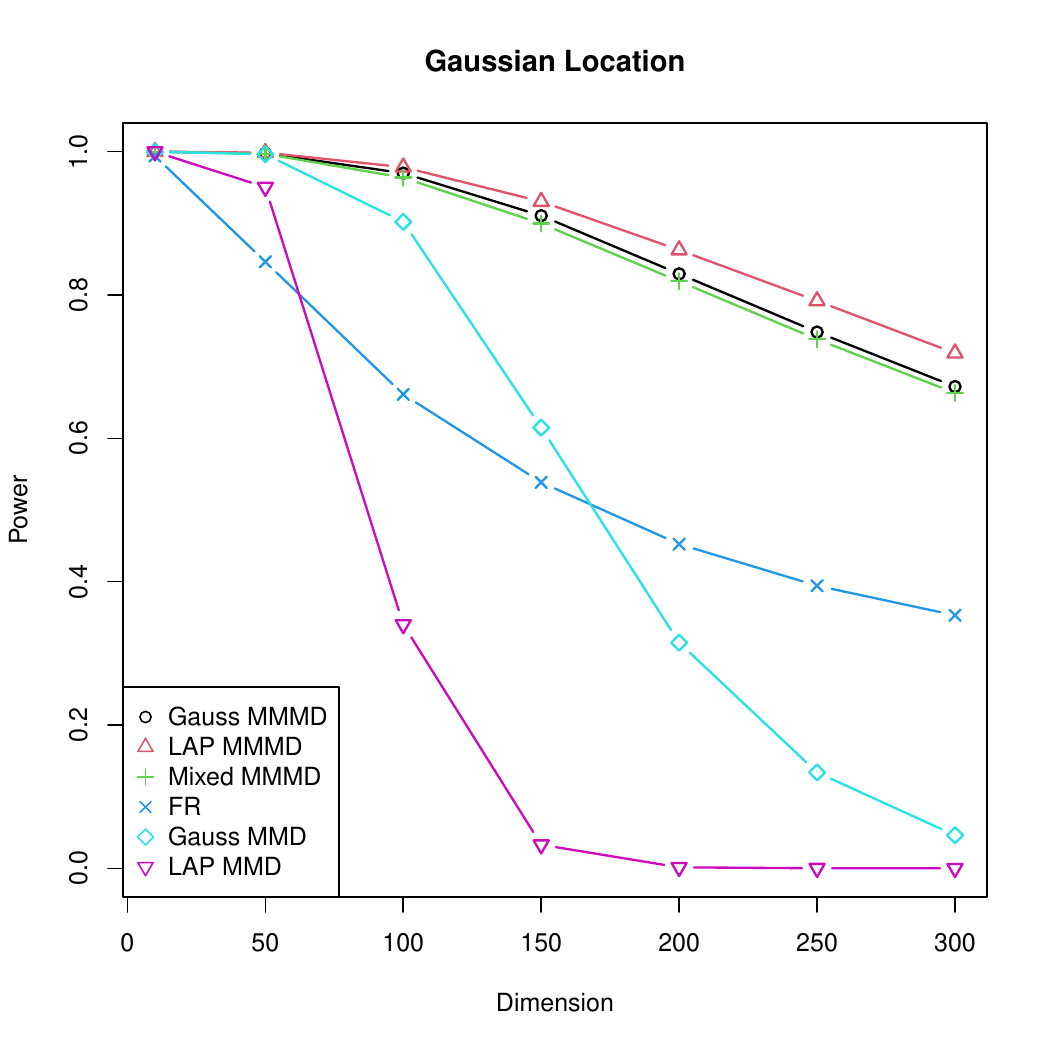}
      \caption*{\small{(b)}}
 
\end{subfigure} 
\caption{Empirical powers for (a) local alternatives as in \eqref{eq:scalelocalalternative} and (b) high-dimensional alternatives as in \eqref{eq:normaldimension}. }
\label{fig:dimension}
\end{figure}

\subsection{High-Dimensional Alternatives}

 To fairly evaluate the performance of kernel  tests in high-dimensions we consider, as suggested by \citet{ramdas2015}, pairs of distributions for which the Kullback-Leibler (KL) divergence remain constant, as the dimension increases. Specifically, we consider   
    \begin{align}\label{eq:normaldimension}
        P = \cN_d\left(\bm{0} , \bm{I}_{d}\right)\text{ and } Q = \cN_d\left( (1.25/\sqrt{d}) \bm{1}, \bm{I}_{d}\right) . 
    \end{align} 
It is easy to check that the KL divergence between $P$ and $Q$ is $\frac{1.25^{2}}{2}$, which does not change with $d$. In this case for the {\tt Gauss MMMD}/{\tt LAP MMMD} tests we use 5 different Gaussian/Laplace kernels with respective bandwidths $\bm \sigma = (\sigma_1, \sigma_2,  \sigma_3, \sigma_4, \sigma_5) =  (\tfrac{1}{2 \sqrt 2}, \tfrac{1}{2}, \frac{1}{\sqrt 2}, 1,\sqrt{2})\lambda_{\text{med}}$. Also, for the {\tt Mixed MMMD} test is implemented with 4 Gaussian kernels and 4 Laplace kernels with same set of bandwidths: $\bm \sigma = (\sigma_1, \sigma_2,  \sigma_3, \sigma_4) =( \frac{1}{2},\frac{1}{\sqrt{2}}, 1, \sqrt{2})\lambda_{\text{med}}$. 
Figure \ref{fig:dimension}(b) shows the empirical powers (averaged over 500 iterations) of the different tests  as a function of dimension with sample sizes $m=n=100$. As expected, the power of all the tests decreases as dimension increases. However, for the multiple kernel tests the power decrement is much slower and overall they perform significantly better than the single kernel tests.

\color{black}

\subsection{Finite Sample Comparison with the Permutation Test}
\label{sec:permutationsimulation}

To compare the multiplier bootstrap and permutation methods we consider the following simulation settings.

\begin{itemize}
	\item Consider
    \begin{align}\label{eq:permexp1}
        P = \cN_{d}(\bm 0, \bm I) \text{ and }Q = \cN_{d}(\bm 0, 1.05^2\bm I). 
    \end{align}
    We fix sample sizes $m = n\in \{200, 300\}$ and vary the dimension over $$d\in\{2, 5, 10, 20, 50, 100, 150, 200, 300\}.$$  Table \ref{tab:permexp1} shows the type I error rate, empirical power (averaged over 100 iterations),  and the computation time (in seconds) for the MMMD test using the multiplier bootstrap and the permutation method with  $B=1000$ bootstrap samples and $B=1000$ random permutations, respectively. 
    	
	\item Consider
    \begin{align}\label{eq:permexp2}
        P = \cN_2(\bm 0, \bm I)\text{ and }Q = \cN_2(\bm 0, 1.25^2\bm I).
    \end{align}
    In this case, we fix the number of bootstrap resamples/permutations $B \in \{100, 1000\}$ and vary sample sizes $m = n\in\{200, 300, 400, 500, 600, 700\}$. The empirical Type-I error rate, power (averaged  over $100$ iterations), and time (in seconds) are shown in Table \ref{tab:permexp2}.
\end{itemize}

\begin{table}[h]
    \footnotesize 
    \begin{minipage}{.5\linewidth}
    \centering
    \begin{tblr}{
        cells = {c},
        cell{1}{1} = {c=7}{},
        cell{2}{1} = {r=2}{},
        cell{2}{2} = {c=3}{},
        cell{2}{5} = {c=3}{},
        vlines,
        hline{1-2,4-13} = {-}{},
        hline{3} = {2-7}{},
        colsep = 1.5pt,
        rowsep = 0.5pt
      }
      $m = n =200$ &             &        &          &                      &        &         \\
      Dimension    & Permutation &        &          & Multiplier Bootstrap &        &         \\
                   & Type I      & Power  & Time     & Type I               & Power  & Time    \\
      $2$          & $0.05$      & $0.05$ & $476.49$ & $0.06$               & $0.1$  & $5.52$  \\
      $5$          & $0.06$      & $0.04$ & $496.96$ & $0.02$               & $0.03$ & $4.89$  \\
      $10$         & $0.07$      & $0.13$ & $467.86$ & $0.03$               & $0.06$ & $4.20$  \\
      $20$         & $0.04$      & $0.19$ & $427.00$ & $0.01$               & $0.15$ & $5.27$  \\
      $50$         & $0.05$      & $0.33$ & $466.40$ & $0.03$               & $0.4$  & $6.43$  \\
      $100$        & $0.02$      & $0.7$  & $526.36$ & $0.05$               & $0.63$ & $7.69$  \\
      $150$        & $0.04$      & $0.87$ & $491.51$ & $0.02$               & $0.8$  & $8.92$  \\
      $200$        & $0.05$      & $0.94$ & $513.26$ & $0.03$               & $0.9$  & $10.59$ \\
      $300$        & $0.06$      & $1$    & $568.53$ & $0.04$               & $1$    & $16.36$ 
      \end{tblr}
    \end{minipage}%
    \begin{minipage}{.5\linewidth}
        \begin{tblr}{
            cells = {c},
            cell{1}{1} = {c=7}{},
            cell{2}{1} = {r=2}{},
            cell{2}{2} = {c=3}{},
            cell{2}{5} = {c=3}{},
            vlines,
            hline{1-2,4-13} = {-}{},
            hline{3} = {2-7}{},
            colsep = 1.5pt,
            rowsep = 0.5pt
          }
          $m = n = 300$ &             &        &           &                      &        &         \\
          Dimension     & Permutation &        &           & Multiplier Bootstrap &        &         \\
                        & Type I      & Power  & Time      & Type I               & Power  & Time    \\
          $2$           & $0.06$      & $0.06$ & $1095.07$ & $0.06$               & $0.08$ & $3.59$  \\
          $5$           & $0.05$      & $0.11$ & $1080.09$ & $0.05$               & $0.09$ & $3.82$  \\
          $10$          & $0.06$      & $0.12$ & $1084.52$ & $0.04$               & $0.13$ & $3.59$  \\
          $20$          & $0.08$      & $0.29$ & $1074.10$ & $0.03$               & $0.22$ & $4.73$  \\
          $50$          & $0.04$      & $0.65$ & $1145.30$ & $0.02$               & $0.49$ & $7.58$  \\
          $100$         & $0.03$      & $0.85$ & $1219.39$ & $0.06$               & $0.83$ & $9.52$  \\
          $150$         & $0.06$      & $0.98$ & $1243.97$ & $0.04$               & $0.94$ & $12.78$ \\
          $200$         & $0.04$      & $1$    & $1242.91$ & $0.05$               & $0.97$ & $14.04$ \\
          $300$         & $0.03$      & $1$    & $1355.28$ & $0.04$               & $1$    & $20.58$ 
          \end{tblr}
    \end{minipage}
    \caption{ Type I error rate, power, and time (in seconds) for the hypothesis in \eqref{eq:permexp1}. }
    \label{tab:permexp1}
\end{table}

\begin{table}[h]
    \footnotesize
    \begin{minipage}{.5\linewidth}
    \centering
    \begin{tblr}{
        cells = {c},
        cell{1}{1} = {c=7}{},
        cell{2}{1} = {r=2}{},
        cell{2}{2} = {c=3}{},
        cell{2}{5} = {c=3}{},
        vlines,
        hline{1-2,4-10} = {-}{},
        hline{3} = {2-7}{},
        colsep = 1.5pt,
      rowsep = 0.5pt
      }
      $B = 100$   &             &        &           &                      &        &         \\
      {Sample\\Size} & Permutation &        &           & Multiplier Bootstrap &        &         \\
                  & Type I      & Power  & Time      & Type I               & Power  & Time    \\
      $200$       & $0.06$      & $0.43$ & $48.18$   & $0.08$               & $0.44$ & $5.39$  \\
      $300$       & $0.08$      & $0.45$ & $112.54$  & $0.05$               & $0.5$  & $5.72$  \\
      $400$       & $0.08$      & $0.56$ & $258.88$  & $0.08$               & $0.68$ & $5.51$  \\
      $500$       & $0.08$      & $0.68$ & $423.72$  & $0.02$               & $0.7$  & $7.83$  \\
      $600$       & $0.07$      & $0.87$ & $676.67$  & $0.07$               & $0.83$ & $11.69$ \\
      $700$       & $0.04$      & $0.86$ & $1045.71$ & $0.04$               & $0.87$ & $17.82$ 
      \end{tblr}
    \end{minipage}%
    \begin{minipage}{.5\linewidth} 
        \begin{tblr}{
            cells = {c},
            cell{1}{1} = {c=7}{},
            cell{2}{1} = {r=2}{},
            cell{2}{2} = {c=3}{},
            cell{2}{5} = {c=3}{},
            vlines,
            hline{1-2,4-10} = {-}{},
            hline{3} = {2-7}{},
            colsep = 1.5pt,
      rowsep = 0.5pt
          }
          $B = 1000$  &             &        &            &                      &        &         \\
          {Sample\\Size} & Permutation &        &            & Multiplier Bootstrap &        &         \\
                      & Type - I    & Power  & Time       & Type - I             & Power  & Time    \\
          $200$       & $0.06$      & $0.42$ & $483.70$   & $0.04$               & $0.42$ & $9.41$  \\
          $300$       & $0.06$      & $0.49$ & $1103.24$  & $0.03$               & $0.51$ & $9.65$  \\
          $400$       & $0.07$      & $0.56$ & $2248.31$  & $0.06$               & $0.57$ & $10.45$ \\
          $500$       & $0.06$      & $0.67$ & $3995.38$  & $0.06$               & $0.66$ & $11.52$ \\
          $600$       & $0.04$      & $0.85$ & $6530.46$  & $0.05$               & $0.79$ & $16.77$ \\
          $700$       & $0.05$      & $0.84$ & $10073.89$ & $0.06$               & $0.91$ & $23.54$ 
          \end{tblr}
       \end{minipage}
    \caption{ Type I error rate, power, and time (in seconds) for the hypothesis in \eqref{eq:permexp2}. } 
    \label{tab:permexp2}
\end{table}

In both the above cases we consider the \texttt{Gauss MMD} test with bandwidths from \eqref{eq:gaussmmmd}. From Table \ref{tab:permexp1} and Table \ref{tab:permexp2} we observe that while the Type I error rate and power of the multiplier bootstrap and the permutation methods are comparable, the time required by the multiplier bootstrap is significantly faster than the permutation test. This advocates the use of the multiplier bootstrap method for calibrating the MMMD statistic.

\color{black}

\subsection{ Comparisons with Bandwidth Optimized Single-Kernel Tests } 
\label{sec:bandwithoptimized}

In this section we report our experimental results comparing the MMMD tests with the bandwidth optimized single kernel test, as described in Section \ref{sec:bandwithoptimizedcombination}. Specifically, we compare the performance of \texttt{Gauss MMMD}, \texttt{Lap MMMD} and \texttt{Mixed MMMD} test (with bandwidths chosen as in \eqref{eq:gaussmmmd}, \eqref{eq:laplacemmmd}, and \eqref{eq:mixedmmmd}, respectively) with the single Gaussian/Laplace kernel MMD tests, where the bandwidths are optimized as described in Section \ref{sec:bandwithoptimizedcombination}. We refer to the bandwidth optimized single kernel tests as \texttt{Opt Gauss MMD} and \texttt{Opt Lap MMD}, respectively. 
We also implement the oracle single kernel tests with twice the amount of data as described in Section \ref{sec:bandwithoptimizedcombination}. We refer to these tests as \texttt{Oracle Gauss MMD} and \texttt{Oracle Lap MMD} for the Gaussian and Laplace kernel choices, respectively.
We borrow the implementation of \texttt{Opt Gauss MMD}, \texttt{Opt Lap MMD}, \texttt{Oracle Gauss MMD}, and \texttt{Oracle Lap MMD} from Section 5.3 in \cite{schrab2021mmd}. Specifically, the bandwidth is optimized by maximizing the ratio $ \frac{1}{\hat\sigma_{\lambda}^2} \mmd^2\left[\sfK_{\lambda},\sX_m,\sY_n\right] $ over $\lambda\in \{c\lambda_{\texttt{med}}: c = 0.1, 0.2, \ldots, 2\}$. 

We compare the performance of the above tests in the following simulation settings: 

\begin{itemize} 

\item {\it Mixture alternatives}: Here, as in Section \ref{sec:mixtureexperiments} we consider 
$$P = \varepsilon \cN_d(\bm{0} ,\bm{\Sigma}_{0}) + (1-\varepsilon)  t_{10}(\bm{0} ,\bm{\Sigma}_{0}) \text{ and } Q = \varepsilon \cN_d(\bm{0} ,1.25\bm{\Sigma}_{0}) + (1-\varepsilon) t_{10}(\bm{0} ,1.25\bm{\Sigma}_{0}), $$ 
where $\bm{\Sigma}_{0} = ((0.5^{|i-j|}))_{1\leq i,j\leq d}$. 
Figure \ref{fig:bandwidthoptimizedcombination}(a) shows the empirical power (averaged over 500 iterations) of the different tests as $\varepsilon$ varies over $[0,1]$, with sample sizes $m=n=100$, and dimension $d=30$.  

\item {\it Local alternatives}: As in \eqref{eq:scalelocalalternative}, suppose 
    \begin{align*} 
        P = \cN_d(\bm{0} ,\bm{I}_{d})\text{ and } Q = \cN_d\left(\bm{0} , (1+\tfrac{h}{\sqrt N})\bm{I}_{d} \right) , 
    \end{align*} 
    where $N = m+n$. Figure \ref{fig:bandwidthoptimizedcombination}(b) shows the empirical power (averaged over 500 iterations)  of the different tests, for dimension $d = 20$, sample sizes $m=n=100$, as the signal strength varies over [0, 2].

\item {\it Perturbed $1$-dimensional uniform distribution}: This is the example from \cite{schrab2021mmd} described in Section \ref{sec:uniformR}. In particular, $P$ is the uniform distribution on $[0, 1]$ and $Q$ is a perturbed version of uniform distribution on $[0, 1]$ as in \eqref{eq:ftheta}. Figure \ref{fig:bandwidthoptimizedcombination}(c) shows the empirical power (averaged over 500 iterations)  of the different tests, for sample sizes $m=n=100$ and $R=1,2,3,4,5,6$. 

\end{itemize}

\begin{figure}[!ht]
    \centering 
    \begin{subfigure}[c]{0.32\textwidth}
    \includegraphics[scale = 0.28]{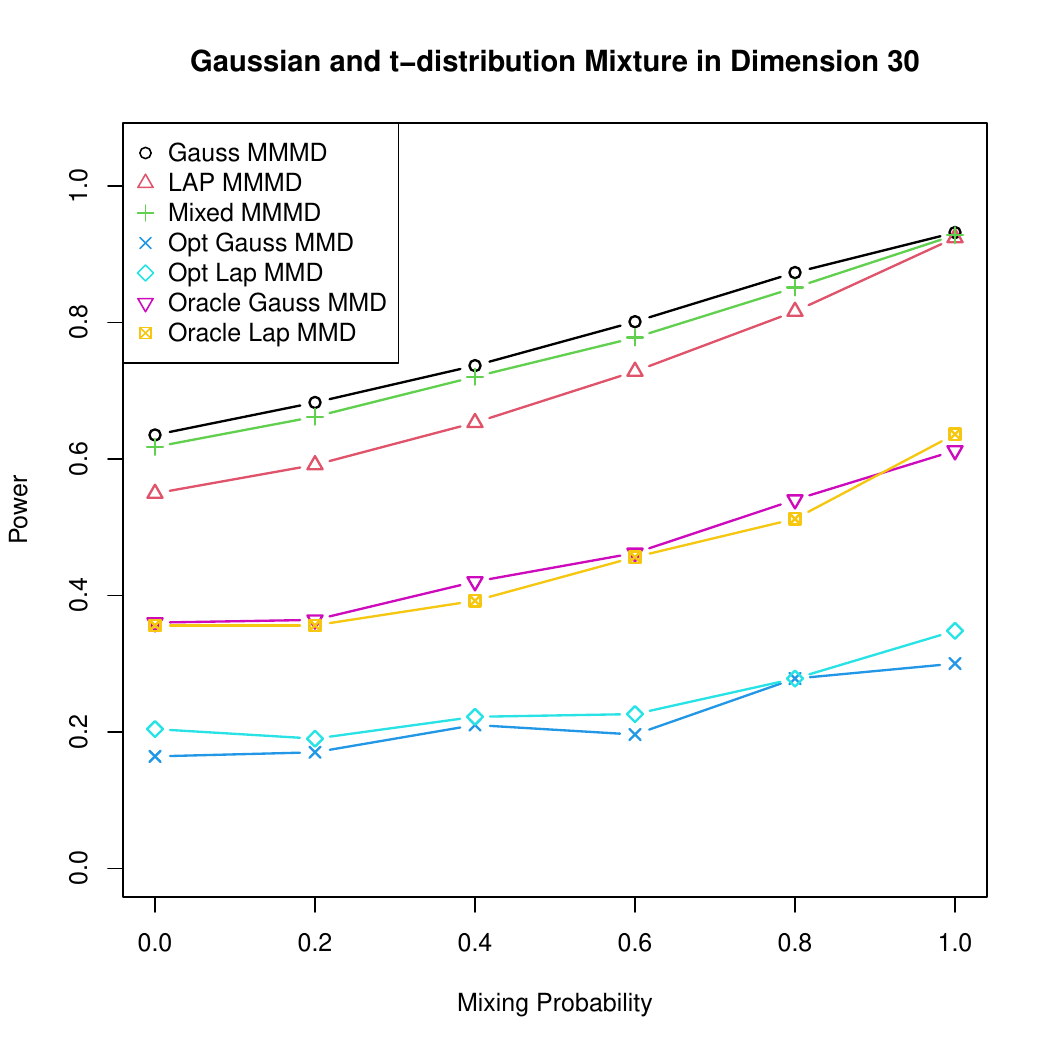} 
     \caption*{\small{(a)}} 
\end{subfigure}    
\begin{subfigure}[c]{0.32\textwidth}
    \includegraphics[scale = 0.28]{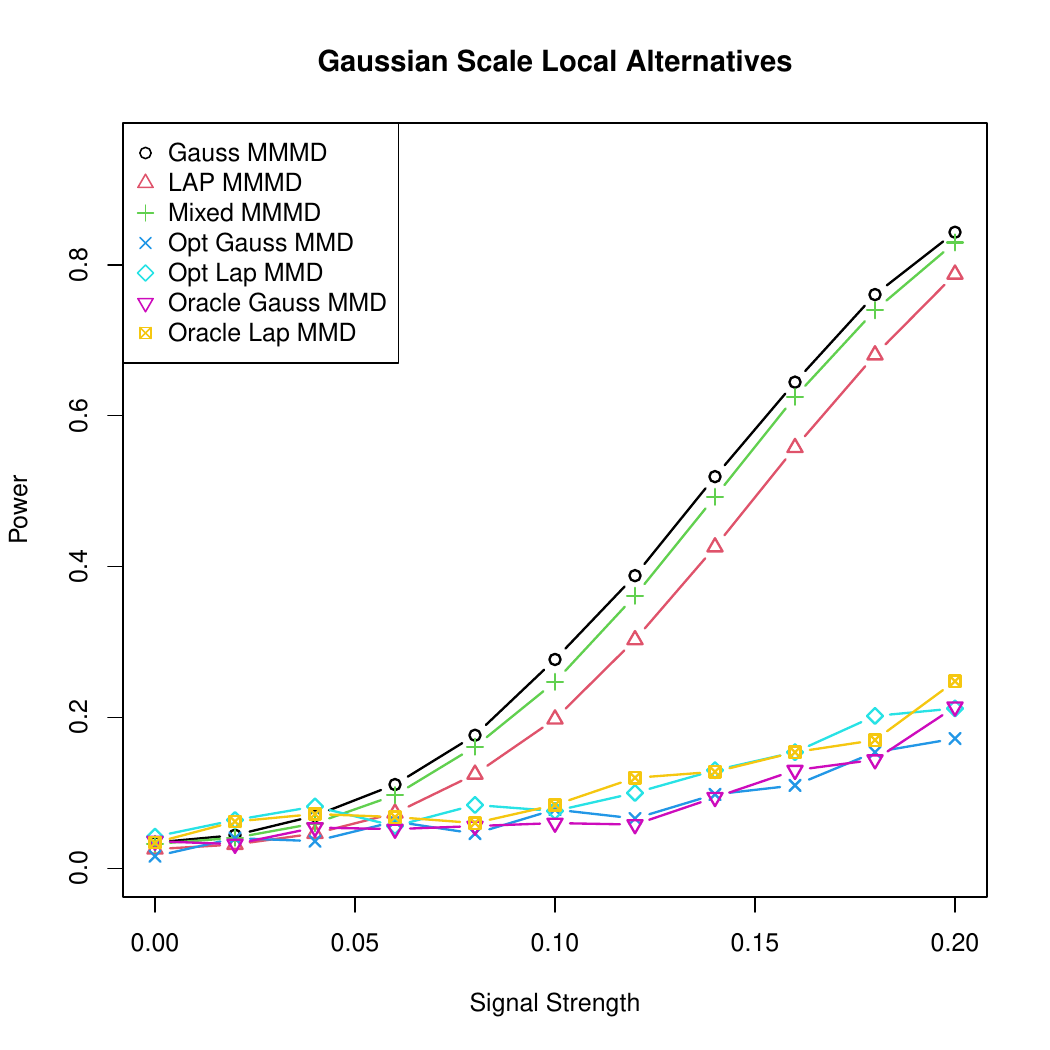} 
     \caption*{\small{(b)}} 
\end{subfigure}    
\begin{subfigure}[c]{0.32\textwidth}
   \includegraphics[scale = 0.28]{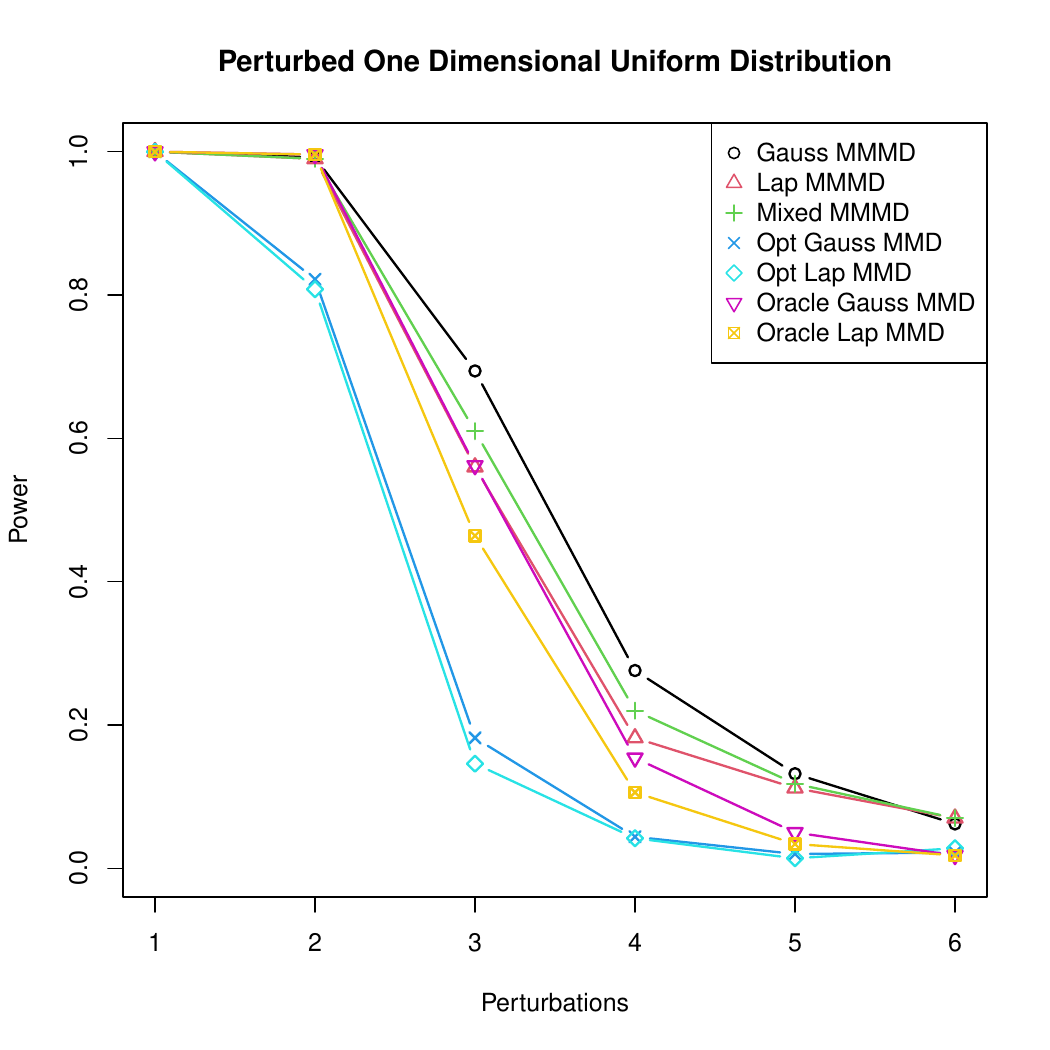} 
    \caption*{\small{(c)}} 
\end{subfigure} 
    \caption{ Empirical power comparison of MMMD, \texttt{Opt MMD}, and \texttt{Oracle MMD} tests. }
    \label{fig:bandwidthoptimizedcombination}
\end{figure}

As mentioned in Section \ref{sec:bandwithoptimizedcombination}, in all the above cases the MMMD tests have improved power than the bandwidth optimized single kernel tests and also their oracle counterparts.

\subsection{ Comparisons with $p$-Value Combination Methods } 
\label{sec:combination}

In this section we present our results comparing the {\tt Gauss MMMD} test (with bandwidth from \eqref{eq:gaussmmmd}) with the $p$-value combination strategies described in Section \ref{sec:bandwithoptimizedcombination}. We compare the performance of the different tests in the following simulation settings: 
\begin{enumerate}
    \item {\it High Dimensional Alternatives:} As in \eqref{eq:normaldimension}, consider 
    \begin{align}\label{eq:highdimpvalues} 
        P = \cN_d\left(\bm{0} , \bm{I}_{d}\right)\text{ and } Q = \cN_d\left( (1.25/\sqrt{d}) \bm{1}, \bm{I}_{d}\right).
    \end{align} 
    Figure \ref{fig:pcomb}(a) shows the empirical powers (averaged over 100 iterations) of the different tests as a function of the dimension with sample sizes $m = n= 100$.  
    
    \item {\it Scale Alternatives: } As in \eqref{eq:testpowertime}, suppose  
    \begin{align}\label{eq:scalepvalues}
        P = \cN_{d}\left(\bm{0}, \bm{I}_{d}\right)\text{ and }Q = \cN_{d}\left(\bm{0}, \sigma^2\bm{I}_{d}\right) , 
    \end{align}
    with $d = 20$ and $\sigma^{2} = 1.1$. Figure \ref{fig:pcomb}(b) shows the empirical power (averaged over 100 iterations) of the different tests with varying sample sizes.
\end{enumerate}

In both the above cases, the {\tt Gauss MMMD} outperforms the $p$-value combination methods, illustrating the effectiveness of our aggregation strategy.  

\begin{figure}
    \centering
    \begin{subfigure}{.5\textwidth}
      \centering
      \includegraphics[width=\linewidth]{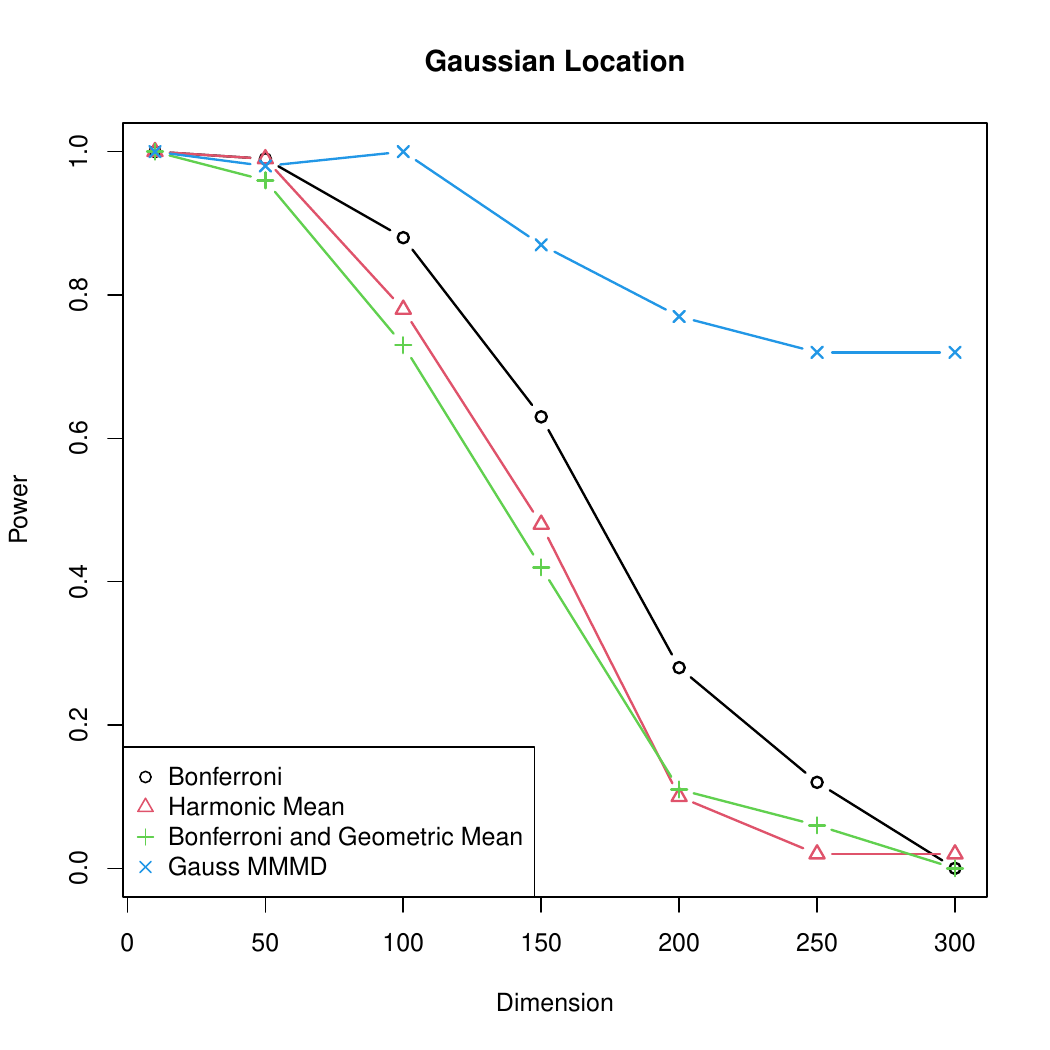}
      \caption*{\small (a)}
    \end{subfigure}%
    \begin{subfigure}{.5\textwidth}
      \centering
      \includegraphics[width=\linewidth]{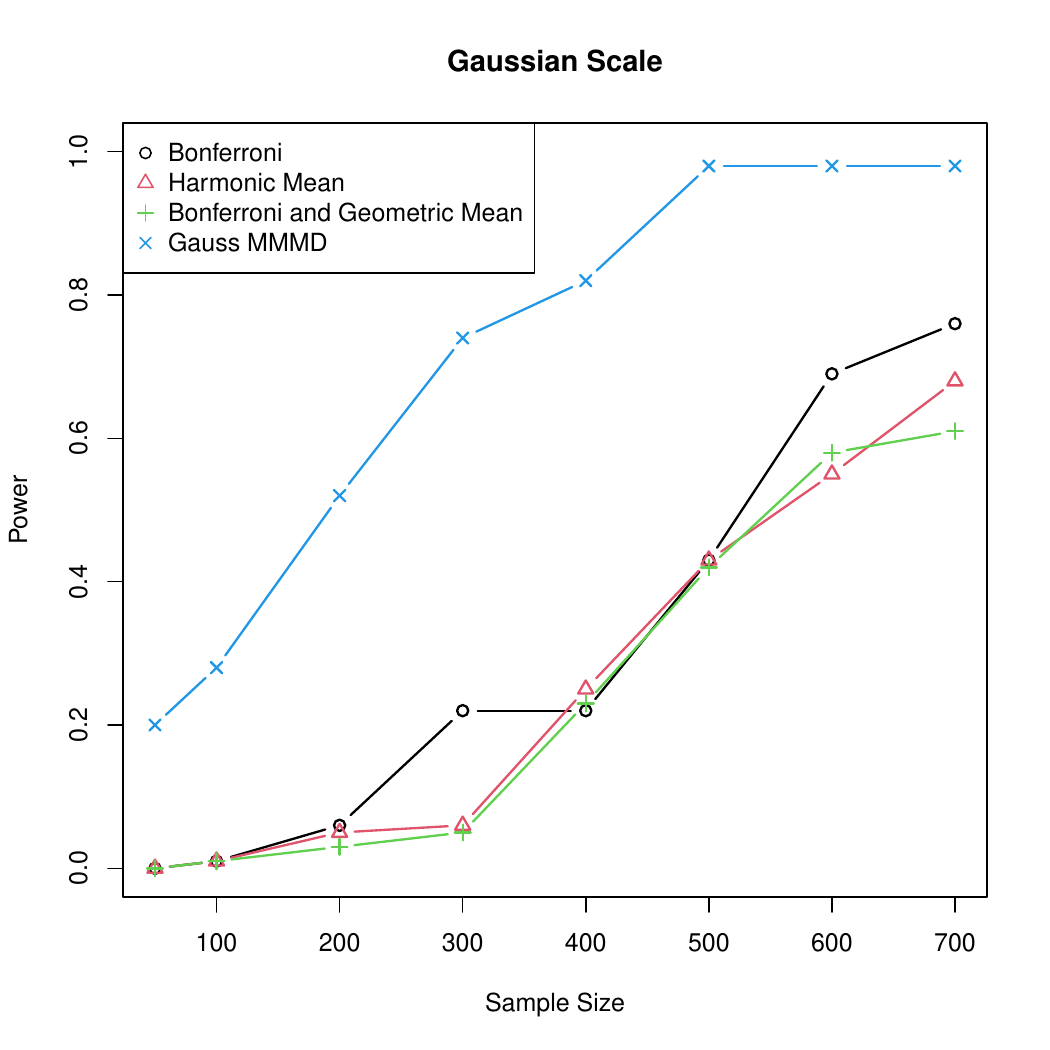}
      \caption*{\small (b)}
    \end{subfigure}
    \caption{Empirical powers for (a) high-dimensional alternatives and (b) scale alternatives as in \eqref{eq:highdimpvalues} and \eqref{eq:scalepvalues}, respectively.}
    \label{fig:pcomb}
    \end{figure}

\color{black}

 \section{Linear Time Statistic and Aggregation}
 \label{sec:linear}

For a kernel $\sfK:\cX\times\cX\ra \R$ define,
\begin{align*}
    \sfL_{\sfK}(z_{1},z_{2}):= \sfK(x_{1},x_{2}) + \sfK(y_{1},y_{2}) - \sfK(x_{1},y_{2}) - \sfK(x_{2},y_{1}) , 
\end{align*}
where $z_{1} = (x_{1},y_{1})$ and $z_{2} = (x_{2},y_{2})$. For simplicity we will assume $m=n$.
Then the linear-time statistic is defined as 
\begin{align}\label{eq:defMMDl}
    \emmd_{\sfL}\left[\sfK, \sX_{m}, \sY_{m}\right] = \frac{1}{\lf m/2\rf}\sum_{i=1}^{\lf m/2\rf}\sfL_{\sfK}(Z_{2i-1}, Z_{2i}) , 
\end{align}
where $\sX_{m} = \{X_{1},\ldots,X_{m}\}$, $\sY_{m} = \{Y_{1},\ldots,Y_{m}\}$, and $Z_{i} = (X_{i}, Y_{i})$, for $1\leq i\leq m$. The statistic \eqref{eq:defMMDl} is an unbiased estimate of the population MMD \eqref{eq:KPQ} and can be computed in linear time (see \citet[Lemma 14]{gretton2012kernel}).
Moreover, the computation of $\emmd_{\sfL}$ requires only $O(1)$ storage, and thus is appealing to use in the presence of data streams. 

Given a collection of kernels $\cK = \{\sfK_{1},\ldots,\sfK_{r}\}$ define the vector of linear time MMD estimates as follows: 
\begin{align*}
    \emmd_{\sfL}\left[\cK,\sX_{m},\sY_{m}\right] = \left(\emmd_{\sfL}\left[\sfK_{1}, \sX_{m}, \sY_{m}\right],\ldots, \emmd_{\sfL}\left[\sfK_{r}, \sX_{m}, \sY_{m}\right]\right)^\top
\end{align*}
Since \eqref{eq:defMMDl} is an average of a set of i.i.d. distributed random variables, by the multivariate Central Limit Theorem we have as $m\ra\infty$, under the same conditions as in Theorem \ref{thm:MMDdistributionPQ} (in Appendix \ref{sec:H1asymptotic}), 
    \begin{align}\label{eq:linearH0H1}
        \sqrt{m}\left(\emmd_{\sfL}\left[\cK,\sX_{m},\sY_{m}\right] - \emmd\left[\bm\cF,P,Q\right]\right)\dto \mathrm{N}_{r}(\bm{0},\bm{\Sigma}^{(\sfL)}) , 
    \end{align}
    where $\bm{\Sigma}^{(\sfL)} = (\sigma_{ab}^{\sfL})_{1 \leq a, b \leq r}$ is a $r\times r$ matrix having entries,
    \begin{align*}
        \sigma_{ab}^{(\sfL)}:= 2\Cov\left[\sfL_{\sfK_{a}}(Z,Z'), \sfL_{\sfK_{b}}(Z,Z')\right], \quad \text{ for } 1\leq a,b\leq r , 
    \end{align*}
    where $(Z,Z') \sim P\times Q$. 
To construct a test statistic from \eqref{eq:linearH0H1} we need to estimate the matrix  
$\bm{\Sigma}^{(\sfL)}$. This can be easily obtained by considering 
\begin{align*}
    \hat{\sigma}_{ab}^{(\sfL)}:= 2\left[\frac{1}{\lf m/2\rf}\sum_{i=1}^{\lf m/2\rf}\sfL_{\sfK_{a}}(Z_{2i-1}, Z_{2i})\sfL_{\sfK_{b}}(Z_{2i-1},Z_{2i}) - \bar{\sfL}_{\sfK_{a}}\bar{\sfL}_{\sfK_{b}}\right] , 
\end{align*} 
where $\bar{\sfL}_{\sfK_{a}}:= \frac{1}{\lf m/2\rf}\sum_{i=1}^{\lf m/2\rf}\sfL_{\sfK_{a}}(Z_{2i-1}, Z_{2i})$, for $1\leq a,b\leq r$. Note that by the law of large numbers, $\hat{\sigma}_{ab}^{(\sfL)} \stackrel{a.s.}\rightarrow \sigma_{ab}^{(\sfL)}$, for $1\leq a,b\leq r$, hence, $\hat{\bm{\Sigma}}^{(\sfL)}\asto\bm{\Sigma}^{(\sfL)}$. This together with \eqref{eq:linearH0H1} leads to the following result: 

\begin{theorem}\label{thm:LinTimeConvgH0}
  Under the assumptions of Theorem \ref{thm:MMDdistributionPQ}, 
    \begin{align*}
        \sqrt{m}\hat{\bm{\Sigma}}_{\sfL}^{-\frac{1}{2}}\left(\emmd_{\sfL}\left[\cK,\sX_{m},\sY_{m}\right] - \emmd\left[\bm\cF,P,Q\right]\right)\dto \mathcal{N}_{r}\left(\bm{0}, \bm I_{r}\right) . 
    \end{align*}
\end{theorem} 

    Note that under $H_{0}$, $\emmd\left[\bm\cF,P,Q\right]= 0$. Hence, we define the  \textit{Mahalanobis Aggregated Linear MMD} (MLMMD) statistic as: 
    \begin{align*}
        T_{m}^{(\sfL)} := \left(\emmd_{\sfL}\left[\cK,\sX_{m},\sY_{m}\right]\right)^{\top}\hat{\bm{\Sigma}}_{\sfL}^{-1}\left(\emmd_{\sfL}\left[\cK,\sX_{m},\sY_{m}\right]\right).
    \end{align*} 
    Clearly, $T_{m}^{(\sfL)}$ can be computed in linear time, when $r=O(1)$. Also, by Theorem \ref{thm:LinTimeConvgH0}, under $H_{0}$, $m T_{m}^{(\sfL)}\dto \chi_{r}^2$. 
Now, for any $\alpha>0$, consider the test 
\begin{align}\label{eq:LMMMDtest}
    \phi_{m}^{(\sfL)} = \bm{1}\left\{mT_{m}^{(\sfL)}> \chi^2_{r, 1-\alpha} \right\} , 
\end{align} 
where $\chi^2_{r, 1-\alpha}$ is the $(1-\alpha)$-th quantile of the $\chi^2_r$ distribution. Theorem \ref{thm:LinTimeConvgH0} implies that \eqref{eq:LMMMDtest} is asymptotically level $\alpha$ and universally consistent for the hypothesis in \eqref{eq:H01PQ}. 

\begin{figure}[!ht]
    \centering 
    \begin{subfigure}[c]{0.32\textwidth}
    \includegraphics[scale = 0.285]{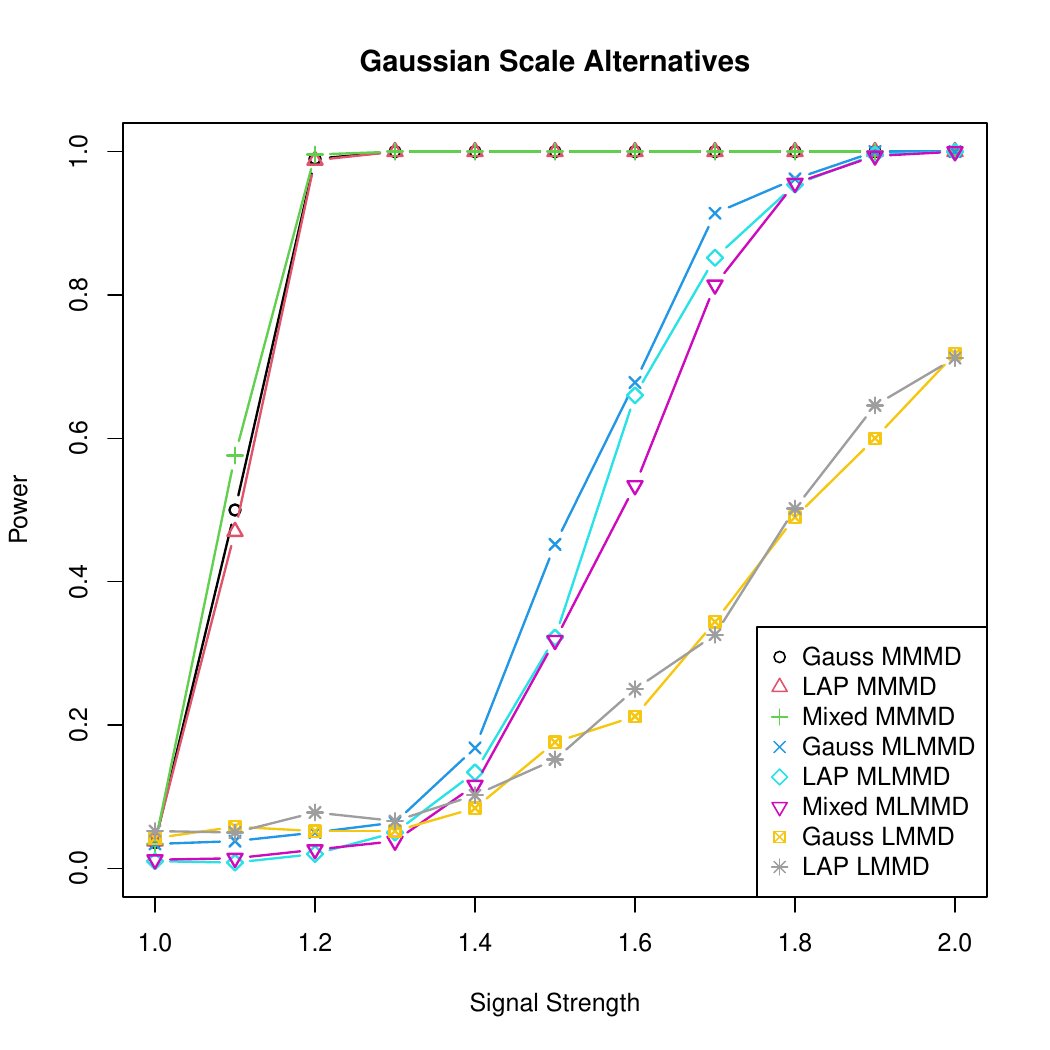} 
     \caption*{\small{(a)}} 
\end{subfigure}    
\begin{subfigure}[c]{0.32\textwidth}
    \includegraphics[scale = 0.285]{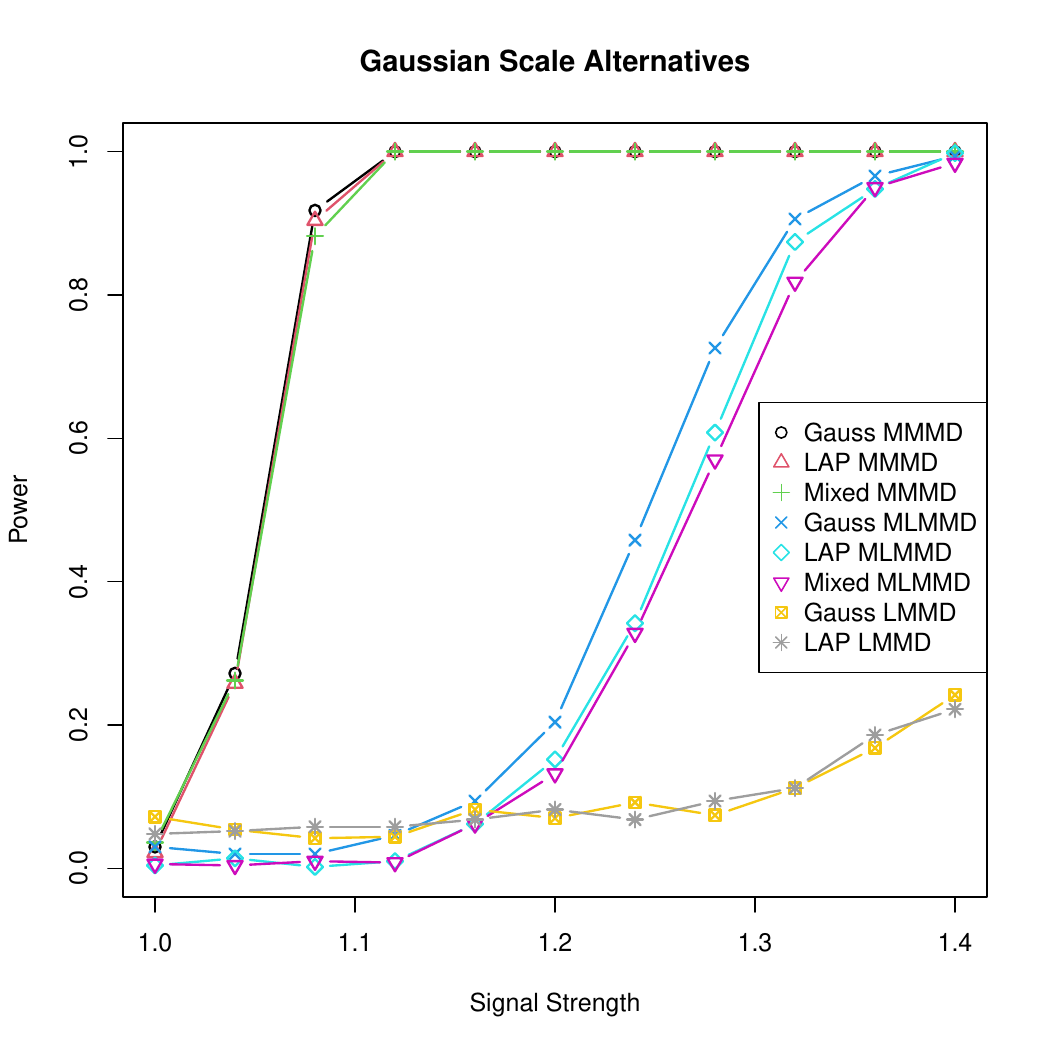} 
     \caption*{\small{(b)}} 
\end{subfigure}    
\begin{subfigure}[c]{0.32\textwidth}
   \includegraphics[scale = 0.285]{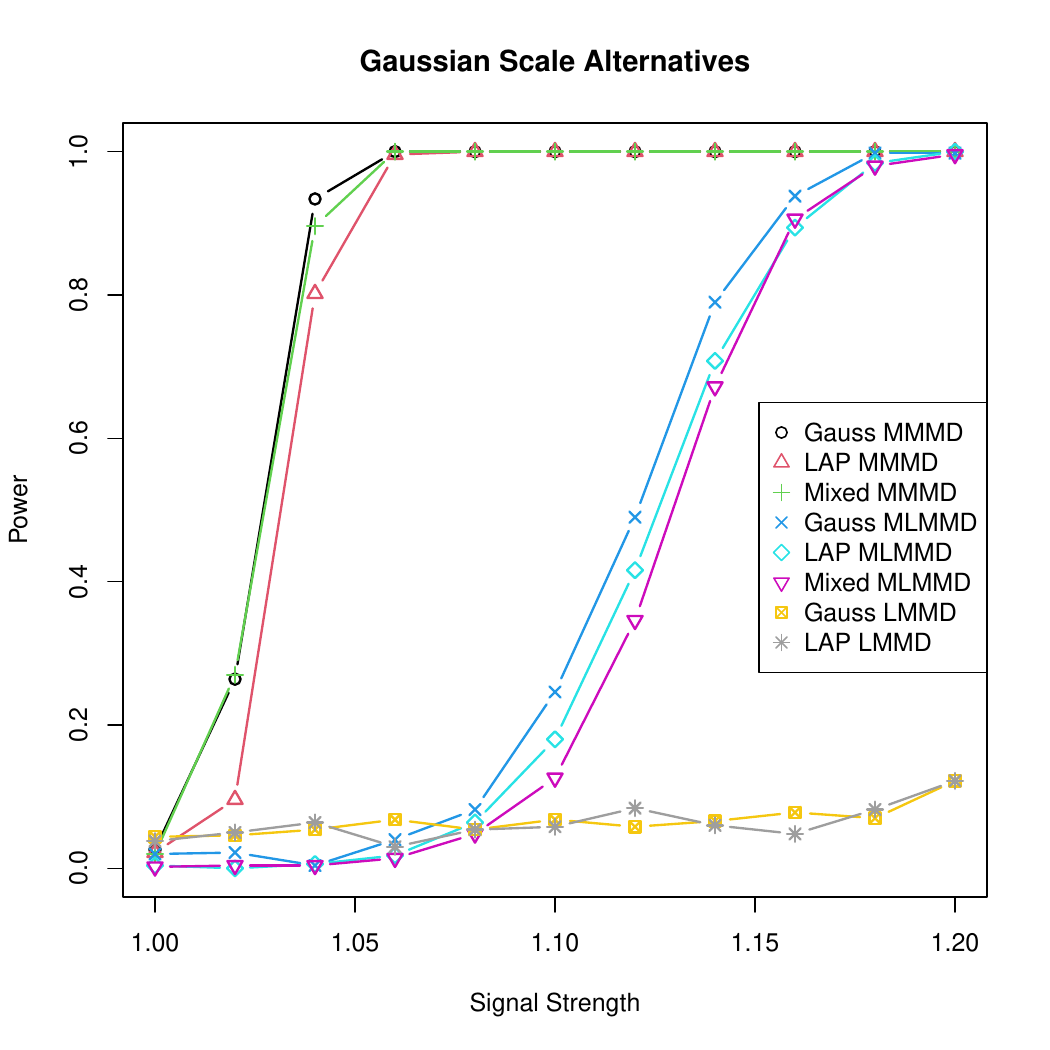} 
    \caption*{\small{(c)}} 
\end{subfigure} 
    \caption{ \textcolor{black}{ Empirical powers of the MMMD, MLMMD and LMMD tests for (a) $d=20$ (b) $d = 75$ and (c) $d= 300$. }} 
    \label{fig:MMMDvsLMMMD}
\end{figure}

\subsection{Empirical Performance of Linear MMD Aggregation}
\label{sec:empiricalsimulations}

\color{black}

In this section we empirically compare the performance of the the linear time MLMMD test in \eqref{eq:LMMMDtest} with their single kernel counterparts and the quadratic time MMMD test in   \eqref{eq:Tmnalpha}.

For the MLMMD statistic we consider a collection of Gaussian, Laplace, and mixed kernels as in Section \ref{sec:experiments}. We refer to these as \texttt{Gauss MLMMD}, \texttt{LAP MLMMD} and \texttt{Mixed MLMMD}, with the kernels chosen from the collections in  \eqref{eq:gaussmmmd}, \eqref{eq:laplacemmmd} and \eqref{eq:mixedmmmd}, respectively. For the single kernel counterparts we consider the Gaussian and Laplace kernels from \eqref{eq:singlekernels} with median bandwidth, and refer to them as \texttt{Gauss LMMD} and \texttt{LAP LMMD}, respectively. For comparison, we consider the following simulation setting,
\begin{align*}
    P = \cN_{d}\left(\bm 0, \bm I_{d}\right)\text{ and }Q = \cN_{d}\left(\bm 0, \sigma^2\bm I_{d}\right) , 
\end{align*}
where $\sigma^2$ is chosen based on the dimension $d$. We show the empirical powers (averaged over 500 iterations) of the different tests, with sample sizes $m=n=200$, dimension $d = 20$ (Figure \ref{fig:MMMDvsLMMMD} (a)), $d = 75$ (Figure \ref{fig:MMMDvsLMMMD} (b)),  and $d=300$ (Figure \ref{fig:MMMDvsLMMMD} (c)), as the signal strength $\sigma^2$ varies over $[0, 2]$, $[0,1.4]$ and $[0,1.2]$, respectively. 
The MLMMD tests outperform single kernel LMMD tests, highlighting, once again, the advantage of aggregation over multiple kernels. 
Also, as expected, the quadratic time MMMD tests outperform the linear time MLMMD tests in terms of power. Nevertheless, the MLMMD tests are not far behind and, as the signal strength becomes larger, and the powers of the all the tests approach 1.

\color{black}

\section{Proofs from Section \ref{sec:rkernels} }
\label{sec:maxL2MMMMDpf}

The proofs of the consistency in the growing $r$ regime rely on finite sample concentration bounds of the MMD estimate \eqref{eq:MMDXY}  around the population $\mmd$ (from \eqref{eq:KPQ}). To this end, we recall the following result from \cite{gretton2012kernel}, which can be proved by an application of the bounded difference inequality. 
\begin{proposition}\label{prop:concMMDmn} \citep[Theorem 17]{gretton2012kernel}
    Let $\sfK$ be a characteristic kernel with unit ball $\cF$ in the corresponding RKHS such that $0\leq \sfK\leq K$. Then for $t>0$,
    \begin{align*}
        \P\left(\left|\emmd\left[\sfK, \sX_{m}, \sY_{m}\right] - \emmd\left[\cF, P, Q\right]\right|>t\right)\leq 6\exp\left(-\frac{t^2 m }{64 K^2}  \right) . 
    \end{align*}
\end{proposition}

We will use this result to establish the consistency of the maximum and $L_2$ aggregations (Proposition \ref{ppn:concMaxL2}) in Appendix \ref{sec:concMaxL2pf} and the consistency of the Mahalanobis aggregation (Theorem \ref{thm:MahaAggrmntest}) in Appendix \ref{sec:proofMahaAggrmn}.  

\subsection{Proof of Proposition \ref{ppn:concMaxL2} }
\label{sec:concMaxL2pf}
Consider $\bm{\cF}_{r_m} := \left\{\cF_a:1\leq a\leq r_{m}\right\}$ where $\cF_a$ is the unit ball in the RKHS generated by $\sfK_a$, for all $1\leq a\leq r_m$. First we proceed with proving validity and consistency of $\phi_{m}^{\max}$. By union bound and Proposition \ref{prop:concMMDmn} notice that,
\begin{align}\label{eq:maxconc}
    \P\bigg(\left|T_m^{\max} - \max_{a=1}^{r_m}\mmd^2\left[\cF_a, P,Q\right]\right| > t\bigg)\leq 6r_m\exp\left(-\frac{t^2m}{64K^2}\right)
\end{align}
Recall that under $H_0$, $\mmd^2\left[\cF_a, P, Q\right] = 0$ for all $1\leq a\leq r_m$. Hence \eqref{eq:maxconc} gives,
\begin{align*}
    \P_{H_0}\left(\left|T_m^{\max}\right|>t_{\max}\right)\leq 6r_m\exp\left(\log\frac{\alpha}{6r_m}\right)\leq \alpha,
\end{align*}
where $t_{\max} = 8K\sqrt{\frac{1}{m}\log\frac{6r_m}{\alpha}}$.
Now note that,
\begin{align*}
    \P\left(\left|T_m^{\max}\right|\leq t_{\max}\right)
    &\leq\P\left(\left|T_{m}^{\max} - \max_{a=1}^{r_m}\mmd^2\left[\cF_a, P,Q\right]\right|> \max_{a=1}^{r_m}\mmd^2\left[\cF_a, P,Q\right] - t_{\max}\right)\\
    &\leq \P\left(\left|T_{m}^{\max} - \max_{a=1}^{r_m}\mmd^2\left[\cF_a, P,Q\right]\right|> \mmd^2\left[\cF_1, P,Q\right] - t_{\max}\right)
\end{align*}
By assumption from Proposition \ref{prop:concMMDmn} recall that $\log r_{m}\ll m$. Hence under $H_1$, for large enough $m, \mmd^2\left[\cF_1, P,Q\right] - t_{\max}>\mmd^2\left[\cF_1, P,Q\right]/2>0$ and by \eqref{eq:maxconc},
\begin{align*}
    \P_{H_1}\left(\left|T_m^{\max}\right|\leq t_{\max}\right)
    &\leq 6r_m\exp\left(-\frac{\left(\mmd^2\left[\cF_1, P,Q\right]/2\right)^2m}{64K^2}\right) = o(1),
\end{align*}
which completes the proof of consistency of $\phi_m^{\max}$. Now we proceed with proving validity and consistency of $\phi_{m}^{L_2}$. Once again by union bound an Proposition \ref{prop:concMMDmn} notice that,
\begin{align}\label{eq:concL2}
    \P\left[\left|T_m^{L_2} - T_{m,0}^{L_2}\right|>t\right]
    &\leq \sum_{a=1}^{r_m}\P\left(\left|\mmd^2\left[\sfK_a,\sX_m,\sY_m\right] - \mmd^2\left[\cF_a, P,Q\right]\right|>\frac{t}{\sqrt{r_m}}\right)\nonumber\\
    &\leq 6r_m\exp\left(-\frac{t^2m}{64r_{m}K^2}\right),
\end{align}
where $T_{m,0}^{L_2}:= \left\|\mmd^2\left[\bm\cF_{r_m}, P,Q\right]\right\|_2$ (recall \eqref{eq:defbcFmmd}). Once again recall that under $H_0$, $\mmd^2\left[\cF_a, P, Q\right] = 0$ for all $1\leq a\leq r_m$. Then \eqref{eq:concL2} along with $t_{L_2} = 8K\sqrt{\frac{r_m}{m}\log \frac{6r_m}{\alpha}}$ gives,
\begin{align*}
    \P_{H_0}\left(\left|T_m^{L_2}\right|>t_{L_2}\right)\leq 6r_m\exp\left(\log\frac{\alpha}{6r_m}\right)\leq \alpha.
\end{align*}
Finally to show consistency of $\phi_m^{L_2}$, notice that,
\begin{align*}
    \P\left(\left|T_{m}^{L_2}\right|\leq t_{L_2}\right)
    &\leq \P\left(\left|T_m^{L_2} - T_{m,0}^{L_2}\right|>T_{m,0}^{L_2} - t_{L_2}\right)\\
    &\leq \P\left(\left|T_m^{L_2} - T_{m,0}^{L_2}\right|> \mmd^2\left[\cF_1, P,Q\right] - t_{L_2}\right).
\end{align*}
By assumption recall that $r_m\log r_m\ll m$. Thus, under $H_1$, for large enough $m$, $\mmd^2\left[\cF_1, P,Q\right] - t_{L_2}>\mmd^2\left[\cF_1, P,Q\right]/2>0$. Then by \eqref{eq:concL2} we get,
\begin{align*}
    \P_{H_1}\left(\left|T_m^{L_2}\right|\leq t_{L_2}\right)\leq 6r_m\exp\left(-\frac{\left(\mmd^2\left[\cF_1, P, Q\right]/2\right)^2m}{64r_mK^2}\right) = o(1).
\end{align*}

\subsection{Proof of Theorem \ref{thm:MahaAggrmntest} }\label{sec:proofMahaAggrmn} 

For any matrix $\bm A = ((a_{ij}))_{1 \leq i, j \leq m}$, we denote by $\|\bm A\|_2 = \max_{\|\bm x\|_2=1} \| \bm A \bm x\|_2$ the operator norm and by $\|\bm A\|_F= (\sum_{1 \leq i, j \leq m} a_{ij}^2)^{\frac{1}{2}}$ the Frobenius norm of $\bm A$. From \eqref{eq:H0sigma} and \eqref{eq:H0sigmaestimate} recall that $\bm\Sigma_{r_{m}} = ((\sigma_{ab}))_{1 \leq a,b \leq r_{m}}$ and $\hat{\bm\Sigma}_{r_{m}} = ((\hat\sigma_{ab}))_{1 \leq a,b \leq r_{m}}$,  
with  
\begin{align}\label{eq:sigmahatsigma}
\sigma_{ab} = 16\mathbb{E}\left[\sfK_a^\circ(X, X')\sfK_b^\circ(X, X')\right] \text{ and } \hat\sigma_{ab} = \frac{16}{m^2} \sum_{1 \leq i, j \leq m} \hat \sfK_a^\circ(X_i, X_j) \hat \sfK_b^\circ(X_i, X_j) , 
\end{align} 
where $X, X' \sim P$, for $1\leq a,b\leq r_{m}$. The main technical ingredient in the proof of Theorem \ref{thm:MahaAggrmntest} is the following lemma which shows that $\hat\sigma_{ab}$ concentrates around $\sigma_{ab}$ for any $1\leq a,b\leq r_{m}$. Consequently, the smallest and largest eigenvalues of $\hat{\bm\Sigma}_{r_{m}}$ also concentrates around the smallest and largest eigenvalues of $\bm\Sigma_{r_{m}}$, respectively.

\begin{lemma}\label{lemma:concsigmahatsigma}
    Suppose $\sup_{1 \leq a \leq r_m}\sfK_{a}\leq K$.  Then for all $1 \leq a, b \leq r_m$ and $t > 1600K^2/m$, 
\begin{align}\label{eq:hatsigmaexponential}
\P\left[\left|\hat\sigma_{ab} - \sigma_{ab}\right|> t \right] \leq 2\exp\left(-\frac{m t^2}{ 4 LK^4 } \right) , 
\end{align}   
    where $L:=165888$. Furthermore, for $t > 1600K^2r_m/m$, 
   \begin{align}\label{eq:lambda1m}
       \max\left\{ \P\left[\left|\hat\lambda_{1} - \lambda_{1}\right|>t\right] ,  \P\left[\left|\hat\lambda_{m} - \lambda_{m}\right|>t\right] \right\} \leq 2r_{m}^2\exp\left(-\frac{m t^2}{4 K^4L r_{m}^2 } \right) , 
   \end{align}
where $\lambda_1, \lambda_m$ are the largest and smallest eigenvalues of $\bm{\Sigma}_{r_{m}}$, respectively, and $\hat{\lambda}_1, \hat{\lambda}_m$ are the largest and smallest eigenvalues of 
$\hat{\bm\Sigma}_{r_{m}}$, respectively. 
\end{lemma}

The proof of Lemma \ref{lemma:concsigmahatsigma} is given in Appendix \ref{sec:concSigmainv}. We now use this result to prove Theorem \ref{thm:MahaAggrmntest}. To begin with, recalling \eqref{eq:TmnMA} note that 
\begin{align*}
|T_m^{MA}| & = \bigg|\emmd\left[\cK_{r_{m}},\cX_{m},\cY_{m}\right]^{\top}
    \hat{\bm\Sigma}_{r_{m}}^{-1}\emmd\left[\cK_{r_{m}},\cX_{m},\cY_{m}\right] \bigg| \nonumber \\ 
    &\leq \frac{1}{\hat \lambda_{m} }\left\|\emmd\left[\cK_{r_{m}},\cX_{m},\cY_{m}\right] \right\|_{2}^2 ,  
\end{align*} 
where $\hat \lambda_{m}\geq 0$ is the smallest eigenvalue of the matrix $\hat{\bm\Sigma}_{r_{m}}$. Now, consider the event $\mathcal A_1 = \{ |\hat \lambda_{m} - \lambda_{m} | \leq \omega_m \}$, for $\omega_m = \frac{r_m}{\sqrt m} \log r_m$. Note that $r_m/m = o(\omega_m)$. Hence, by \eqref{eq:lambda1m} in Lemma \ref{lemma:concsigmahatsigma}, $\P[\mathcal A_1^c] \rightarrow 0$. 
Also, note that 
$$\left\{ 
\left\|\emmd\left[\cK_{r_{m}},\cX_{m},\cY_{m}\right] \right\|_{2}^2 > t \hat \lambda_{m} \right\} \subseteq \left\{\left\|\emmd\left[\cK_{r_{m}},\cX_{m},\cY_{m}\right] \right\|_{2}^2 > t (\lambda_{m} - \omega_m) \right\}$$
on the event $\mathcal A_1$. Since $\emmd\left[\bm{\cF}_{r_{m}}, P, P \right] = 0$, by a union bound argument along with Proposition \ref{prop:concMMDmn} we get,
\begin{align}
    \P_{H_0}\left[ \{ |T_m^{MA}| >t \} \cap \mathcal A_1 \right]
    &\leq \P_{H_0}\left[\left\|\emmd\left[\cK_{r_{m}},\cX_{m},\cY_{m}\right] \right\|_{2}^2 > t  ( \lambda_{m} - \omega_m) \right]\nonumber\\ 
    &\leq \P_{H_0}\left[\left\|\emmd\left[\cK_{r_{m}},\cX_{m},\cY_{m}\right] \right\|_{\infty}> \sqrt{ \frac{t (\lambda_{m} - \omega_m) }{r_m} } \right]\nonumber\\ 
    &\leq 6r_{m}\exp\left(-\frac{t (\lambda_{m} - \omega_m) m}{64K^2 r_m} \right) \nonumber \\ 
     &\leq 6r_{m}\exp\left(-\frac{t \omega_{m} m}{32K^2 r_m} \right) =  6r_{m}\exp\left(-\frac{ t \sqrt m \log r_m}{32K^2} \right) , \label{eq:Tmsecondconc}
\end{align}  
where the last inequality holds for large $m$, since $\omega_m = o(\lambda_m)$ by assumption in Theorem \ref{thm:MahaAggrmntest}. 
Now, let $t = t_0 : = \frac{64K^2}{\sqrt m}$. By   
applying  \eqref{eq:Tmsecondconc}, under $H_0$ we get, 
$$\P_{H_0}\left[ \{ |T_m^{MA}| >t_0\} \cap \mathcal A_1 \right] \leq \frac{6}{r_m} \rightarrow 0.$$ Since $\P[\mathcal A_1^c] \rightarrow 0$, this implies, under $H_0$, $\P_{H_0}\left[ |T_m^{MA}| >t_0 \right] \rightarrow 0$. 

Next, we proceed to control the probability of Type II error. First note that
\begin{align}\label{eq:KPQprobability}
\P\left[ \| \emmd\left[\cK_{r_{m}},\cX_{m},\cY_{m}\right] - \emmd\left[ \bm{\cF}_{r_m}, P, Q \right] \|_\infty > t \right] 
& \leq 2 r_m \exp\left(-\frac{t^2 m}{64K^2} \right) . 
\end{align} 
Denote the event 
$$\mathcal B = \left\{ \left|\emmd\left[ \sfK_{a},\cX_{m},\cY_{m}\right] - \emmd\left[ \cF_{a}, P, Q \right]\right| \leq \frac{128K \log r_m}{\sqrt m} \text{ for all } 1 \leq a \leq r \right\}.$$ 
The inequality in \eqref{eq:KPQprobability} implies that $\P\left[ \mathcal B \right] \rightarrow 1$. Also, consider the event $\mathcal A_2 = \{ |\hat \lambda_{1} - \lambda_{1} | \leq \omega_m \}$. By \eqref{eq:lambda1m} in Lemma \ref{lemma:concsigmahatsigma}, $\P[\mathcal A_2^c] \rightarrow 0$. Now, denote 
$L_{r_m} := \inf_{1 \leq a \leq r_m} \emmd\left[ \cF_{a}, P, Q \right]$. On the set $\mathcal A_2 \cap \mathcal B$,
\begin{align*}
& |T_m^{MA}| \nonumber \\ 
& = \bigg|\emmd\left[\cK_{r_{m}},\cX_{m},\cY_{m}\right]^{\top}
    \hat{\bm\Sigma}_{r_{m}}^{-1}\emmd\left[\cK_{r_{m}},\cX_{m},\cY_{m}\right] \bigg| \nonumber \\ 
    & \geq \frac{1}{\hat \lambda_{1}}\left\|\emmd\left[\cK_{r_{m}},\cX_{m},\cY_{m}\right] \right\|_{2}^2 \nonumber \\  
       & \geq \frac{1}{\hat \lambda_{1}} \left( \left\|\emmd\left[ \bm{\cF}_{r_m}, P, Q \right] \right\|_{2}^2 - \frac{256 K\sqrt{r_m}\log r_m}{\sqrt m} \left\|\emmd\left[ \cK_{r_m}, P, Q\right] \right\|_2 - \frac{128^2 K^2r_m \log^2 r_m}{m}\right)\nonumber \\  
        & \geq \frac{1}{\hat \lambda_{1}} \left( r_m L_{r_m}^2  - \frac{256K^2 r_m \log r_m}{\sqrt m} - \frac{128^2K^2 r_m \log^2 r_m}{m}  \right)  \nonumber \\  
        & \geq \frac{r_m}{\lambda_{1} + \omega_m } \left( L_{r_m}^2  - \frac{256K^2\log r_m}{\sqrt m} - \frac{128^2K^2 \log^2 r_m}{m}  \right)  ,  
\end{align*} 
where the last step uses the condition in set $\mathcal A_2$. Note that $0 < \lambda_1 \leq \| \Sigma_{r_m} \|_F = O(r_m)$, since the kernels are bounded, and under $H_1$, $\lim_{m \rightarrow \infty} L_{r_m} > 0$ by assumption of Theorem \ref{thm:MahaAggrmntest}.
This shows, under $H_1$,    
\begin{align*} 
\P_{H_1}\left[ \{ |T_m^{MA}| < t_0 \} \cap \mathcal A_2 \cap \mathcal B \right] \rightarrow 0 , 
\end{align*}
since $t_0 = \frac{64K^2}{\sqrt m} = o(1)$. Since $\P[\mathcal A_2^c \cup \mathcal B^c] \rightarrow 0$, this implies, $\P_{H_1}\left[ |T_m^{MA}| < t_0 \right] \rightarrow 0$.

\subsubsection{Proof of Lemma \ref{lemma:concsigmahatsigma}}\label{sec:concSigmainv}

    Recalling the definition from \eqref{eq:Kxycentered} and doing an expansion we get,
    \begin{align}\label{eq:expKadotKbdot}
        \E\left[\sfK_{a}^{\circ}(X_{1},X_{2})\sfK_{b}^{\circ}(X_{1},X_{2})\right] 
        & = \E\left[\sfK_{a}(X_{1},X_{2})\sfK_{b}(X_{1},X_{2})\right] - 2\E\left[\sfK_{a}(X_{1},X_{2})\sfK_{b}(X_{1},X_{3})\right]\nonumber\\
        &\ + \E\left[\sfK_{a}(X_{1},X_{2})\right]\E\left[\sfK_{b}(X_{1},X_{2})\right]
    \end{align}
    Further by \eqref{eq:expandKhatdot} we have,
    \begin{align}\label{eq:sigmahatabexpand}
        \frac{1}{m^2}\sum_{1\leq i,j \leq m} \hat\sfK_{a}^{\circ}(X_{i},X_{j})\hat\sfK_{b}^{\circ}(X_{i},X_{j}) = T_{1} -2T_{2} + T_{3} , 
    \end{align}
    where $T_i = \eta_i/m^2$, for $1 \leq i \leq 3$, with $\eta_1, \eta_2, \eta_3$ as defined in \eqref{eq:expandKhatdot}.  Observe that,
    \begin{align*}
        \E T_{1} = \frac{m(m-1)}{m^2}\E\left[\sfK_{a}(X_{1},X_{2})\sfK_{b}(X_{1},X_{2})\right] + \frac{1}{m}\E\left[\sfK_{a}(X_{1},X_{1})^2\right] . 
    \end{align*}
    Hence,
    \begin{align}\label{eq:zeta1conc}
        \left|\E T_{1} - \E\left[\sfK_{a}(X_{1},X_{2})\sfK_{b}(X_{1},X_{2})\right]\right|\leq \frac{2K^2}{m} . 
    \end{align}
    By similar computations we have,
    \begin{align}\label{eq:zeta2conc}
        \left|\E T_{2} - \E\left[\sfK_{a}(X_{1},X_{2})\sfK_{b}(X_{1},X_{3})\right]\right|\leq \frac{8K^2}{m} 
    \end{align}
    and 
    \begin{align}\label{eq:zeta3conc}
        \left|\E T_{3} - \E\left[\sfK_{a}(X_{1},X_{2})\right]\E\left[\sfK_{b}(X_{1},X_{2})\right]\right|\leq \frac{24K^2}{m} . 
    \end{align}
Now recalling \eqref{eq:expandKhatdot}, \eqref{eq:sigmahatabexpand} along with \eqref{eq:zeta1conc}, \eqref{eq:zeta2conc} and \eqref{eq:zeta3conc} we get,
    \begin{align}\label{eq:expbddkakabdot}
        \left|\E\left[\frac{1}{m^2}\sum_{1 \leq i, j \leq m}\hat\sfK_{a}^{\circ}(X_{i},X_{j})\hat\sfK_{b}^{\circ}(X_{i},X_{j})\right] - \E\left[\sfK_{a}^{\circ}(X_{1},X_{2})\sfK_{b}^{\circ}(X_{1},X_{2})\right]\right|\leq \frac{50K^2}{m} . 
    \end{align}
    For $1 \leq i \leq 3$, denote by $T_i^{(a)}$ the analogue of $T_i$ when the variable $X_a$ is replaced by an independent copy $X_a'$, keeping the other variables $(X_b)_{b \ne a}$ fixed.   
    Then recalling the bounds on the kernels, it follows that 
    $$|T_1^{(a)}- T_1| \leq \frac{4K^2}{m},|T_2^{(a)}- T_2| \leq \frac{8K^2}{m}\text{ and }|T_3^{(a)}- T_3| \leq \frac{16K^2}{m}$$
    for all $1 \leq a \leq m$. Now applying the bounded difference inequality \citep{mcdiarmid1989method} and recalling \eqref{eq:sigmahatabexpand} shows for all $t\geq 0$,
    \begin{align}\label{eq:Kconcentration}
        & \P\left[\left|\frac{1}{m^2}\sum_{1 \leq i, j \leq m}\hat\sfK_{a}^{\circ}(X_{i},X_{j})\hat\sfK_{b}^{\circ}(X_{i},X_{j}) - \E\left[\frac{1}{m^2}\sum_{1 \leq i, j \leq m}\hat\sfK_{a}^{\circ}(X_{i},X_{j})\hat\sfK_{b}^{\circ}(X_{i},X_{j})\right]\right|>t\right] \nonumber \\ 
        & \leq 2\exp\left(-\frac{mt^2}{648K^4}\right) . 
    \end{align}
    Finally recalling \eqref{eq:sigmahatsigma} and \eqref{eq:expbddkakabdot}, for all $t\geq 0$ we get,
    \begin{align*}
    	& \P\left[\left|\hat\sigma_{ab} - \sigma_{ab}\right|> t + \frac{800K^2}{m} \right] \nonumber \\ 
	& = \P\left[\left|\frac{1}{m^2}\sum_{1 \leq i, j \leq m}\hat\sfK_{a}^{\circ}(X_{i},X_{j})\hat\sfK_{b}^{\circ}(X_{i},X_{j}) - \E\left[\sfK_{a}^{\circ}(X_{1},X_{2})\sfK_{b}^{\circ}(X_{1},X_{2})\right]\right|> \frac{t}{16} + \frac{50K^2}{m}\right] \nonumber \\ 
	& = \P\left[\left|\frac{1}{m^2}\sum_{1 \leq i, j \leq m}\hat\sfK_{a}^{\circ}(X_{i},X_{j})\hat\sfK_{b}^{\circ}(X_{i},X_{j}) - \E\left[\frac{1}{m^2}\sum_{1 \leq i, j \leq m}\hat\sfK_{a}^{\circ}(X_{i},X_{j})\hat\sfK_{b}^{\circ}(X_{i},X_{j})\right]\right|> \frac{t}{16} \right] \nonumber \\  
        & \leq 2\exp\left(-\frac{mt^2}{LK^4}\right) , 
    \end{align*} 
    where the last step uses \eqref{eq:Kconcentration}. Note that $t > 1600K^2/m$ implies, $t-800K^2/m \geq t/2$. Hence, for $t > 1600K^2/m$, 
    $$ \P\left[\left|\hat\sigma_{ab} - \sigma_{ab}\right|> t \right] \leq 2\exp\left(-\frac{m}{LK^4} \left( t-\frac{800}{m} \right)^2 \right) \leq 2\exp\left(-\frac{m t^2}{ 4 LK^4 } \right) , 
 $$
which completes the proof of \eqref{eq:hatsigmaexponential} in Lemma \ref{lemma:concsigmahatsigma}. 

Now, using \eqref{eq:hatsigmaexponential} and an union bound argument gives,
\begin{align}\label{eq:Sigmaconc}
    \P\left[\left\|\hat{\bm{\Sigma}}_{r_{m}} - \bm{\Sigma}_{r_{m}}\right\|_{F}>t\right] & \leq  \P\left[ \max_{1 \leq a, b \leq r_{m}} | \hat{\sigma}_{ab} - \sigma_{ab} | > \frac{t}{r_m} \right] \nonumber \\ 
    & \leq 2r_{m}^2\exp\left(-\frac{m t^2}{4 L K^4r_{m}^2} \right) , 
\end{align} 
whenever $t \geq 1600K^2 r_m/ m$. Recall that $\hat\lambda_{m}$ and $\lambda_{m}$ are the smallest eigenvalues of $\hat{\bm{\Sigma}}_{r_{m}}$ and $\bm\Sigma_{r_{m}}$ respectively. Then by Weyl's inequality,
\begin{align*}
    \left|\hat\lambda_{m} - \lambda_{m}\right| \leq \left\|\hat{\bm{\Sigma}}_{r_{m}} - \bm{\Sigma}_{r_{m}}\right\|_{2} \leq \left\|\hat{\bm{\Sigma}}_{r_{m}} - \bm{\Sigma}_{r_{m}}\right\|_{F} . 
    \end{align*}
Hence, for $t \geq 1600K^2 r_m/ m$, 
\begin{align}\label{eq:lambdam}
    \P\left[\left|\hat\lambda_{m} - \lambda_{m}\right|>t\right]\leq 2r_{m}^2\exp\left(-\frac{m t^2}{4 L K^4r_{m}^2 } \right) . 
\end{align}
Similarly, 
\begin{align*}
    \left|\hat\lambda_{1} - \lambda_{1}\right| \leq \left\|\hat{\bm{\Sigma}}_{r_{m}} - \bm{\Sigma}_{r_{m}}\right\|_{2} \leq \left\|\hat{\bm{\Sigma}}_{r_{m}} - \bm{\Sigma}_{r_{m}}\right\|_{F} . 
    \end{align*}
Hence, for $t \geq 1600 K^2r_m/ m$, 
\begin{align}\label{eq:lambda1}
    \P\left[\left|\hat\lambda_{1} - \lambda_{1}\right|>t\right]\leq 2r_{m}^2\exp\left(-\frac{m t^2}{4 LK^4 r_{m}^2 } \right) . 
\end{align}
The bounds in \eqref{eq:lambdam} and \eqref{eq:lambda1} together completes the proof of \eqref{eq:lambda1m} in Lemma \ref{lemma:concsigmahatsigma}.

\color{black}

\section{Comparison with MMDAgg Test} 
\label{sec:adpativeMMDexperiments}

In this section we compare the MMMD test with the MMDAgg test as implemented in \citet{schrab2021mmd}. 

\subsection{Perturbed $1$-dimensional Uniform Distribution} 
\label{sec:uniformR}
\textcolor{black}{ Here, $P$ is the uniform distribution on $[0, 1]$ and $Q$ is a perturbed version of uniform distribution on $[0, 1]$ as considered \citet{schrab2021mmd}. } 
The perturbed density at $x\in \mathbb{R}$ is given by: 
\begin{align}\label{eq:ftheta}
    f_{\bm{\theta}}(x) = \bm{1}\left\{x\in [0,1]\right\} + \frac{c_1}{R}\sum_{v\in\{1,2,\ldots,R\}}\theta_{v}G\left(Rx - v\right)
\end{align}
where $c_1=2.7$, $\bm{\theta} = (\theta_{1},\theta_{2},\ldots, \theta_{R})\in \{-1,1\}^{R}$, $R$ is the number of perturbations, and 
\begin{align*}
    G(t):= \exp\left(-\dfrac{1}{1-(4t + 3)^2}\right)\bm{1}\left\{t\in (-1, -\tfrac{1}{2})\right\} - \exp\left(-\dfrac{1}{1-(4t + 1)^2}\right)\bm{1}\left\{t\in (-\tfrac{1}{2}, 0)\right\} . 
\end{align*} 
It is known that for $R$ large enough the difference between the uniform density and the perturbed uniform density lies in the Sobolev ball \citep{kernelnonparametricsmoothalternatives}. Figure \ref{fig:MMDAggMixLocal}(a) shows the empirical powers of the \texttt{MMDAgg} test with Gaussian and Laplace kernels, with bandwidths chosen according to  the increasing weight strategy as in \citet{schrab2021mmd}; the empirical powers of the \texttt{Gauss MMMD}, \texttt{LAP MMMD}, and \texttt{Mixed MMMD} with bandwidth chosen as in  \eqref{eq:gaussmmmd}, \eqref{eq:laplacemmmd} and \eqref{eq:mixedmmmd},  respectively; and the empirical power of the FR test. The sample sizes are set to $m=n=500$, the perturbations range over $R=1,2,3,4,5,6$, and the power is computed over 500 repetitions, with a new value of $\bm{\theta}\in \{-1,1\}^R$ sampled uniformly in each iteration. The plot shows that the MMMD tests have better finite-sample power than the MMAgg tests, particularly for larger perturbations.

\begin{figure}[!ht]
    \centering 
    \begin{subfigure}[c]{0.32\textwidth}
    \includegraphics[scale = 0.28]{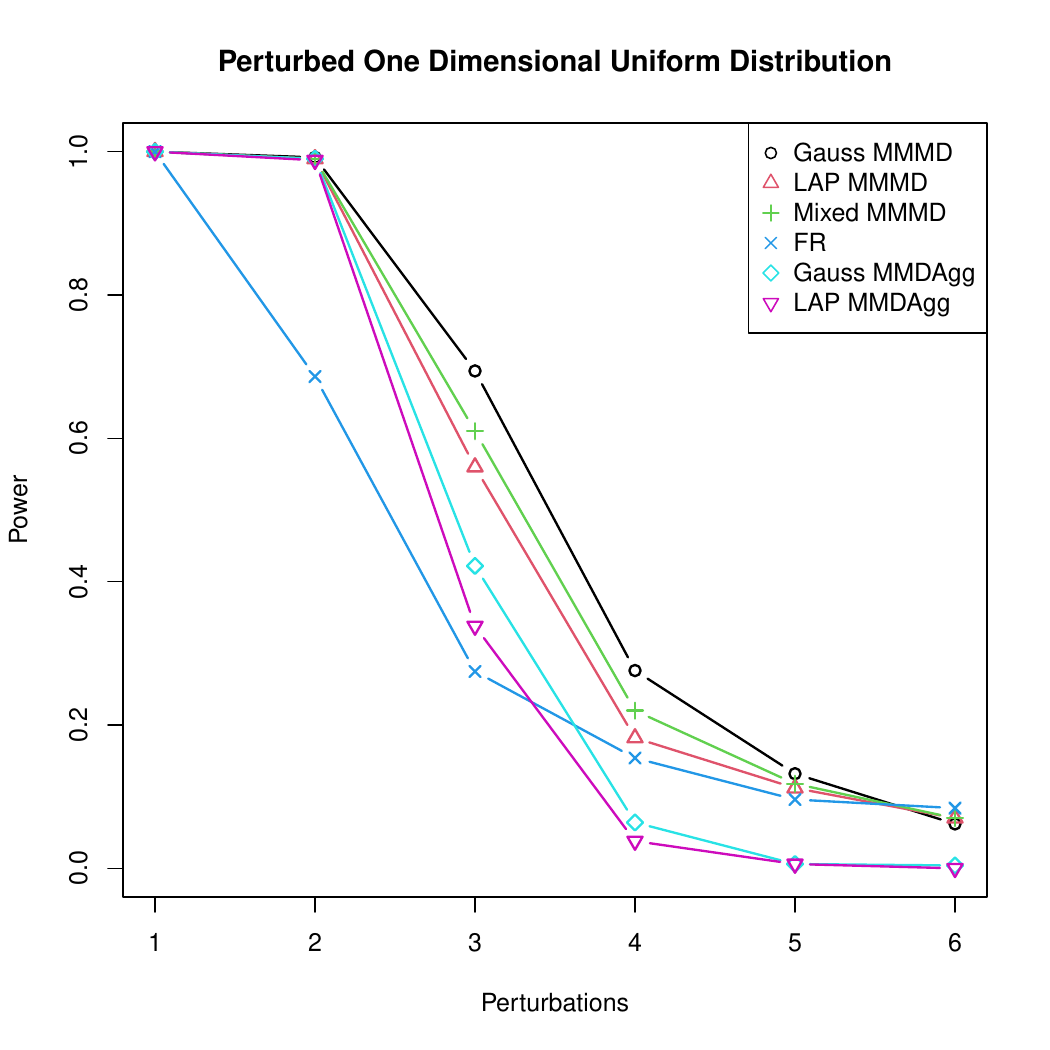} 
     \caption*{\small{(a)}} 
\end{subfigure}    
\begin{subfigure}[c]{0.32\textwidth}
    \includegraphics[scale = 0.28]{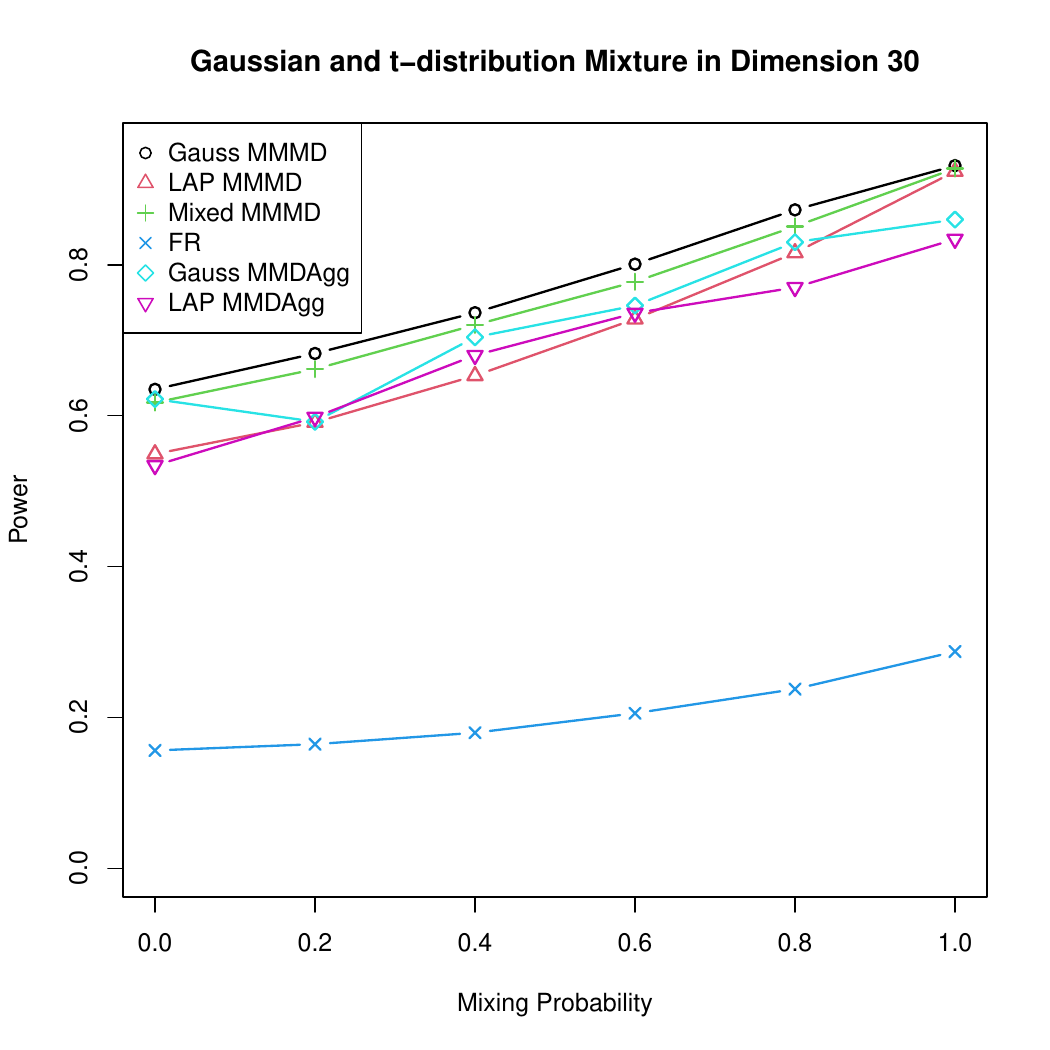} 
     \caption*{\small{(b)}} 
\end{subfigure}    
\begin{subfigure}[c]{0.32\textwidth}
   \includegraphics[scale = 0.28]{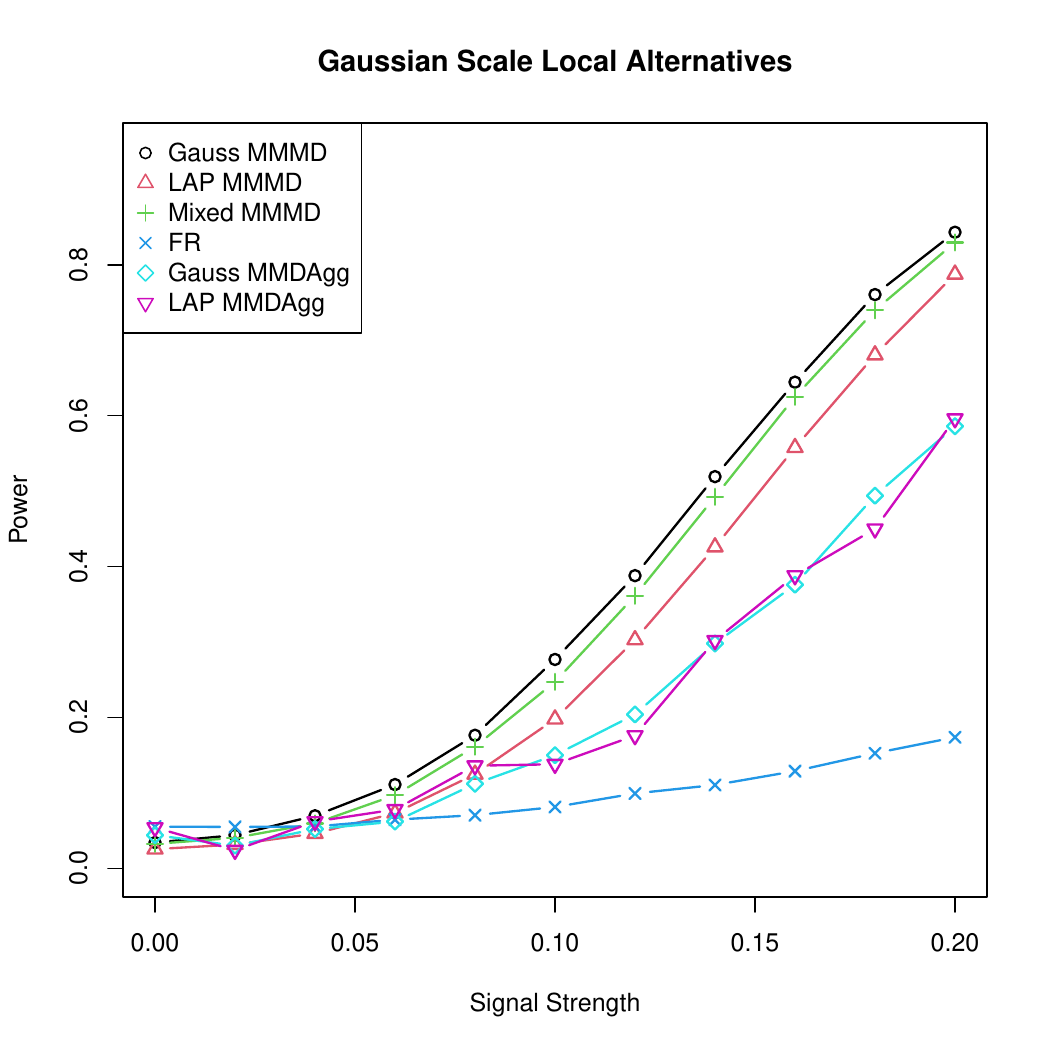} 
    \caption*{\small{(c)}} 
\end{subfigure} 
    \caption{Empirical powers of the MMMD and MMDAgg tests for (a) the perturbed uniform distribution, (b) mixture alternatives, and (c) local alternatives. }
    \label{fig:MMDAggMixLocal}
\end{figure}

\subsection{Mixture and Local Alternatives}

Next, we compare the empirical power (by repeating the experiment 500 times)  of the MMMD tests with the MMDAgg tests (based on Gaussian and Laplace kernels) for the mixtures alternative in $d=30$ as in Section \ref{sec:mixtureexperiments} and the local alternative in $d=20$ as in Section \ref{sec:localexperiments}. For the MMMD tests we use bandwidths as in \eqref{eq:gaussmmmd}, \eqref{eq:laplacemmmd}, and \eqref{eq:mixedmmmd}, while for the MMDAgg tests we consider the increasing weight strategy with collection of bandwidths $\Lambda(-2,2)$ as defined in \citet[Section 5.3]{schrab2021mmd}. 
For the mixture alternative the MMMD tests perform slightly better than the MMDAgg tests (see Figure \ref{fig:MMDAggMixLocal}(b)), while for the local alternative the MMMD tests show significant improvement over the MMDAgg tests (see Figure \ref{fig:MMDAggMixLocal}(c)).

\color{black}

\section{Proof of Theorem \ref{thm:asymkernelonN}} 
\label{sec:H0KNpf}

Given a collection of bandwidths $\bm \nu = (\nu_1, \nu_2, \ldots, \nu_r)$ recall that 
\begin{align}\label{eq:MMDvecKn}
    \emmd\left[\cK_{\bm \nu},\sX_{m},\sY_{n}\right] := \left(\emmd[\sfK_{\nu_{1}},\sX_{m},\sY_{n}],\cdots,\emmd[\sfK_{\nu_{r}},\sX_{m},\sY_{n}]\right)^{\top} . 
\end{align}
Then considering $\bm{\alpha}:=(\alpha_{1},\ldots,\alpha_{r})^{\top}\in\R^{r}$ note that, 
\begin{align}\label{eq:MMDalphaKH}
    \bm{\alpha}^{\top}\emmd\left[\cK_{\bm \nu},\sX_{m},\sY_{n}\right] & = \sum_{a=1}^{r}\alpha_{a}\emmd\left[\sfK_{\nu_{a}}, \sX_{m},\sY_{n}\right] \nonumber \\ 
  &  = \emmd\left[\sfH_{\bm \nu},\sX_{m},\sY_{n}\right] , 
\end{align}
where $\sfH_{\bm \nu}:=\sum_{a=1}^{r}\alpha_{a}\sfK_{\nu_{a}}$.
\begin{proposition}\label{prop:asympHNr}
Let $\sfH_{\bm \nu}$ be as defined above and suppose the assumptions of Theorem \ref{thm:asymkernelonN} hold. Then under $H_{0}$ in the asymptotic regime \eqref{eq:mn} the following hold: 
\begin{align*}
    \frac{mn}{\sqrt{2}(m+n)}\lambda_{N}^{d/4}\emmd\left[\sfH_{\bm \nu},\sX_{m},\sY_{n}\right]\dto\pi^{d/4}\|f_{P}\|_{2} \cdot Z , 
\end{align*}
where $Z\sim \mathcal{N} \left(0,\bm{\alpha}^{\top} \Gamma \bm{\alpha}\right)$, for $\Gamma$ as defined in Theorem \ref{thm:asymkernelonN}. 
\end{proposition} 

The proof of Proposition \ref{prop:asympHNr} is given in Appendix \ref{sec:proofofpropasumpHNr}. Since $\bm{\alpha}\in \R^{r}$ is chosen arbitrarily, the proof of Theorem \ref{thm:asymkernelonN} follows from  Proposition \ref{prop:asympHNr} and the Cram\'er-Wold device. The proof of Corollary \ref{corollary:KNH0} is given in Section \ref{sec:corKNpf}.

\subsection{Proof of Proposition \ref{prop:asympHNr}}\label{sec:proofofpropasumpHNr}

First, note that 
\begin{align}\label{eq:HHcircequiv}
    \emmd\left[\sfH_{\bm \nu},\cX_{m},\cY_{n}\right] = \emmd\left[\sfH_{\bm \nu}^{\circ},\cX_{m},\cY_{n}\right]
\end{align}
where $\sfH_{\bm \nu}^{\circ}$ is the centerd version of $\sfH_{\bm \nu}$ defined by,
\begin{align*}
    \sfH_{\bm \nu}^{\circ}(x,y) = \sfH_{\bm \nu}(x,y) - \E_{X\sim P}\sfH_{\bm \nu}(X,y) - \E_{X'\sim P}\sfH_{\bm \nu}(x,X') + \E_{X,X'\sim P}\sfH_{\bm \nu}(X,X').
\end{align*} Clearly, 
\begin{align}\label{eq:HcircalphaK}
    \sfH_{\bm \nu}^{\circ} = \sum_{a=1}^{r}\alpha_{a}\sfK_{\nu_{a}}^{\circ}, 
\end{align}
where $\sfK_{\nu_{a}}^{\circ}$ is defined in \eqref{eq:Kxycentered}. Now, recall that $N= m+n$ and $\frac{m}{m+n} \rightarrow \rho$. Define for any $\sfK: \cX \times \cX \rightarrow \mathbb R$, 
\begin{align}\label{eq:MMDXYN}
    \overline{\mmd}^2 \left[\sfK, \sX_m, \sY_n \right] = \hat{\mathcal W}_{\sX_m} + \hat{\mathcal W}_{\sY_n} - 2 \hat{\mathcal B}_{\sX_m, \sY_n} , 
\end{align}   
where 
\begin{align*}
 \overline{\mathcal W}_{\sX_m} :=   \frac{1}{N^2 \rho^2}\sum_{1 \leq i \ne j \leq m} \sfK \left(X_{i},X_{j}\right) \text{ and }  \overline{\mathcal W}_{\sY_n} := \frac{1}{N^2 (1-\rho)^2}\sum_{1 \leq i \ne j \leq n} \sfK \left(Y_{i},Y_{j}\right) , 
 \end{align*} 
and  
 \begin{align*}
 \overline{\mathcal B}_{\sX_m, \sY_n} :=  \frac{1}{N^2 \rho (1-\rho) }\sum_{i=1}^{m}\sum_{j=1}^{n} \sfK \left(X_{i},Y_{j}\right) . 
\end{align*}

By Theorem 1 and Lemma 4 of \cite{kernelnonparametricsmoothalternatives} we get, 
\begin{align*}
    \frac{mn}{\sqrt{2}(m+n)}\lambda_{N}^{d/4}\Bigg(\emmd\left[\sfK_{\nu_{a}}^{\circ},\sX_{m},\sY_{n}\right] - \overline{\mmd}^2\left[\sfK_{\nu_{a}}^{\circ},\cX_{m},\cY_{n}\right]\Bigg) = o_{P}(1).
\end{align*}
Hence, recalling \eqref{eq:HcircalphaK}, 
\begin{align}\label{eq:MMDequivhatMMD}
    \frac{mn}{\sqrt{2}(m+n)}\lambda_{N}^{d/4}\Bigg(\emmd\left[\sfH_{\bm \nu}^{\circ},\sX_{m},\sY_{n}\right] - \overline{\mmd}^2\left[\sfH_{\bm \nu}^{\circ},\cX_{m},\cY_{n}\right]\Bigg) = o_{P}(1).
\end{align}
Recalling \eqref{eq:HHcircequiv}, it suffices to derive the asymptotic distribution of $\overline{\mmd}^2\left[\sfH_{\bm \nu}^{\circ},\cX_{m},\cY_{n}\right]$ to complete the proof of Proposition \ref{prop:asympHNr}. For this, we invoke the framework of \cite{hall1984central}, which derives central limit theorems for degenerate $U$-statistics with kernels depending on the sample size. This requires the verification of a technical condition that arises from the martingale central limit theorem. We formalize this in the following lemma (see Appendix \ref{sec:proofoftouseHall} for the proof).

\begin{lemma}\label{lemma:touseHall}
Under the assumptions of Theorem \ref{thm:asymkernelonN}, in the asymptotic regime \eqref{eq:mn},
\begin{align*}
    \frac{\E\left[\sfH_{\bm \nu}^{\circ}(X_{1},X_{2})^4\right]}{N^2\E\left[\sfH_{\bm \nu}^{\circ}(X_{1},X_{2})^2\right]^2} + \frac{\E\left[\sfH_{\bm \nu}^{\circ}(X_{1},X_{2})^2\sfH_{\bm \nu}^{\circ}(X_{1},X_{3})^2\right]}{N\E\left[\sfH_{\bm \nu}^{\circ}(X_{1},X_{2})^2\right]^2} + \frac{\E\left[\sfG_{\bm \nu}^2(X_{1},X_{2})\right]}{\E\left[\sfH_{\bm \nu}^{\circ}(X_{1},X_{2})^2\right]^2}\ra 0 , 
\end{align*}
where 
\begin{align}\label{eq:Gkernel}
    \sfG_{\bm \nu}(x,y):=\E\left[\sfH_{\bm \nu}^{\circ}(x,X_{3})\sfH_{\bm \nu}^{\circ}(y,X_{3})\right] , 
\end{align}
for all $x,y\in \R^d$ and $X_{1},X_{2},X_{3}$ i.i.d. from $P$.
\end{lemma}

Adapting the arguments from \cite{hall1984central} and using Lemma \ref{lemma:touseHall} then gives, 
\begin{align}\label{eq:disthatMMD}
    \frac{mn}{\sqrt{2}(m+n)}\left(\E\left[\sfH_{\bm \nu}^{\circ}(X_{1},X_{2})^2\right]\right)^{-1/2}\overline{\mmd}^2\left[\sfH_{\bm \nu}^{\circ}, \cX_{m},\cY_{n}\right]\dto \mathcal N (0,1) . 
\end{align}

Next, we show that $\E\left[\sfH_{\bm \nu}^{\circ}(X_{1},X_{2})^2\right]$ scales as $\lambda_{N}^{-d/2}$ and compute its limit. The proof is given in Appendix \ref{sec:proofofEHcircL2lim}.

\begin{lemma}\label{lemma:asympEHNcirc} 
Under the assumptions of Theorem \ref{thm:asymkernelonN}, in the asymptotic regime \eqref{eq:mn},
    \begin{align*}
        \lambda_{N}^{d/2}\E\left[\sfH_{\bm \nu}^{\circ}(X_{1},X_{2})^2\right]\ra \pi^{d/2} \|f_{P}\|_{2}^2\sum_{u,v=1}^{r}\frac{\alpha_{u}\alpha_{v}}{(\eta_{u} + \eta_{v})^{d/2}}
    \end{align*}
    where $X_{1},X_{2}$ are i.i.d. $P$.
\end{lemma}

The proof of Proposition \ref{prop:asympHNr} is now complete by collecting \eqref{eq:HHcircequiv}, \eqref{eq:MMDequivhatMMD}, \eqref{eq:disthatMMD} and Lemma \ref{lemma:asympEHNcirc}.

\subsubsection{Proof of Lemma \ref{lemma:asympEHNcirc}}\label{sec:proofofEHcircL2lim}
By \eqref{eq:HcircalphaK},
\begin{align}\label{eq:HNcirc12Kcirc12}
    \E\left[\sfH_{\bm \nu}^{\circ}(X_{1},X_{2})^2\right] = \sum_{a,b=1}^{r}\alpha_{a}\alpha_{b}\E\left[\sfK_{\nu_{a}}^{\circ}(X_{1},X_{2})\sfK_{\nu_{b}}^{\circ}(X_{1},X_{2})\right] . 
\end{align}
In the following lemma we compute the limit of each term inside the above summation. Using this lemma together with \eqref{eq:HNcirc12Kcirc12} completes the proof of Lemma \ref{lemma:asympEHNcirc}. 

\begin{lemma}\label{lemma:limofKN1circKN2circ} 
Fix $1 \leq a,b \leq r$. Suppose $\nu_{a} = \eta_{a}\lambda_{N}$ and $\nu_{b} = \eta_{b}\lambda_{N}$, where $\lambda_{N} = o\left(N^{4/d}\right)$ such that $\lambda_{N}\ra\infty$, in the asymptotic regime \eqref{eq:mn}, as in Assumption \ref{assmp:relate}. Then 
 \begin{align*}
   \lambda_{N}^{d/2}\left[\sfK_{\nu_a}^{\circ}(X_{1},X_{2})\sfK_{\nu_b}^{\circ}(X_{1},X_{2})\right]\ra \pi^{d/2}\|f_{P}\|_{2}^2 \frac{1}{(\eta_a + \eta_b)^{d/2}} . 
\end{align*} 

\end{lemma}
\begin{proof}
    By \eqref{eq:Kxycentered},
    \begin{align*}
        \E\left[\sfK_{\nu_a}^{\circ}(X_{1},X_{2})\sfK_{\nu_b}^{\circ}(X_{1},X_{2})\right] = T_{1} + T_{2} - 2T_{3} , 
    \end{align*}
where 
\begin{align*}
    T_{1} & = \E\left[\sfK_{\nu_a}(X_{1},X_{2})\sfK_{\nu_b}(X_{1},X_{2})\right] , \\  T_{2} & = \E\left[\sfK_{\nu_a}(X_{1},X_{2})\right]\E\left[\sfK_{\nu_b}(X_{1},X_{2})\right] , \\
    T_{3} & = \E\left[\E\left[\sfK_{\nu_a}(X_{1},X_{2})|X_{1}\right]\E\left[\sfK_{\nu_b}(X_{1},X_{3})|X_{1}\right]\right] . 
\end{align*}
Recalling the definition of the Gaussian kernel we get, 
\begin{align*}
    \E\left[\sfK_{\nu_a}(X_{1},X_{2})\sfK_{\nu_b}(X_{1},X_{2})\right] = \E\left[\sfK_{(\nu_a + \nu_b)}(X_{1},X_{2})\right] . 
\end{align*} 
Invoking Lemma 4 from \cite{kernelnonparametricsmoothalternatives} now gives, 
\begin{align*}
    (\nu_a + \nu_b)^{d/2}\E\left[\sfK_{(\nu_a + \nu_b)}(X_{1},X_{2})\right]\ra {\pi}^{d/2} \|f_{P}\|_{2}^2.
\end{align*}
This implies, from the definition of $\nu_a$ and $\nu_b$, that 
\begin{align}\label{eq:firsttermconvg}
   \lambda_{N}^{d/2}\E\left[\sfK_{\nu_a}(X_{1},X_{2})\sfK_{\nu_b}(X_{1},X_{2})\right]\ra \pi^{d/2} \|f_{P}\|_{2}^2 \frac{ 1 }{(\eta_a + \eta_b)^{d/2}} . 
\end{align}
Similarly we get,
\begin{align}\label{eq:secondtermconvg}
    \lambda_{N}^{d}\E\left[\sfK_{\nu_a}(X_{1},X_{2})\right]\E\left[\sfK_{\nu_b}(X_{1},X_{2})\right]\ra \pi^d \|f_{P}\|_{2}^4 \frac{ 1 }{(\eta_a \eta_b)^{d/2}}. 
\end{align}
Also, from the proof of Theorem 1 from \cite{kernelnonparametricsmoothalternatives} we get,
    \begin{align}\label{eq:Knu}
        \E\left[\E\left[\sfK_{\nu_a}(X_{1},X_{2})|X_{2}\right]^2\right]\lesssim_{d,P} \lambda_{N}^{-3d/4} \text{ and } 
    \E\left[\E\left[\sfK_{\nu_b}(X_{1},X_{2})|X_{2}\right]^2\right]\lesssim_{d,P}\lambda_{N}^{-3d/4}.
\end{align}
where $A\lesssim_{\theta}B$ implies there exists a constant $C_{\theta}>0$ such that $A\leq C_\theta B$. Then using Cauchy-Schwartz inequality along with \eqref{eq:Knu} gives, 
\begin{align}\label{eq:thirdtermconvg}
    |T_{3}| &\leq \left(\E\left[\E\left[\sfK_{\nu_a}(X_{1},X_{2})|X_{1}\right]^2\right]\right)^{1/2}\left(\E\left[\E\left[\sfK_{\nu_b}(X_{1},X_{3})|X_{1}\right]^2\right]\right)^{1/2}
    \lesssim_{d,P}\lambda_{N}^{-3d/4} . 
\end{align}
Combining \eqref{eq:firsttermconvg}, \eqref{eq:secondtermconvg} and \eqref{eq:thirdtermconvg} completes the proof of Lemma \ref{lemma:limofKN1circKN2circ}. 
\end{proof}

\subsubsection{Proof of Lemma \ref{lemma:touseHall}}\label{sec:proofoftouseHall}
By Lemma \ref{lemma:asympEHNcirc} it is enough to show,
\begin{align*}
    \lambda_{N}^{d}\left(\underbrace{\frac{\E\left[\sfH_{\bm \nu}^{\circ}(X_{1},X_{2})^4\right]}{N^2}}_{L_{1}} + \underbrace{\frac{\E\left[\sfH_{\bm \nu}^{\circ}(X_{1},X_{2})^2\sfH_{\bm \nu}^{\circ}(X_{1},X_{3})^2\right]}{N}}_{L_{2}} + \underbrace{\E\left[\sfG_{\bm \nu}^2(X_{1},X_{2})\right]}_{L_{3}}\right)\ra 0 . 
\end{align*}
We show convergence of terms $L_{1},L_{2}$ and $L_{3}$ separately. Let us start with $L_{1}$. Observe that 
\begin{align*}
    \E\left[\sfH_{\bm \nu}^{\circ}(X_{1},X_{2})^4\right]\lesssim_{r}\sum_{a=1}^{r}\alpha_{a}^{4}\E\left[\sfK_{\nu_{a}}^{\circ}(X_{1},X_{2})^4\right]. 
\end{align*}
From the proof of Theorem 1 in \cite{kernelnonparametricsmoothalternatives} we get $\E\left[\sfK_{\nu_{a}}^{\circ}(X_{1},X_{2})^4\right]\lesssim_{d}\lambda_{N}^{-d/2}$, for all $1\leq a\leq r$. 
Hence, 

\begin{align}\label{eq:L1convg}
\lambda_{N}^{d} L_1 =   \frac{\lambda_{N}^{d}}{N^{2}}\E\left[\sfH_{\bm \nu}^{\circ}(X_{1},X_{2})^4\right]\lesssim_{r,d,\bm{\alpha}}\frac{\lambda_{N}^{d/2}}{N^2} \ra 0 . 
\end{align} 

Next, we consider $L_{2}$. From \eqref{eq:HcircalphaK} it follows that 
\begin{align*}
    \sfH_{\bm \nu}^{\circ}(x,y)^2\lesssim_{r}\sum_{a=1}^{r}\alpha_{a}^2\sfK_{\nu_{a}}^{\circ}(x,y)^2. 
\end{align*}
Hence, 
\begin{align}\label{eq:bbdL2Kcirc}
    \E\left[\sfH_{\bm \nu}^{\circ}(X_{1},X_{2})^2\sfH_{\bm \nu}^{\circ}(X_{1},X_{3})^2\right] \lesssim_{r}\sum_{a=1}^{r}\sum_{b=1}^{r}\alpha_{a}^2\alpha_{b}^2\E\left[\sfK_{\nu_{a}}^{\circ}(X_{1},X_{2})^2\sfK_{\nu_{b}}^{\circ}(X_{1},X_{3})^2\right]. 
\end{align}
In the following lemma we bound the summands in \eqref{eq:bbdL2Kcirc}. This lemma together with \eqref{eq:bbdL2Kcirc} shows,
\begin{align}\label{eq:L2convg}
    \lambda_{N}^{d}L_{2}\lesssim_{r,d,P,\bm{\alpha}}\frac{\lambda_{N}^{d/4}}{N}\ra 0.
\end{align}

\begin{lemma}\label{lemma:asympbddKcirc2prod}
Fix $1 \leq a, b \leq r$. Then under the assumptions of Theorem \ref{thm:asymkernelonN} we have, 
\begin{align*}
    \E\left[\sfK_{\nu_a}^{\circ}(X_{1},X_{2})^2\sfK_{\nu_b}^{\circ}(X_{1},X_{3})^2\right]\lesssim_{d,P} \lambda_{N}^{-3d/4} . 
\end{align*}
\end{lemma} 

\begin{proof}
    Observe that,
    \begin{align}\label{eq:KzKpsicircsqprod}
        \E\left[\sfK_{\nu_a}^{\circ}(X_{1},X_{2})^2\sfK_{\nu_b}^{\circ}(X_{1},X_{3})^2\right] = \E\left[\E\left[\sfK_{\nu_a}^{\circ}(X_{1},X_{2})^2|X_{1}\right]\E\left[\sfK_{\nu_b}^{\circ}(X_{1},X_{3})^2|X_{1}\right]\right] . 
    \end{align}
    By \eqref{eq:Kxycentered} note that $\E\left[\sfK_{\nu_a}^{\circ}(X_{1},X_{2})|X_{1}\right] = 0$. Hence, 
    \begin{align}
        \E\left[\sfK_{\nu_a}^{\circ}(X_{1},X_{2})^2|X_{1}\right]
        & = \Var\left[\sfK_{\nu_a}^{\circ}(X_{1},X_{2})\middle|X_{1}\right] \nonumber\\
        & = \Var\left[\sfK_{\nu_a}(X_{1},X_{2}) - \E\left[\sfK_{\nu_a}(Z,X_{2})|X_{2}\right]|X_{1}\right] \nonumber\\
        &\leq \E\left[\left(\sfK_{\nu_a}(X_{1},X_{2}) - \E\left[\sfK_{\nu_a}(Z,X_{2})|X_{2}\right]\right)^2|X_{1}\right]\nonumber\\
        &\lesssim \E\left[\sfK_{\nu_a}(X_{1},X_{2})^2|X_{1}\right] + \E\left[\E\left[\sfK_{\nu_a}(Z,X_{2})|X_{2}\right]^2\right] , \label{eq:bdonEK02givenX1}
    \end{align}
    where $Z\sim P$ is independent of $X_{1},X_{2}$. From \eqref{eq:Knu} we have, 
    \begin{align*}
        \E\left[\E\left[\sfK_{\nu_a}(Z,X_{2})|X_{2}\right]^2\right]\lesssim_{d,P}\lambda_{N}^{-3d/4}.
    \end{align*}
    Then \eqref{eq:bdonEK02givenX1} implies,
    \begin{align*}
        \E\left[\sfK_{\nu_a}^{\circ}(X_{1},X_{2})^2|X_{1}\right]\lesssim_{d,P} \E\left[\sfK_{\nu_a}(X_{1},X_{2})^2|X_{1}\right] + \lambda_{N}^{-3d/4} , 
    \end{align*}
    and similarly,
    \begin{align*}
        \E\left[\sfK_{\nu_b}^{\circ}(X_{1},X_{2})^2|X_{1}\right]\lesssim_{d,P} \E\left[\sfK_{\nu_b}(X_{1},X_{2})^2|X_{1}\right] + \lambda_{N}^{-3d/4} . 
    \end{align*}
    Since $|\sfK_{\nu_a}|,|\sfK_{\nu_b}|\leq 1$ for the Gaussian kernel, by \eqref{eq:KzKpsicircsqprod} we have,
    \begin{align}
        \E\left[\sfK_{\nu_a}^{\circ}(X_{1},X_{2})^2\sfK_{\nu_b}^{\circ}(X_{1},X_{3})^2\right]
        & = \E\left[\sfK_{\nu_a}^2(X_{1},X_{2})\sfK_{\nu_b}^2(X_{1},X_{3})\right] + \lambda_{N}^{-3d/4} . 
        \label{eq:KzKpsibddKKrate}
    \end{align}
    Now, assume without loss of generality that $\nu_a \leq \nu_b$. This implies,             $\sfK_{\nu_b}(x,y)\leq \sfK_{\nu_a}(x,y)$  for all $x,y\in\R^d$, by the monotonicity of the exponential function. Hence, 
        \begin{align*}
        \E\left[\sfK_{\nu_a}^{\circ}(X_{1},X_{2})^2\sfK_{\nu_b}^{\circ}(X_{1},X_{3})^2\right]
        &\lesssim_{d,P}\E\left[\sfK_{\nu_a}^2(X_{1},X_{2})\sfK_{\nu_a}^2(X_{1},X_{3})\right] + \lambda_{N}^{-3d/4}\\
        &\lesssim_{d,P}\E\left[\E\left[\sfK_{\nu_a}(X_1,X_2)\middle|X_1\right]^2\right] + \lambda_{N}^{-3d/4}\\
        &\lesssim_{d,P}\lambda_{N}^{-3d/4} , 
    \end{align*} 
    where the last step uses \eqref{eq:Knu}. This completes the proof of Lemma \ref{lemma:asympbddKcirc2prod}.
\end{proof}

Finally, we turn our attention to $L_{3}$. Recalling \eqref{eq:Gkernel}, note that 
\begin{align*}
    L_{3} = \E\left[\E\left[\sfH_{\bm \nu}^{\circ}(X_{1},X_{3})\sfH_{\bm \nu}^{\circ}(X_{2},X_{3})\middle|X_{1},X_{2}\right]^2\right] . 
\end{align*}
Then from \eqref{eq:HcircalphaK} we have,
\begin{align}
    L_{3} 
    & = \E\left[\left[\sum_{a=1}^{r}\sum_{b=1}^{r}\alpha_{a}\alpha_{b}\E\left[\sfK_{\nu_{a}}^{\circ}(X_{1},X_{3})\sfK_{\nu_{N,b}}^{\circ}(X_{2},X_{3})|X_{1},X_{2}\right]\right]^2\right]\nonumber\\
    &\lesssim_{r}\sum_{a=1}^{r}\sum_{b=1}^{r}\alpha_{a}^2\alpha_{b}^2\E\left[\E\left[\sfK_{\nu_{a}}^{\circ}(X_{1},X_{3})\sfK_{\nu_{N,b}}^{\circ}(X_{2},X_{3})|X_{1},X_{2}\right]^2\right]\label{eq:L3bdd}
\end{align}
As in the previous cases, in the following lemma we compute the limit of each individual term in the above expansion. By Lemma \ref{lemma:L3termconvg} and \eqref{eq:L3bdd} it is follows that 
\begin{align}\label{eq:L3convg}
    \lambda_{N}^{d}L_{3}\ra 0 . 
\end{align}

\begin{lemma}\label{lemma:L3termconvg}
Fix $1 \leq a,b \leq r$.  Then under the assumptions of Theorem \ref{thm:asymkernelonN}, 
\begin{align*}
    \lambda_{N}^{d}\E\left[\E\left[\sfK_{\nu_a}^{\circ}(X_{1},X_{3})\sfK_{\nu_b}^{\circ}(X_{2},X_{3})|X_{1},X_{2}\right]^2\right]\ra 0 . 
\end{align*}
\end{lemma} 

\begin{proof}
By \eqref{eq:Kxycentered},
\begin{align}
    \E\Bigg[\E
    \Bigg[\sfK_{\nu_a}^{\circ}(X_{1},X_{3})\sfK_{\nu_b}^{\circ}(X_{2},X_{3})|X_{1},X_{2}\Bigg]^2\Bigg] 
    &\lesssim S_{1} + S_{2} + S_{3} + S_{4} , \label{eq:bddS1to4}
\end{align}
where $Z,Z_{1},Z_{2}$ are i.i.d. $P$ independently of $X_{1},X_{2},X_{3}$ and 
\begin{align*}
    S_{1} 
    & = \E\left[\E\left[\sfK_{\nu_a}^{\circ}(X_{1},X_{3})\sfK_{\nu_b}(X_{2},X_{3})|X_{1},X_{2}\right]^2\right],\\
    S_{2}
    & = \E\left[\E\left[\sfK_{\nu_a}^{\circ}(X_{1},X_{3})\E\left[\sfK_{\nu_b}(X_{2},Z)|X_{2}\right]|X_{1},X_{2}\right]^2\right],\\
    S_{3}
    & = \E\left[\E\left[\sfK_{\nu_a}^{\circ}(X_{1},X_{3})\E\left[\sfK_{\nu_b}(Z,X_{3})|X_{3}\right]|X_{1},X_{2}\right]^2\right] , \\
   S_{4}
    & = \E\left[\sfK_{\nu_b}(Z_{1},Z_{2})\right]^2\E\left[\E\left[\sfK_{\nu_a}^{\circ}(X_{1},X_{3})|X_{1},X_{2}\right]^2\right] . 
\end{align*}
Recalling that $\E\left[\sfK_{\nu_a}^{\circ}(X_{1},X_{3})|X_{1}\right] = 0$ we get $S_{4} = 0$. Therefore, it suffices to bound $S_1, S_2, S_3$. 

We begin with $S_1$. Observe that, 
\begin{align}
    S_{2} 
    & \leq \E\left[\E\left[\sfK_{\nu_a}^{\circ}(X_{1},X_{3})^2\E\left[\sfK_{\nu_b}(X_{2},Z)|X_{2}\right]^2|X_{1},X_{2}\right]\right]\nonumber\\
    & = \E\left[\sfK_{\nu_a}^{\circ}(X_{1},X_{3})^2\E\left[\sfK_{\nu_b}(X_{2},Z)|X_{2}\right]^2\right]\nonumber\\
    & = \E\left[\sfK_{\nu_a}^{\circ}(X_{1},X_{3})^2\right]\E\left[\E\left[\sfK_{\nu_b}(X_{2},Z)|X_{2}\right]^2\right]\lesssim_{d,P}\lambda_{n}^{-5d/4} , \label{eq:bddonS2}
\end{align}
where the last inequality follows from \eqref{eq:Knu} and Lemma \ref{lemma:limofKN1circKN2circ}. 

Now, we consider $S_3$. To begin with using the Cauchy-Schwarz inequality note that, 
\begin{align*}
    \bigg|\E\bigg[\sfK_{\nu_a}^{\circ}(X_{1},X_{3})
    &\E\left[\sfK_{\nu_b}(Z,X_{3})|X_{3}\right]|X_{1},X_{2}\bigg]\bigg|\\
    &\leq\E\left[\left|\sfK_{\nu_a}^{\circ}(X_{1},X_{3})\E\left[\sfK_{\nu_b}(Z,X_{3})|X_{3}\right]\right||X_{1},X_{2}\right]\\
    &\leq \left(\E\left[\sfK_{\nu_a}^{\circ}(X_{1},X_{3})^2|X_{1},X_{2}\right]\right)^{1/2}\left(\E\left[\E\left[\sfK_{\nu_b}(Z,X_{3})|X_{3}\right]^2|X_{1},X_{2}\right]\right)^{1/2}\\
    & = \left(\E\left[\sfK_{\nu_a}^{\circ}(X_{1},X_{3})^2|X_{1}\right]\right)^{1/2}\left(\E\left[\E\left[\sfK_{\nu_b}(Z,X_{3})|X_{3}\right]^2\right]\right)^{1/2}. 
\end{align*}
Hence, recalling the definition of $S_{3}$ we have, 
\begin{align}
    S_{3} 
    &=  \E\left[\E\left[\sfK_{\nu_a}^{\circ}(X_{1},X_{3})\E\left[\sfK_{\nu_b}(Z,X_{3})|X_{3}\right]|X_{1},X_{2}\right]^2\right]  \nonumber \\ 
    & \leq \E\left[\E\left[\sfK_{\nu_a}^{\circ}(X_{1},X_{3})^2|X_{1}\right]\E\left[\E\left[\sfK_{\nu_b}(Z,X_{3})|X_{3}\right]^2\right]\right]\nonumber\\
    & = \E\left[\sfK_{\nu_a}^{\circ}(X_{1},X_{3})^2\right]\E\left[\E\left[\sfK_{\nu_b}(Z,X_{3})|X_{3}\right]^2\right]\nonumber\\
    &\lesssim_{d,P}\lambda_{N}^{-5d/4} , \label{eq:bddonS3}
\end{align}
where once again the last inequality follows from \eqref{eq:Knu} and Lemma \ref{lemma:limofKN1circKN2circ}. 

Next, we consider $S_1$. Recalling \eqref{eq:Kxycentered} shows that,
\begin{align*}
    S_{1} & \lesssim S_{11} + S_{12} + S_{13} + S_{14} , 
\end{align*}
    where 
    \begin{align}
    S_{11} &:= \E\left[\sfK_{\nu_a}(X_{1},X_{3})\sfK_{\nu_b}(X_{2},X_{3})|X_{1},X_{2}\right]^2 , \nonumber \\ 
    S_{12} &:= \E\left[\E\left[\sfK_{\nu_a}(X_{1},Z)|X_{1}\right]\sfK_{\nu_b}(X_{2},X_{3})|X_{1},X_{2}\right]^2 , \nonumber\\
    S_{13} &:= \E\left[\E\left[\sfK_{\nu_a}(Z,X_{3})|X_{3}\right]\sfK_{\nu_b}(X_{2},X_{3})|X_{1},X_{2}\right]^2 , \nonumber \\ 
  S_{14} & := \E\left[\E\left[\sfK_{\nu_a}(Z_{1},Z_{2})\right]\sfK_{\nu_b}(X_{2},X_{3})|X_{1},X_{2}\right]^2 . \label{eq:structureS1}
\end{align}
Recalling \eqref{eq:Knu}, \eqref{eq:firsttermconvg} and similar to the proofs of 
\eqref{eq:bddonS2} and \eqref{eq:bddonS3} it can be shown that,
\begin{align}\label{eq:bddS12S13}
    \E S_{12}\lesssim_{d,P}\lambda_{N}^{-5d/4} \text{ and } \E S_{13}\lesssim_{d,P}\lambda_{N}^{-5d/4} , 
\end{align}
respectively. Also, observe that, 
 \begin{align}\label{eq:bddS14}
    \E S_{14} = \E\left[\sfK_{\nu_a}(Z_{1},Z_{2})\right]^2\E\left[\E\left[\sfK_{\nu_b}(X_{2},X_{3})|X_{2}\right]^2\right]\lesssim_{d,P}\lambda_{N}^{-5d/4} , 
\end{align}
where the last inequality follows from Lemma 4 of \cite{kernelnonparametricsmoothalternatives} and \eqref{eq:Knu}. Now, without loss of generality assume that $\nu_a \leq \nu_b$. Then following the proof of Theorem 1 from \cite{kernelnonparametricsmoothalternatives} we conclude that,
\begin{align}\label{eq:S11convg}
    \lambda_{N}^{d}\E S_{11}\leq \lambda_{N}^{d}\E\left[\E\left[\sfK_{\nu_a}(X_{1},X_{3})\sfK_{\nu_a}(X_{2},X_{3})|X_{1},X_{2}\right]^2\right]\ra 0 . 
\end{align}
The proof is now completed by recalling \eqref{eq:bddS1to4}, \eqref{eq:structureS1}, collecting the bounds from \eqref{eq:bddonS2}, \eqref{eq:bddonS3}, \eqref{eq:bddS12S13}, \eqref{eq:bddS14} and the convergence from \eqref{eq:S11convg}.
\end{proof}

The proof of Lemma \ref{lemma:touseHall} is now complete by collecting the convergences from \eqref{eq:L1convg}, \eqref{eq:L2convg} and \eqref{eq:L3convg}.

\subsection{Proof of Corollary \ref{corollary:KNH0} } 
\label{sec:corKNpf}

The following lemma gives a consistent estimate of $\|f_P\|_2^2$. The proof  follows directly from the proof of Theorem 4 in \cite{kernelnonparametricsmoothalternatives} and hence
is omitted.

\begin{lemma}\label{lemma:estnormfP}
Let $\sfK_{\lambda_{N}}$ and $\sfK_{2\lambda_{N}}$ be Gaussian kernels with scaling parameters $\lambda_{N}$ and $2\lambda_{N}$ respectively, satisfying Assumption \ref{assmp:relate}. Consider  
\begin{align*}
    \widetilde{s}_{\lambda_{N}}^2 :=
    & \frac{1}{N(N-1)}\sum_{1 \leq i\neq j \leq N} \sfK_{2\lambda_{N}}(Z_{i},Z_{j}) - \frac{2(N-3)!}{N!}\sum_{1\leq i \ne j_{1} \ne j_{2} \leq N }\sfK_{\lambda_{N}}(Z_{i}, Z_{j_{1}})\sfK_{\lambda_{N}}(Z_{i},Z_{j_{2}})\\
    & \quad \quad + \frac{(N-4)!}{N!}\sum_{1\leq i_{1} \ne i_{2} \ne j_{1} \ne j_{2}\leq N }\sfK_{\lambda_{N}}(Z_{i_{1}},Z_{j_{1}})\sfK_{\lambda_{N}}(Z_{i_{2}},Z_{j_{2}}) , 
\end{align*}
where $Z_{i} = X_{i}$, if $i\leq m$, and $Z_{i} = Y_{i-m}$, if $i>m$. Define $\widehat{s}_{\lambda_{N}}^2 = \max\{\widetilde{s}_{\lambda_{N}}^2, N^{-2}\}$. Then, under $H_{0}$ and Assumption \ref{assmp:PQdensitypq} we have,
\begin{align*}
   \|\hat f_{P}\|_{2}^{2} = \left(\frac{2\lambda_{N}}{\pi}\right)^{d/2}\widehat{s}_{\lambda_{N}}^2 \stackrel{P} \rightarrow \|f_{P}\|_{2}^{2}.
\end{align*}
\end{lemma}

Applying Theorem \ref{thm:asymkernelonN} and Lemma \ref{lemma:estnormfP}, the result in Corollary \ref{corollary:KNH0} follows. 

\color{black} 

\section{Invertibility of Kernel Matrices}
\label{sec:Sigma}

In this section we discuss the invertibility of matrix $\bm{\Sigma}_{H_0}$ (recall the definition from \eqref{eq:H0sigma}). Throughout this section we will assume that the underlying space  
$\mathcal{X} = \mathbb{R}^{d}$ and the distribution $P$ satisfy the following: 

\begin{assumption}\label{assumption:PQdensity}
Suppose $\mathcal{X} = \mathbb{R}^{d}$ and the distribution $P$ has a density with respect to the Lebesgue measure on $\mathbb{R}^{d}$ with full support. 
\end{assumption}

The following proposition gives a set of general conditions under which $\bm{\Sigma}_{H_0}$ is non-singular. In Corollary \ref{cor:Kvariance} we will show that these conditions are satisfied by the commonly used kernels, such as the Gaussian and Laplace kernels.

\begin{proposition}\label{ppn:Kvariance} Suppose Assumption \ref{assumption:PQdensity} holds and  $\cK=\{\sfK_1, \sfK_2, \ldots, \sfK_r\}$ be a collection of $r$ distinct characteristic kernels such that: 
\begin{itemize}
     \item
     For every $x, y \in \mathbb{R}^{d}$ and $1\leq a \leq r$, 
     \begin{align}\label{eq:limitKxz}
     \lim_{\|z\|\rightarrow\infty} \sfK_{a}(x, z) = 0 \text{ and } \lim_{\|z\|\rightarrow\infty} \sfK_{a}(z, y) = 0. 
     \end{align} 
         
    \item
    For every collection $\{\alpha_{a}: 1\leq a \leq r\}$ there exists a set $\Gamma\in\mathbb{R}^{2d}$ with $\mu(\Gamma)>0$ such that 
    \begin{align}\label{eq:alphaKlinear}
        \sum_{a=1}^{r}\alpha_{a} \sfK_{a} \neq 0 \quad \text{ for all $(x,y)\in \Gamma$}.
    \end{align}
\end{itemize} 
Then $\bm{\Sigma}_{H_0}$ is non-singular. 
\end{proposition}

\begin{proof} Throughout the proof we will use $\mu$ to denote the Lebesgue measure in appropriate dimensions. Recall from \eqref{eq:H0sigma} that $\bm{\Sigma}_{H_0} = ((\sigma_{ab}))_{1 \leq a, b \leq r}$, where 
$$\sigma_{ab} = \frac{2}{\rho^2(1-\rho)^2} \E \left[\sfK_a^\circ(X, X')\sfK_b^\circ(X, X')\right] = \frac{2}{\rho^2(1-\rho)^2} \Cov[\sfK_a^\circ(X, X'), \sfK_b^\circ(X, X')] , 
$$ 
for $X, X' \sim P$. 
Hence, $\bm{\Sigma}_{H_0}$ is singular if and only if there exists $\alpha_1, \alpha_2, \ldots, \alpha_r \in \R^r$ such that 
\begin{align*}
    \sum_{a=1}^{r} \alpha_{a} \sfK_a^\circ(X, X') = 0 \quad \text{ almost surely } P^2. 
\end{align*} 
Then by Assumption \ref{assumption:PQdensity}, there exists a set $A \subseteq \mathbb{R}^{2d}$ with $\mu(A^{c}) = 0$, such that 
\begin{align}\label{eq:linearK}
    \sum_{a=1}^{r} \alpha_{a} \sfK_a^\circ(x, y) = 0 , 
\end{align} 
for all $(x, y) \in A$. Then considering $h(x,y) = \sum_{a=1}^{r}\alpha_{a} \sfK_{a}(x,y)$ gives, for $(x, y) \in A$, 
\begin{align} 
h(x,y) & = \sum_{a=1}^{r} \alpha_{a} \left( \E[\sfK_a(x,X')] + \E[\sfK_a(X,y)] + \sfK_a^\circ(x, y) - \E[\sfK_a(X,X')] \right) \nonumber \\ 
& = \sum_{a=1}^{r} \alpha_{a} \left( \E[\sfK_a(x,X')] + \E[\sfK_a(X,y)]  - \E[\sfK_a(X,X')] \right) \tag*{(by \eqref{eq:linearK})} \nonumber \\ 
\label{eq:hxyfgL} & = \sum_{a=1}^{r} \alpha_{a} \left( f_{a}(x) + g_{a}(y) \right) + L, 
\end{align}
where $L:=-\sum_{a=1}^{r} \alpha_{a} \E[\sfK_a(X,X')] $, $f_{a}(x) := \E[\sfK_a(x,X')]$, and $g_{a}(y) := \E[\sfK_a(X,y)]$, for $1 \leq a \leq r$.
For any $x\in\mathbb{R}^{d}$ consider,
\begin{align*}
    A_{x} := \left\{y\in\mathbb{R}^{d}:(x,y)\in A \right\}. 
\end{align*}
Denote $B : =\{x\in\mathbb{R}^{d}:\mu\left(A_{x}^{c}\right) = 0\}$. Then by  Fubini's Theorem, $\mu(B) = 1$. Now, consider $x' \in B$ and a sequence $\{x_{N}\}_{N \geq 1}$ in $B$ such that $\|x_{N}\|\rightarrow\infty$. Define,
\begin{align*}
    A_{\infty} = A_{x'} \bigcap \left( \bigcap_{n\geq 1}A_{x_{N}} \right). 
\end{align*} 
By definition $\mu(A_\infty^{c}) = 0$. Also, if $y \in A_\infty$, then $(x_N, y) \in A$ and recalling \eqref{eq:hxyfgL} we have,
\begin{align}\label{eq:hxny}
    h(x_{N},y) = \sum_{a=1}^r  \alpha_a f_{a}(x_{N}) + \sum_{a=1}^{r} \alpha_{a} g_{a}(y) + L,
\end{align}
for all $N \geq 1$. Similarly, we also have $h(x',y) = \sum_{a=1}^r  \alpha_a f_{a}(x') + \sum_{a=1}^{r} \alpha_{a} g_{a}(y) + L$. This implies, 
\begin{align}\label{eq:hxnygs}
    \sum_{a=1}^{r} \alpha_{a} g_{a}(y) = h(x', y) - \sum_{a=1}^{r} \alpha_{a}f_{a}(x') - L = h(x',y) + L_{x'} , 
\end{align} 
where $L_{x'} := - \sum_{a=1}^{r} \alpha_{a}f_{a}(x') - L$ is a constant  depending on $x'$. Next, fixing $y'\in A_\infty$ we get,
\begin{align}\label{eq:hxnyfs}
    \sum_{a=1}^{r} \alpha_{a}f_{a}(x_{N}) = h(x_{N},y') - \sum_{a=1}^r b_a g_{a}(y') - L = h(x_{N},y') + L_{y'} , 
\end{align} 
where $L_{y'} := - \sum_{a=1}^r b_a g_{a}(y') - L $ is a constant depending on $y'$. Thus, combining \eqref{eq:hxny}, \eqref{eq:hxnygs}, and \eqref{eq:hxnyfs}, 
\begin{align}\label{eq:hLxy}
    h(x_{N},y) = h(x',y) + h(x_{N},y') + L_{x'} + L_{y'} + L , 
\end{align}
for all $y\in A_\infty$. Now, since $\mu(A_\infty)=1$ we can choose a sequence $\{y_{M}\}_{M \geq 1}$ in $A_\infty$ such that $\|y_{M}\|\rightarrow\infty$. Then observe that,
\begin{align*}
    h(x_{N},y_{M}) = h(x',y_{M}) + h(x_{N},y') + L_{x'} + L_{y'} +L.
\end{align*}
Taking limits as $M \rightarrow\infty$ and then $N \rightarrow\infty$ and using condition \eqref{eq:limitKxz} it follows that $L_{x'} + L_{y'} + L = 0$.
Thus, from \eqref{eq:hLxy}, for all $y\in A_\infty$, $h(x_{N},y) = h(x',y) + h(x_{N},y')$. Therefore, taking $N\rightarrow\infty$ and using \eqref{eq:limitKxz} gives, 
\begin{align*}
    h(x',y) = \sum_{a=1}^{r} \alpha_{a} \sfK_{a}(x', y) = 0 \quad \text{ for all } y\in A_\infty.
\end{align*}
This implies, since $x'$ is arbitrarily chosen from $B$ and $\mu(B)$ = 1, 
\begin{align*}
 \mu   \left\{x\in\mathbb{R}^{d}: \mu \left\{y\in\mathbb{R}^{d}: \sum_{a=1}^{r} \alpha_{a} \sfK_{a}(x, y) \neq 0\right\}  > 0\right\} = 0 . 
\end{align*}
Therefore, by Fubini's theorem, 
\begin{align*}
    \mu\left\{ (x, y) \in\mathbb{R}^{2 d}:  \sum_{a=1}^{r} \alpha_{a} \sfK_{a}(x, y) \neq 0 \right\} = 0 , 
\end{align*}
which contradicts \eqref{eq:alphaKlinear}. Thus, $\bm{\Sigma}_{H_0}$ is non-singular whenever \eqref{eq:limitKxz} and \eqref{eq:alphaKlinear} hold.
\end{proof}

Proposition \ref{ppn:Kvariance} gives general conditions under which the matrix $\bm \Sigma_{H_0}$ is invertible. 
Clearly, condition \eqref{eq:limitKxz} is satisfied by most kernels. We will now show that condition \eqref{eq:alphaKlinear} holds when $\cK=\{ \sfK_1, \sfK_2, \ldots, \sfK_r \}$ is a collection of $r$ distinct Gaussian or Laplace kernels.

\begin{corollary}\label{cor:Kvariance} Suppose Assumption \ref{assumption:PQdensity} holds and $\cK = \{ \sfK_1, \sfK_2, \ldots, \sfK_r \}$ 
is a collection of $r$ distinct Gaussian or Laplace kernels, that is, $\sfK_a$ is either a Gaussian kernel or a Laplace kernel with bandwidth $\sigma_a$, for $1 \leq a \leq r$, with $\sigma_1 \ne  \sigma_2 \ne \cdots \ne \sigma_r > 0$. Then the conditions of Proposition \ref{ppn:Kvariance} hold and, consequently, $\bm \Sigma_{H_0}$ is non-singular. 
\end{corollary}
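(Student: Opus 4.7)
The plan is to verify the two conditions of Proposition \ref{ppn:Kvariance} for the specified collection $\cK$. Condition \eqref{eq:limitKxz} is immediate: with $x$ (or $y$) fixed, each Gaussian kernel $e^{-\|x-z\|^2/\sigma_a^2}$ and each Laplace kernel $e^{-\|x-z\|/\sigma_a}$ decays to zero as $\|z\| \to \infty$. So the substantive content is verifying condition \eqref{eq:alphaKlinear}, which I will establish via its contrapositive: if $\sum_{a=1}^r \alpha_a \sfK_a(x,y) = 0$ holds on a set of full Lebesgue measure in $\R^{2d}$, then by the joint continuity of each $\sfK_a$ the identity extends to every $(x,y) \in \R^{2d}$, and I want to conclude $\alpha_a = 0$ for every $a$.

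Each Gaussian or Laplace kernel depends only on $t := \|x-y\|$, so the identity reduces to a scalar equation $\sum_{a=1}^r \alpha_a f_a(t) = 0$ for all $t \geq 0$, where $f_a(t) = e^{-t^2/\sigma_a^2}$ if $a \in A_G$ (Gaussian indices) and $f_a(t) = e^{-t/\sigma_a}$ if $a \in A_L$ (Laplace indices), with $\{A_G, A_L\}$ partitioning $\{1,\ldots,r\}$ and all bandwidths distinct. The key task is thus the linear independence on $[0,\infty)$ of this family of exponentials.

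I plan to establish this independence by a peeling argument based on decay rates as $t \to \infty$. Every Laplace profile $e^{-t/\sigma}$ decays strictly more slowly than any Gaussian profile $e^{-t^2/\sigma'^2}$, and within each of the two families the largest bandwidth yields the slowest decay. Ordering the Laplace indices so that $\sigma_{b_1} > \sigma_{b_2} > \cdots$, I will divide the identity by $e^{-t/\sigma_{b_1}}$ and let $t \to \infty$; all other ratios have strictly negative limit exponents, so they vanish, forcing $\alpha_{b_1} = 0$. Iterating in decreasing bandwidth order eliminates the Laplace coefficients, after which only the Gaussian terms remain and the same argument, now applied to the $\sigma$'s in $A_G$, removes them one at a time. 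This contradicts the assumption that not all $\alpha_a$ vanish, giving \eqref{eq:alphaKlinear} and hence, via Proposition \ref{ppn:Kvariance}, the non-singularity of $\bm{\Sigma}_{H_0}$.

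I do not expect any serious technical obstacle: all ratios arising in the peeling step are elementary exponentials whose limits at infinity are transparent, so the argument is essentially bookkeeping. The only mild care is in handling the mixed case correctly, namely remembering that Gaussians vanish faster than Laplaces so the Laplace block must be peeled off first; otherwise dividing by a Gaussian denominator would produce divergent quantities rather than decaying ones.
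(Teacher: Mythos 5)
Your proposal is correct and follows essentially the same route as the paper: verify \eqref{eq:limitKxz} by inspection and establish \eqref{eq:alphaKlinear} by dividing the vanishing linear combination by the slowest-decaying kernel (Laplace with largest bandwidth first, then Gaussians) and letting the argument tend to infinity. The only cosmetic difference is that you extend the a.e.\ identity to all of $\R^{2d}$ by continuity and then peel off every coefficient, whereas the paper selects a sequence $\{y_M\}$ inside a full-measure set via Fubini and stops after the first step by assuming WLOG that all $\alpha_a$ are nonzero; both are valid.
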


\begin{proof} Clearly, condition \eqref{eq:limitKxz} holds for the Gaussian and the Laplace kernels. To show \eqref{eq:alphaKlinear} assume for contradiction that there exists $\alpha_{1}, \alpha_{2}, \ldots, \alpha_{r}$ such that 
\begin{align*}
    \mu \left\{(x,y)\in \mathbb{R}^{2d}: \sum_{a=1}^{r} \alpha_{a} \sfK_{a}(x, y)\neq 0\right\}  = 0 . 
\end{align*} 
Note that without loss of generality we can assume that $\alpha_{a}\neq 0, 1\leq a\leq r$. Denote $D:= \{x\in\mathbb{R}^{d}: \mu \{y\in\mathbb{R}^{d}:\sum_{a=1}^r  \alpha_a  \sfK_a \neq 0 \} = 0 \}$. By Fubini's theorem $\mu(D) = 1$. 
For any $x\in D$ denote, 
\begin{align*}
    D_{x}:=\left\{y\in\mathbb{R}^{d}: \sum_{a=1}^r  \alpha_a  \sfK_a = 0\right\} .
    \end{align*} 
 By definition, $\mu(D_x) = 1$. Then we can find a sequence $\{ y_{M} \}_{M \geq 1}$ in $D_{x}$ such that $\|y_{M}\| \rightarrow\infty$ such that \begin{align*}
    \sum_{a=1}^r  \alpha_a  \sfK_{a}(x, y_{M}) = 0.
\end{align*}  
Then,
\begin{align}\label{eq:GKernel-lin-comb-0}
    \sum_{a=2}^{r} \alpha_a \dfrac{\sfK_{a}(x, y_{M})}{\sfK_{1}(x, y_{M})} = -\alpha_{1} . 
\end{align}
Now, we have the following cases: 
\begin{itemize} 

\item {\it $\sfK_a$ is a Gaussian kernel for all $1 \leq a \leq r$}: Without loss of generality, suppose $\sigma_1 = \arg\max_{1 \leq a \leq r} \sigma_a$. Then by the definition of Gaussian kernel, it follows that,
\begin{align*}
    \sum_{a=2}^{r} \alpha_a  \dfrac{\sfK_{a}(x, y_{M})}{\sfK_{1}(x, y_{M})}\rightarrow 0 , 
    \end{align*} 
    as $M \rightarrow \infty$.  This contradicts \eqref{eq:GKernel-lin-comb-0}.

\item {\it $\sfK_a$ is a Laplace kernel for some $1 \leq a \leq r$}: Without loss of generality, suppose $\sigma_1$ be the largest bandwidth among the Laplace kernels. Then by the definition of Gaussian and Laplace kernels it follows that 
\begin{align*}
    \sum_{a=2}^{r} \alpha_a  \dfrac{\sfK_{a}(x, y_{M})}{\sfK_{1}(x, y_{M})}\rightarrow 0 , 
    \end{align*} 
    as $M \rightarrow \infty$. As in the previous case, this contradicts \eqref{eq:GKernel-lin-comb-0}. 
\end{itemize} 
This completes the proof of Corollary \ref{cor:Kvariance}. 
\end{proof}

\section{Technical Lemmas}
\label{sec:lmpf}

In this section we collect the proofs of various technical lemmas. We begin by showing the continuity of the characteristic function of the limiting distribution.

\begin{lemma}\label{lm:phicontinuity} 
Let $\Phi(\bm \eta)$ be as in \eqref{eq:GK}. Then $\Phi(\bm 0) = 1$ and $\Phi(\bm \eta)$ is continuous at $\bm 0 \in \R^r$. 
\end{lemma}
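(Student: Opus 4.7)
First I would check that $\Phi(\bm 0) = 1$. When $\bm \eta = \bm 0$, the kernel of the operator $\cH_{\cK, \bm 0}$ is $\mathsf{H}_{\bm 0}^\circ = \sum_{a=1}^r 0 \cdot \sfK_a^\circ \equiv 0$, so the operator is identically zero and all its eigenvalues are zero. Hence $\Lambda(\bm 0)$ consists entirely of zeros, and every factor in the product defining $\Phi(\bm 0)$ equals $1$, giving $\Phi(\bm 0) = 1$.

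For continuity at $\bm 0$, the plan is to pass to logarithms and Taylor-expand. Writing $\gamma = 1/(\rho(1-\rho))$, each factor can be written as
\begin{align*}
\log \frac{\exp(-\iota \gamma \lambda)}{\sqrt{1-2\iota\gamma\lambda}} = -\iota\gamma\lambda - \tfrac{1}{2}\log(1-2\iota\gamma\lambda) = \sum_{k=2}^{\infty} \frac{(2\iota\gamma\lambda)^k}{2k},
\end{align*}
where the expansion is valid whenever $2|\gamma\lambda| < 1$ and the linear term is cancelled by the deterministic centering $-\iota\gamma\lambda$. Summing over $\lambda \in \Lambda(\bm\eta)$ gives, for $\bm \eta$ small enough,
\begin{align*}
\log \Phi(\bm\eta) = \sum_{\lambda\in \Lambda(\bm\eta)}\sum_{k=2}^{\infty}\frac{(2\iota\gamma\lambda)^k}{2k}.
\end{align*}

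The key estimate is on the size of the eigenvalues. Since $\cH_{\cK, \bm\eta}$ is a Hilbert--Schmidt operator with kernel $\mathsf{H}_{\bm\eta}^\circ = \sum_{a=1}^r \eta_a \sfK_a^\circ \in L^2(\cX^2, P^2)$, we have
\begin{align*}
\sum_{\lambda \in \Lambda(\bm\eta)} \lambda^2 \;=\; \|\mathsf{H}_{\bm\eta}^\circ\|^2 \;\le\; \Big(\sum_{a=1}^r |\eta_a|\,\|\sfK_a^\circ\|\Big)^2 \;\le\; C\,\|\bm\eta\|^2,
\end{align*}
for a constant $C$ depending only on $\cK$. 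In particular, $\max_{\lambda\in \Lambda(\bm\eta)} |\lambda| \le \|\mathsf{H}_{\bm\eta}^\circ\| \le C^{1/2}\|\bm\eta\|$, so for $\|\bm\eta\|$ small enough we have $2\gamma|\lambda| \le 1/2$ uniformly over $\lambda \in \Lambda(\bm\eta)$, which justifies the Taylor expansion above.

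Finally I would bound the tail of the series: for $2\gamma|\lambda|\le 1/2$,
\begin{align*}
\Big|\sum_{k=2}^{\infty}\frac{(2\iota\gamma\lambda)^k}{2k}\Big| \;\le\; (2\gamma\lambda)^2 \sum_{k=0}^{\infty} \frac{(1/2)^k}{2} \;\le\; 4\gamma^2 \lambda^2,
\end{align*}
so that $|\log\Phi(\bm\eta)| \le 4\gamma^2 \sum_{\lambda} \lambda^2 \le 4C\gamma^2 \|\bm\eta\|^2 \to 0$ as $\bm\eta \to \bm 0$, yielding $\Phi(\bm\eta) \to 1 = \Phi(\bm 0)$. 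The only point requiring care is ensuring that the termwise Taylor expansion and the interchange of sums is justified uniformly over $\Lambda(\bm\eta)$; but this follows cleanly from the uniform bound $\max_\lambda 2\gamma|\lambda| \le 1/2$ once $\|\bm\eta\|$ is sufficiently small.
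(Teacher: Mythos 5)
Your proof is correct, but it takes a genuinely different route from the paper's. The paper argues probabilistically: it identifies $\Phi(\bm\eta)$ as the characteristic function of the random variable $Z(\mathsf{H}_{\bm\eta})=\frac{1}{\rho(1-\rho)}\sum_{\lambda\in\Lambda(\bm\eta)}\lambda(Z_\lambda^2-1)$ (via \eqref{eq:ZH}--\eqref{eq:expZH}), computes $\mathbb{E}[Z(\mathsf{H}_{\bm\eta})^2]=2\gamma^2\sum_{\lambda}\lambda^2=2\gamma^2\|\mathsf{H}_{\bm\eta}^\circ\|^2\rightarrow 0$, and concludes $\Phi(\bm\eta)\rightarrow 1$ by dominated convergence applied to $e^{\iota Z(\mathsf{H}_{\bm\eta})}$. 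You instead work directly with the analytic product formula \eqref{eq:expGK}: take principal logarithms, Taylor-expand each factor so the linear term cancels against the centering $e^{-\iota\gamma\lambda}$, and bound $|\log\Phi(\bm\eta)|\leq 4\gamma^2\sum_\lambda\lambda^2$. Both arguments hinge on exactly the same key fact, namely $\sum_{\lambda\in\Lambda(\bm\eta)}\lambda^2=\|\mathsf{H}_{\bm\eta}^\circ\|^2\leq C\|\bm\eta\|^2$ with $\mathsf{H}_{\bm\eta}^\circ=\sum_a\eta_a\sfK_a^\circ$ as in \eqref{eq:Hxyetaxy}. The paper's route is shorter and sidesteps branch-of-logarithm and product-to-sum-of-logs issues entirely, at the cost of invoking the probabilistic representation of $\Phi$; your route is self-contained at the level of the formula and yields the quantitative rate $|\log\Phi(\bm\eta)|=O(\|\bm\eta\|^2)$, but you should make explicit that for $\|\bm\eta\|$ small each factor lies in a neighbourhood of $1$ where the principal logarithm of the product equals the sum of the principal logarithms (this is immediate from your uniform bound $2\gamma|\lambda|\leq 1/2$, so it is a one-line remark, not a gap).
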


\begin{proof} Note that when $\bm \eta = \bm 0$, the only eigenvalue of the operator $\cH_{\cK, \eta}$ is zero and, hence, $\Phi(\bm 0)=1$. 

For showing continuity at $\bm \eta = \bm 0$, recall that for all $\bm \eta \in \R^r$, $\sum_{\lambda\in \Lambda(\bm \eta)}\lambda^2<\infty$, since $\cH_{\cK, \bm \eta}$ is a Hilbert-Schmidt operator. Then by Fubini's theorem,
\begin{align}\label{eq:Zlambda}
    \E\left[\left(\sum_{\lambda\in \Lambda(\bm \eta)}\lambda(Z_{\lambda}^2-1)\right)^2\right] 
    &= \E\left[\sum_{\lambda}\lambda^2(Z_{\lambda}^2-1) + \sum_{\lambda_{1}\neq \lambda_{2}}\lambda_{1}\lambda_{2}\left(Z_{\lambda_{1}}^2-1\right)\left(Z_{\lambda_{2}}^2-1\right)\right]\nonumber\\
    &= \sum_{\lambda}\lambda^2\E(Z_{\lambda}^2-1)^2\nonumber\\
    &= 2 \sum_{\lambda}\lambda^2 . 
\end{align}
By the spectral theorem (see \citet[Theorem 6.35]{renardy2006introduction}) and recalling \eqref{eq:Hxyetaxy}, $$\sum_{\lambda\in\Lambda(\bm \eta)}\lambda^2 = \left\| \mathsf{H}_{\bm \eta}^\circ \right\|^2.$$ 
Clearly, $\lim_{\bm \eta \rightarrow \bm 0} \| \mathsf{H}_{\bm \eta}^\circ \|^2 = 0$ and thus by \eqref{eq:ZH} and \eqref{eq:Zlambda} and we conclude, $Z(\mathsf{H}_{\bm \eta}) \stackrel{L^2}\rightarrow  0$, as $\bm \eta \rightarrow \bm 0$. Hence, by \eqref{eq:expZH} and the Dominated Convergence Theorem, 
$$  \lim_{\bm \eta \rightarrow 0} \Phi(\bm \eta) = \lim_{\bm \eta \rightarrow \bm 0} \E \left [ e^{\iota Z(\mathsf{H}_{\bm \eta})} \right] = 1. $$
This shows the continuity of $\Phi(\bm \eta)$ at $\bm \eta = \bm 0 \in \R^r$. 
\end{proof}

Next, we compute the MGF of the random variable $Z(\mathsf{H})$ as defined \eqref{eq:H0distribution}. 

\begin{lemma}\label{lm:ZHM} 
The MGF of $Z(\mathsf{H})$, as defined in \eqref{eq:H0distribution}, exists for all $|t|<\frac{\rho (1-\rho)}{8}\left\|\mathsf{H}^\circ \right\|^{-1}$ and is given by,
\begin{align*}
    \log M_{Z(\mathsf{H})}(t) := \log\E\left[ e^{tZ(\mathsf{H}) } \right] = \left(\frac{t}{\rho (1-\rho) }\right)^{2}\|\mathsf{H}^\circ \|^{2} + \frac{1}{2}\sum_{K=3}^{\infty}\sum_{s=1}^{\infty}\dfrac{\left(\frac{2}{\rho(1-\rho)}\lambda_{s}t\right)^{K}}{K}
\end{align*}
where $\{\lambda_{s}:s\geq 1\}$ are the eigenvalues of the operator $\mathcal{H}_{\mathsf{H}^\circ}$.
\end{lemma}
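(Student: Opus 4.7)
The plan is to reduce the computation to the well-known moment generating function of a centered $\chi^2_1$ variable and then carefully justify the passage from finite to infinite sums. For $|s|<1/2$, an elementary calculation gives $\E[e^{s(Z^2-1)}] = e^{-s}/\sqrt{1-2s}$ when $Z\sim \cN(0,1)$. Taking $\gamma=1/(\rho(1-\rho))$ and applying this to each independent summand $\gamma\lambda_s(Z_s^2-1)$ (with $s=\gamma\lambda_s t$), I would first consider the truncation
\begin{align*}
Z_L(\mathsf{H}) := \gamma \sum_{s=1}^{L}\lambda_s(Z_s^2-1),
\end{align*}
whose MGF, by independence, equals $M_{Z_L(\mathsf{H})}(t) = \prod_{s=1}^{L} e^{-\gamma\lambda_s t}/\sqrt{1-2\gamma\lambda_s t}$, valid whenever $\max_{1\le s\le L}|2\gamma\lambda_s t|<1$. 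Taking logarithms and using $-\tfrac12\log(1-x) = \tfrac12\sum_{k\ge 1} x^k/k$ for $|x|<1$, the $k=1$ term cancels the linear $-\gamma\lambda_s t$ term, the $k=2$ term contributes $\gamma^2 t^2 \lambda_s^2$, and the remainder leaves the tail in $k\ge 3$. Summing in $s$ and using $\sum_s \lambda_s^2 = \|\mathsf{H}^\circ\|^2$ (from \eqref{eq:Hintegral}) yields the claimed expression for the truncated MGF.

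To pass to $L\to\infty$, I would use the $L^2$-convergence $Z_L(\mathsf{H})\stackrel{L^2}\to Z(\mathsf{H})$ (which mirrors the argument in Lemma \ref{lm:L} and follows from $\sum_s\lambda_s^2<\infty$) together with uniform integrability of $\{e^{tZ_L(\mathsf{H})}\}_L$ on the range $|t|<\tfrac{\rho(1-\rho)}{8}\|\mathsf{H}^\circ\|^{-1}$. The uniform-integrability step is the main technical point: by the spectral bound $|\lambda_s|\le \|\mathsf{H}^\circ\|/\sqrt{s}$ (valid since $\sum_s\lambda_s^2<\infty$), the constraint $|t|<\tfrac{\rho(1-\rho)}{8}\|\mathsf{H}^\circ\|^{-1}$ guarantees $|2\gamma\lambda_s t|\le 1/4$ for every $s$, which keeps each factor bounded and, more importantly, renders the double series
\begin{align*}
\sum_{k=3}^{\infty}\sum_{s=1}^{\infty}\frac{|2\gamma\lambda_s t|^k}{k}\; \le\; \sum_{k=3}^{\infty}\frac{1}{k\cdot 4^{k-3}}\cdot (2\gamma|t|\|\mathsf{H}^\circ\|)^3\sum_{s=1}^{\infty}\frac{1}{s^{3/2}}
\end{align*}
absolutely convergent. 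This allows interchange of the sums in $k$ and $s$, existence of $M_{Z(\mathsf{H})}(t)$, and convergence $M_{Z_L}(t)\to M_{Z(\mathsf{H})}(t)$ via dominated convergence applied inside the log expansion.

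Finally, I would read off the identity by recognizing that the surviving $k=2$ piece is exactly $\gamma^2 t^2 \|\mathsf{H}^\circ\|^2 = (t/(\rho(1-\rho)))^2\|\mathsf{H}^\circ\|^2$, and the $k\ge 3$ piece is the claimed double sum with the factor $2\gamma=2/(\rho(1-\rho))$, completing the proof. I expect the main obstacle to be not the algebraic manipulation but the careful accounting of the radius of convergence and the justification of interchanging the infinite sums in $k$ and $s$; the choice of the constant $1/8$ in the bound $|t|<\tfrac{\rho(1-\rho)}{8}\|\mathsf{H}^\circ\|^{-1}$ is what makes this accounting clean, via the $\sum 1/s^{3/2}$ majorization above.
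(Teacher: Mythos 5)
Your proposal is correct and the algebraic core --- taking logarithms of the product of centered-$\chi^2_1$ moment generating functions, cancelling the $k=1$ term, isolating the $k=2$ term as $\gamma^2 t^2\|\mathsf{H}^\circ\|^2$, and using the spectral bound $|\lambda_s|\le\|\mathsf{H}^\circ\|/\sqrt{s}$ together with $|2\gamma\lambda_s t|\le 1/4$ to justify the Fubini interchange of the sums in $k$ and $s$ --- is exactly what the paper does. Where you differ is in the existence step: the paper simply cites an external result (Proposition 7.1 of the reference \texttt{bbbpdsm}) to assert that $M_{Z(\mathsf{H})}(t)$ exists on $|t|<\frac{1}{8\gamma}\|\mathsf{H}^\circ\|^{-1}$ and equals the infinite product $\prod_{s\ge 1} e^{-\gamma\lambda_s t}/\sqrt{1-2\gamma\lambda_s t}$, whereas you rederive this from scratch by truncating to $Z_L(\mathsf{H})$, computing the finite product by independence, and passing to the limit via $L^2$-convergence plus uniform integrability. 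Your route is more self-contained, at the cost of one step you state a little loosely: the absolute convergence of the double series gives you \emph{uniform boundedness} of $\{M_{Z_L}(t)\}_L$ on the open interval of admissible $t$, not uniform integrability of $\{e^{tZ_L}\}_L$ directly; to close that gap you should note that boundedness of $\sup_L\E[e^{t'Z_L}]$ for some $t'$ of the same sign with $|t'|>|t|$ (available because the admissible range is open) yields $\sup_L\E[(e^{tZ_L})^{p}]<\infty$ with $p=t'/t>1$, which is what actually delivers uniform integrability. With that one sentence added, your argument is complete and, if anything, more informative than the paper's citation-based existence step.
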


\begin{proof} Define $\gamma = \frac{1}{\rho(1-\rho)}$. Since $\sum_{s=1}^{\infty}\lambda_{s}^{2}<\infty$, by \citet[Proposition 7.1]{bbbpdsm} we have,
\begin{align}\label{eq:mgf-Zh}
    M_{Z(\mathsf{H})}(t) := \E\left[ e^{tZ(\mathsf{H}) } \right]  = \prod_{s=1}^{\infty}\dfrac{e^{-\gamma\lambda_{s}t}}{\sqrt{1-2\gamma\lambda t}} , 
\end{align}
for all $|t|<\frac{1}{8 \gamma }(\sum_{s=1}^{\infty}\lambda_{s}^{2})^{-\frac{1}{2}} = \frac{1}{8 \gamma}\left\|\mathsf{H}^\circ \right\|^{-1}$ (since $\left\| \mathsf{H}^\circ \right\|^{2} = \sum_{s=1}^{\infty}\lambda_{s}^2$). Fix $t$ such that $|t|< \frac{1}{8 \gamma }\left\|\mathsf{H}^\circ \right\|^{-1}$, then taking log on both sides of \eqref{eq:mgf-Zh} and expanding we get,
\begin{align}
    \log M_{Z(\mathsf{H})}(t) = \sum_{s=1}^{\infty} \left\{ -\gamma\lambda_{s} t + \frac{1}{2}\left(\sum_{k=1}^{\infty}\dfrac{(2\gamma\lambda_{s} t)^k}{k}\right)  \right\}   &= \frac{1}{2}\sum_{s=1}^{\infty}\sum_{k=2}^{\infty}\dfrac{(2\gamma\lambda_{s} t)^k}{k} . \label{eq:mgf-Zh-expansion}
\end{align}
Using $\left\| \mathsf{H}^\circ \right\|^{2} = \sum_{s=1}^{\infty}\lambda_{s}^2$ and  \eqref{eq:mgf-Zh-expansion} we get, 
\begin{align}\label{eq:mgfH}
    \log M_{Z(\mathsf{H})}(t) 
    &= \gamma^{2}\|\mathsf{H}^\circ \|^{2}t^{2} + \frac{1}{2}\sum_{s=1}^{\infty}\sum_{k=3}^{\infty}\dfrac{(2\gamma\lambda_{s} t)^k}{k} . 
\end{align} 
Using the bounds $|t|< \frac{1}{8 \gamma}\left\|\mathsf{H}^\circ \right\|^{-1}$ and $|\lambda_{s}| \leq \frac{\|\mathsf{H}^\circ \|}{\sqrt{s}}$ for all $s\in \mathbb{N}$, where $|\lambda_{1}|\geq |\lambda_{2}|\geq \cdots$ (again using $\left\| \mathsf{H}^\circ \right\|^{2} = \sum_{s=1}^{\infty}|\lambda_{s}|^2$) gives, 
\begin{align*}
    \sum_{s=1}^{\infty}\sum_{k=3}^{\infty}\dfrac{\left|2\gamma\lambda_{s} t\right|^k}{k}
    &\leq \sum_{s=1}^{\infty}\sum_{k=3}^{\infty}\dfrac{|\lambda_s|^k}{4^k \| \mathsf{H}^\circ \|^k k } \leq  \sum_{s=1}^{\infty}\sum_{k=3}^{\infty}\dfrac{1}{s^{\frac{3}{2}}4^{k} k}<\infty , 
\end{align*}
Therefore, by Fubini's Theorem we can interchange the order of the sum in \eqref{eq:mgfH} to get, 
\begin{align*}
    \log M_{Z(\mathsf{H})}(t)
    & = \gamma^{2}\| \mathsf{H}^\circ \|^{2}t^{2} + \frac{1}{2}\sum_{k=3}^{\infty}\sum_{s=1}^{\infty}\dfrac{(2\gamma\lambda_{s} t)^k}{k} , 
\end{align*}
for all $|t|<\frac{1}{8\gamma}\| \mathsf{H}^\circ \|^{-1}$, which completes the proof. 
\end{proof} 

In the next lemma we show that the row sums of a characteristic kernel is asymptotically close to its expected value in an $L_2$ sense. This is used in the proof of Proposition \ref{ppn:HZW}.

\begin{lemma}\label{lemma:kernel-avg-convg}
Suppose $\sfK \in L^{2}(\mathcal{X}^2, P^2)$ is a characteristic kernel satisfying $\E_{X\sim P}[ \sfK^{2}(X,X)]<\infty$. Then for $X_1, X_2, \ldots, X_m$ i.i.d. from the distribution $P$, 
\begin{align*}
    \lim_{m \rightarrow \infty} \frac{1}{m}\sum_{i=1}^{m}\left(\frac{1}{m}\sum_{j=1}^{m} \sfK(X_{i}, X_{j})- \E_{Z \sim P} [ \sfK(X_{i}, Z)] \right)^2 = 0 , 
\end{align*}
on a set $\mathcal{B}_{\sfK}\in \sB(\cX)$ such that $\P(\mathcal{B}_{\sfK})=1$.
\end{lemma}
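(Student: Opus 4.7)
Let $\mu(x):=\mathbb{E}_{Z\sim P}[\sfK(x,Z)]$ and $W_{ij}:=\sfK(X_i,X_j)-\mu(X_i)$. Note that by Jensen's inequality and the hypothesis $\sfK\in L^2(\cX^2,P^2)$, we have $\mathbb{E}[\mu(X)^2]\le \mathbb{E}[\sfK(X,X')^2]<\infty$, so $\mu\in L^2(\cX,P)$. Rewriting the quantity of interest, I would first observe
\[
S_m \;:=\;\frac{1}{m}\sum_{i=1}^{m}\!\left(\frac{1}{m}\sum_{j=1}^{m}\sfK(X_i,X_j)-\mu(X_i)\right)^{\!2}
\;=\;\frac{1}{m^3}\sum_{i,j,k=1}^{m} W_{ij}W_{ik}.
\]

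The plan is to decompose this triple sum according to how many of the indices $i,j,k$ coincide, and to handle each piece separately by the strong law of large numbers for $U$-statistics (\cite[Theorem 5.4.A]{serfling}).

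\emph{Diagonal piece ($j=k$).} The contribution is $m^{-3}\sum_{i,j}W_{ij}^2$. Since $W_{ij}^2\le 2\sfK(X_i,X_j)^2+2\mu(X_i)^2$, the condition $\mathbb{E}[\sfK(X,X)^2]<\infty$ (handling the diagonal $i=j$) together with $\sfK\in L^2(\cX^2,P^2)$ and $\mu\in L^2(\cX,P)$ gives $m^{-2}\sum_{i,j}W_{ij}^2=O(1)$ almost surely by \cite[Theorem 5.4.A]{serfling}, hence the diagonal contribution is $O(1/m)\to 0$ a.s.

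\emph{Off-diagonal main piece ($i,j,k$ all distinct).} Set $g(x,y,z):=(\sfK(x,y)-\mu(x))(\sfK(x,z)-\mu(x))$. Conditioning on the first argument, $\mathbb{E}[\sfK(X,Y)-\mu(X)\mid X]=0$, so $\mathbb{E}[g(X_1,X_2,X_3)]=0$. By Cauchy--Schwarz,
\[
\mathbb{E}|g(X_1,X_2,X_3)| \le \mathbb{E}\!\left[(\sfK(X,Y)-\mu(X))^2\right] \;\le\; 2\mathbb{E}[\sfK(X,Y)^2]+2\mathbb{E}[\mu(X)^2]<\infty,
\]
so $g$ is $P^3$-integrable. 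Applying \cite[Theorem 5.4.A]{serfling} to the (symmetrized version of the) kernel $g$ yields $m^{-3}\sum_{(i,j,k)\text{ distinct}} g(X_i,X_j,X_k)\to 0$ a.s.

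\emph{Off-diagonal boundary pieces ($i=j\neq k$ or $i=k\neq j$).} These contribute terms of the form $m^{-3}\sum_{i\ne k} W_{ii}W_{ik}$. By Cauchy--Schwarz the absolute value is bounded by $\bigl(m^{-2}\sum_i W_{ii}^2\bigr)^{1/2}\bigl(m^{-2}\sum_{i,k}W_{ik}^2\bigr)^{1/2}\cdot m^{-1}$, and the two factors in the parenthesis are $O(1)$ a.s.\ by the argument used for the diagonal piece. Hence this contribution is $O(1/m)\to 0$ a.s.

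Summing the three pieces gives $S_m\to 0$ on a set $\cB_\sfK$ of probability one, which is the claim. The only nontrivial step is verifying the integrability hypothesis of the $U$-statistic SLLN applied to $g$; the rest is bookkeeping of index coincidences.
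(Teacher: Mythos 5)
Your argument is correct, but it takes a genuinely different route from the paper. The paper never expands the square into a triple sum: it writes $\frac{1}{m}\sum_{j}\sfK(X_i,X_j)-\mathbb{E}_{Z\sim P}[\sfK(X_i,Z)]=\langle \psi(X_i),\frac{1}{m}\sum_j\psi(X_j)-\mu_P\rangle_{\mathcal H}$ using the feature map, applies Cauchy--Schwarz in the RKHS to bound the whole average by $\bigl(\frac{1}{m}\sum_i\|\psi(X_i)\|_{\mathcal H}^2\bigr)\,\bigl\|\frac{1}{m}\sum_j\psi(X_j)-\mu_P\bigr\|_{\mathcal H}^2$, and then needs only two applications of the strong law (the second factor is a degree-two $U$-statistic converging to $0$, the first is an a.s.\ bounded average of $\sfK(X_i,X_i)$ terms). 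Your decomposition into the degenerate degree-three kernel $g(x,y,z)=(\sfK(x,y)-\mu(x))(\sfK(x,z)-\mu(x))$ plus diagonal and boundary pieces reaches the same conclusion from the same SLLN for $U$-statistics, at the cost of more index bookkeeping; what it buys is that it never uses positive definiteness or the RKHS representation of $\sfK$, so it would apply verbatim to any measurable $\mathsf H\in L^2(\cX^2,P^2)$ with $\mathbb{E}[\mathsf H(X,X)^2]<\infty$, whereas the paper's argument is tied to $\sfK(x,y)=\langle\psi(x),\psi(y)\rangle_{\mathcal H}$. Two small points to tidy up: in the boundary piece your displayed Cauchy--Schwarz bound is off by a factor of $\sqrt{m}$ (the correct bound is $O(m^{-1/2})\cdot O(1)$ rather than $O(m^{-3/2})$, since $\sum_{i\ne k}W_{ii}^2=(m-1)\sum_i W_{ii}^2$ contributes an extra $m^{1/2}$), which still tends to zero; and in the main piece you should note explicitly that symmetrizing $g$ neither changes the sum over distinct ordered triples nor the mean or integrability, so Serfling's theorem applies.
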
 

\begin{proof} 
Let $\psi$ be the feature map corresponding to $\sfK$.  Recalling the definition of mean embedding $\mu_{P}$ from \eqref{eq:XP} observe that, for $1 \leq i \leq m$, 
\begin{align}\label{eq:k-to-psi}
    \frac{1}{m}\sum_{j=1}^{m} \sfK(X_{i},X_{j})- \E_{Z \sim P} [\sfK(X_{i}, Z) = \left\langle \psi(X_{i}),\frac{1}{m}\sum_{j=1}^{m}\psi(X_{j})-\mu_{P}\right\rangle_{\mathcal{H}}
\end{align}
By Cauchy-Schwartz inequality, 
\begin{align}\label{eq:use-CS}
    \left\langle \psi(X_{i}),\frac{1}{m} \sum_{j=1}^{m}\psi(X_{j})-\mu_{P}\right\rangle\leq \|\psi(X_{i})\|_{\mathcal{H}}\left\|\frac{1}{m}\sum_{j=1}^{m}\psi(X_{j})-\mu_{P}\right\|_{\mathcal{H}}
\end{align}
By \eqref{eq:k-to-psi} and \eqref{eq:use-CS}, 
\begin{align}\label{eq:kernel-bdd-psi}
    & \frac{1}{m}\sum_{i=1}^{m}\left(\frac{1}{m}\sum_{j=1}^{m} \sfK(X_{i},X_{j})- \E_{Z \sim P} [\sfK (X_{i}, Z)] \right) \nonumber \\ 
    &\leq \left(\frac{1}{m}\sum_{i=1}^{m}\|\psi(X_{i})\|_{\mathcal{H}}^2\right)\left\|\frac{1}{m}\sum_{j=1}^{m}\psi(X_{j})-\mu_{P}\right\|_{\mathcal{H}}^2\nonumber\\
    & = \left(\frac{1}{m}\sum_{i=1}^{m} \sfK(X_{i},X_{i})^2\right)\left\|\frac{1}{m}\sum_{j=1}^{m}\psi(X_{j})-\mu_{P}\right\|_{\mathcal{H}}^2 . 
\end{align}
From \eqref{eq:XP} we have $\mu_{P}(t) = \E_{X \sim P} [ \sfK (t, X) ]$ and hence, by \eqref{eq:fXP}, 
\begin{align}\label{eq:norm-mup}
    \|\mu_{P}\|_{\mathcal{H}}^2 = \langle\mu_{P}, \mu_{P}\rangle_{\mathcal{H}} = \E_{X' \sim P} [ \mu_{P}(X') ] = \E_{X, X' \sim P} [ \sfK(X, X') ] . 
\end{align}
Once again by \eqref{eq:XP} we observe that,
\begin{align*}
    \left\|\frac{1}{m}\sum_{j=1}^{m}\psi(X_{j})-\mu_{P}\right\|_{\mathcal{H}}^2
    & = \left\langle\frac{1}{m}\sum_{j=1}^{m}\psi(X_{j})-\mu_{P},\frac{1}{m}\sum_{j=1}^{m}\psi(X_{j})-\mu_{P}\right\rangle_{\mathcal{H}}\\
    & = \frac{1}{m^{2}}\sum_{1\leq i, j\leq m}\sfK(X_{i},X_{j}) - \frac{2}{m}\sum_{j=1}^{m}\E_{Z \sim P} [ \sfK(X_{j},Z) ] + \|\mu_{P}\|_{\mathcal{H}}^2\\
    & = \frac{1}{m^{2}}\sum_{1\leq i, j \leq m}\sfK(X_{i},X_{j}) - \frac{2}{m}\sum_{j=1}^{m}\E_{Z \sim P}[ \sfK(X_{j},Z)] + \E[\sfK(X_{1},X_{2}) ] , 
\end{align*}
where the last equality follows from \eqref{eq:norm-mup}. Notice that 
$\E |\E_{Z \sim P}[\sfK(X_{1},Z)] |\leq \E|\sfK(X_{1},X_{2})|<\infty$. Hence, by the strong law of large numbers for $U$-statistics \citet[Theorem 5.4.A]{serfling} we conclude that, 
\begin{align}\label{eq:average-mup-convg}
 \lim_{m \rightarrow \infty}   \left\|\frac{1}{m}\sum_{j=1}^{m}\psi(X_{j})-\mu_{P}\right\|_{\mathcal{H}}^2 = 0 , 
\end{align}
on a set $\mathcal{B}_{\sfK}^{(1)} \in \sB(\cX)$ such that $\P(\mathcal{B}_{\sfK}^{(1)})=1$. Also, since $\E[\sfK(X_{1},X_{1})^2] <\infty$, by the strong law of large numbers, 
\begin{align*}
   \lim_{m \rightarrow \infty} \frac{1}{m}\sum_{i=1}^{m}\sfK(X_{i},X_{i})^{2} \rightarrow \E[\sfK^{2}(X_{1},X_{1})] 
\end{align*}
on a set $\cB_{\sfK}^{(2)}$ such that $\P(\cB_{\sfK}^{(2)})=1$. Using this and \eqref{eq:average-mup-convg} in \eqref{eq:kernel-bdd-psi} the proof is completed, by choosing $\mathcal{B}_{\sfK} = \cB_{\sfK}^{(1)} \bigcap \cB_{\sfK}^{(2)}$.
\end{proof}
    
    \end{document}